\numberwithin{equation}{section} 
	\newcommand{\vect}[1]{\bm{{#1}}}
	\DeclareFontShape{T1}{lmr}{m}{scit}{<->ssub*lmr/m/scsl}{}%
\newtheorem{theorem}{Theorem}[section]
\newtheorem{proposition}[theorem]{Proposition}
\newtheorem{lemma}[theorem]{Lemma}
\newtheorem*{corollary}{Corollary}
\theoremstyle{definition}
\newtheorem*{definition}{Definition}
\newtheorem*{remark}{Remark}
\newtheorem*{remarks}{Remarks}
\let\originalleft\left
\let\originalright\right
\renewcommand{\left}{\mathopen{}\mathclose\bgroup\originalleft}
\renewcommand{\right}{\aftergroup\egroup\originalright}
\DeclareMathOperator{\E}{e}
\newcommand{\I}{\mathrm{i}}
	\newcommand\niton{\mathrel{\m@th\mathpalette\canc@l\owns}}
	\newcommand\canc@l[2]{{\ooalign{$\hfil#1/\mkern1mu\hfil$\crcr$#1#2$}}}
\DeclarePairedDelimiter{\bra}{\langle}{\rvert}
\DeclarePairedDelimiter{\ket}{\lvert}{\rangle}
\DeclarePairedDelimiterX{\braket}[2]{\langle}{\rangle}{#1\vert#2}
\DeclarePairedDelimiter{\cbra}{\langle\!\langle}{\rvert}
\DeclarePairedDelimiter{\cket}{\lvert}{\rangle\!\rangle}
\DeclarePairedDelimiterX{\cbraket}[2]{\langle\!\langle}{\rangle}{#1\vert#2}
\DeclarePairedDelimiterX{\bracket}[2]{\langle}{\rangle\!\rangle}{#1\vert#2}
\DeclarePairedDelimiterX{\cbracket}[2]{\langle\!\langle}{\rangle\!\rangle}{#1\vert#2}
\newcommand{\To}{\cdots\mspace{-1mu}}
\DeclareMathOperator{\diag}{diag}
\DeclareMathOperator{\id}{\mathbbm{1}} 
\DeclareMathOperator{\tr}{tr}
\DeclareMathOperator{\qdet}{qdet}
\DeclareMathOperator{\ev}{ev_\omega}
\newcommand{\checkedMma}{}
\begin{document}

\title[Spin-Ruijsenaars, \textsf{q}-deformed Haldane--Shastry and Macdonald polynomials]{Spin-Ruijsenaars, \textsf{q}-deformed Haldane--Shastry \\ and Macdonald polynomials}

\author{Jules Lamers}
\address{School of Mathematics and Statistics, University of Melbourne, Vic 3010, Australia} 
\email{jules.l@unimelb.edu.au}

\author{Vincent Pasquier}
\address{Institut de Physique Théorique, Université Paris Saclay, CEA, CNRS, F-91191 Gif-sur-Yvette, France}
\email{vincent.pasquier@ipht.fr}

\author{Didina Serban}
\address{Institut de Physique Théorique, Université Paris Saclay, CEA, CNRS, F-91191 Gif-sur-Yvette, France}
\email{didina.serban@ipht.fr}

\date{\today}
	
\begin{abstract}
We study the \textsf{q}-analogue of the Haldane--Shastry model, a partially isotropic (\textsc{xxz}-like) long-range spin chain that by construction enjoys quantum-affine (really: quantum-loop) symmetries at finite system size.

We derive the pairwise form of the Hamiltonian, found by one of us building on work of D.~Uglov, via `freezing' from the affine Hecke algebra. To this end we first obtain explicit expressions for the spin-Macdonald operators of the (trigonometric) spin-Ruijsenaars model. Through freezing these give rise to the higher Hamiltonians of the spin chain, including another Hamiltonian of the opposite `chirality'. The sum of the two chiral Hamiltonians has a real spectrum also when $|\mathsf{q}|=1$, so in particular when \textsf{q} is a root of unity. 

For generic $\mathsf{q}$ the eigenspaces are known to be labelled by `motifs'. We clarify the relation between these patterns and the corresponding degeneracies (multiplicities) in the crystal limit $\textsf{q}\to \infty$. For each motif we obtain an explicit expression for the exact eigenvector, valid for generic~\textsf{q}, that has (`pseudo' or `\textit{l}-') highest weight in the sense that, in terms of the operators from the monodromy matrix, it is an eigenvector of \textit{A} and~\textit{D} and annihilated by \textit{C}. It has a simple component featuring the `symmetric square' of the \textsf{q}-$\mspace{-1mu}$Vandermonde polynomial times a Macdonald polynomial\,---\,or more precisely its quantum spherical zonal special case. All other components of the eigenvector are obtained from this through the action of the Hecke algebra, followed by `evaluation' of the variables to roots of unity. We prove that our vectors have highest weight upon evaluation. Our description of the exact spectrum is complete.

The entire model, including the quantum-loop action, can be reformulated in terms of polynomials. Our main tools are the \textit{Y}$\!$-operators from the affine Hecke algebra. From a more mathematical perspective the key step in our diagonalisation is as follows. We show that on a subspace of suitable polynomials the first~$M$ `classical' (i.e.\ no difference part) \textit{Y}$\!$-operators in $N$ variables reduce, upon evaluation as above, to \textit{Y}$\!$-operators in $M$ variables with parameters at the quantum zonal spherical point.
\end{abstract}

\maketitle	

\DeclareRobustCommand{\SkipTocEntry}[5]{} 

\setcounter{tocdepth}{3}    
\tableofcontents

\section{Overview and main results} \label{s:intro}
\noindent 
The Haldane--Shastry spin chain \cite{Hal_88,Sha_88} and its partially isotropic (\textsc{xxz}-like) counterpart~\cite{BG+_93,Ugl_95u,Lam_18} are quantum-integrable spin chains with long-range interactions that are such that the model's spectrum admits an exact description in closed form\,---\,there are no Bethe-type equations that remain to be solved. We begin with a guided tour to introduce these models and state our results.

In \textsection\ref{s:intro_HS} we recall the salient features of the ordinary Haldane--Shastry spin chain. The reader who is familiar with the isotropic case may wish to glance at the three definitions in \textsection\ref{s:intro_HS} before skipping to the partially anisotropic generalisation in \textsection\ref{s:intro_qHS}, where we introduce the model and give an overview of its remarkable properties, many of which are new results. In \textsection\ref{s:intro_spin-RM} we preview the plan of our derivations in the main text. These proofs exploit a connection with a more general model, the spin-version of the (quantum trigonometric) Ruijsenaars model, whose spin-Macdonald operators give rise to the spin chain by `freezing'. 

In this tour we follow~\cite{BG+_93,Ugl_95u,Lam_18} and denote the deformation (anisotropy) parameter by~$\mathsf{q}$ as usual for quantum groups. We use $\mathsf{p}$ for the second parameter of Macdonald polynomials. From \textsection\ref{s:setup} onwards we'll switch to the notation $t^{1/2} = \mathsf{q}$ and $q = \mathsf{p}$, standard in the world of Macdonald polynomials and affine Hecke algebras~\cite{Mac_95,Mac_98,Che_05}.

\subsection{Recap of the isotropic case} \label{s:intro_HS} 
In short, the ordinary (isotropic) Haldane--Shastry spin chain~\cite{Hal_88,Sha_88} is a physically motivated quantum spin chain\,---\,e.g.\ serving as a toy model for the fractional quantum Hall effect\,---\,with many remarkable properties: it
\begin{enumerate}
	\item[i.] (\emph{abelian symmetries}) belongs to a family of commuting operators~\cite{Ino_90,HH+_92,BG+_93,TH_95}, each of which
	\item[ii.] (\emph{nonabelian symmetries}) commutes with an action of the Yangian \cite{HH+_92,BG+_93};
	\item[iii.] (\emph{explicit eigenvectors}) has eigenvectors that are determined by a symmetric polynomial~\cite{BG+_93}, which for Yangian highest-weight \cite{BPS_95a} eigenvectors is known explicitly and involves a Jack polynomial~\cite{Hal_91a}.
\end{enumerate}
Each is a hallmark of quantum integrability: (i)~a tower of higher Hamiltonians, (ii)~an underlying quantum-algebraic structure, and (iii)~exact solvability. Let us review these three properties of the Haldane--Shastry spin chain, introducing some useful notation along the way.

\subsubsection{Abelian symmetries} In this work we focus on rank one; we will address higher rank, cf.~\cite{Lam_18}, elsewhere. Consider a chain with $N$ spin-$1/2$ sites: the spin-chain Hilbert space is $\mathcal{H} \coloneqq (\mathbb{C}^2)^{\otimes N}$ where $\mathbb{C}^2 = \mathbb{C} \, \ket{\uparrow} \oplus \mathbb{C} \, \ket{\downarrow}$. Write $P_{ij}$ for the permutation of the $i$th and $j$th factors of $\mathcal{H}$, so $P_{ij} = (1+\vec{\sigma}_i \cdot \vec{\sigma}_j)/2$
\checkedMma
with $\vec{\sigma} = (\sigma^x,\sigma^y,\sigma^z)$ the Pauli matrices. The Hamiltonian is
\begin{equation} \label{eq:HS_pre}
	H^\textsc{hs} = \sum_{i<j}^N \frac{1-P_{ij}}{4\,\sin^2[\pi\,(i-j)/N]} \, .
	\checkedMma
\end{equation}
This operator is positive:
\checkedMma
$(-)H^\textsc{hs}$ models an (anti)ferromagnet. Following Uglov we introduce

\begin{definition}[\cite{Ugl_95u}]
Let $\omega \coloneqq \E^{2\pi\I/N} \in \mathbb{C}^\times \coloneqq \mathbb{C}\setminus\{0\}$ be the primitive $N$th root of unity. Define the \emph{evaluation}
\begin{equation} \label{eq:ev}
	\ev \colon z_j \longmapsto \omega^j = \E^{2\pi\I j/N}
\end{equation}
of $z_1,\To,z_N$ at the corresponding $N$th roots of unity. \emph{On shell}, i.e.\ after evaluation, we think of $z_j$ as (the multiplicative notation for) the position of site~$j$ of the chain, viewed as embedded in the unit circle $S^1 \subseteq \mathbb{C}$. We will refer to the $z_j$ as \emph{coordinates}.
\end{definition}
 
With this notation \eqref{eq:HS_pre} can be rewritten as
\begin{equation} \label{eq:HS}
	H^\textsc{hs} = \ev \widetilde{H}^\textsc{hs} \, , \qquad 
	\widetilde{H}^\textsc{hs} = \sum_{i<j}^N V^\textsc{hs}(z_i,z_j) \, (1-P_{ij}) \, , \qquad V^\textsc{hs}(z_i,z_j) = \frac{-z_i\,z_j}{(z_i-z_j)^2} \, .
	\checkedMma
\end{equation}
The pair potential has a neat geometric interpretation: $\ev V^\textsc{hs}(z_i,z_j) = 1/d^2$,
\checkedMma
where $d = 2 \, \left|\sin(\pi(i-j)/N)\right|$ is the chord distance between sites $i$ and~$j$, cf.~Figure~\ref{fg:pot}.

The Hamiltonian \eqref{eq:HS} is a member of a hierarchy of higher Hamiltonians that pairwise commute~\cite{BG+_93} and, in principle, can be constructed explicitly and systematically~\cite{TH_95}. The first few of these abelian symmetries, apart from \eqref{eq:HS} and the translation operator
\begin{equation} \label{eq:HS_translation}
	G^\textsc{hs} \coloneqq P_{N,N-1} \cdots P_{12} \, ,
\end{equation} 
are given in \cite{Ino_90,HH+_92,TH_95}. 

The spectrum of the Haldane--Shastry spin chain is particularly simple. The joint eigenspaces of the abelian symmetries are labelled by simple combinatorial patterns.
\begin{definition} [\cite{HH+_92}]
A \emph{motif} (though `$N$-site $\mathfrak{sl}_2$ motif' would be more precise) is a sequence in $\{1,\To,N-1\}$ increasing with steps of at least two. As in \cite{Ugl_95u} we denote the set of all motifs by
\begin{equation} \label{eq:motif}
	\mathcal{M}_N \coloneqq \Bigl\{\, \mu \subset \{1,\To,N-1\} \Bigm| \mu_{m+1} > \mu_m + 1 \,\Bigr\} \, .
\end{equation}
Let us define the \emph{length}~$\ell(\mu)$ of $\mu$ to be the number of parts~$\mu_m$. We further write $|\mu| \coloneqq \sum_m \mu_m$.
\end{definition} 

Denote the empty motif by~$0$. For example,
\checkedMma 
$\mathcal{M}_2 = \{0,(1)\}$, $\mathcal{M}_3 = \{0,(1),(2)\}$ and $\mathcal{M}_4 = \{0,(1),(2),(3),(1,3)\}$. Motifs are stable under increase of the system size, $\mathcal{M}_{N-1} \subset \mathcal{M}_N$. Conditioning on whether $N-1 \in \mu$ yields a recursion $\mathcal{M}_N \cong \mathcal{M}_{N-1} \, \amalg \, \mathcal{M}_{N-2}$ (disjoint union), so the number of motifs forms a Fibonacci sequence with offset one: $\#\mathcal{M}_N = \text{Fib}_{N+1}$. 
\checkedMma

As for any homogeneous (translationally invariant) spin chain the momentum~$p^\textsc{hs}$ is defined such that \eqref{eq:HS_translation} has eigenvalue $\E^{\I \, p^\textsc{hs}}$.
For the eigenspace labelled by $\mu \in \mathcal{M}_N$ it is given by
\begin{align} \label{eq:HS_mtm}
	p^\textsc{hs}(\mu) & = \sum_{m=1}^M p^\textsc{hs}_m \ \ \mathrm{mod} \, 2\pi \, , 
	&& \ \, p^\textsc{hs}_m = \frac{2\pi}{N} \, \mu_m \, .
\intertext{The energy is (strictly) additive too, with a quadratic dispersion relation:}
	\label{eq:HS_dispersion}
	E^\textsc{hs}(\mu) & = \sum_{m=1}^M \varepsilon^\textsc{hs}(\mu_m) \, , \quad
	&& \varepsilon^\textsc{hs}(\mu_m) = \frac{1}{2} \, \mu_m \,(N-\mu_m) = \frac{N^2}{8\,\pi^2} \, p^\textsc{hs}_m \, \bigl( 2\pi - p^\textsc{hs}_m \bigr) \, .
\end{align}
Thanks to additivity all energies are half integral, i.e.\ lie in $\tfrac12 \mathbb{Z}_{\geq 0}$. The $\mu_m$ can be seen as the `Bethe quantum numbers', or, up to a factor, quasimomenta~$p^\textsc{hs}_m$. Indeed, $\mu_m$ parametrises the contribution of the $m$th magnon not just to the momentum~\eqref{eq:HS_mtm} but, by \eqref{eq:HS_dispersion}, also to the energy. We stress that, in view of the definition~\eqref{eq:motif} of motifs these energies are \emph{strictly} additive: the quasimomenta~$p^\textsc{hs}_m$ are all real\,---\,there are only `1-strings'\,---\,and there is no interaction (bound-state) energy. The physical picture is that of a gas of anyons: free quasiparticles that interact through their fractional (exclusion) statistics only~\cite{Hal_91a,Hal_91b}. See also \cite{Hal_94} and \cite{Pol_99}.

The spectrum is highly degenerate~\cite{Hal_88}. In part this is because the eigenvalues~\eqref{eq:HS_dispersion} may be (`accidentally'~\cite{FG_15}) degenerate, e.g.\ for $\mu=(1,3)$ and $\mu=(5,7)$ at $N=8$. Another reason is the presence of a large nonabelian symmetry algebra. 

\subsubsection{Nonabelian symmetries} The Hamiltonian \eqref{eq:HS} is clearly isotropic, i.e.\ invariant under $\mathfrak{U}^\textsc{hs} \coloneqq U \mathfrak{sl}_2$, the universal enveloping algebra of $\mathfrak{sl}_2 = (\mathfrak{su}_2)_\mathbb{C}$. The latter acts as usual: if $\sigma^\pm \coloneqq (\sigma^x \pm \I \, \sigma^y)/2$ then
\begin{equation} \label{eq:sl_2}
	S^\pm \coloneqq \sum_{i=1}^N \sigma^\pm_i \, , \quad S^z \coloneqq \frac{1}{2} \sum_{i=1}^N \sigma^z_i \, , \qquad [S^z,S^\pm] = \pm S^\pm \, , \quad [S^+,S^-] = 2 \, S^z \, .
\checkedMma 
\end{equation}
For the Haldane--Shastry spin chain this symmetry is enhanced to the \emph{Yangian} $\widehat{\mathfrak{U}}^\textsc{hs} \coloneqq Y\mspace{-2mu}(\mathfrak{sl}_2)$, with additional generators~\cite{HH+_92} (note that $\ev \I \, (z_i + z_j)/(z_i - z_j) = \cot\bigl(\pi(i-j)/N\bigr)$)
\checkedMma
\begin{equation} \label{eq:Yangian_gens}
	\begin{aligned}
		Q^\pm & = \ev \widetilde{Q}^\pm \, , \qquad && \widetilde{Q}^\pm = \mp\frac{\I}{2} \sum_{i<j}^N \frac{z_i + z_j}{z_i - z_j} \, (\sigma^\pm_i \, \sigma^z_j - \sigma^z_i \, \sigma^\pm_j ) \, , \\
		Q^z & = \ev \widetilde{Q}^z \, , \qquad && \, \widetilde{Q}^z = \hphantom{\pm}\frac{\I}{2} \sum_{i<j}^N \frac{z_i + z_j}{z_i - z_j} \, (\sigma^+_i \, \sigma^-_j - \sigma^-_i \, \sigma^+_j ) \, .
	\end{aligned}
	\checkedMma
\end{equation}
Even off shell these live in the adjoint representation of $\mathfrak{sl}_2$,
\begin{subequations} \label{eq:Yangian_rels}
	\begin{gather}
	[S^z , \widetilde{Q}^\pm] = \pm \widetilde{Q}^\pm \, , \qquad [S^\pm, \widetilde{Q}^\mp] = \pm 2 \, \widetilde{Q}^z \, , \qquad [S^\pm , \widetilde{Q}^z] = \mp \widetilde{Q}^\pm \, ,
\checkedMma
\intertext{and obey the Serre relation}
	[\widetilde{Q}^z,[\widetilde{Q}^+,\widetilde{Q}^-]] = {-}(S^+ \,\widetilde{Q}^- - \widetilde{Q}^+ \, S^-)\,S^z \, .
	\end{gather}
\end{subequations}
On shell, \eqref{eq:Yangian_gens} moreover commute with the abelian symmetries, including $G^\textsc{hs}$ and $H^\textsc{hs}$. 

The upshot is that the Hilbert space decomposes as
\begin{equation} \label{eq:HS_motif_decomp}
	\mathcal{H} = \bigoplus_{\mu \,\in\, \mathcal{M}_N} \!\!\! \mathcal{H}^{\mu,\textsc{hs}} \, .
\end{equation}
Each $\mathcal{H}^{\mu,\textsc{hs}}$ is a joint eigenspace for the abelian symmetries, with energy and momentum~\eqref{eq:HS_dispersion}, as well as an irreducible Yangian module with known Drinfeld polynomial~\cite{BG+_93}. It contains a unique (up to rescaling) vector $\ket{\mu}^{\mspace{-1mu}\textsc{hs}} \in \mathcal{H}^{\mu,\textsc{hs}}$ with \emph{Yangian highest weight}, i.e.\ $S^+\,\ket{\mu}^{\mspace{-1mu}\textsc{hs}} = Q^+ \ket{\mu}^{\mspace{-1mu}\textsc{hs}} = 0$. Conversely, all of $\mathcal{H}^{\mu,\textsc{hs}}$ is generated by the $\widehat{\mathfrak{U}}^\textsc{hs}$-action on $\ket{\mu}^{\mspace{-1mu}\textsc{hs}}$. This structure of $\mathcal{H}$ is illustrated in Figure~\ref{fg:N=6}. The vector $\ket{\mu}^{\mspace{-1mu}\textsc{hs}}$ can be written down in closed form, as follows.

\begin{figure}[h]
	\centering
	\begin{tikzpicture}[scale=0.5]
		\draw[->] (-1,-3.1) -- (-1,3.3) node[above] {$\,S^z$};
		\foreach \y in {-3,...,3} \draw (-1,\y) -- (-1.1,\y);
		\node at (-2,3.3) [above] {$M$};
		\foreach \M in {0,...,6} \node at (-2,3-\M) {$\M$};
		\foreach \y in {-3,...,3} \fill[black] (0,\y) circle (.2);
		\foreach \x in {1,2,4,7,9} \foreach \y in {-2,...,2} \fill[black] (\x,\y) circle (.2);
		\foreach \x in {3,5,8,10,11,13,14,16,18} \foreach \y in {-1,...,1} \fill[black] (\x,\y) circle (.2);
		\foreach \x in {6,12,15,17,19} \fill[black] (\x,0) circle (.2);
		\draw (0,3) -- (0,-3);
		\foreach \x in {1,2,4,7,9} \draw (\x,2) -- (\x,-2);
		\foreach \x in {3,5,8,10,11,13,14,16,18} \draw (\x,1) -- (\x,-1);
		\foreach \x in {2,4,7} {
			\foreach \y in {0,...,2} \draw[dotted] (\x,\y) -- (\x+1,\y-1);
			\foreach \y in {-2,...,0} \draw[dotted] (\x,\y) -- (\x+1,\y+1);
		};
		\foreach \x in {5,11,14,16} {
			\draw[dotted] (\x,1) -- (\x+1,0);
			\draw[dotted] (\x,-1) -- (\x+1,0);
		};
		\node at (0,3) [above] {$\scriptstyle0$};
		\node at (1,2) [above] {$\scriptstyle(1)$};
		\node at (2,2) [above] {$\scriptstyle(2)$};
		\node at (4,2) [above] {$\scriptstyle(3)$};
		\node at (7,2) [above] {$\scriptstyle(4)$};
		\node at (9,2) [above] {$\scriptstyle(5)$};
		\node at (10,1) [above] {$\scriptstyle(1,3)$};
		\node at (11,1) [above] {$\scriptstyle\ \ \ (1,4)$};
		\node at (13,1) [above] {$\scriptstyle(1,5)$};
		\node at (14,1) [above] {$\scriptstyle\ \ \ (2,4)$};
		\node at (16,1) [above] {$\scriptstyle(2,5)$};
		\node at (18,1) [above] {$\scriptstyle(3,5)$};
		\node at (19,0) [above] {$\scriptstyle \ \ \ \ \ (1,3,5)$};
		\draw[lightgray] (1,-4.45) -- (1,-4.85) -- (9,-4.85) -- (9,-4.45);
		\draw[lightgray] (2.5,-4.45) -- (2.5,-4.7) -- (7.5,-4.7) -- (7.5,-4.45);
		\draw[lightgray] (10,-4.45) -- (10,-4.85) -- (18,-4.85) -- (18,-4.45);
		\draw[lightgray] (11.5,-4.45) -- (11.5,-4.7) -- (16.5,-4.7) -- (16.5,-4.45);
		\node at (-1,-3.8) {$p^\textsc{hs}$};
		\node at (0,-3.8) {$\scriptstyle 0$};
		\node at (1,-3.8) {$\scriptstyle \tfrac{\pi}{3}$};
		\node at (2.5,-3.8) {$\scriptstyle \tfrac{2\pi}{3}$};
		\node at (5,-3.8) {$\scriptstyle \pi$};
		\node at (7.5,-3.8) {$\scriptstyle \tfrac{4\pi}{3}$};
		\node at (9,-3.8) {$\scriptstyle \tfrac{5\pi}{3}$};
		\node at (10,-3.8) {$\scriptstyle \tfrac{4\pi}{3}$};
		\node at (11.5,-3.8) {$\scriptstyle \tfrac{5\pi}{3}$};
		\node at (13,-3.8) {$\scriptstyle 0$};
		\node at (14.5,-3.8) {$\scriptstyle 0$};
		\node at (16.5,-3.8) {$\scriptstyle \tfrac{\pi}{3}$};
		\node at (18,-3.8) {$\scriptstyle \tfrac{2\pi}{3}$};
		\node at (19,-3.8) {$\scriptstyle \pi$};
	\end{tikzpicture}
	\caption{Schematic picture of the structure of the Hilbert space $\mathcal{H}$ for $N=6$. Each 
	dot represents an eigenvector of the abelian symmetries. The vertical axis records its $S^z$, equal to $3-M$ for $\mathcal{H}_M$. The $\widehat{\mathfrak{U}}^\textsc{hs}$ highest-weight $H^\textsc{hs}$-eigenvectors $\ket{\mu}^{\mspace{-1mu}\textsc{hs}}$ are labelled by their motif. Vertical lines connect vectors in an $\mathfrak{U}^\textsc{hs}$-irrep, which are combined by dotted lines into $\widehat{\mathfrak{U}}^\textsc{hs}$-irreps $\mathcal{H}^{\mu,\textsc{hs}}$. The value of the momentum~$p^\textsc{hs}$ is indicated below each irrep, where we have also linked parity-conjugate pairs with opposite momentum and mirror-image motifs. (In the \textsf{q}-deformed case the picture is the same: just drop the superscripts `\textsc{hs}'.)}
	\label{fg:N=6}
\end{figure}
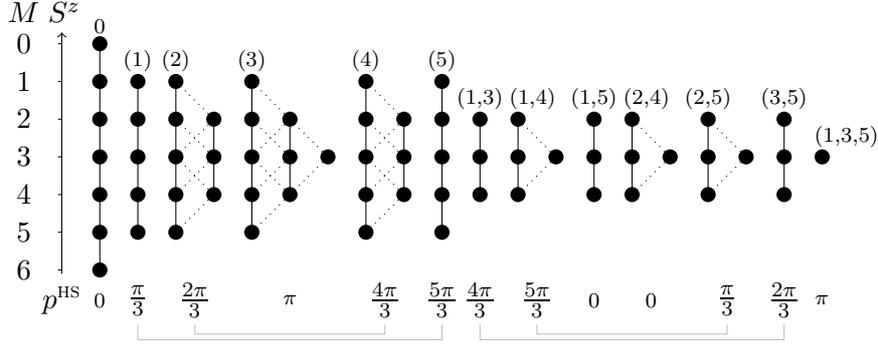

\subsubsection{Explicit eigenvectors}
The abelian symmetries preserve the decomposition
\begin{equation} \label{eq:weight_decomp}
	\mathcal{H} = \bigoplus_{M=0}^N \! \mathcal{H}_M \, , \qquad \mathcal{H}_M \coloneqq \ker\bigl[S^z - \bigl(\tfrac{1}{2} N-M\bigr)\bigr] \, .
\end{equation}
Any vector in the \emph{$M$-particle sector} (weight space)~$\mathcal{H}_M$ can be written via the coordinate basis:
\begin{equation} \label{eq:coord_basis}
	\!\! \sum_{i_1<\cdots<i_M}^N \!\!\!\!\! \Psi(i_1,\To,i_M) \, \cket{i_1,\cdots,i_M} \, , \qquad 
	\cket{i_1,\To,i_M} \coloneqq \sigma^-_{i_1} \cdots \sigma^-_{i_M} \, \ket{\uparrow\cdots\uparrow} \, .
\end{equation}
Thus, $\cket{\varnothing} = \ket{\uparrow\cdots\uparrow} \in \mathcal{H}_0$ is the pseudovacuum, while $\cket{i}\in \mathcal{H}_1$ has a~$\downarrow$ at site~$i$, and so on. Property~(iii) of the Haldane--Shastry spin chain comprises two statements. Firstly, every $H^\textsc{hs}$-eigenvector in $\mathcal{H}_M$ is completely determined by some symmetric polynomial $\widetilde{\Psi}^\textsc{hs}(z_1,\To,z_M)$ via \cite{BG+_93}
\begin{equation} \label{eq:HS_vec_ev}
	\Psi^\textsc{hs}(i_1,\To,i_M) = \cbraket{i_1,\To,i_M}{\Psi}^\textsc{hs} = \ev \widetilde{\Psi}^\textsc{hs}(z_{i_1},\To,z_{i_M}) \, .
\end{equation}
Secondly, for the $\widehat{\mathfrak{U}}^\textsc{hs}$ highest-weight eigenvectors these polynomials take an elegant form.

\begin{definition} Recall that a \emph{partition} $\nu = (\nu_1 \geq \nu_2 \geq \dots \geq 0)$ is a weakly decreasing sequence of integers, with \emph{length} $\ell(\nu)$ the number of nonzero parts. There is a length-preserving bijection from $\mathcal{M}_N$ to the set of partitions with $\nu_1 \leq N-2\,\ell(\nu)+1$ (see also Figure~\ref{fg:motifs_vs_partitions}): at length~$M$ set
\checkedMma
\begin{subequations} \label{eq:motifs_vs_partitions}
	\begin{gather}
	\nu_m \coloneqq \mu_{M-m+1} - 2\,(M-m) \, , \qquad 1\leq m\leq M \, .
\checkedMma
\intertext{If $\delta_M \coloneqq (M-1,M-2,\cdots)$ denotes the staircase partition of length~$M-1$, and $\mu^+$ the partition obtained from $\mu\in\mathcal{M}_N$ by reversal, this relation takes the succinct form}
	\nu + 2 \, \delta_M = \mu^+ \, , 
\end{gather}
\end{subequations}
where addition and scalar multiplication are pointwise. 
\end{definition}

With this notation the (unnormalised) wave function \eqref{eq:HS_vec_ev} of $\ket{\mu}^\textsc{hs}$ is determined by the polynomial \cite{Hal_91a,BG+_93}
\begin{equation} \label{eq:HS_polynomial}
	\widetilde{\Psi}^\textsc{hs}_\nu(z_1,\To,z_M) = \prod_{m<n}^M \! (z_m-z_n)^2 \ P_\nu^{(1/2)}(z_1,\To,z_M) \, .
\end{equation}
Here $P_\nu^{(\alpha)}$ is a Jack polynomial~\cite{Jac_70} with parameter~$\alpha = k^{-1}$ related to the coupling $k\,(k-1)$ of the (trigonometric quantum) Calogero--Sutherland model~\cite{Sut_71,Sut_72}. These symmetric polynomials are studied extensively in the literature, see e.g.~\cite{Sta_89,Mac_95}. They play an important role in~\cite{Mat_92}, and appear for the fractional quantum Hall effect~\cite{KP_07,BH_08}. If $\alpha = 1/2$, as in \eqref{eq:HS_polynomial}, one gets \emph{zonal spherical polynomials}, see e.g.\ \textsection{}VII.6 in \cite{Mac_95}. (For comparison: $\alpha=1$ gives Schur and $\alpha=2$ zonal polynomials; cf.~Figure~\ref{fg:Macdonalds} on p.\,\pageref{fg:Macdonalds}.)

\begin{remarks} 
\textbf{i.}~Note that $\ell(\nu)=M$ means that $\nu_M \geq 1$ and $\nu_{M+1}=0$, so $\nu_m = \bar{\nu}_m + 1$ for some partition $\bar{\nu}$ with $\ell(\bar{\nu})\leq M$ (see again Figure~\ref{fg:motifs_vs_partitions}). Jack polynomials have the property
\begin{equation} \label{eq:Jack_shift_property}
	P_\nu^{(\alpha)}(z_1,\To,z_M) = z_1 \cdots z_M \, P_{\mspace{1mu}\bar{\nu}}^{(\alpha)}(z_1,\To,z_M) \, , \qquad \nu_m = \bar{\nu}_m + 1 \, .
\end{equation}
In the literature on the Haldane--Shastry model this relation is often used to extract an explicit centre-of-mass factor $z_1 \cdots z_M$ and end up with a polynomial associated to~$\bar{\nu}$ as on the right-hand side of \eqref{eq:Jack_shift_property}. This factor (or, equivalently, the condition $\ell(\nu)=M$) ensures that the resulting eigenvector has Yangian highest-weight on shell~\cite{BPS_95a}. 

\textbf{ii.}~The relation \eqref{eq:motifs_vs_partitions} has the following origin.
Write $\vect{z}^\nu = z_1^{\nu_1} \cdots z_N^{\nu_N}$, appending zeros to $\nu$ if necessary to get a (weak) partition with $N$ entries. With respect to the dominance ordering, see \eqref{eq:dominance} in \textsection\ref{s:Macdonald}, the highest term in \eqref{eq:HS_polynomial} receives contributions from
\begin{equation*}
	\prod_{m<n}^M \! (z_m-z_n) = \vect{z}^{\delta_M} + \text{lower} \, , \qquad 
	P_\nu^{(\alpha)}(z_1,\To,z_M) = \vect{z}^\nu + \text{lower} \, .
\end{equation*}
Therefore
\begin{equation} \label{eq:largest_monomial}
	\widetilde{\Psi}^\textsc{hs}_\nu(z_1,\To,z_M) = \vect{z}^{\mu^+} + \text{lower monomials} \, , 
\end{equation}
explaining \eqref{eq:motifs_vs_partitions}. Next, the degree of \eqref{eq:HS_polynomial} in any variable is $\deg_{z_1} \! \widetilde{\Psi}^\textsc{hs}_\nu = \nu_1 + 2\,(M-1)  = \mu_M$. As $\ev z_1^N = 1$ it suffices to consider partitions $\nu$ such that $\deg_{z_1} \! \widetilde{\Psi}^\textsc{hs}_\nu \leq N-1$. This reproduces the condition $\nu_1 \leq N-2\,M +1$, i.e.\ $\mu_M < N$, from the line preceding \eqref{eq:motifs_vs_partitions}. 

\textbf{iii.}~Since $P_\nu^{(\alpha)}$ is a homogeneous polynomial of total degree $|\nu|$, the polynomial \eqref{eq:HS_polynomial} is homogeneous of total degree $|\mu|$. This readily yields \eqref{eq:HS_mtm}. The proof of \eqref{eq:HS_dispersion} is more intricate.
\end{remarks}

\begin{figure}[h]
	\centering
	\begin{tikzpicture}[scale=0.3]
		\foreach \x in {1,...,15} \draw (\x*1.25,5) circle (.5); 
		\foreach \x in {2,6,8,11} \draw[fill=black] (\x*1.25,5) circle (.5); 
		\node at (2*1.25,5.5) [above] {$\mu_1$};
		\node at (6*1.25,5.5) [above] {$\mu_2$};
		\node at (8*1.25,5.5) [above] {$\vphantom{\mu_1}\cdots$};
		\node at (11*1.25,5.5) [above] {$\mu_M$};
		\draw[dotted] (2*1.25,4.5) -- (2*1.25,2.5-.33);
		\draw[dotted] (6*1.25,4.5) -- (6*1.25,2.5-.66);
		\draw[dotted] (8*1.25,4.5) -- (8*1.25,.5+.66);
		\draw[dotted] (11*1.25,4.5) -- (11*1.25,.5);
		\foreach \x in {1,...,15} \draw (\x*1.25,-.5) circle (.5); 
		\foreach \x in {1,3,5,7} \draw[fill=black] (\x*1.25,-.5) circle (.5);
		\node at (1*1.25,-1) [below] {$1$};
		\node at (3*1.25,-1) [below] {$3$};
		\node at (5*1.25,-1) [below] {$\vphantom{M}\cdots$\ \ \ };
		\node at (7*1.25,-1) [below] {\ \ $2M{-}1$}; 
		\node at (15*1.25,-1) [below] {\ \ $N{-}1$}; 
		\draw[dotted] (1*1.25,-.5) -- (1*1.25,3-.33);
		\draw[dotted] (3*1.25,-.5) -- (3*1.25,3-.66);
		\draw[dotted] (5*1.25,-.5) -- (5*1.25,1+.66);
		\draw[dotted] (7*1.25,-.5) -- (7*1.25,1);
		\draw[<->] (1*1.25,2.75-.33) -- (2*1.25,2.75-.33);
		\draw[<->] (3*1.25,2.75-.66) -- (6*1.25,2.75-.66);
		\draw[<->] (5*1.25,.75+.66) -- (8*1.25,.75+.66);
		\draw[<->] (7*1.25,.75) -- (11*1.25,.75);
		\node at (1.5*1.25,2.75-.33) [shift={(-.12cm,.3cm)}] {$\bar{\nu}_M$};
		\node at (4.5*1.25,2.75-.66) [yshift=.25cm] {$\bar{\nu}_{M-1}$};
		\node at (6.5*1.25,.75+.66-.05) [inner sep=1.5pt, fill=white] {$\To$};
		\node at (9*1.25,.75) [shift={(.2cm,.25cm)}] {$\bar{\nu}_1$};
	\end{tikzpicture}
	\caption{The correspondence \eqref{eq:motifs_vs_partitions} between a motif $\mu \in \mathcal{M}_N$ of length $M = \ell(\mu) \geq 1$ and a partition with $\nu_1 \leq N-2\,M+1$ and $\ell(\nu) = M$, given by $\nu_m = \bar{\nu}_m +1$, $1\leq m\leq M$. Note that $\bar{\nu}$ characterises the extent by which $\mu$ differs from the left-most filled motif of length~$M$.}
	\label{fg:motifs_vs_partitions}
\end{figure}
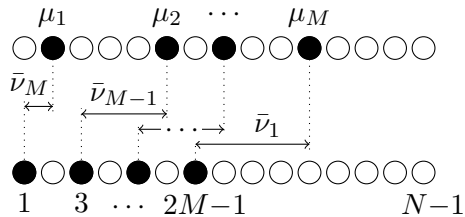

\subsubsection{Spin-Calogero--Sutherland and freezing} \label{s:intro_spin-CS}
The key insight of \cite{BG+_93} is that the many special properties of the Haldane--Shastry spin chain naturally arise from a connection with a \emph{dynamical} model. Consider the spin-version of the Calogero--Sutherland model, with $N$~spin-1/2 particles moving on a circle while interacting in pairs, and Hamiltonian~\cite{HH_92,MP_93,HW_93} 
\begin{equation} \label{eq:spin-CS}
	\begin{aligned} 
	\widetilde{H}^{\text{nr}} & = \frac{1}{2} \, \sum_{i=1}^N \, \bigl(z_i \, \partial_{z_i}\bigr)^{\!2} + \sum_{i<j}^N \frac{-z_i\,z_j}{(z_i-z_j)^2} \ k \, (k-P_{ij}) \\
	& = {-}\frac{1}{2}\sum_{i=1}^N \partial_{x_i}^2 + \sum_{i<j}^N \frac{k\,(k- P_{ij})}{4\, \sin^2[(x_i-x_j)/2]} \, , \qquad z_j = \E^{\I\, x_j} \, .
	\end{aligned}
\end{equation}
Here $k$ is the reduced coupling parameter. In the second line we switched to additive notation. 
This model already
\begin{enumerate}
	\item[i.] (\emph{abelian symmetries}) belongs to a family of commuting operators~\cite{BG+_93,Che_94b,Res_17},
	each of which
	\item[ii.] (\emph{nonabelian symmetries}) commutes with an action of the Yangian \cite{BG+_93}, cf.~\cite{Dri_86};
	\item[iii.] (\emph{explicit eigenvectors}) has eigenvectors that are determined by a suitably symmetric polynomial, which for Yangian highest-weight eigenvectors is known explicitly in terms of a Jack polynomial \cite{TU_97,Ugl_98}.
\end{enumerate}

As foreseen in \cite{Sha_88} the spin chain emerges through \emph{freezing}~\cite{Pol_93,SS_93,BG+_93,TH_95}: when $k \to \infty$ the kinetic energy becomes negligible compared to the potential energy and the particles `freeze' at their equally spaced classical equilibrium positions $\ev z_j$ to yield \eqref{eq:HS}. By carefully evaluating this limit one shows that properties (i) and~(ii) are inherited by the spin chain~\cite{BG+_93,TH_95}. The derivation of property~(iii) for the spin chain is also based on freezing, though the argument is more subtle; we do \emph{not} know how to get \eqref{eq:HS_polynomial} directly from the spin-Calogero--Sutherland eigenvectors of \cite{TU_97,Ugl_98} via freezing.

\subsection{\textsf{q}-deformed Haldane--Shastry} \label{s:intro_qHS} 
Our goal is to study the partially isotropic (\textsc{xxz}-like) counterpart of the Haldane--Shastry spin chain, building on \cite{BG+_93,TH_95,Ugl_95u,Lam_18}. Let $\mathsf{q} \in \mathbb{C}^\times$ denote the anisotropy parameter. In a nutshell, the isotropy of the Haldane--Shastry spin chain can be \textsf{q}-deformed in such a way that the result
\begin{enumerate}
	\item[i.] (\emph{abelian symmetries}) belongs to a family of commuting operators~\cite{BG+_93,Ugl_95u}, each of which
	\item[ii.] (\emph{nonabelian symmetries}) commutes with an action of the quantum-affine (more pecisely: quantum-loop) algebra \cite{BG+_93};
	\item[iii.] (\emph{explicit eigenvectors}) has eigenvectors that are determined by a symmetric polynomial, which for (`pseudo') highest-weight eigenvectors are known explicitly and involves a Macdonald polynomial.
\end{enumerate}
The state of the art can be summarised as follows. A Hamiltonian that \textsf{q}-deforms \eqref{eq:HS} was found by Uglov~\cite{Ugl_95u} and simplified significantly by one of us \cite{Lam_18}. We shall give a direct derivation of the latter and of the appropriate translation operator, which was proposed in \cite{Lam_18}. These abelian symmetries are reviewed in \textsection\ref{s:intro_abelian}, where we will moreover present two new \textsf{q}-deformations of \eqref{eq:HS}. One of them has real spectrum also for the regime $|\mathsf{q}|=1$ that should be most interesting physically. We give the higher abelian symmetries in \textsection\ref{s:intro_freezing}.

By construction~\cite{BG+_93} the \textsf{q}-deformation is such that the nonabelian symmetries are deformed to 
\checkedMma
the quantum-affine level. We preview these symmetries in \textsection\ref{s:intro_nonabelian_1}. Unfortunately we have not been able to formulate their action, which was also studied in \cite{Ugl_95u}, as concretely as that of the Yangian generators~\eqref{eq:Yangian_gens}. We describe this action a little later, in \textsection\ref{s:intro_nonabelian_2}, where we also explain the appropriate notion of highest weight, which we call pseudo highest weight.

The structure of the Hilbert space parallels the isotropic case. In particular we find that all eigenvectors are still determined by some polynomial, which we are able to give explicitly for the eigenvectors with (pseudo) highest weight. We present these eigenvectors in \textsection\ref{s:intro_explicit_evrs}.
	
These remarkable properties once again stem from a dynamical model that reduces to the  \textsf{q}-deformed Haldane--Shastry spin chain via freezing. We return to this in \textsection\ref{s:intro_spin-RM}.

\subsubsection{Abelian symmetries} \label{s:intro_abelian} 
A Hamiltonian for the \textsf{q}-deformed Haldane--Shastry spin chain was found in \cite{Ugl_95u}. Like \eqref{eq:HS} it admits an expression in a long-range pairwise form~\cite{Lam_18}:\,%
\footnote{\ \label{fn:CPT} We set the coupling constant from \cite{Lam_18} to $J=[N]/N$ as in \cite{Ugl_95u}. Note that in \cite{Lam_18} all spectral parameters were inverted, cf.~\eqref{eq:R_graphical} and \eqref{eq:Sij_left_ex}, in order to stay close to the expressions of \cite{Ugl_95u}. Equivalently, $H$ from \cite{Lam_18} is related to \eqref{eq:ham_left} by inverting~$\mathsf{q}$ and flipping all spins $\ket{\uparrow} \leftrightarrow \ket{\downarrow}$. Indeed, in \cite{Lam_18} it was observed that $H^\textsc{l}$ is (`\textsc{cpt}') invariant under simultaneous reversal of spins~$\ket{\downarrow} \leftrightarrow \ket{\uparrow}$, the order of the coordinates $z_j \mapsto z_{N-j+1}$ (on shell equivalent to $z_j \mapsto 1/z_j$ as all coordinates occur in ratios), and inverting $\mathsf{q}$. This symmetry is easy to understand: the \textit{R}-matrix~\eqref{eq:R_mat} is invariant under inverting its arguments, as well as $\mathsf{q}$, together with conjugation by ($P$ or) $\sigma^x \otimes \sigma^x$. This extends to the $S_{[i,j]}^{\mspace{1mu}\textsc{l}}$, and thus the Hamiltonian, where conjugation by the antidiagonal matrix $(\sigma^x)^{\otimes N}$ implements global spin reversal. (In particular, complex conjugation of $H^\textsc{l}$ is equivalent to global spin reversal if $\mathsf{q}\in S^1$.)}%
\begin{equation} \label{eq:ham_left}
	\normalfont 
	H^\textsc{l} = \ev \widetilde{H}^\textsc{l} \, , \qquad \widetilde{H}^\textsc{l} = \frac{[N]}{N} \sum_{i<j}^N V\mspace{-1mu}(z_i,z_j) \, S_{[i,j]}^{\mspace{1mu}\textsc{l}} \, .
\checkedMma
\end{equation}
(The superscript `L' will make sense soon.) The prefactor involves the \textsf{q}-analogue of $N \in \mathbb{N}$,
\begin{equation} \label{eq:N_q}
	[N] \coloneqq \frac{\mathsf{q}^N - \mathsf{q}^{-N}}{\mathsf{q}-\mathsf{q}^{-1}} = \mathsf{q}^{N-1} + \mathsf{q}^{N-3} + \cdots + \mathsf{q}^{3-N} + \mathsf{q}^{1-N} \, .
\checkedMma
\end{equation} 
Next, the potential in \eqref{eq:ham_left} reads
\begin{equation} \label{eq:pot}
	V\mspace{-1mu}(z_i,z_j) = \frac{z_i \, z_j}{(\mathsf{q}\,z_i-\mathsf{q}^{-1}z_j)(\mathsf{q}\,z_j-\mathsf{q}^{-1}z_i)} \, ,
\checkedMma
\end{equation}
where the sign is chosen such that $\ev V\mspace{-1mu}(z_i,z_j) > 0$ for $\mathsf{q} \in \mathbb{R}$. A geometric way to think about this potential is shown in Figure~\ref{fg:pot}.

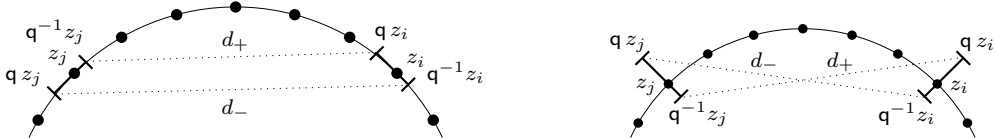
\begin{figure}[h]
	\centering
	\subcaptionbox*{}{%
		\begin{tikzpicture}[scale=0.3,font=\scriptsize]
		\draw[very thin] (25:10) arc (25:155:10);
		\foreach \th in {0,...,8} \fill[black] (30+15*\th:10) circle (.25);
		\draw[thick,|-|] (30+15*1:10) +(45-90:.75) -- +(45+90:1.3);
		\draw[thick,|-|] (30+15*7:10) +(135-90:.75) -- +(135+90:1.3);
		\draw[dotted] ($(45:10) +(45-90:.75)$) -- ($(135:10) + (135+90:1.3)$);
		\draw[dotted] ($(45:10) + (45+90:1.3)$) -- ($(135:10) + (135-90:.75)$);
		\node at (44:11) {$z_i$}; 
		\node at ($(45:11) + (45+90:1.3)$) [yshift=1.25] {$\mathsf{q}\,z_i$};
		\node at ($(45:11) + (45-90:1)$) [xshift=10] {$\mathsf{q}^{-1}z_i$};
		\node at (135:11) {$z_j$};
		\node at ($(135:11) + (135+90:1.3)$) [xshift=-5] {$\mathsf{q}\,z_j$};
		\node at ($(135:11) + (135-90:1)$) [shift={(-.25,.125)}] {$\mathsf{q}^{-1}z_j$};
		\node at (0,5.5) {$d_-$};
		\node at (0,8.5) {$d_+$};
		\end{tikzpicture}
	}\hfil 
	\subcaptionbox*{}{%
		\begin{tikzpicture}[scale=0.25,font=\scriptsize]
		\draw[very thin] (25:10) arc (25:155:10);
		\foreach \th in {0,...,8} \fill[black] (30+15*\th:10) circle (.25);
		\draw[thick,|-|] (30+15*1:10) + (45:-1) -- +(45:2);
		\draw[thick,|-|] (30+15*7:10) + (135:-1) -- +(135:2);
		\draw[dotted] ($(45:10) + (45:-1)$) -- ($(135:10) + (135:2)$);
		\draw[dotted] ($(45:10) + (45:2)$) -- ($(135:10) + (135:-1)$);
		\node at (40:10.75) {$z_i$}; 
		\node at (45:13) {$\mathsf{q}\,z_i$};
		\node at (45:7.75) {$\mathsf{q}^{-1}z_i$};
		\node at (140:10.75) {$z_j$};
		\node at (135:13) {$\mathsf{q}\,z_j$};
		\node at (135:7.75) {$\mathsf{q}^{-1}z_j$};
		\node at (-2,8.25) {$d_-$}; 
		\node at (2,8.25) {$d_+$};
		\end{tikzpicture}
	}
	\caption{The potential \eqref{eq:pot} is a point splitting of the inverse square in \eqref{eq:HS}. Consider a little `dipole' at each site, with length set by~$\mathsf{q}-\mathsf{q}^{-1}$. Then $\ev V\mspace{-1mu}(z_i,z_j) = 1/d_+ \, d_-$, where $d_\pm$ are illustrated for $\mathsf{q}\in \I\,\mathbb{R}_{>1}$ (left) and $\mathsf{q}\in\mathbb{R}_{>1}$ (right). As $\mathsf{q}\to1$ both $d_\pm \to d$ reduce to the chord distance.} \label{fg:pot}
\end{figure}

Finally, the operators $S_{[i,j]}^{\mspace{1mu}\textsc{l}}$ in \eqref{eq:ham_left} deform the long-range exchange interactions of \eqref{eq:HS}. The deformation is accomplished via the spin-1/2 \textsc{xxz} (six-vertex) \textit{R}-matrix
\begin{equation} \label{eq:R_mat}
	\check{R}(u) \coloneqq 
	\begin{pmatrix} 
	\, 1 & \color{gray!80}{0} & \color{gray!80}{0} & \color{gray!80}{0} \, \\
	\, \color{gray!80}{0} & u\,g(u) & f(u) & \color{gray!80}{0} \, \\
	\, \color{gray!80}{0} & f(u) & g(u) & \color{gray!80}{0} \, \\
	\, \color{gray!80}{0} & \color{gray!80}{0} & \color{gray!80}{0} & 1 \, \\
	\end{pmatrix}
	\, , \qquad
	f(u) \coloneqq \frac{u-1}{\mathsf{q}\,u-\mathsf{q}^{-1}} \, , \quad
	g(u) \coloneqq \frac{\mathsf{q}-\mathsf{q}^{-1}}{\mathsf{q}\,u-\mathsf{q}^{-1}} \, . 
\checkedMma
\end{equation}
Here the $4\times4$ matrix is with respect to the standard basis $\ket{\uparrow\uparrow},\ket{\uparrow\downarrow},\ket{\downarrow\uparrow},\ket{\downarrow\downarrow}$ of $\mathbb{C}^2 \otimes \mathbb{C}^2$. 
The functions $f$ and $g$ can be recognised as the ratios of the six-vertex model's local weights.\,%
\footnote{\ Namely, $f(u) = b/a$ and $u\,g(u) = c_+/a$, $g(u) = c_-/a$ in terms of the vertex weights of the asymmetric six-vertex model. Note that one usually thinks of $R(u) = P \, \check{R}(u)$ as encoding these vertex weights, so $b \leftrightarrow c_\pm$ are swapped. The asymmetry $c_+ \neq c_-$ stems from the connection to the \textsf{q}-deformed algebras in \textsection\ref{s:Hecke_TL_Uqsl}, see \eqref{eq:baxterisation}.} 
The properties of \eqref{eq:R_mat}, notably including the Yang--Baxter equation, are reviewed in \textsection\ref{s:quantum-affine}. Note that $\check{R}(u) \to P$ as $\mathsf{q}\to1$.

Note that the isotropic interactions in \eqref{eq:HS} can be decomposed into nearest-neighbour steps consisting of transport to the left, interaction, and transport back:
\begin{equation*}
	1-P_{ij} = P_{j-1,j} \cdots P_{i+1,i+2} \, (1-P_{i,i+1}) \, P_{i+1,i+2} \cdots P_{j-1,j} \, .
\end{equation*}
The appropriate \textsf{q}-deformation has the same structure, cf.~\cite{HS_96}. It is perhaps most clearly defined using graphical notation:
\begin{equation} \label{eq:Sij_left}
	S_{[i,j]}^{\mspace{1mu}\textsc{l}} \coloneqq \!
	\tikz[baseline={([yshift=-.5*11pt*0.4]current bounding box.center)},	xscale=0.6,yscale=0.4,font=\footnotesize]{
		\draw[->] (10.5,0) node[below]{$z_N$} -- (10.5,8)  node[above]{$z_N\vphantom{z_j}$};
		\draw[->] (9,0) node[below]{$z_{j+1}$} -- (9,8)  node[above]{$z_{j+1}$};
		\draw[rounded corners=3.5pt,->] (8,0) node[below]{$z_j$} -- (8,.5) -- (7,1.5) node[inner sep=1.5pt,fill=white]{$z_j$} -- (6,2.5) node[inner sep=1.5pt,fill=white]{$z_j$} -- (5,3.5) node[shift={(.05,-.1)}, inner sep=1.5pt,fill=white]{$z_j$} -- (5,4) -- (5,4.5) node[shift={(.05,.1)}, inner sep=1.5pt,fill=white]{$z_j$} -- (6,5.5) node[inner sep=1.5pt,fill=white]{$z_j$} -- (7,6.5) node[inner sep=1.5pt,fill=white]{$z_j$} -- (8,7.5) -- (8,8) node[above]{$z_j$};
		\draw[rounded corners=3.5pt,->] (7,0) node[below]{$z_{j-1}$} -- (7,.5) -- (8,1.5) -- (8,4) node[inner sep=1.5pt,fill=white]{$z_{j-1}$} -- (8,6.5) -- (7,7.5) -- (7,8) node[above]{$z_{j-1}$};
		\draw[rounded corners=3.5pt,->] (6,0) node[below]{$\cdots$} -- (6,1.5) -- (7,2.5) -- (7,4) node[inner sep=1.5pt,fill=white]{$\cdots$} -- (7,5.5) -- (6,6.5) -- (6,8) node[above]{$\cdots{\vphantom{z_j}}$};
		\draw[rounded corners=3.5pt,->] (5,0) node[below]{$z_{i+1}$} -- (5,2.5) -- (6,3.5) -- (6,4) node[inner sep=1.5pt,fill=white]{$z_{i+1}$} -- (6,4.5) -- (5,5.5) -- (5,8) node[above]{$z_{i+1}\vphantom{z_j}$};
		\draw[->] (4,0) node[below]{$z_i$} -- (4,8) node[above]{$z_i\vphantom{z_j}$};
		\draw[->] (3,0) node[below]{$z_{i-1}$} -- (3,8)  node[above]{$z_{i-1}\vphantom{z_j}$};
		\draw[->] (1.5,0) node[below]{$z_1$} -- (1.5,8)  node[above]{$z_1\vphantom{z_j}$};
		\draw[style={decorate, decoration={zigzag,amplitude=.5mm,segment length=2mm}}] (4,4) -- (5,4);
		\foreach \x in {-1,...,1} \draw (2.25+.2*\x,4) node{$\cdot\mathstrut$};
		\foreach \x in {-1,...,1} \draw (9.75+.2*\x,4) node{$\cdot\mathstrut$};	
	} .
\end{equation}
The little arrows at the top indicate that the diagrams are read from bottom to top (time goes up). The coordinates, here in the role of inhomogeneity parameters, follow the lines as indicated. The nearest-neighbour transport is accounted for by the \textit{R}-matrix,
\begin{equation} \label{eq:R_graphical}
	\tikz[baseline={([yshift=-.5*11pt*0.4]current bounding box.center)},	xscale=0.5,yscale=0.4,font=\footnotesize]{
		\draw[rounded corners=3.5pt,->] (1,0) node[below]{$v$} -- (1,.5) -- (0,1.5) -- (0,2) node[above]{$v$};
		\draw[rounded corners=3.5pt,->] (0,0) node[below]{$u$} -- (0,.5) -- (1,1.5) -- (1,2) node[above]{$u$};
	} 
	\coloneqq \check{R}(u/v) \, ,
\end{equation}
while the nearest-neighbour exchange is deformed to the Temperley--Lieb generator\,%
\footnote{\ Unlike the usual graphical notation for $e^\text{sp}_i$ this does not naturally represent the Temperley--Lieb relations (\textsection\ref{s:Hecke_TL_Uqsl}), but it correctly accounts for the flow of (spectral) parameters along the lines.}
\begin{equation} \label{eq:R'(1)}
	\tikz[baseline={([yshift=-.5*11pt*0.4]current bounding box.center)},xscale=0.5,yscale=0.4,font=\footnotesize]{
		\draw[->] (1,0) node[below]{$u$} -- (1,2) node[above]{$u$};
		\draw[->] (2,0) node[below]{$v$} -- (2,2) node[above]{$v$};
		\draw[style={decorate, decoration={zigzag,amplitude=.5mm,segment length=2mm}}] (1,1) -- (2,1);
	} 
	\coloneqq e^\text{sp} = -(\mathsf{q}\,{-}\,\mathsf{q}^{-1}) \, \check{R}'(1) = 
	\begin{pmatrix} 
	\, 0 & \color{gray!80}{0} & \color{gray!80}{0} & \color{gray!80}{0} \, \\
	\, \color{gray!80}{0} & \mathsf{q}^{-1} & -1 & \color{gray!80}{0} \, \\
	\, \color{gray!80}{0} & -1 & \mathsf{q} & \color{gray!80}{0} \, \\
	\, \color{gray!80}{0} & \color{gray!80}{0} & \color{gray!80}{0} & 0 \, \\
	\end{pmatrix}
	\, .
\checkedMma
\end{equation}
This \textsf{q}-antisymmetriser (up to normalisation) is the local Hamiltonian of the quantum-$\mathfrak{sl}_2$ invariant Heisenberg spin chain~\cite{PS_90}, see~\textsection\ref{s:spin_chains}. It reduces to $e^\text{sp} \to 1-P$ when $\mathsf{q}\to1$.

An example of the long-range spin interactions~\eqref{eq:Sij_left} is
\begin{equation} \label{eq:Sij_left_ex}
	\begin{aligned}
	S_{[1,5]}^\textsc{l} = \ & \check{R}_{45}(z_5/z_4) \, \check{R}_{34}(z_5/z_3) \, \check{R}_{23}(z_5/z_2) \\
	& \quad \times -(\mathsf{q}-\mathsf{q}^{-1})\,\check{R}_{12}'(1) \\
	& \qquad \times \check{R}_{23}(z_2/z_5) \, \check{R}_{34}(z_3/z_5) \, \check{R}_{45}(z_4/z_5) \, .
	\end{aligned}
\checkedMma
\end{equation}
We stress that in the graphical notation the parameters follow the lines, but (unlike if one would draw $R=P\,\check{R}$ or $\check{R}\,P$) the vector spaces do \emph{not}, cf.~the subscripts in \eqref{eq:Sij_left_ex}. 
The notation `$[i,j]$' as an interval in \eqref{eq:Sij_left}, which is borrowed from \cite{HS_96}, reflects the fact that the intermediate spins are affected by the transport via the \textit{R}-matrix: the model involves multi-spin interactions when $\mathsf{q}\neq \pm1$. As a result the direct computation of the action of $H^\textsc{l}$ on any vector is quite complicated even for a single excited spin.

\begin{remarks} 
\textbf{i.}~If $\mathsf{q}\in \mathbb{R}^\times$ the hermiticity of \eqref{eq:R'(1)} is inherited by $H^\textsc{l}$~\cite{Lam_18}. \textbf{ii.}~The structure of $H^\textsc{l}$, with its multi-spin interactions, might be somewhat involved, yet is precisely such that the key properties of \eqref{eq:HS} generalise to the \textsf{q}-case. We will \emph{derive} the formula for $H^\textsc{l}$ in \textsection\ref{s:abelian_freezing}, see \textsection\ref{s:intro_freezing} for a sketch. \textbf{iii.}~$H^\textsc{l}$ has a \emph{stochastic} version too: see \textsection\ref{s:app_stochastic_twist}. \textbf{iv.}~The Hamiltonian depends mildly on the sign of $\mathsf{q}$: $H^\textsc{l}|_{\mathsf{q}\mapsto -\mathsf{q}}$ differs from $(-1)^N \, H^\textsc{l}$ by a conjugation. We will prove this in \textsection\ref{s:app_stochastic_twist}, see \eqref{eq:ham_left_q_to_-q}.
\end{remarks}

As in the isotropic case, the energy spectrum of the Hamiltonian~\eqref{eq:ham_left} can be given explicitly. They are still labelled by motifs~\eqref{eq:motif} and remain (strictly) additive.
\begin{theorem}[cf.~\cite{Ugl_95u}] \label{thm:intro_energy_left}
The spectrum of~$\normalfont H^\textsc{l}$ is given by\,%
\footnote{\ The dispersion in \cite{Ugl_95u,Lam_18} differs from~$\varepsilon^\textsc{l}$ by $\mathsf{q} \mapsto \mathsf{q}^{-1}$, cf.~`\textsc{cpt}' invariance from Footnote~\ref{fn:CPT} on p.\,\pageref{fn:CPT}.}
\begin{equation} \label{eq:energy_left}
	\normalfont E^{\mspace{1mu}\textsc{l}}(\mu) = \sum_{m=1}^M \varepsilon^\textsc{l}(\mu_m) \, , \qquad
	\varepsilon^\textsc{l}(\mu_m) = \frac{1}{\mathsf{q}-\mathsf{q}^{-1}} \, \biggl( \frac{\mathsf{q}^N}{\mathsf{q}^{\mu_m}} \,[\mu_m] - \frac{\mu_m}{N} \, [N] \biggr) \, ,
\end{equation}
for $\mu \in \mathcal{M}_N$. More precisely,

{\normalfont\textbf{i.} (\cite{Ugl_95u})} All eigenvalues of the Hamiltonian~$\normalfont H^\textsc{l}$ are of the form \eqref{eq:energy_left} for appropriate $\mu$.

{\normalfont\textbf{ii.}} For each $\mu \in \mathcal{M}_N$ the eigenvalue \eqref{eq:energy_left} does indeed occur in the spectrum of $\normalfont H^\textsc{l}$.

{\normalfont\textbf{iii.}} For generic values of $\mathsf{q} \in\mathbb{C}^\times$, i.e.\ $\mathsf{q}^N \neq 1$, the assignment~\eqref{eq:energy_left} is injective on $\mathcal{M}_N$.

{\normalfont\textbf{iv.} (Completeness)} All eigenvalues of $\normalfont H^\textsc{l}$ are given by \eqref{eq:energy_left} with $\mu \in \mathcal{M}_N$.
\end{theorem}

\noindent Part~(i) is due to Uglov~\cite{Ugl_95u}; we will review his proof in \textsection\ref{s:abelian_freezing}. Concerning part~(ii), we will give an explicit eigenvector with that energy in \textsection\ref{s:intro_explicit_evrs}\,---\,see also Remark~(vi) therein; these eigenvectors will be derived in \textsection\ref{s:explicit_evrs}. Part~(iv) will be shown at the end of \textsection\ref{s:intro_nonabelian_1} based on a result that will be established in \textsection\ref{s:hw}. Here we can already give the simple
\begin{proof}[Proof of Theorem~\ref{thm:intro_energy_left}~(iii)]
As a Laurent polynomial in $\mathsf{q}$, \eqref{eq:energy_left} is given by~\cite{Lam_18}
\begin{equation*}
	\varepsilon^\textsc{l}(\mu_m) = \frac{1}{N} \sum_{n=1}^{N-1} \min\Bigl( \mu_m \, (N-n), (N-\mu_m) \, n \Bigr) \, \mathsf{q}^{N-2n} \, .
\end{equation*}
The coefficient in this expression, viewed as a function of $n$ on the interval $[0,N]$, is piecewise linear, from the origin to a maximum at $n=\mu_m$ and back down to zero at $n=N$. Therefore the coefficients of $E^\textsc{l}(\mu)$ as a Laurent polynomial in $\mathsf{q}$ look like a piecewise linear function with kinks at the parts of $\mu$. This allows us to reconstruct $\mu \in \mathcal{M}_N$ uniquely from $E^\textsc{l}(\mu)$.
\end{proof}

The \textsf{q}-deformation~\eqref{eq:Sij_left} breaks left-right symmetry; the model described by \eqref{eq:ham_left} is \emph{chiral}. One of our new results is a Hamiltonian with the opposite chirality, which also \textsf{q}-deforms \eqref{eq:HS} and is very similar to~\eqref{eq:ham_left}:
\begin{subequations} \label{eq:ham_right}
	\begin{gather}
	H^\textsc{r} = \ev \widetilde{H}^\textsc{r} \, , \qquad \widetilde{H}^\textsc{r} = \frac{[N]}{N} \sum_{i<j}^N V\mspace{-1mu}(z_i,z_j) \, S_{[i,j]}^{\mspace{1mu}\textsc{r}} \, ,
\checkedMma
\intertext{now featuring long-range spin interactions where the \textsf{q}-antisymmetrisation takes place on the right,}
	S_{[i,j]}^{\mspace{1mu}\textsc{r}} \coloneqq \!
	\tikz[baseline={([yshift=-.5*11pt*0.4]current bounding box.center)},	xscale=0.6,yscale=0.4,font=\footnotesize]{
		\draw[->] (10.5,0) node[below]{$z_N$} -- (10.5,8)  node[above]{$z_N\vphantom{z_j}$};
		\draw[->] (9,0) node[below]{$z_{j+1}$} -- (9,8)  node[above]{$z_{j+1}$};
		\draw[->] (8,0) node[below]{$z_j$} -- (8,8) node[above]{$z_j$};
		\draw[rounded corners=3.5pt,->] (7,0) node[below]{$z_{j-1}$} -- (7,2.5) -- (6,3.5) -- (6,4) node[inner sep=1.5pt,fill=white]{$z_{j-1}$} -- (6,4.5) -- (7,5.5) -- (7,8) node[above]{$z_{j-1}$};
		\draw[rounded corners=3.5pt,->] (6,0) node[below]{$\cdots$} -- (6,1.5) -- (5,2.5) -- (5,4) node[inner sep=1.5pt,fill=white]{$\cdots$} -- (5,5.5) -- (6,6.5) -- (6,8) node[above]{$\cdots{\vphantom{z_j}}$};		
		\draw[rounded corners=3.5pt,->] (5,0) node[below]{$z_{i+1}$} -- (5,.5) -- (4,1.5) -- (4,4) node[inner sep=1.5pt,fill=white]{$z_{i+1}$} -- (4,6.5) -- (5,7.5) -- (5,8) node[above]{$z_{i+1}\vphantom{z_j}$};
		\draw[rounded corners=3.5pt,->] (4,0) node[below]{$z_i$} -- (4,.5) -- (5,1.5) node[inner sep=1.5pt,fill=white]{$z_i$} -- (6,2.5) node[inner sep=1.5pt,fill=white]{$z_i$} -- (7,3.5) node[shift={(.05,-.1)}, inner sep=1.5pt,fill=white]{$z_i$} -- (7,4) -- (7,4.5) node[shift={(.05,.1)}, inner sep=1.5pt,fill=white]{$z_i$} -- (6,5.5) node[inner sep=1.5pt,fill=white]{$z_i$} -- (5,6.5) node[inner sep=1.5pt,fill=white]{$z_i$} -- (4,7.5) -- (4,8) node[above]{$z_i \vphantom{z_j}$};
		\draw[->] (3,0) node[below]{$z_{i-1}$} -- (3,8)  node[above]{$z_{i-1}\vphantom{z_j}$};
		\draw[->] (1.5,0) node[below]{$z_1$} -- (1.5,8)  node[above]{$z_1\vphantom{z_j}$};
		\draw[style={decorate, decoration={zigzag,amplitude=.5mm,segment length=2mm}}] (7,4) -- (8,4);
		\foreach \x in {-1,...,1} \draw (2.25+.2*\x,4) node{$\cdot\mathstrut$};
		\foreach \x in {-1,...,1} \draw (9.75+.2*\x,4) node{$\cdot\mathstrut$};	
	} .
	\end{gather}
\end{subequations}

\begin{theorem} \label{thm:ham_energy_right}
{\normalfont\textbf{i.}}~The abelian symmetries of the \textsf{q}-deformed Haldane--Shastry spin chain include the operator $\normalfont H^\textsc{r}$. In particular, the two chiral Hamiltonians $\normalfont H^\textsc{l},H^\textsc{r}$ commute.

{\normalfont\textbf{ii.}} The eigenvalues of $\normalfont H^\textsc{r}$ are as in Theorem~\ref{thm:intro_energy_left} with \eqref{eq:energy_left} modified by inverting \textsf{q} or, equivalently, by reflecting the motif:
\begin{equation} \label{eq:energy_right}
	\normalfont E^{\mspace{1mu}\textsc{r}}(\mu) = \sum_{m=1}^M \varepsilon^\textsc{r}(\mu_m) \, ,  
	\qquad \varepsilon^\textsc{r}(\mu_m) = \varepsilon^\textsc{l}(\mu_m) \big|_{\mathsf{q}\mapsto \mathsf{q}^{-1}} = \varepsilon^\textsc{l}(N-\mu_m) \, .
\end{equation}
\end{theorem}

\noindent In \textsection\ref{s:abelian_freezing} we will show that the first part is true by construction. The second part follows from Theorem~\ref{thm:intro_energy_left} except that the first part therein has to be adjusted, which will be done in \textsection\ref{s:abelian_freezing}. The results in Theorem~\ref{thm:ham_energy_right} are new.

The \textsf{q}-deformed chiral energies~\eqref{eq:energy_left}--\eqref{eq:energy_right} are real when $\mathsf{q}\in\mathbb{R}^\times$, in which case $H^\textsc{l}$ is hermitian~\cite{Lam_18}. Recall that for the Heisenberg \textsc{xxz} spin chain the regime $|\mathsf{q}|=1$, where its parameter $\Delta = (\mathsf{q}+\mathsf{q}^{-1})/2$ obeys $|\Delta|\leq 1$, is physically most interesting. With this in mind, Theorem~\ref{thm:ham_energy_right} implies the following important

\begin{corollary} \label{cor:ham_full_energy}
{\normalfont\textbf{i.}}~The abelian symmetries include the \emph{full} Hamiltonian
\begin{equation} \label{eq:ham_full}
	\normalfont H^\text{full} \coloneqq \frac{1}{2} \, (H^\textsc{l} + H^\textsc{r}) = \ev \widetilde{H}^\text{full} \, , \qquad \widetilde{H}^\text{full} =\frac{[N]}{2N} \, \sum_{i< j}^N V\mspace{-1mu}(z_i,z_j) \, \bigl(S_{[i,j]}^{\mspace{1mu}\textsc{l}} + S_{[i,j]}^{\mspace{1mu}\textsc{r}}\bigr) \, .
\checkedMma
\end{equation}

{\normalfont\textbf{ii.}}~The eigenvalue of $\normalfont H^\text{full}$ for $\mu \in \mathcal{M}_N$ involves a beautiful \textsf{q}-deformed dispersion relation:
\begin{equation} \label{eq:dispersion_full} 
	\normalfont 
	E^\text{full}(\mu) = \sum_{m=1}^M \varepsilon^\text{full}(\mu_m) \, , \qquad
	\varepsilon^\text{full}(\mu_m) = \frac{\varepsilon^\textsc{l}(\mu_m) + \varepsilon^\textsc{r}(\mu_m)}{2} = \frac{1}{2} \, [\mu_m] \, [N-\mu_m] \, .
	\checkedMma
\end{equation}
In particular, its spectrum is manifestly real also for $\mathsf{q}\in S^1 \subset \mathbb{C}^\times$. 
\end{corollary}

The elegant expression \eqref{eq:dispersion_full} causes $H^\text{full}$ to have some extra degeneracies with respect to the chiral Hamiltonians. Indeed, for $H^\textsc{l}$ (or $H^\textsc{r}$) the set of distinct energies is typically equinumerous with the set of motifs: the chiral Hamiltonians `only' have representation-theoretic degeneracies to reflect the nonabelian symmetry, which we will explore in \textsection\ref{s:intro_nonabelian_1}. The Hamiltonian $H^\text{full}$ has some additional degeneracies, as mirror-image motifs yield equal $E^\text{full}$, reflecting parity invariance. At special values of \textsf{q} the dispersion relation simplifies, making additional `accidental' degeneracies possible. Examples include the isotropic limit $\mathsf{q}\to1$, with many~\cite{FG_15} accidental degeneracies for $H^\textsc{hs}$, and the crystal limit $\mathsf{q}\to\infty$, which we will treat in \textsection\ref{s:intro_crystal_limit}.
\checkedMma 
The root-of-unity case, for which very large degeneracies occur too, will be investigated elsewhere.

Before we introduce another abelian symmetry it is instructive to pause for a moment and investigate the boundary conditions. The periodicity of the isotropic Hamiltonian~\eqref{eq:HS}, which is invariant under conjugation by the cyclic translation operator~\eqref{eq:HS_translation}, is affected by the \textsf{q}-deformation. Consider $H^\textsc{l}$ for definiteness. The potential~\eqref{eq:pot} remains periodic as it depends on the ratio $z_i/z_j$, i.e.\ on the distance $i-j$ in additive language. On the other hand, the long-range interactions~\eqref{eq:Sij_left} are not periodic: compare the highly non-local multispin operator~$S_{[1,N]}^{\mspace{1mu}\textsc{l}}$ with any genuine nearest-neighbour interaction~$S_{[i,i+1]}^{\mspace{1mu}\textsc{l}} = e^\text{sp}_i$. From this perspective the model lives on a strip rather than a circle: no particle ever really wraps around the back of the chain. This periodicity breaking is required by the coproduct of the nonabelian symmetries, cf.~\cite{HS_96}. As $\mathsf{q}\to1$ the `wall' between sites $N$ and $1$ becomes transparent. For $\mathsf{q}\to \infty$ we instead get an open chain, as we will show in \textsection\ref{s:intro_crystal_limit}. In general the spin chain can be viewed as having some sort of \emph{twisted} (quasiperiodic) boundary conditions:

\begin{lemma}[braid limit] \label{prop:ham_braid_limit}
Let $\normalfont T^\text{sp}_i = \mathsf{q} - e^\text{sp}_i$ be the \textsf{q}-deformed permutation (Hecke generator), with inverse $\normalfont T^{\text{sp}\,-1}_i = \mathsf{q}^{-1} - e^\text{sp}_i$. 
The Hamiltonian~\eqref{eq:ham_left} formally contains the twisted Heisenberg \textsc{xxz} spin chain:
\begin{equation} \label{eq:ham_formal_limit}
	\normalfont
	{-\varpi} \ \mathrm{ev}_\varpi \mspace{1mu} \widetilde{H}^\textsc{l} \ \to \ \sum_{i=1}^{N-1} e^\text{sp}_i \ + \, T^\text{sp}_{N-1} \cdots T^\text{sp}_2 \, e^\text{sp}_1 \, T^{\text{sp}\,-1}_2 \cdots T^{\text{sp}\,-1}_{N-1} \, , \qquad\quad \varpi\to\infty \, ,
\end{equation}
where we replaced \eqref{eq:ev} by the (hyperbolic) evaluation $\mathrm{ev}_\varpi \colon z_j \longmapsto \varpi^j$ with $\varpi \in\mathbb{R}^\times$.
\end{lemma}
\noindent The last term in \eqref{eq:ham_formal_limit}, describing the twisted boundary conditions, is known as a `braid translation'~\cite{MS_93} (with trivial `blob' generator~\cite{MS_94}).
\begin{proof}[Proof of Proposition~\ref{prop:ham_braid_limit}]
We need to evaluate the limit $\varpi\to\infty$ for the components of $\widetilde{H}^\textsc{l}$ from~\eqref{eq:ham_left}. Up to a simple rescaling the potential~\eqref{eq:pot} boils down to the usual, \textsf{q}-independent \emph{nearest-neighbour} pair potential:
\begin{equation*}
	{-\varpi} \ \mathrm{ev}_\varpi V\mspace{-1mu}(z_i,z_j) \to \delta_{|i-j| \, \mathrm{mod} \, N , \,1} \, , \qquad i \neq j \, , \qquad\quad \varpi\to\infty \, .
	\checkedMma
\end{equation*}
In the bulk only the nearest-neighbour interactions $S_{[i,i+1]}^{\mspace{1mu}\textsc{l}} = e^\text{sp}_i$ from~\eqref{eq:R'(1)} survive. The only other term in \eqref{eq:ham_left} that survives this limit is $S_{[1,N]}^{\mspace{1mu}\textsc{l}}$. Now from the viewpoint of the \textit{R}-matrix~\eqref{eq:R_mat}, $\varpi \to\infty$ ($\varpi \to0$) is the braid limit, yielding $T^\text{sp}_i$ (resp.\ $T^{\text{sp}\,-1}_i$), 
\checkedMma
yielding the final term in \eqref{eq:ham_formal_limit}.
\checkedMma
\end{proof}

Despite these somewhat subtle boundary conditions, the \textsf{q}-deformed Haldane--Shastry spin chain is formally periodic.

\begin{proposition}[cf.~\cite{Lam_18}] \label{prop:intro_q-translation}
{\normalfont\textbf{i.}}~The \textsf{q}-deformed Haldane--Shastry spin chain is \emph{\textsf{q}-homoge\-neous}: its abelian symmetries include the (left) \emph{\textsf{q}-translation operator} 
\begin{equation} \label{eq:q-translation}
	G \coloneqq \ev\widetilde{G} \, , \qquad \widetilde{G} \coloneqq \check{R}_{N-1,N}(z_1/z_N) \cdots \check{R}_{12}(z_1/z_2) \, = \!
	\tikz[baseline={([yshift=-.5*11pt*0.4]current bounding box.center)},	xscale=0.5,yscale=0.4,font=\footnotesize]{
		\foreach \x in {1,...,4} \draw[rounded corners=3.5pt] (\x,0) -- ++ (0,\x-.5) -- ++ (-1,1) -- (\x-1,5);
		\node at (0,5) [above]{$z_2$};
		\node at (1,0) [below]{$z_2$};
		\node at (2.5,0) [below]{$\cdots$};
		\node at (1.5,5) [above]{$\cdots$};
		\node at (3,5) [above]{$z_N$};
		\node at (4,0) [below]{$z_N$};
		\draw[rounded corners=3.5pt] (0,0) node[below]{$z_1$} -- (0,.5) -- (1,1.5) node[inner sep=1.5pt,fill=white]{$z_1$} -- (2,2.5) node[inner sep=1.5pt,fill=white]{$z_1$} -- (3,3.5) node[inner sep=1.5pt,fill=white]{$z_1$} -- (4,4.5) -- (4,5) node[above]{$z_1$};
		\foreach \x in {0,...,4} \draw[->] (\x,4.95) -- (\x,5);	
	} .
\checkedMma
\end{equation}

{\normalfont\textbf{ii.}}~For each $\mu \in \mathcal{M}_N$ as in Theorems \ref{thm:intro_energy_left} and~\ref{thm:ham_energy_right} the eigenvalue of $G$ is $\E^{\I p}$ where the \emph{\textsf{q}-momentum} is
\begin{equation} \label{eq:mtm}
	p(\mu) = \sum_{m=1}^M p_m \ \ \mathrm{mod} \, 2\pi \, , \qquad p_m = \frac{2\pi}{N} \, \mu_m \, .
\end{equation}
\end{proposition}
\noindent This was conjectured in \cite{Lam_18}. The first part of the proposition will be established in \textsection\ref{s:abelian_freezing}. As for the second part, we note that for general $\mathsf{q} \in \mathbb{C}^\times$ the multi-spin interactions make it rather hard to verify \eqref{eq:mtm} by direct computation on the explicit expression for the eigenvectors that we will give in \textsection\ref{s:intro_explicit_evrs} even for $M=1$. Nevertheless, \eqref{eq:mtm} has a simple proof.

\begin{proof}[Proof of Proposition~\ref{prop:intro_q-translation}~(ii)]
The Yang--Baxter equation for the \textit{R}-matrix implies $G^N=1$,
\checkedMma 
so $G$ has eigenvalues of the form $\E^{\I\, p}$, with $p \in (2\pi/N) \, \mathbb{Z}_N$ for $\mathbb{Z}_N \coloneqq \mathbb{Z}/N\,\mathbb{Z}$, quantised as usual for particles on a circle. Since the value of $p$ is discrete, it cannot depend on $\mathsf{q} \in \mathbb{C}^\times$. (Instead, the dependence on $\mathsf{q}$ is hidden in the meaning of $p$, as eigenvalue of $-\I \log G$.) As the entries of $G$ are continuous in the deformation parameter its eigenvalues can be calculated at any $\mathsf{q}$. The isotropic point $\mathsf{q}=1$ suffices, where $G \to G^\textsc{hs}$ so that \eqref{eq:mtm} follows from \eqref{eq:HS_mtm}. (One can also use the crystal limit $\mathsf{q}\to\infty$, see \eqref{eq:crystal_mtm} in \textsection\ref{s:intro_crystal_limit} below.)
\end{proof} 

In summary we have encountered two chiral Hamiltonians, \eqref{eq:ham_left} and \eqref{eq:ham_right}, which combine to give the full Hamiltonian~\eqref{eq:ham_full}. We have also met the \textsf{q}-translation operator~\eqref{eq:q-translation}. These operators commute with each other. Their eigenvalues are known explicitly. The different \textsf{q}-deformed dispersion relations are plotted in Figure~\ref{fg:dispersions}. Each of them reduces to the isotropic dispersion relation~\eqref{eq:HS_dispersion} as $\mathsf{q}\to1$. The full Hamiltonian has a real spectrum even if $|\mathsf{q}|=1$, which is expected to be most relevant physically. We will get back to the remaining abelian symmetries in \textsection\ref{s:intro_freezing}; see Table~\ref{tb:abelian_symmetries} on p.\,\pageref{tb:abelian_symmetries} for an overview.

\begin{figure}[h]
	\centering
	\begin{tikzpicture}[yscale=.2]
	\pgfmathsetmacro{\q}{1.2}
	\pgfmathsetmacro{\N}{10}
	\pgfmathdeclarefunction{qnumber}{1}{\pgfmathparse{(\q^#1 - 1/\q^#1)/(\q - 1/\q)}}
	\pgfmathdeclarefunction{nviap}{1}{\pgfmathparse{\N*#1/(2*pi)}}
		\draw[->] (0,0) -- (0,20) node[above]{$\varepsilon$};
			\draw (0,0) -- (-.2*.3,0) node[left] {$0$};
			\draw[smooth,samples=2,domain=-.2*.3:0] plot(\x,{qnumber(\N/2)^2/2}) node[left] {$\frac{1}{2} \mspace{1mu} [N/2]^2$};
		\draw[->] (0,0) -- (2*pi+.3,0) node[right] {$p$};
			\draw (0,0) -- (0,-.3) node[below] {$0$};
			\draw (pi,0) -- (pi,-.3) node[below] {$\vphantom{1}\pi$};
			\draw (2*pi,0) -- (2*pi,-.3) node[below] {$2\pi$};
		\draw[smooth,samples=100,domain=0:2*pi,dotted] 
			plot(\x,{1/2*nviap(\x)*(\N-nviap(\x))*qnumber(\N/2)^2/(\N/2)^2});
		\draw[smooth,samples=100,domain=0:2*pi,dashed] plot(\x,{
			-1/(\q-1/\q) * ( nviap(\x)/\N * qnumber(\N) - pow(\q,\N-nviap(\x)) * qnumber(nviap(\x)) )
			});
		\node at (1.8*pi/3,18.5) {$\varepsilon^\textsc{l}$};
		\draw[smooth,samples=100,domain=0:2*pi,dashdotted] plot(\x,{
			1/(\q-1/\q) * ( nviap(\x)/\N * qnumber(\N) - pow(\q,nviap(\x)-\N) * qnumber(nviap(\x)) )
			});
		\node at (4.3*pi/3,18.5) {$\varepsilon^\textsc{r}$};
		\draw[smooth,samples=100,domain=0:2*pi] plot(\x,{
			1/2 * qnumber(nviap(\x)) * qnumber(\N-nviap(\x))
			});
		\node[right] at (7.35,19) {\textsf{q}-deformed dispersions};
			\draw[dashed] (7.5,16) -- (8.25,16) node[right] {\eqref{eq:energy_left} left};
			\draw (7.5,13) -- (8.25,13) node[right] {\eqref{eq:dispersion_full} full};
			\draw[dashdotted] (7.5,10) -- (8.25,10) node[right] {\eqref{eq:energy_right} right};
		\node[right] at (7.35,7) {cf.~isotropic case};
			\draw[dotted] (7.5,4) -- (8.25,4) node[right] 
			{$p\,(2\pi\mspace{1mu}{-}\mspace{1mu}p) \mspace{1mu}{\times}\mspace{1mu} \tfrac{[N/2]^2}{2 \mspace{1mu}\pi^2 \vphantom{k^k_a}}$};
	\end{tikzpicture}
	\caption{The one-particle energies as functions of the \textsf{q}-momentum $p$. For this plot we have taken $\mathsf{q} = 1.2$ and $N=10$. Though the quadratic dispersion~\eqref{eq:HS_dispersion} of isotropic Haldane--Shastry is a function of the ordinary ($\mathsf{q} = 1$) momentum we included a parabola for comparison.}
	\label{fg:dispersions}
\end{figure}

\subsubsection{Nonabelian symmetries: preview} \label{s:intro_nonabelian_1}
To understand the structure of the joint eigenspaces of the abelian symmetries we turn to the \emph{non}abelian symmetries. By construction~\cite{BG+_93} the \textsf{q}-deformation is such that the Yangian is deformed to 
\checkedMma
the quantum-affine (or more precisely: quantum-loop) algebra
\begin{equation*}
	\widehat{\mathfrak{U}} \, \coloneqq \, U_\mathsf{q}'(\widehat{\vphantom{t}\smash{\mathfrak{gl}_2}})^{\vphantom{\prime}}_{c=0} \, = \, U_\mathsf{q}(L\mspace{1mu}\mathfrak{gl}_2) \, .
\end{equation*}
In this section we examine the concrete consequence: the presence of this large symmetry algebra is directly visible in the degeneracies in the spectrum\,---\,which are much higher than those of the Heisenberg \textsc{xxz} or even \textsc{xxx} spin chain.

Just like the Yangian incorporates $\mathfrak{sl}_2$ as a subalgebra, $\widehat{\mathfrak{U}}$ contains the \textsf{q}-deformation of $\mathfrak{sl}_2$,
\begin{equation*}
	\mathfrak{U} \coloneqq U_{\mathsf{q}}(\mathfrak{sl}_2) \, .
\end{equation*}
For more about this subalgebra, including its action on the Hilbert space $\mathcal{H}$, see \textsection\ref{s:Hecke_TL_Uqsl}. Here we stress that this action commutes with the abelian symmetries. The representation theory of $\mathfrak{U}$ for generic $\mathsf{q}$ parallels that of $\mathfrak{sl}_2$; in particular the \textsf{q}-deformed Haldane--Shastry model has at least the same degeneracies as any isotropic model, despite its partial isotropy. This is true for $\widetilde{G},\widetilde{H}^\textsc{l}, \dots$ off shell (at arbitrary $z_i$)\,---\,and in the braid limit~\eqref{eq:ham_formal_limit}: the twisted Heisenberg \textsc{xxz} spin chain is well known to be $\mathfrak{U}$-invariant.

On shell (upon evaluation) the abelian symmetries $G, H^\textsc{l}, \dots$ furthermore have their nonabelian symmetries enhanced to invariance under the much larger algebra~$\widehat{\mathfrak{U}}$. We will recall the relevant facts about $\widehat{\mathfrak{U}}$ in \textsection\ref{s:quantum-affine}, and summarise its connection to the Heisenberg \textsc{xxz} spin chain in \textsection\ref{s:spin_chains}. For the \textsf{q}-deformed Haldane--Shastry spin chain we need a more intricate representation of the affine generators to ensure that they commute with the abelian symmetries. This action will be described in \textsection\ref{s:intro_nonabelian_2}.

Let us preview the structure of the Hilbert space~$\mathcal{H}$ for generic $\mathsf{q}$, which is like the structure at $\mathsf{q}=1$. For each motif $\mu \in \mathcal{M}_N$ there is a unique (up to normalisation) vector $\ket{\mu} \in \mathcal{H}_M$ at $M=\ell(\mu)$ with eigenvalues as in \textsection\ref{s:intro_abelian}. We will explicitly construct $\ket{\mu}$ in \textsection\ref{s:intro_explicit_evrs}.
\begin{definition}
Let 
\begin{equation} \label{eq:Hmu}
	\mathcal{H}^\mu \coloneqq \widehat{\mathfrak{U}} \, \cdot \, \mathbb{C}\,\ket{\mu}
\end{equation}
be the subspace of $\mathcal{H}$ generated by the action of the nonabelian symmetries on the vector $\ket{\mu}$.
\end{definition}
\noindent Since the $\widehat{\mathfrak{U}}$-action commutes with the abelian symmetries, $\mathcal{H}^\mu$ is a joint eigenspace of the latter, with eigenvalues \eqref{eq:energy_left}--\eqref{eq:energy_right} and \eqref{eq:mtm}. By Theorem~\ref{thm:intro_energy_left}~(iii) these eigenvalues are pairwise distinct for generic $\mathsf{q}$, so the subspaces~$\mathcal{H}^\mu$ only intersect at the origin. The Hilbert space decomposes as a direct sum
\begin{equation} \label{eq:motif_decomp}
	\mathcal{H} = \bigoplus_{\mu \,\in\, \mathcal{M}_N} \!\!\! \mathcal{H}^\mu \, .
\end{equation}
We will momentarily verify that we did not miss any eigenspace, as asserted in part~(iv) of Theorem~\ref{thm:intro_energy_left}. For generic $\mathsf{q}$ each $\mathcal{H}^\mu$ has the structure of an irreducible $\widehat{\mathfrak{U}}$-module, characterised as follows.

Recall that there is a bijection between finite-dimensional $\widehat{\mathfrak{U}}$-irreps (up to equivalence) and Drinfeld polynomials (normalised so that $P(0)=1$)~\cite{CP_91}. One can think of the Drinfeld polynomial as an affine analogue of a highest weight. For us it is given by
\begin{proposition}[\cite{Ugl_95u}] \label{prop:Drinfeld}
	The Drinfeld polynomial for $\mathcal{H}^\mu$, $\mu \in\mathcal{M}_N$, is
	\begin{equation} \label{eq:Drinfeld_polyn}
		P^\mu(u) \coloneqq \frac{\prod_{i=1}^N (1- \mathsf{q}^{N-2\, i+1} \, u)}{\displaystyle \prod_{n\in \mu} (1- \mathsf{q}^{N-2\, n-1} \, u) (1- \mathsf{q}^{N-2\, n+1} \, u)} \, ,
		\checkedMma
	\end{equation}
	where $\mu$ specifies which (consecutive pairs of) factors to omit.
\end{proposition}
\noindent In \cite{BG+_93} it was argued (for $\mathsf{q}=1$) that any Drinfeld polynomial that can occur in the model must be of this form for some $\mu$. Uglov~\cite{Ugl_95u} gave \eqref{eq:Drinfeld_polyn} without direct derivation. We prove Proposition~\ref{prop:Drinfeld} in \textsection\ref{s:hw}.

The zeros of $P^\mu$ form \emph{(\textsf{q}-)strings}, i.e.\ sets of the form $\{v,\mathsf{q}^2\mspace{1mu}v,\mathsf{q}^4\mspace{1mu}v,\dots\}$~\cite{CP_91}. For $N=4$, for example, $P^0(u) = (1-\mathsf{q}^3 \mspace{1mu} u) (1-\mathsf{q} \mspace{1mu} u) (1-\mathsf{q}^{-1} \mspace{1mu} u) (1-\mathsf{q}^{-3} \mspace{1mu} u)$ gives a string of length four, both $P^{(1)}(u) = (1-\mathsf{q}^{-1} \mspace{1mu} u) (1-\mathsf{q}^{-3} \mspace{1mu} u)$ and $P^{(3)}(u) = (1-\mathsf{q}^3 \mspace{1mu} u) (1-\mathsf{q} \mspace{1mu} u)$ a string of length two, $P^{(2)}(u) = (1-\mathsf{q}^3 \mspace{1mu} u) (1-\mathsf{q}^{-3} \mspace{1mu} u)$ two strings of length one, and $P^{(1,3)}(u) = 1$ an empty string (of length~zero). The Drinfeld polynomial in particular describes the structure of $\mathcal{H}^\mu$ as a module for $\mathfrak{U} \subset \widehat{\mathfrak{U}}$: each string of length~$j$ among the zeros of $P^\mu$ corresponds to a factor of dimension~$j+1$ (spin $j/2$) \cite{CP_91}, cf.~the graphical rule from~\cite{Lam_18}. Proposition~\ref{prop:Drinfeld} thus leads to

\begin{corollary}
For $\mu \in \mathcal{M}_N$ set $M = \ell(\mu)$. For generic $\mathsf{q}$ the eigenspace $\mathcal{H}^\mu$ has dimension
\begin{equation} \label{eq:dim}
	\dim\mathcal{H}^\mu = \prod_{m=0}^M (\mu_{m+1} - \mu_m -1) =
	\begin{cases*}
		N+1 & if $\mu = 0 \, ,$ \\
		\mu_1 \, (N-\mu_M) \displaystyle\prod_{\smash{m=1}}^{\smash{M-1}} \! (\mu_{m+1} - \mu_m - 1) \quad & if $M \geq 1 \, ,$
	\end{cases*}
\checkedMma
\end{equation}
where on the left one has to interpret $\mu_0 \coloneqq -1$ and $\mu_{M+1} \coloneqq N+1$. 
\end{corollary}

\noindent For example, at $N=4$ we have $\dim\mathcal{H}^0 =5$, $\dim \mathcal{H}^{(1)} = \dim \mathcal{H}^{(3)} =3$, and $\dim\mathcal{H}^{(1,3)}=1$. The affine symmetries appear for $\mathcal{H}^{(2)}$. As a $\mathfrak{U}$-module it is the tensor product of two 2-dimen\-sional factors, and decomposes into irreducible pieces of dimensions 3 and~1. The affine generators connect them to turn $\mathcal{H}^{(2)}$ into a $\widehat{\mathfrak{U}}$-irrep. Figure~\ref{fg:N=6} illustrates this structure for $N=6$.

Once we construct the eigenvector $\ket{\mu}$ (\textsection\ref{s:intro_explicit_evrs}) and establish that $\widehat{\mathfrak{U}}$ does indeed act by symmetries (\textsection\ref{s:intro_nonabelian_2})\,---\,so that the definition \eqref{eq:Hmu} of $\mathcal{H}^\mu$ makes sense\,---\,the dimension formula~\eqref{eq:dim} implies that we didn't miss anything in the decomposition~\eqref{eq:motif_decomp}:
\begin{proof}[Sketch of the proof of Theorem~\ref{thm:intro_energy_left}~(iii)]
To count dimensions we add \eqref{eq:dim} for all $\mu \in \mathcal{M}_N$. To see that the result is $2^N = \dim \mathcal{H}$ observe that motifs allow us to organise the coordinate basis in terms of the pattern $\cdots\! \downarrow\uparrow\!\cdots$. In short, the parts of $\mu \in \mathcal{M}_N$ record the positions of the $\downarrow$ in the pattern, and \eqref{eq:dim} is the number of coordinate basis vectors prescribed in this way. (This combinatorial interpretation is no coincidence, as we'll see in \textsection\ref{s:intro_crystal_limit}.)
\end{proof}

Next we turn to the eigenvector $\ket{\mu}$.

\subsubsection{Explicit eigenvectors} \label{s:intro_explicit_evrs} 
As in the isotropic case any eigenvector in the $M$-particle sector~\eqref{eq:weight_decomp} is completely determined by some symmetric polynomial in $M$ variables. For $\mathsf{q}=1$ the relation was given in~\eqref{eq:HS_vec_ev}, which can be rewritten as follows. Let us identify the permutation $(i,i+1)$ with its action $s_i : z_i \leftrightarrow z_{i+1}$ on coordinates. Using cycle notation $(j,j-1,\To,i) = (j-1,j) \cdots (i,i+1)$ we write 
\begin{equation} \label{eq:perm_grassmannian}
	\{i_1,\To,i_M\} \coloneqq (i_1,i_1-1,\To,1) \, \cdots \, (i_M,i_M -1,\To,M) 
\end{equation}
for the shortest permutation that sends $m \longmapsto i_m$ for all $1\leq m\leq M$. Then \eqref{eq:HS_vec_ev} states
\begin{equation*}
 	\Psi^\textsc{hs}(i_1,\To,i_M) = \ev \Bigl(s_{\{i_1,\To,i_M\}} \widetilde{\Psi}^\textsc{hs}(z_1,\To,z_M) \Bigr) \, .
 \checkedMma
\end{equation*}
In the \textsf{q}-case the different components of the wave function are likewise related through the action of \textsf{q}-deformed permutations before evaluation. Here we suffice with the minimum needed to state this result; see \textsection\ref{s:Hecke_AHA} for more about the Hecke algebra.

The \textsf{q}-analogue of the coordinate permutation~$s_i$ is the Hecke generator. In terms of the functions $f,g$ from \eqref{eq:R_mat} it reads
\begin{equation} \label{eq:intro_Hecke_pol}
	T_i^\text{pol} = f(z_i/z_{i+1})^{-1} \, \bigl(s_i - g(z_i/z_{i+1})\bigr) \, .
\checkedMma
\end{equation}
The \textsf{q}-deformed cyclic permutation  and the \textsf{q}-analogue of \eqref{eq:perm_grassmannian} are constructed from it as
\begin{subequations} \label{eq:Hecke_graphical}
\begin{gather}
	T_{(j,j-1,\To,i)} = T_{j-1} \cdots T_i \, , \qquad T_{\{i_1,\To,i_M\}} = T_{(i_1,\To,1)} \cdots T_{(i_M,\To,M)} \, ,
\shortintertext{or, in terms of (braid) diagrams,}
	\begin{aligned}
	T_i & =
	\tikz[baseline={([yshift=-.5*11pt*0.4*.9+6pt]current bounding box.center)},	xscale=0.5*.9,yscale=0.4*.9,font=\scriptsize, cross line/.style={-,preaction={draw=white,-,line width=6pt}}]{
		\foreach \x in {-2,-.5,1.5,3} \draw[->] (\x,0) -- (\x,2); \node at (-2,0) [below] {$1$}; \node at (3,0) [below] {$N$};
		\foreach \xx in {-1,...,1} \draw (-1.25+.2*\xx,1) node{$\cdot\mathstrut$} (2.25+.2*\xx,1) node{$\cdot\mathstrut$};
		\draw[rounded corners=3.5pt,->] (1,0) node[below]{$i{+}1$} -- (1,.5) -- (0,1.5) -- (0,2);
		\draw[cross line,rounded corners=3.5pt,->] (0,0) node[below]{$i$} -- (0,.5) -- (1,1.5) -- (1,2);
	} \, , \\
	T^{-1}_i & =
	\tikz[baseline={([yshift=-.5*11pt*0.4*.9+6pt]current bounding box.center)},	xscale=0.5*.9,yscale=0.4*.9,font=\scriptsize, cross line/.style={-,preaction={draw=white,-,line width=6pt}}]{
		\foreach \x in {-2,-.5,1.5,3} \draw[->] (\x,0) -- (\x,2); \node at (-2,0) [below] {$1$}; \node at (3,0) [below] {$N$};
		\foreach \xx in {-1,...,1} \draw (-1.25+.2*\xx,1) node{$\cdot\mathstrut$} (2.25+.2*\xx,1) node{$\cdot\mathstrut$};
		\draw[rounded corners=3.5pt,->] (0,0) node[below]{$i$} -- (0,.5) -- (1,1.5) -- (1,2);
		\draw[cross line,rounded corners=3.5pt,->] (1,0) node[below]{$i{+}1$} -- (1,.5) -- (0,1.5) -- (0,2);
	} \, ,
	\end{aligned}
	\qquad\qquad T_{\{i_1,\To,i_M\}} = 
	\tikz[baseline={([yshift=-.5*11pt*0.4*.9]current bounding box.center)},	xscale=0.5*.9,yscale=0.4*.9,font=\scriptsize, cross line/.style={-,preaction={draw=white,-,line width=6pt}}]{
		\foreach \x in {7,8.5} \draw[->] (\x,0) -- (\x,5); \node at (8.5,0) [below] {$N$};
		\foreach \xx in {-1,...,1} \node at (7.75+.2*\xx,2.5) {$\cdot\mathstrut$};
		\draw[rounded corners=3.5pt,->] (6,0) -- (6,3.5) -- (5,4.5) -- (5,5);
		\draw[rounded corners=3.5pt,->] (5,0) -- (5,2.5) -- (3,4.5) -- (3,5);
		\draw[rounded corners=3.5pt,->] (4,0) -- (4,1.5) -- (2,3.5) -- (2,5);
		\draw[rounded corners=3.5pt,->] (3,0) -- (3,.5) -- (0,3.5) -- (0,5);
		\draw[cross line,rounded corners=3.5pt,->] (2,0) node[below]{$M$} -- (2,.5) -- (6,4.5) -- (6,5) node[above]{$i_M$};
		\draw[cross line,rounded corners=3.5pt,->] (1,0) node[below]{$\cdots$} -- (1,1.5) -- (4,4.5) -- (4,5) node[above]{$\cdots$};
		\draw[cross line,rounded corners=3.5pt,->] (0,0) node[below]{$1$} -- (0,2.5) -- (1,3.5) -- (1,5) node[above]{$i_1$};
	} \, .
\end{gather}
\end{subequations}
(Here we only included the inverse generator for completeness. Note that these diagrams differ from \eqref{eq:Sij_left}, as the lines do not carry parameters and there are under- and overcrossings.)

\begin{theorem} \label{thm:phys_vector_spinchain}
Any eigenvector $\ket{\Psi} \in \mathcal{H}_M$ in the $M$-particle sector of the \textsf{q}-deformed Haldane--Shastry spin chain is determined by some symmetric polynomial $\widetilde{\Psi}(z_1,\To,z_M)$ via
\begin{equation} \label{eq:intro_wavefn}
	\normalfont
	\Psi(i_1,\To,i_M) = \cbraket{i_1,\To,i_M}{\Psi} = \ev \Bigl(T^\text{pol}_{\{i_1,\To,i_M\}} \widetilde{\Psi}(z_1,\To,z_M) \Bigr) \, .
\checkedMma 
\end{equation}
\end{theorem}
\noindent In \textsection\ref{s:explicit_evrs} we will prove this powerful new result for the explicit eigenvectors that we will give just below. Thanks to the nonabelian symmetries (see \textsection\ref{s:intro_nonabelian_1}, especially the dimension counting) this extends to the full Hilbert space $\mathcal{H}$.

The wave function $\Psi$ depends on the positions (sites) $i_m \in \mathbb{Z}_N$ of the magnons on the spin chain, while the polynomial~$\widetilde{\Psi}$ depends on the $M$~`\textsf{q}-magnon coordinates'~$z_m$, which we think of as being transported by the Hecke action to the same location~$z_{i_m} \in S^1\subset \mathbb{C}$ upon evaluation. Whereas $\widetilde{\Psi}$ is a symmetric polynomial in $M$~variables, its image under $T_{\{i_1,\To,i_M\}}^{\text{pol}}$ is not symmetric and depends on $i_M > M$ variables. In particular, the Hecke operators in \eqref{eq:intro_wavefn} act nontrivially even though $\widetilde{\Psi}$ is symmetric.

Now we return to the decomposition \eqref{eq:motif_decomp} of the Hilbert space into joint eigenspaces of the abelian symmetries. For every motif we have obtained the wave function that \textsf{q}-deforms \eqref{eq:HS_polynomial}:

\begin{theorem} \label{thm:nice_polynomial} 
For every $\mu \in \mathcal{M}_N$ the eigenspace $\mathcal{H}^\mu$ contains a unique (up to rescaling) vector $\ket{\mu} \in \mathcal{H}^\mu \cap \mathcal{H}_M$ at $M=\ell(\mu)$. If $\nu$ is the partition associated to $\mu$ as in \eqref{eq:motifs_vs_partitions} its wave function is determined through \eqref{eq:intro_wavefn} by the `symmetric square' of the \textsf{q}-$\mspace{-2mu}$Vandermonde polynomial times a Macdonald polynomial:
\begin{equation} \label{eq:intro_polynomial}
	\widetilde{\Psi}_{\nu}(z_1,\To,z_M) \coloneqq \Biggl( \, \prod_{m<n}^M (\mathsf{q} \, z_m - \mathsf{q}^{-1} z_n) \, (\mathsf{q}^{-1} z_m - \mathsf{q}\,z_n) \! \Biggr) P_{\nu}^\star (z_1,\To,z_M) \, ,
\checkedMma
\end{equation}
where $P_\nu^\star$ denotes the special case $\mathsf{p}^\star = \mathsf{q}^\star = \mathsf{q}^2$ of a Macdonald polynomial.
\end{theorem}
\noindent This key result will be established in \textsection\ref{s:explicit_evrs}.
 
Macdonald polynomials $P_{\nu}$ are a family of homogeneous symmetric polynomials depending on two parameters (\textsection\ref{s:Macdonald}; Figure~\ref{fg:Macdonalds} on p.\,\pageref{fg:Macdonalds}). In the notation $q \equiv \mathsf{p}$, $t\equiv \mathsf{q}^2$ of Macdonald~\cite{Mac_95} we are dealing with the special case $q^\star = t^{\star \,\alpha}$ for (Jack) parameter $\alpha =1/2$. That is, $P_\nu^\star$ is (essentially) a \emph{quantum spherical zonal polynomial}, related to harmonic analysis on a quantum homogeneous space~\cite{Nou_96}.\,%
\footnote{\ \label{fn:casimirs} The pairwise form of e.g.\ \eqref{eq:ham_left} looks like an affine ($\vect{z}$-dependent) version of the trace formula for the first quantum Casimir operator~$C_1$ of $U_{\mathsf{q}}(\mathfrak{gl}_N)$ from \cite{RTF_89}; cf.~\mbox{(5.12)} in \cite{Nou_96}. Noumi showed \cite{Nou_96} (`case Sp' of Theorem~5.2) that the radial component of $C_1$ is the quantum zonal spherical operator, i.e.\ $D_1^\star$ in our notation. This must be closely related to the appearance of $P^\star_\nu$ in \eqref{eq:intro_polynomial}.}
In \textsection\ref{s:intro_examples} we give several concrete examples to get some feeling for these $P_\nu^\star$. 

\begin{remarks} \label{rmk:direct_verification}
\textbf{i}.~There is no freedom to pick $\alpha$: the value $\alpha = 1/2$ rolls out of the proof in \textsection\ref{s:explicit_evrs}. 

\textbf{ii}.~The dependence of $P_\nu^\star$ on $\mathsf{q}^2$ reflects a symmetry of $H^\textsc{l}$ under $\mathsf{q} \mapsto -\mathsf{q}$, see \eqref{eq:ham_left_q_to_-q} in \textsection\ref{s:app_stochastic_twist}. 

\textbf{iii.}~At the end of \textsection\ref{s:intro_crystal_limit} we will sketch a proof showing that the eigenvectors given by Theorems \ref{thm:phys_vector_spinchain}--\ref{thm:nice_polynomial} are nonzero despite the evaluation\,---\,which, after all, kills some polynomials.

\textbf{iv.}~In \textsection\ref{s:intro_nonabelian_2} we will see that Theorem~\ref{thm:nice_polynomial} describes (all) highest-weight eigenvectors. The relation \eqref{eq:Jack_shift_property} extends to Macdonald polynomials (\textsection{VI.4} in \cite{Mac_95}),
\begin{equation} \label{eq:Macdonald_shift_property}
	P_\nu(z_1,\To,z_M) = z_1 \cdots z_M \, P_{\mspace{1mu}\bar{\nu}}(z_1,\To,z_M) \, , \qquad \nu = \bar{\nu} + (1^M) \, ,
\end{equation} 
so that \eqref{eq:intro_polynomial} is divisible by $z_1 \cdots z_M$. This is a highest-weight condition, see Theorem~\ref{eq:thm_hw_condition}. 

\textbf{v.} The property \eqref{eq:largest_monomial} carries over to the polynomial \eqref{eq:intro_polynomial} since
\begin{equation*}
	\prod_{m<n}^M (\mathsf{q}^{\pm1} \, z_m-\mathsf{q}^{\mp1} z_n) = \mathsf{q}^{\pm M(M-1)/2} \, \vect{z}^{\delta_M} + \text{lower} \, ,  \qquad 
	P_\nu(z_1,\To,z_M) = \vect{z}^\nu + \text{lower} \, .
\end{equation*}

\textbf{vi.}~Our proof of Theorem~\ref{thm:nice_polynomial} (in \textsection\ref{s:explicit_evrs}) is separate from the derivation of the corresponding eigenvalues from Theorems \ref{thm:intro_energy_left} and~\ref{thm:ham_energy_right}~(ii). At present we do not properly understand the direct connection between the two derivations; we'll return to this in the future. For general $\mathsf{q} \in \mathbb{C}^\times$ the multi-spin interactions make it rather hard to verify by direct computation that
\begin{equation} \label{eq:intro_eigenvalue_eqns}
	H^\textsc{l} \, \ket{\mu} = E^\textsc{l}(\mu) \, \ket{\mu} \, , \qquad
	H^\textsc{r} \, \ket{\mu} = E^\textsc{r}(\mu) \, \ket{\mu} \, .
\end{equation}
At the isotropic point $\mathsf{q}=1$ these eigenvalue equations can be confirmed analytically~\cite{Hal_91b}. We have further checked \eqref{eq:intro_eigenvalue_eqns} for all $\mu \in \mathcal{M}_N$, $N\leq 10$ numerically with (pseudo)random values of $\mathsf{q} \in \mathbb{C}^\times$. Yet another verification comes from $\mathsf{q} \to \infty$, see \textsection\ref{s:intro_crystal_limit}.

\textbf{vii.}~As a corollary of our description of the eigenvectors we get the following `on-shell identities' for quantum spherical zonal polynomials. Denote the \emph{left} eigenvector with the same eigenvalues~\eqref{eq:energy_left}, \eqref{eq:energy_right} and \eqref{eq:mtm} by $\bra{\mu}$. If $\mathsf{q}\in\mathbb{R}^\times$ it is the complex transpose of $\ket{\mu}$ since $H^\textsc{l}$ is hermitian in that case~\cite{Lam_18}. Write $\,\cdot\,^*$ for complex conjugation. For generic $\mathsf{q}\in\mathbb{C}^\times$ the wave function of $\bra{\mu}$ is determined by \eqref{eq:intro_polynomial} as
\begin{equation} \label{eq:wavefn_hecke_left}
	\bracket{\mu}{i_1,\To,i_M} = \mathrm{ev}_\omega^* \Bigl(T^\text{pol}_{\{i_1,\To,i_M\}} \widetilde{\Psi}_\nu(z_1,\To,z_M) \Bigr) \, , \qquad \mathrm{ev}_\omega^* = \mathrm{ev}_{\omega^*} = \mathrm{ev}_{\omega^{-1}} \, .
\checkedMma
\end{equation}
For two motifs $\mu^{(1)}, \mu^{(2)} \in\mathcal{M}_N$ of the same length~$\ell(\mu^{(1)}) = \ell(\mu^{(2)}) = M$ let $\nu^{(1)},\nu^{(2)}$ be the associated partitions. The corollary following \eqref{eq:motif_decomp} implies that the polynomials \eqref{eq:intro_polynomial} obey the \emph{on-shell identities}
\begin{equation*}
	\normalfont
	\sum_{i_1<\cdots < i_M}^N \!\!\!\!\!\! \mathrm{ev}_\omega^*  \Bigl(T^\text{pol}_{\{i_1,\To,i_M\}} \widetilde{\Psi}_{\nu^{(1)}}(z_1,\To,z_M) \Bigr) \, \ev \Bigl( T^\text{pol}_{\{i_1,\To,i_M\}} \widetilde{\Psi}_{\nu^{(2)}}(z_1,\To,z_M) \Bigr) \propto \delta_{\mu^{(1)}\!,\, \mu^{(2)}} \, .
\checkedMma
\end{equation*}
The evaluation is crucial; for example, for $\mu^{(1)}=(1)$ and $\mu^{(2)}=(2)$ it is the evaluation of
\begin{equation*}
 	\sum_{i=1}^N \, \Bigl(T^\text{pol}_{(i,\To,1)} \, z_1 \Bigr)^{\!*} \, \Bigl(T^\text{pol}_{(i,\To,1)} \, z_1^2 \Bigr) = \mathsf{q}^{-2\,(N-1)} \, \sum_{i=1}^N z_i \, , \qquad z_i^* \coloneqq z_i^{-1} \, ,
 \checkedMma
\end{equation*}
which only vanishes upon evaluation. We do not have a direct proof of these on-shell identities, nor a general expression for the norms of our eigenvectors.
\end{remarks}

\subsubsection{Nonabelian symmetries: details} \label{s:intro_nonabelian_2}
The eigenvector determined by \eqref{eq:intro_wavefn}--\eqref{eq:intro_polynomial} has highest weight in the appropriate sense and gives rise to all other eigenvectors in $\mathcal{H}^\mu$ through the action of the nonabelian symmetries. These are constructed as follows.

Let $s_{ij} \coloneqq s_{(ij)}$ swap $z_i \leftrightarrow z_j$, so $s_{i,i+1} =s_i$. Following \cite{BG+_93} we set
\begin{equation} \label{eq:intro_xij}
	x_{ij} \coloneqq 
	\begin{cases}
		f(z_i/z_j)^{-1} - f(z_i/z_j)^{-1} \, g(z_i/z_j) \, s_{ij} \, , \qquad & i<j \, , \\ 
		f(z_i/z_j)^{-1} + f(z_j/z_i)^{-1} \, g(z_j/z_i) \, s_{ji} \, , \qquad & i>j \, .
	\end{cases}
\end{equation}
Then $x_{i,i+1} = T^\text{pol}_i \, s_i$, we have $x_{ji} = x_{ij}^{-1}$, and the $x_{ij}$ obey the Yang--Baxter equation. The `$\!$\textit{Y}-operators'
\begin{equation} \label{eq:intro_Yi_circ}
	Y_i^\circ \coloneqq x_{i,i+1} \, x_{i,i+2} \cdots x_{iN} \, x_{i1} \cdots x_{i,i-2} \, x_{i,i-1}
\end{equation}
mutually commute. We use the superscript `${}^\circ$' to indicate that these are the `classical' (i.e.\ no difference part) version of \textsf{q}-deformed Dunkl operators $Y_i$ from the affine Hecke algebra, which will be reviewed in \textsection\ref{s:Hecke_AHA}.

The monodromy matrix of the \textsf{q}-deformed Haldane--Shastry spin chain is obtained from that of the Heisenberg \textsc{xxz} spin chain (\textsection\ref{s:spin_chains}) by `quantising' its inhomogeneity parameters to (the inverse of) the \textit{Y}-operators~\eqref{eq:intro_Yi_circ}.

\begin{theorem}[\cite{BG+_93}] \label{thm:monodromy_frozen}
Consider an auxiliary space $V_a \cong \mathbb{C}^2$. Using $R(u) = P\,\check{R}(u)$ define the monodromy matrix on $V_a \otimes \mathcal{H}$ as
\begin{equation} \label{eq:intro_L_spin_chain}
	L_a(u) = \ev \widetilde{L}^\circ_a(u) \, , \qquad \widetilde{L}^\circ_a(u) = R_{aN}(u\,Y_N^\circ) \cdots R_{a1}(u\,Y_1^\circ) \, .
\end{equation}
Written as a $2\times 2$ matrix on $V_a$ its entries, acting on $\mathcal{H}$, preserve the property from Theorem~\ref{thm:phys_vector_spinchain} (possibly changing the value of $M$) and commute with the abelian symmetries. In addition, \eqref{eq:intro_L_spin_chain} obeys the `$\mspace{-1mu}$\textit{RLL}-relations', thus turning $\mathcal{H}$ into a representation of $\widehat{\mathfrak{U}}$.
\end{theorem}
\noindent We recall the proof from \cite{BG+_93} and give the corresponding Chevalley generators in \textsection\ref{s:nonabelian_freezing}. 

There are a couple of ways to understand \eqref{eq:intro_L_spin_chain}. We take the following viewpoint. Theorem~\ref{thm:phys_vector_spinchain} guarantees that each vector is determined by a polynomial as in \eqref{eq:intro_wavefn}. Thus one can work with \eqref{eq:intro_L_spin_chain} by letting the $Y_i^\circ$ act on the polynomials prior to evaluation. That is, we really define \eqref{eq:intro_L_spin_chain} to mean (cf.~\eqref{eq:ev_operator} in \textsection\ref{s:abelian_freezing})
\begin{equation} \label{eq:intro_L_spin_chain_def}
	L_a(u) \, \ket{\Psi} = \ev \Biggl( \widetilde{L}^\circ_a(u) \!\! \sum_{i_1 < \cdots < i_M}^N \!\!\!\!\! T_{\{i_1,\To,i_M\}}^\text{pol} \widetilde{\Psi}(\vect{z}) \, \cket{i_1,\To,i_M} \Biggr) \, .
\end{equation}

\begin{remark} 
The approach taken in \cite{BG+_93} is essentially as follows. One can check that any vector of the form described in Theorem~\ref{thm:phys_vector_spinchain} has the property that \emph{prior to evaluation} the coordinate transposition $s_i: z_i \leftrightarrow z_{i+1}$, acts in the same way as $\check{R}_{i,i+1}(z_i/z_{i+1})$. (This is no coincidence: see \textsection\ref{s:intro_abelian_tilde}.) This property, which is well-known in the context of the quantum Knizhnik--Zamolodchikov (\textsf{q}KZ) equation~\cite{FR_92,Che_92b}, enables one to translate the \textit{Y}$\!$-operators in \eqref{eq:intro_L_spin_chain} into an action on spins. Indeed, decompose \eqref{eq:intro_Yi_circ} into simple transpositions~$s_i$, and move these all the way to right to exchange them one by one for \textit{R}-matrices. In \cite{BG+_93} this procedure was called the \emph{projection onto the physical space}. This shows that it is possible to turn \eqref{eq:intro_L_spin_chain} into an operator that acts on spins only, up to some rational factor in the $z_j$ as for other operators encountered so far. The resulting monodromy matrix will still obey the \textit{RLL}-relations and commute with the abelian symmetries (\textsection\ref{s:nonabelian_tilde}). This procedure is feasible in the isotropic case (where $\check{R}(u) \to P$); this is how the Yangian generators~\eqref{eq:Yangian_gens} of \cite{HH+_92} can be obtained from the monodromy matrix of \cite{BG+_93}. In the \textsf{q}-deformed setting, however, it is much more cumbersome, cf.~\cite{Ugl_95u}. In \textsection\ref{s:proofs} we will use various tricks to do this efficiently, yielding the expressions for the abelian symmetries (\textsection\ref{s:abelian_tilde})\,---\,or to avoid it altogether, as in our proof of the highest-weight property (\textsection\ref{s:hw}). Unfortunately we have not yet been able to get a `projected' (spin-only) form of the nonabelian symmetries.
\end{remark}

As usual the monodromy matrix~\eqref{eq:intro_L_spin_chain}, viewed as a matrix in auxiliary space, contains four `quantum' operators
\begin{equation} \label{eq:intro_ABCD}
	L_a(u) = 
	\begin{pmatrix}
	A(u) & B(u) \\ C(u) & D(u)
	\end{pmatrix}_{\!a}
\end{equation}
that generate the $\widehat{\mathfrak{U}}$-action on $\mathcal{H}$. Unlike for the Heisenberg \textsc{xxz} spin chain these \emph{commute} with the Hamiltonian, so the commutation relations between the four quantum operators is not important for us. Since $L_a(u)$ commutes with the abelian symmetries, it follows that the joint eigenspace $\mathcal{H}^\mu$ is a $\widehat{\mathfrak{U}}$-module as announced in \textsection\ref{s:intro_nonabelian_1}. It has highest weight in the following sense, which \textsf{q}-deforms the notion of Yangian highest weight.

\begin{definition} \label{p:intro_hw_def}
A $\widehat{\mathfrak{U}}$-module has \emph{pseudo highest weight} \cite[\textsection12.2]{CP_94} if it contains a vector $\ket{\mu}$ that is an eigenvector for $A(u)$ and $D(u)$, and annihilated by $C(u)$, for all $u$. (In \cite{Nak_01} this is called \emph{\textit{l}-highest weight}.)
\end{definition}

\noindent The Drinfeld polynomial characterises the eigenvalues of a pseudo highest-weight vector for the diagonal entries in \eqref{eq:intro_ABCD} up to a common normalisation~\cite{CP_91,JK+_95b}:\,%
\footnote{\ This differs from \cite{CP_91} by inverting $\mathsf{q}$ on the right-hand side. In \cite{JK+_95b} that is accounted for via $\alpha^\mu(u^{-1})/\delta^\mu(u^{-1}) = \mathsf{q}^{\deg P^\mu} P^\mu(\mathsf{q}^{-2}\,u)/P^\mu(u)$. That convention would require inverting $\mathsf{q}$ in \eqref{eq:Drinfeld_polyn}.}
\begin{equation} \label{eq:intro_ABCD_on_shell}
	\begin{aligned}
	& \begin{aligned} 
	A(u) \, \ket{\mu} & = \alpha^\mu(u) \, \ket{\mu} \, , \\
	D(u) \, \ket{\mu} & = \delta^\mu(u) \, \ket{\mu} \, ,
	\end{aligned}
	\qquad \quad
	\frac{\alpha^\mu(u)}{\delta^\mu(u)} = \mathsf{q}^{-\!\deg P^\mu} \, \frac{P^\mu(\mathsf{q}^{2}\,u)}{P^\mu(u)} \, , \\
	& \mspace{1mu} C(u) \, \ket{\mu} = 0 \, .
	\end{aligned} 
\checkedMma
\end{equation} 
Thus $B(u)$ acts as a lowering operator, generating from $\ket{\mu}$ all other vectors in $\mathcal{H}^\mu$. 

The derivation of our eigenvectors in \textsection\ref{s:explicit_evrs} in fact shows that $\widetilde{\Psi}_\nu$ from \eqref{eq:intro_polynomial} yields an eigenvector for \emph{any} partition~$\nu$ with $\ell(\nu) \leq M$. In \textsection\ref{s:hw} we will follow \cite{BPS_95a} to prove
\begin{theorem} \label{eq:thm_hw_condition}
The condition $\ell(\nu)=M$ for the partition in \eqref{eq:intro_polynomial} ensures that the eigenvector from Theorem~\ref{thm:nice_polynomial} has pseudo highest weight with Drinfeld polynomial~\eqref{eq:Drinfeld_polyn}.
\end{theorem}
\noindent The partitions with $\ell(\nu)<M$ account for all (non-affine) $\mathfrak{U}$-descendants of $\ket{\mu}$. We do not yet have direct expressions for its affine descendants.

\subsubsection{Crystal limit} \label{s:intro_crystal_limit}
Let us explore the freedom of having a new parameter to play with and consider the crystal limit $\mathsf{q}\to\infty$ \cite{Kas_90}; see also \cite{Jim_92}.\,%
\footnote{\ One can also let $\mathsf{q}\to 0$. By `\textsc{cpt}' from Footnote~\ref{fn:CPT} on p.\,\pageref{fn:CPT} the resulting limits differ by inverting the argument (for the \textit{R}-matrix) and spin reversal $\ket{\uparrow}\leftrightarrow\ket{\downarrow}$. The corresponding crystal Hamiltonians thus are as in \eqref{eq:crystal_ham} but now count domain walls of the form $\cdots\! \uparrow\downarrow\!\cdots$. The energy thus stays the same when domains are extended to the \emph{right}, and all eigenvectors are effectively left-right flipped. The crystal dispersions in~\eqref{eq:crystal_dispersion} are swapped; cf.~the relations in \eqref{eq:energy_right}. One is led to the same notion of motifs.} 
In this extreme case the spin chain simplifies drastically and we obtain a simple combinatorial model. Since the \textit{R}-matrix and potential depend continuously on the deformation parameter this provides a useful toy model to understand the structure of the Haldane--Shastry model, in the \textsf{q}-deformed case as well as at the isotropic point.

As $\mathsf{q}\to\infty$ the potential~\eqref{eq:pot} diverges, with constant leading term: $([4]/[2]) \,V\mspace{-1mu}(z_i,z_j) \to 1$.
\checkedMma 
For the spin interactions we need to determine the limits of the Temperley--Lieb generator and \textit{R}-matrix. Rescaling the former to be a projector we get
\begin{equation*}
	\begin{aligned}
		\frac{e^\text{sp}}{[2]} & \to \chi \coloneqq \ket{\downarrow\uparrow} \bra{\downarrow\uparrow} = \diag(0,0,1,0) \, , \\
		\check{R}(u) & \to u^{-\chi} = \diag\bigl(1,1,u^{-1},1\bigr)\, ,
	\end{aligned}
	\qquad \mathsf{q} \to \infty \, ,
	\checkedMma
\end{equation*}
One can check that $[2]^{-1} \, S_{[i,j]}^{\mspace{1mu}\textsc{l}} \to \chi_{i,i+1}$ (and $[2]^{-1} \, S_{[i,j]}^{\mspace{1mu}\textsc{r}} \to \chi_{j-1,j}$) 
\checkedMma
concentrate at the left (right) to become diagonal nearest-neighbour operators independent of $j$ ($i$, respectively). The Hamiltonians~\eqref{eq:ham_left} and~\eqref{eq:ham_right} thus become simple diagonal matrices:

\begin{lemma}
	In the crystal limit the Hamiltonians count domain walls $\cdots\! \downarrow\uparrow\!\cdots$, weighted by their distance from the last (first) site for $\normalfont\bar{H}^\textsc{l}$ ($\normalfont\bar{H}^\textsc{r}$):
	\begin{equation} \label{eq:crystal_ham}
		\normalfont
		\begin{aligned}
			\frac{[4]\mspace{1mu}N}{[2]^2\mspace{1mu}[N]} \, H^\textsc{l} & \to \bar{H}^\textsc{l} \coloneqq \sum_{i=1}^{N-1} (N-i) \,
			\chi_{i,i+1} \, , \\
			\frac{[4]\mspace{1mu}N}{[2]^2\mspace{1mu}[N]} \, H^\textsc{r} & \to \bar{H}^\textsc{r} \coloneqq \sum_{j=2}^N \, (j-1) \, \chi_{j-1,j} \, , \\
		\end{aligned}
		\qquad\qquad \mathsf{q} \to \infty \, .
		\checkedMma
	\end{equation}
\end{lemma}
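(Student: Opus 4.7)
The proof follows by assembling the crystal-limit building blocks already listed in the paragraph preceding the lemma. The plan is to (i) verify the claimed limit $[2]^{-1} S^{\mspace{1mu}\textsc{l}}_{[i,j]} \to \chi_{i,i+1}$ (and its mirror for $S^{\mspace{1mu}\textsc{r}}_{[i,j]}$) by analysing the graphical definition \eqref{eq:Sij_left} term by term, (ii) insert this together with $([4]/[2])\,V\mspace{-1mu}(z_i,z_j) \to 1$ into \eqref{eq:ham_left} and \eqref{eq:ham_right}, and (iii) collect the resulting sums by index.

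For step (i), I would read off the limits of the ingredients of \eqref{eq:Sij_left_ex} directly from the stated building blocks. As $\mathsf{q}\to\infty$, every transport factor $\check{R}_{m,m+1}(u)$ becomes the diagonal operator $u^{-\chi_{m,m+1}}$, and the central $e^{\mathrm{sp}}_{i,i+1} = -(\mathsf{q}-\mathsf{q}^{-1})\,\check R'_{i,i+1}(1)$ becomes $[2]\,\chi_{i,i+1}$. Crucially, all the resulting factors are diagonal in the computational basis and hence commute. The forward-transport $\check R_{m,m+1}(z_j/z_m)$ and its reversed partner $\check R_{m,m+1}(z_m/z_j)$ therefore pair up to $(z_j/z_m)^{-\chi_{m,m+1}}(z_m/z_j)^{-\chi_{m,m+1}}=\mathbbm{1}$ for $m=i+1,\dots,j-1$, and each of these commutes past $\chi_{i,i+1}$. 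What survives is just the central projector, giving $[2]^{-1} S^{\mspace{1mu}\textsc{l}}_{[i,j]} \to \chi_{i,i+1}$. The argument for $S^{\mspace{1mu}\textsc{r}}_{[i,j]}$ is identical up to reflecting the diagram, yielding $[2]^{-1} S^{\mspace{1mu}\textsc{r}}_{[i,j]} \to \chi_{j-1,j}$.

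For step (ii), factoring \eqref{eq:ham_left} as
\begin{equation*}
	\frac{[4]\,N}{[2]^2\,[N]}\,H^\textsc{l} = \ev \sum_{i<j}^N \frac{[4]}{[2]}\,V\mspace{-1mu}(z_i,z_j)\cdot \frac{1}{[2]}\,S_{[i,j]}^{\mspace{1mu}\textsc{l}} \, ,
\end{equation*}
one can take $\mathsf{q}\to\infty$ inside the sum (the limits of individual summands are finite and the evaluation at $z_j=\omega^j$ commutes with the limit since everything is rational in $\mathsf{q}$). Both scalar and operator limits are independent of the coordinates, so the evaluation becomes trivial. For step (iii) the resulting expressions are $\sum_{i<j}\chi_{i,i+1}$ and $\sum_{i<j}\chi_{j-1,j}$; grouping by the free index gives the multiplicities $N-i$ and $j-1$ respectively, which is exactly \eqref{eq:crystal_ham}.

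The only real work is step (i), and the potential pitfall there is the ordering of the transport factors around $e^{\mathrm{sp}}_{i,i+1}$. In the finite-$\mathsf{q}$ model $e^{\mathrm{sp}}_{i,i+1}$ does \emph{not} commute with the neighbouring $\check R$'s, so a priori one cannot freely cancel forward against reverse transport. The argument goes through only because in the strict $\mathsf{q}=\infty$ limit the limiting operator $[2]\,\chi_{i,i+1}$ has become diagonal, at which point every factor in the long chain of operators is simultaneously diagonalised in the computational basis. I would therefore present the cancellation carefully, taking the limit of the whole operator product first (which is legitimate since each factor has a well-defined limit and the product is finite) and only then exploiting commutativity to pair up the transport factors.
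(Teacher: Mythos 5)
Your proof is correct and follows the same route as the paper: the paper's argument is precisely the chain $([4]/[2])\,V\to 1$, $e^{\mathrm{sp}}/[2]\to\chi$, $\check R(u)\to u^{-\chi}$, whence $[2]^{-1}S^{\mspace{1mu}\textsc{l}}_{[i,j]}\to\chi_{i,i+1}$ by cancellation of the (now diagonal, hence commuting) transport factors, followed by counting pairs $i<j$. Your careful handling of the order of limits in step (i) — take the limit of the finite product factorwise first, then exploit diagonality — correctly fills in the paper's ``one can check'' step.
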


It is instructive to work out the representation theory, cf.~\textsection0 of~\cite{Ugl_95u}. Recall from~\eqref{eq:coord_basis} our notation $\cket{\,\cdot\,}$ for the coordinate basis. At $M=0$ there are no domain walls, so $\cket{\varnothing} = \ket{\uparrow\cdots \uparrow}$ has eigenvalue zero. We can flip spins without affecting the energy as long as we do not create any domain wall~$\downarrow\uparrow$. The joint kernel of the crystal Hamiltonians thus consists of the $N+1$ vectors 
\begin{equation*}
	\ket{ \uparrow\cdots\uparrow\uparrow\uparrow\uparrow} \, , \qquad 
	\ket{ \uparrow\cdots\uparrow\uparrow\uparrow\downarrow} \, , \qquad 
	\ket{ \uparrow\cdots\uparrow\uparrow\downarrow\downarrow} \, ,
	\qquad \dots \, , \qquad 
	\ket{ \downarrow\cdots\downarrow\downarrow\downarrow\downarrow} \, .
\end{equation*}
For $M=1$ the coordinate basis $\cket{n} = \sigma_n^- \, \cket{\varnothing}$ gives $N-1$ new eigenvectors (excluding $\cket{N}  \in \ker \bar{H}^\textsc{l} = \ker \bar{H}^\textsc{r}$), with linear energy
\begin{equation} \label{eq:crystal_dispersion}
	\bar{H}^\textsc{l} \, \cket{n} = (N-n) \, \cket{n} \, , \qquad \bar{H}^\textsc{r} \, \cket{n} = n \, \cket{n} \, , \qquad n<N \, .
\end{equation}
This is consistent with the crystal limit of \eqref{eq:energy_left}--\eqref{eq:energy_right}. Again the energy is unchanged if we flip spins avoiding $\downarrow\uparrow$, giving the $n\,(N-n)$ vectors
\begin{equation} \label{eq:crystal_magnon}
	\begin{aligned}
		& \ket{ \uparrow\cdots\uparrow\uparrow\underset{n}{\downarrow}\uparrow\uparrow\cdots \uparrow\uparrow } \, , \quad \
		& & \ket{ \uparrow\cdots\uparrow\uparrow\underset{n}{\downarrow}\uparrow\uparrow\cdots \uparrow\downarrow }\, , \quad \
		& & \dots \, , \quad \
		& & \ket{ \uparrow\cdots\uparrow\uparrow\underset{n}{\downarrow}\uparrow\downarrow\cdots \downarrow\downarrow } \, , \\
		& \ket{ \uparrow\cdots\uparrow\downarrow\underset{n}{\downarrow}\uparrow\uparrow\cdots \uparrow\uparrow } \, , 
		& & \ket{ \uparrow\cdots\uparrow\downarrow\underset{n}{\downarrow}\uparrow\uparrow\cdots \uparrow\downarrow }\, , 
		& & \dots \, ,
		& & \ket{ \uparrow\cdots\uparrow\downarrow\underset{n}{\downarrow}\uparrow\downarrow\cdots \downarrow\downarrow } \, , \\
		& \qquad\qquad\!\!\vdots & & \qquad\qquad\!\!\vdots & & \ \ddots & & \qquad\qquad\!\!\vdots \\
		& \ket{ \downarrow\cdots\downarrow\downarrow\underset{n}{\downarrow}\uparrow\uparrow\cdots \uparrow\uparrow } \, , 
		& & \ket{ \downarrow\cdots\downarrow\downarrow\underset{n}{\downarrow}\uparrow\uparrow\cdots \uparrow\downarrow }\, , 
		& & \dots \, ,
		& & \ket{ \downarrow\cdots\downarrow\downarrow\underset{n}{\downarrow}\uparrow\downarrow\cdots\downarrow\downarrow } \, .
	\end{aligned}
\end{equation}
At $M=2$ the new (highest-weight) vectors have two $\downarrow\uparrow$s, so they are of the form $\cket{\mu_1,\mu_2}$ where the two excited spins are at least one apart and $\mu_2 <N$. Each of these has descendants with the same energy, obtained by extending the domains of $\downarrow$s\,---\,or starting a new domain all the way at the right\,---\,to the left without merging any domains like in \eqref{eq:crystal_magnon}. Continuing in this way naturally leads to the notion of motifs given in \eqref{eq:motif}.

In terms of the spectrum of the crystal Hamiltonians the motif $\mu\in \mathcal{M}_N$ corresponds to the (highest-weight) vector $\cket{\mu} = \prod_{i\in\mu} \sigma^-_i \cket{\varnothing} \in \mathcal{H}_M$ with $M = \ell(\mu)$. The motif condition ensures that the pattern $\downarrow\uparrow$ occurs exactly $M$ times; in other words, that it has $M$ domains (of size one) with spin~$\downarrow$. In the crystal limit $\cket{\mu}$ thus has `crystal energy' eigenvalues
\begin{equation} \label{eq:crystal_energies}
	\bar{E}^{\mspace{1mu}\textsc{l}}(\mu) = M\,N - |\mu| \, , \qquad \bar{E}^{\mspace{1mu}\textsc{r}}(\mu) = |\mu| \, ,
\end{equation}
where we recall the notation $|\mu| \coloneqq \sum_m \mu_m$. 
As before these eigenvalues stay the same if we extend the domains to the left without merging. Let $\bar{\mathcal{H}}^\mu$ denote the linear span of all vectors obtained from $\cket{\mu}$ in this way. Its dimension clearly is as in \eqref{eq:dim}. Any coordinate basis vector lies in $\bar{\mathcal{H}}^\mu$ with $\mu \in \mathcal{M}_N$ recording the locations of its $\downarrow\uparrow$s. This yields the crystal analogue of the decomposition~\eqref{eq:motif_decomp}:
	
\begin{proposition}[\cite{Ugl_95u}] In the crystal limit the decomposition of the Hilbert space into joint eigenspaces for $\normalfont \bar{H}^\textsc{l}$ and $\normalfont \bar{H}^\textsc{r}$ from \eqref{eq:crystal_ham} is labelled by motifs as
\begin{equation} \label{eq:crystal_motif_decomp}
	\normalfont
	\mathcal{H} = \bigoplus_{\mu \,\in\, \mathcal{M}_N} \!\!\! \bar{\mathcal{H}}^\mu \, , \qquad \mathsf{q}\to\infty \, .
\end{equation}
The pseudo highest-weight vector in $\bar{\mathcal{H}}^\mu$ is $\cket{\mu}$, with eigenvalues \eqref{eq:crystal_energies}, and multiplicity \eqref{eq:dim}.
\end{proposition}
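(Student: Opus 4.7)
The plan is as follows. Each projector $\chi_{i,i+1} = \ket{\downarrow\uparrow}\bra{\downarrow\uparrow}$ is manifestly diagonal in the spin basis $\{\ket{s_1 \cdots s_N}\}$ of $\mathcal{H}$, so both $\bar{H}^\textsc{l}$ and $\bar{H}^\textsc{r}$ are diagonal there and, in particular, commute. To each basis vector $\ket{\mathbf{s}}$ I attach
$$D(\mathbf{s}) \coloneqq \{\,i \in \{1,\ldots,N-1\} : s_i = \downarrow , \ s_{i+1} = \uparrow \,\} \, . $$
Since $s_{i+1}$ cannot be both $\uparrow$ and $\downarrow$, no two consecutive integers both lie in $D(\mathbf{s})$, so $D(\mathbf{s}) \in \mathcal{M}_N$ is a motif. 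Reading off the eigenvalues directly yields
$$\bar{H}^\textsc{l}\,\ket{\mathbf{s}} = \!\!\sum_{i \in D(\mathbf{s})}\!(N-i)\,\ket{\mathbf{s}} \, , \qquad \bar{H}^\textsc{r}\,\ket{\mathbf{s}} = \!\!\sum_{i \in D(\mathbf{s})}\!i\;\ket{\mathbf{s}} \, , $$
which matches~\eqref{eq:crystal_energies} with $\mu = D(\mathbf{s})$ and $M = \ell(\mu)$.

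Next I define $\bar{\mathcal{H}}^\mu$ to be the span of all basis vectors with $D(\mathbf{s}) = \mu$. Partitioning the spin basis according to the map $\mathbf{s} \mapsto D(\mathbf{s})$ then gives the claimed decomposition~\eqref{eq:crystal_motif_decomp} at once. By construction each summand is a joint eigenspace for $(\bar{H}^\textsc{l},\bar{H}^\textsc{r})$ with eigenvalue pair $(MN - |\mu|,|\mu|)$. Two distinct motifs may happen to share this pair, in which case the genuine joint eigenspace is the direct sum of the relevant $\bar{\mathcal{H}}^\mu$, so \eqref{eq:crystal_motif_decomp} is a canonical refinement of the joint spectral decomposition.

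To identify the highest-weight vector I verify that $\cket{\mu} = \prod_{m=1}^M \sigma^-_{\mu_m}\cket{\varnothing}$ lies in $\bar{\mathcal{H}}^\mu$: it has $\downarrow$ precisely at the positions $\mu_m$, and the motif gap condition $\mu_{m+1} \geq \mu_m + 2$ forces $s_{\mu_m+1} = \uparrow$, so $D(\cket{\mu}) = \mu$. Among basis vectors in $\bar{\mathcal{H}}^\mu$ it carries the smallest possible number of $\downarrow$s, hence the maximal $S^z$-eigenvalue, which is the natural notion of highest weight at this combinatorial level.

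Finally, I count $\dim \bar{\mathcal{H}}^\mu$ by splitting the chain into the gaps cut out by the entries of~$\mu$. Within any region in which $\downarrow\uparrow$ is forbidden the configuration must be a monotone string $\uparrow \cdots \uparrow \downarrow \cdots \downarrow$, so the number of legal fillings reduces to a choice of where the single $\uparrow\downarrow$ transition (if any) lies. For an interior gap between $\mu_m$ and~$\mu_{m+1}$ both endpoints are constrained (an $\uparrow$ at $\mu_m+1$ and a $\downarrow$ at $\mu_{m+1}$), leaving $\mu_{m+1} - \mu_m - 1$ legal transition positions; the leftmost gap is constrained only at its right end and contributes $\mu_1$ choices, while the rightmost is constrained only at its left end and contributes $N - \mu_M$ choices. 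Multiplying yields~\eqref{eq:dim}; the empty motif corresponds to a single unconstrained gap with $N+1$ monotone fillings. Beyond this edge-versus-interior bookkeeping the proof is a direct combinatorial check, and I do not anticipate a genuine obstacle.
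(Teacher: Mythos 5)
Your proof is correct and follows essentially the same route as the paper: both classify spin-basis vectors by the positions of their $\downarrow\uparrow$ domain walls (your map $\mathbf{s}\mapsto D(\mathbf{s})$ is exactly the paper's observation that "any coordinate basis vector lies in $\bar{\mathcal{H}}^\mu$ with $\mu$ recording the locations of its $\downarrow\uparrow$s"), and your segment-by-segment count of monotone fillings reproduces the paper's multiplicity formula~\eqref{eq:dim}. The only cosmetic difference is that you define $\bar{\mathcal{H}}^\mu$ as the fibre of $D$ rather than as the span of leftward domain extensions of $\cket{\mu}$, which amounts to the same subspace.
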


The \textsf{q}-translation operator~\eqref{eq:q-translation} also becomes a simple domain-wall counting operator in the crystal limit:
	
\begin{lemma} \label{lem:crystal_q-translation}
In the crystal limit the \textsf{q}-translation operator reduces to
\begin{equation} \label{eq:crystal_q-translation}
	G \to \bar{G} \coloneqq \ev \; (z_1/z_2)^{-\chi_{12}} \cdots (z_1/z_N)^{-\chi_{N-1,N}} = \exp\Biggl(\frac{2\pi \mspace{1mu}\I}{N} \sum_{k=1}^{N-1} k \, \chi_{k,k+1} \Biggr) \, , \qquad \mathsf{q} \to \infty \, .
\checkedMma
\end{equation}
\end{lemma}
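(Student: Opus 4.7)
The plan is to evaluate the crystal limit of $\widetilde{G}$ factor by factor and then invoke commutativity of diagonal operators. Starting from the definition
\[
\widetilde{G} = \check{R}_{N-1,N}(z_1/z_N) \cdots \check{R}_{12}(z_1/z_2),
\]
apply the pointwise crystal limit $\check{R}(u) \to u^{-\chi} = \diag(1,1,u^{-1},1)$ already established in the excerpt. Each factor $\check{R}_{i,i+1}(z_1/z_{i+1})$ then tends to $(z_1/z_{i+1})^{-\chi_{i,i+1}}$.

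The key observation is that each limiting factor $(z_1/z_{i+1})^{-\chi_{i,i+1}}$ is diagonal in the standard spin basis, since $\chi = \ket{\downarrow\uparrow}\bra{\downarrow\uparrow}$ is diagonal. Although $\chi_{i,i+1}$ and $\chi_{i+1,i+2}$ share the site $i+1$, both are simultaneously diagonal, so they commute; consequently the ordering of the product is immaterial in the limit. This gives
\[
\widetilde{G} \to \prod_{i=1}^{N-1} (z_1/z_{i+1})^{-\chi_{i,i+1}} \qquad (\mathsf{q}\to\infty),
\]
which, after applying $\ev$, is the first expression in \eqref{eq:crystal_q-translation}.

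For the second equality, substitute $z_j = \omega^j$ with $\omega = \E^{2\pi\I/N}$, so $z_1/z_{i+1} = \omega^{-i}$, and therefore
\[
(z_1/z_{i+1})^{-\chi_{i,i+1}} \xrightarrow[]{\,\ev\,} \omega^{\,i\,\chi_{i,i+1}} = \exp\!\Bigl(\tfrac{2\pi\I}{N}\, i \,\chi_{i,i+1}\Bigr).
\]
Since all the $\chi_{k,k+1}$ commute, the product of exponentials collapses to the exponential of the sum, yielding $\bar{G} = \exp\!\bigl(\tfrac{2\pi\I}{N} \sum_{k=1}^{N-1} k\,\chi_{k,k+1}\bigr)$ as claimed.

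There is no real obstacle here; the content of the lemma is that the two non-commutativity issues that a priori could spoil things (ordering of $\check{R}$'s in $\widetilde{G}$ and combining exponentials of neighbouring $\chi$'s) both trivialise because every operator in sight becomes diagonal in the crystal limit. The only care needed is bookkeeping of the sign in the exponent: the minus in $u^{-\chi}$ combined with $z_1/z_{i+1} = \omega^{-i}$ produces the positive coefficient $+i/N$ in the final exponential.
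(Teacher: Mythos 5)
Your proof is correct and follows the direct verification the paper leaves implicit (the lemma is stated without proof): take the entrywise crystal limit $\check{R}(u)\to u^{-\chi}$ of each of the finitely many factors of $\widetilde{G}$, note that the limiting factors are all diagonal and hence commute so the ordering is irrelevant, and then evaluate $z_1/z_{i+1}\mapsto\omega^{-i}$ and combine the commuting exponentials. The sign bookkeeping you highlight is indeed the only place to slip, and you have it right.
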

\noindent We can directly read off the `crystal momentum' of $\cket{\mu}$ and its descendants:
\begin{equation} \label{eq:crystal_mtm}
	\bar{p}(\mu) \coloneqq \frac{2\pi}{N} \, |\mu| \mod 2\pi \, .
\checkedMma
\end{equation}
This is one way of computing the eigenvalue~\eqref{eq:mtm}.
	
To conclude the discussion of the crystal limit let us verify that $\cket{\mu}$ is indeed obtained from our pseudo highest-weight eigenvectors as $\mathsf{q}\to\infty$. The polynomial $\widetilde{\Psi}_\nu$ from~\eqref{eq:intro_polynomial} becomes a Schur polynomial:
\begin{equation} \label{eq:intro_pol_crystal}
	\mathsf{q}^{-M(M-1)} \, \widetilde{\Psi}_\nu \to (-1)^{M(M-1)/2} \, (z_1\cdots z_M)^{M-1} \, s_\nu \, , \qquad \mathsf{q} \to \infty \, .
\checkedMma
\end{equation}
(The definition of $s_\nu$ is recalled in~\eqref{eq:Schur} below.) Indeed, $P^\star_\nu = s_\nu + \mathcal{O}(\mathsf{q}^{-1})$; cf.~the examples in Tables~\ref{tb:Macdonald_N=8}--\ref{tb:Macdonald_N=10} in \textsection\ref{s:intro_examples}. The Hecke generators $T_i^\text{pol} = \mathsf{q} \, \bigl(\bar{\pi}_i + \mathcal{O}(\mathsf{q}^{-1})\bigr)$ give rise to (idempotent) \emph{0-Hecke} generators $\bar{\pi}_i = (z_i - z_{i+1})^{-1} \, (z_i \, s_i - z_{i+1})$.
\checkedMma 
Thus the leading components of the resulting eigenvectors are those that maximise the length $\ell(\{i_1,\To,i_M\}) = \sum_m i_m - M(M+1)/2$ while surviving the action of 0-Hecke. It can be shown that a single component dominates the crystal limit, reproducing the eigenvectors described above up to a phase on shell:
	
\begin{proposition} \label{prop:crystal_eigenvector}
The crystal limit of the eigenvectors from Theorem~\ref{thm:nice_polynomial} is given by
\begin{equation} \label{eq:intro_eigenvector_crystal}
	\normalfont
	\mathsf{q}^{-|\mu|-M(M-3)/2} \!\! \sum_{i_1 < \cdots < i_M}^N \!\!\!\!\! T_{\{i_1,\To,i_M\}}^\text{pol} \widetilde{\Psi}_\nu(\vect{z}) \, \cket{i_1,\To,i_M} \to (-1)^{|\mu|-M} \, \vect{z}^\lambda \, \cket{\mu} \, , \qquad \mathsf{q}\to\infty \, ,
\checkedMma
\end{equation}
where $\lambda$ is the partition conjugate to $\mu^+$,
\begin{equation} \label{eq:motifs_vs_partitions_lambda}
	\lambda \coloneqq (M,\To,\underset{\mu_1\vphantom{a^k_k}}{M}, M{-}1, \To, \underset{\mu_2\vphantom{a^k_k}}{M{-}1}, \ \To \ , \underset{\mathclap{\mu_M}\vphantom{a^k_k}}{1},0,\To,0) \, , \qquad \lambda' = \mu^+ \, .
\end{equation} 
\end{proposition}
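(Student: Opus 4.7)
The plan is to combine the leading $\mathsf{q}\to\infty$ asymptotics of the Hecke operators and of $\widetilde{\Psi}_\nu$ in \eqref{eq:intro_eigenvector_crystal}, and to show that after the stated rescaling the sum over $(i_1,\To,i_M)$ concentrates on the single component $\cket{\mu}$. Using $T_i^{\text{pol}} = \mathsf{q}\,\bar{\pi}_i + O(\mathsf{q}^{-1})$ composed along any reduced word for the Grassmannian permutation $w = \{i_1,\To,i_M\}$ of length $|i| - M(M+1)/2$, together with \eqref{eq:intro_pol_crystal}, each $\cket{i_1,\To,i_M}$ component of the rescaled eigenvector acquires leading $\mathsf{q}$-order $\mathsf{q}^{|i|-|\mu|}$, with coefficient $(-1)^{M(M-1)/2}\,\bar{\pi}_{\{i_1,\To,i_M\}}[(z_1\cdots z_M)^{M-1}\,s_\nu]$. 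Terms with $|i| < |\mu|$ drop out automatically in the limit; it remains to show that this 0-Hecke coefficient vanishes for every $i\neq\mu$, and to compute it explicitly for $i=\mu$.

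For the 0-Hecke analysis I would reduce to the action on individual monomials. The bialternant formula gives $(z_1\cdots z_M)^{M-1}\,s_\nu = s_\kappa(z_1,\To,z_M)$ with $\kappa = \nu+(M-1)^M$, which is a partition precisely because the motif condition $\mu_{m+1} > \mu_m+1$ makes $\kappa$ weakly decreasing. Reading the product $\{i_1,\To,i_M\} = (i_1,\To,1)\cdots(i_M,\To,M)$ from right to left, and using that the initial generators $\bar{\pi}_j$ with $j<M$ act as the identity on the symmetric function $s_\kappa$, only the `transport' blocks $\bar{\pi}_m\bar{\pi}_{m+1}\cdots\bar{\pi}_{i_m-1}$ matter: they push material past the descent at~$M$. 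The action of $\bar{\pi}_j$ on a monomial $z_j^a z_{j+1}^b$ with $a>b$ falls into two cases: $a=b+1$ kills the monomial, while $a\geq b+2$ returns $-z_j^{b+1}z_{j+1}^{b+1}(z_j^{a-b-2}+\cdots+z_{j+1}^{a-b-2})$, redistributing the excess exponent with overall sign~$-1$. Tracking monomials through the whole word, a case analysis exploiting the interplay between the motif gaps and the transport distances in $\{i\}$ shows that for $(i_1,\To,i_M)\neq(\mu_1,\To,\mu_M)$ every surviving branch is eventually killed by an intervening $\bar{\pi}_j$-step with $a=b+1$, whereas for $i=\mu$ a unique branch survives, which one identifies directly with $\vect{z}^\lambda$ (for $\lambda$ as in \eqref{eq:motifs_vs_partitions_lambda}), carrying an accumulated sign $(-1)^{\ell(\{\mu\})} = (-1)^{|\mu|-M(M+1)/2}$.

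Combining the surviving monomial with the prefactor $(-1)^{M(M-1)/2}$, and using $-M(M+1)/2 + M(M-1)/2 = -M$ together with the fact that $M(M-1)$ is even, yields the claimed limit $(-1)^{|\mu|-M}\,\vect{z}^\lambda\,\cket{\mu}$. The main obstacle is the combinatorial argument in the second step: one must verify that for every $i\neq\mu$ the reduced word $\{i_1,\To,i_M\}$ does force an $a=b+1$ step that kills whichever branches managed to survive up to that point. A more conceptual route would be to recognise $\bar{\pi}_{\{i\}}\,s_\kappa$ as a key polynomial (Demazure character) attached to a suitable Weyl group element, whose support polytope is described by standard combinatorics and would simultaneously establish both the vanishing for $i\neq\mu$ and the explicit coefficient for $i=\mu$.
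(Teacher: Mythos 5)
First, for context: the paper does not actually prove this proposition. It supplies exactly the ingredients you use --- the limit \eqref{eq:intro_pol_crystal}, the 0-Hecke generators $\bar{\pi}_i$, and the heuristic that the dominant components maximise $\ell(\{i_1,\To,i_M\})$ while surviving the 0-Hecke action --- then states ``it can be shown that a single component dominates'' and explicitly defers the proofs of the crystal limit to a separate publication. So your proposal follows the intended route, and your bookkeeping is sound where you carry it out: the order count $\mathsf{q}^{|i|-|\mu|}$ after rescaling is correct, and the sign arithmetic $-M(M+1)/2+M(M-1)/2=-M$ reproduces $(-1)^{|\mu|-M}$. (Minor quibble: $\kappa=\nu+((M-1)^M)$ is weakly decreasing for \emph{any} partition $\nu$; the motif condition is not what makes it a partition.) But the proposal does not close the argument, for two reasons.

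The first you acknowledge yourself: the entire combinatorial core --- that $\bar{\pi}_{\{i_1,\To,i_M\}}\bigl[(z_1\cdots z_M)^{M-1}s_\nu\bigr]$ vanishes for every $(i_1,\To,i_M)\neq\mu$ and equals $(-1)^{\ell(\{\mu\})}\,\vect{z}^\lambda$ for $i=\mu$ --- is left as an unexecuted ``case analysis''; the Demazure-character reformulation you suggest is a reasonable way to attack it but is likewise not carried out. The second gap is logical and unacknowledged: for components with $|i|\geq|\mu|+2$, vanishing of the \emph{leading} 0-Hecke coefficient is not sufficient. The corrections to $T_i^{\text{pol}}=\mathsf{q}\,\bar{\pi}_i+O(\mathsf{q}^{-1})$ and to $\mathsf{q}^{-M(M-1)}\widetilde{\Psi}_\nu$ enter at relative order $\mathsf{q}^{-2}$, so killing the top coefficient only lowers the rescaled $\mathsf{q}$-degree from $|i|-|\mu|$ to $|i|-|\mu|-2$, which still diverges (or survives at order one) when $|i|-|\mu|\geq2$. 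Already for $M=1$, $\mu=(1)$, $i_1=3$ one has $T_2^{\text{pol}}T_1^{\text{pol}}z_1=\mathsf{q}^{-2}z_3$: both the $\mathsf{q}^{2}$ and the $\mathsf{q}^{0}$ coefficients vanish, and your argument accounts only for the first. A complete proof must control \emph{all} coefficients down to the threshold order --- for instance by inducting along the recursion $T^{\text{pol}}_{\{\ldots,i_m+1,\ldots\}}\widetilde{\Psi}_\nu=T_{i_m}^{\text{pol}}\,T^{\text{pol}}_{\{\ldots,i_m,\ldots\}}\widetilde{\Psi}_\nu$ with a sharp bound on the $\mathsf{q}$-degree at each step, rather than tracking only the top-order Demazure composition.
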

\noindent The latter is the partition associated to $\mu \in \mathcal{M}_N$ in \cite{Ugl_95u} following~\cite{JK+_95b}; see also~\textsection\ref{s:abelian_freezing}. The monomial $\vect{z}^\lambda$ in \eqref{eq:intro_eigenvector_crystal} certainly survives evaluation:
\begin{equation*}
	\ev \vect{z}^\lambda = \ev \prod_{m=1}^M \prod_{i=1}^{\mu_m} z_i = \prod_{m=1}^M \! \omega^{\mspace{1mu}\mu_m(\mu_m+1)/2} = \omega^{\sum_m \mu_m (\mu_m+1)/2} \, ,
\checkedMma
\end{equation*}
reproducing \mbox{(0.0.58)} of \cite{Ugl_95u}, derived in \textsection4.4 therein. This moreover means that our eigenvectors survive evaluation away from the crystal limit too:
\begin{corollary}[cf.~\cite{Ugl_95u}]
Each of the eigenvectors from Theorem~\ref{thm:nice_polynomial} survives the evaluation in Theorem~\ref{thm:phys_vector_spinchain} for generic values of $\mathsf{q} \in \mathbb{C}^\times$.
\end{corollary}
\noindent We leave the proofs and detailed description of the crystal limit, including the representation theory, for a separate publication.

\subsubsection{Examples} \label{s:intro_examples} Let us give some concrete examples of our pseudo highest-weight eigenvectors.

We start from the ferromagnetic ground state. Here $M=0$ with the empty motif, $\mu = 0$, and vanishing energy and \textsf{q}-momentum. The partition is the same, with $P_0^\star = 1$ trivial. This eigenspace (multiplet) has dimension $N+1$. Next, for $M=1$ the motif and partition coincide as well, $\mu = \nu = (n)$ for $1\leq n < N$, with $P_{(n)}^\star(z_1) = z_1^n$. In fact, as for any translationally invariant model, the  one-particle sector~$\mathcal{H}_1$ is already diagonalised by \textsf{q}-homogeneity. 
\begin{lemma} \label{lem:q-magnon}
By construction the `\textsf{q}-magnons'
\begin{equation} \label{eq:q-magnon}
	\sum_{j=1}^N \omega^{n\,j} \, G^{1-j} \, \cket{1} = \ev \sum_{j=1}^N z_j^n \, \widetilde{G}^{1-j} \, \cket{1} \, , \qquad 0\leq n < N \, ,
\checkedMma
\end{equation}
have $G$-eigenvalue $\omega^n$, i.e.\ \textsf{q}-momentum $p=2\pi n/N$.
\checkedMma  
\end{lemma} 

The proof is a direct verification. On shell \eqref{eq:q-magnon} matches our general eigenvectors: by a somewhat tedious calculation one can verify that
\begin{equation*} 
	\ev \, \cbra{1} \sum_{j=1}^N z_j^n \, \widetilde{G}^{1-j} \cket{1} \, = \, t^{(N-2\mspace{1mu}n+1)/2} \, \frac{N}{[N]} \, \ev z_1^n \, .
\checkedMma
\end{equation*}  
This yields $N$ linearly independent vectors 
\checkedMma 
(orthogonal for $\textsf{q} \in \mathbb{R}$)
\checkedMma
that span $\mathcal{H}_1$. One can check that $n=0$ gives $N/[N]$ times the $\mathfrak{U}$-descendant of the $M=0$ eigenvector. 
\checkedMma
For each $1\leq n<N$ the associated partition has $\ell(\nu)=M$, giving a highest-weight vector. According to \eqref{eq:dim} the corresponding eigenspace $\mathcal{H}^{(n)}$ has dimension $n \, (N-n)$, cf.~\eqref{eq:crystal_magnon}. Viewed as a module for $\mathfrak{U}$ its tensor-product decomposition is
\begin{equation*}
	n \otimes (N-n) = \!\! \bigoplus_{k=1}^{\min(n,N-n)} \!\!\!\!\!\!\! (N-2\,k+1) = (N-1) \oplus (N-3) \oplus \cdots \oplus (|N-2\,n|+1)\, ,
\checkedMma
\end{equation*}
which is irreducible as a $\widehat{\mathfrak{U}}$-representation. (One could call these `affine magnons'; the $\mathfrak{U}$-irrep of dimension $N-1$, which occurs for any $n$, then is the ordinary magnon.) The \textsf{q}-momentum is $p=2\pi\,n/N$. For even $N$ there is one magnon with $n=N/2$, so $p = \pi$. All other magnons come in parity-conjugate pairs with mirror-image motifs $\mu = (n), (N-n)$, opposite momentum (mod $2\pi$) and energy differing by $\mathsf{q}\mapsto \mathsf{q}^{-1}$, cf.~\eqref{eq:energy_right}.

If $N$ is even another particularly simple case occurs at the other side of the spectrum, $M=N/2$ (`half filling'). Here the motif $\mu^\textsc{af} = (1,3,\To,2M-1)$ corresponds to partition $\nu^\textsc{af} = (1^M)$. By \eqref{eq:Macdonald_shift_property} we have $P_{(1^M)}^\star(z_1,\To,z_M) = z_1 \cdots z_M$. The simple component \eqref{eq:intro_polynomial} has an additional squared \textsf{q}-$\mspace{-1mu}$Vandermonde factor: this is the \textsf{q}-deformed Jastrow wave function in multiplicative notation. This is the antiferromagnetic ground state of $-H^\textsc{l}$. The eigenspace is one dimensional. The \textsf{q}-momentum is $p=N\,\pi/2 \,\mathrm{mod}\, 2\pi$,
\checkedMma 
so $p=0$ for $N=4n$ and $p=\pi$ if $N=4n+2$:
\checkedMma
these are the only two values invariant under parity reversal $p\mapsto -p \; \mathrm{mod}\, 2\pi$. 

The lowest excitations around the antiferromagnetic singlet occur when $N$ is odd, with $(N+1)/2$ motifs of length $M=(N-1)/2$ (as close as possible to half filling) differing from $(1,3,\To,N-2)$ in that the last~$s$ parts are increased by one, $0\leq s\leq M$, where $s=0$ means that nothing is changed. The corresponding partition is $\nu = (2^s,1^{M-s})$, with associated polynomial $P_{(2^s,1^{M-s})}^\star(z_1,\To,z_M) = e_{(M,s)}(z_1,\To,z_M)$ a product of elementary symmetric polynomials, see just below. Each such eigenspace has dimension two. Its physical interpretation is a (\textsf{q}-)spinon, a quasiparticle with spin~1/2, cf.~\cite{Hal_91a}. 

Here and below we use the following bases for $\mathbb{C}[z_1,\To,z_M]^{\mathfrak{S}_M}$, cf.~\textsection{I} of \cite{Mac_95}. Each of these bases is labelled by partitions $\lambda$ with $\ell(\lambda)\leq M$, viewed as having $M$ parts by appending zeros if necessary. The elementary symmetric polynomials are
\begin{equation} \label{eq:elementary}
	e_r(z_1,\To,z_M) = \! \sum_{m_1<\cdots<m_r} \!\!\!\!\!\! z_{m_1} \cdots z_{m_r} \, , \qquad e_\lambda(z_1,\To,z_M) = \prod_{r\in\lambda} e_r(z_1,\To,z_M) \, .
\end{equation}
The monomial symmetric polynomials are `minimal symmetrisations' of $\vect{z}^\lambda$:
\begin{equation} \label{eq:monomial}
	m_\lambda(z_1,\To,z_M) = \! \sum_{\alpha \in \mathfrak{S}_M \lambda} \!\!\! \vect{z}^\alpha \, ,
\end{equation}
summed over all distinct rearrangements of $\lambda$. For example, $m_{(1^r)}(z_1,\To,z_M) = e_r(z_1,\To,z_M)$. The most efficient basis for our examples is given by Schur polynomials
\begin{equation} \label{eq:Schur}
	s_\lambda(z_1,\To,z_M) = \prod_{m<n}^M \! (z_m - z_n)^{-1} \det_{1\leq m,n\leq M} z_m^{\lambda_n + M - n} \, ,
\end{equation}
where the determinant is totally antisymmetric whence divisible by the Vandermonde factor. Each of the preceding symmetric polynomials obeys \eqref{eq:Macdonald_shift_property}, and they are all limiting cases of Macdonald polynomials, cf.~Figure~\ref{fg:Macdonalds} on p.\,\pageref{fg:Macdonalds}.

\begin{table}[ht]
	\begin{tabular}{clll} \toprule
		$M$ & motif $\mu$ & partition $\nu$ & $P_\nu^\star(z_1,\To,z_M)$ \\ \midrule
		$0$ & $0$ & $0$ & $1$ \\ \midrule
		$1$ & $(n)$ & $(n)$ & $s_{(n)} = e_1^n$ \\ \midrule
		$2$ & $(n,n+2)$ & $(n,n)$ & $s_{(n,n)} = e_2^n$ \\
		& $(n,n+3)$ & $(n+1,n)$ & $s_{(n+1,n)} = e_2^n \, e_1$ \\
		& $(n,n+4)$ & $(n+2,n)$ & $s_{(n+2,n)} + \frac{1}{[3]} \, s_{(n+1,n+1)}$ \\[.8ex]
		& $(n,n+5)$ & $(n+3,n)$ & $s_{(n+3,n)} + \frac{[2]}{[4]} \, s_{(n+2,n+1)}$ 	\\[.8ex]
		& $(n,n+6)$ & $(n+4,n)$ & $s_{(n+4,n)} + \frac{[3]}{[5]} \, s_{(n+3,n+1)} + \frac{1}{[5]} \, s_{(n+2,n+2)}$ \\ \midrule
		$3$ & $(n,n+2,n+4)$ & $(n,n,n)$ & $s_{(n,n,n)} = e_3^n$ \\
		& $(n,n+2,n+5)$ & $(n+1,n,n)$ & $s_{(n+1,n,n)} = e_3^n \, e_1$ \\
		& $(n,n+3,n+5)$ & $(n+1,n+1,n)$ & $s_{(n+2,n,n)} = e_3^n \, e_2$ \\
		& $(n,n+2,n+6)$ & $(n+2,n,n)$ & $s_{(n+2,n,n)} + \frac{1}{[3]} \, s_{(n+1,n+1,n)}$ \\
		& $(n,n+3,n+6)$ & $(n+2,n+1,n)$ & $s_{(n+2,n+1,n)} + \frac{[4]}{[5]\mspace{1mu}[2]} \, s_{(n+1,n+1,n+1)}$ \\[.8ex]
		& $(n,n+4,n+6)$ & $(n+2,n+2,n)$ & $s_{(n+2,n+2,n)} + \frac{1}{[3]} \, s_{(n+2,n+1,n+1)}$ \\ \midrule
		$4$	& $(n,n+2,n+4,n+6)\!$ & $(n,n,n,n)$ & $s_{(n,n,n,n)} = e_4^n$ \\
		\bottomrule
		\hline
	\end{tabular}
	\caption{The quantum zonal spherical polynomials 
	(with parameters $\mathsf{p}^\star = \mathsf{q}^\star = \mathsf{q}^2$, i.e.\ $q^\star = t^{\star\,1/2} = t$ in the notation of \cite{Mac_95}) 
	needed to construct all highest-weight vectors for $N \leq 8$, given in terms of Schurs.}
	\label{tb:Macdonald_N=8}
\end{table}

So far the examples of $P_\nu^\star$ were independent of $\mathsf{q}$. (Each of these simple instances can be recognised as $e_{\nu'}(z_1,\To,z_M)$ as well as $m_\nu(z_1,\To,z_M)$ and $s_\nu(z_1,\To,z_M)$.) This covers all polynomials needed for $N \leq 5$. The first \textsf{q}-dependent quantum zonal polynomial appears when $N=6$, for $\mu = (1,5)$ so $\nu = (3,1)$:
\begin{equation*}
	\begin{aligned}
	P_{(3,1)}^\star(z_1,z_2) & = e_2(z_1,z_2) \, \biggl( e_1(z_1,z_2)^2 - \frac{[4]}{[3]\mspace{1mu}[2]} \, e_2(z_1,z_2) \biggr) \\ 
	& = m_{(3,1)}(z_1,z_2) + \frac{[2]^2}{[3]} \, m_{(2,2)}(z_1,z_2) \\ 
	& = s_{(3,1)}(z_1,z_2) + \frac{1}{[3]} \, s_{(2,2)}(z_1,z_2) \, .
	\end{aligned}
\checkedMma
\end{equation*}
The reason is that this is the first time that there is a partition of length~$M$ that is smaller than $\nu$ in the dominance ordering~\eqref{eq:dominance} from \textsection\ref{s:Macdonald}: $(3,1) > (2,2)$. 
The coefficients in the expansion over Schur polynomials are known as Kostka--Macdonald coefficients.
More generally for $M=2$ we can use \eqref{eq:Macdonald_shift_property} to write $P_{(\nu_1,\nu_2)}^\star = (z_1 \, z_2)^{\nu_2} \, P_{(\nu_1 - \nu_2,0)}^\star$, where
\begin{equation*}
	\begin{aligned}
	P_{(n,0)}^\star(z_1,z_2) = \sum_{i=0}^{\lfloor n/2 \rfloor} \frac{[n-2\,i+1]}{[n+1]} \, s_{(n-i,i)}(z_1,z_2) \, ,
	\end{aligned}
\checkedMma
\end{equation*}
with $\lfloor n/2 \rfloor$ the integer part of $n/2$.
Tables~\ref{tb:Macdonald_N=8}--\ref{tb:Macdonald_N=10} contain all polynomials required to construct the complete spectrum for $N\leq 10$. Note the stability property $P_{\bar\nu}(z_1,\To,z_{M-1},0) = P_{\bar\nu}(z_1,\To,z_{M-1})$ for Macdonald polynomials with $\ell(\bar\nu)<M$ as well as the symmetry between polynomials with mirror-image motifs.

\begin{table}[ht]
	\begin{tabular}{cll} \toprule
		$M$ & `reduced' partition $\bar{\nu}$ & $P_{\bar{\nu}}^\star(z_1,\To,z_M)$ \\ \midrule
		$2$ & $(5,0)$ & $s_{(5,0)} + \frac{[4]}{[6]} \, s_{(4,1)} + \frac{[2]}{[6]} \, s_{(3,2)}$ \\ \midrule
		$3$ & $(3,0,0)$ & $s_{(3,0,0)} + \frac{[2]}{[4]} \, s_{(2,1,0)}$ \\
		& $(3,1,0)$ & $s_{(3,1,0)} + \frac{1}{[3]} \, s_{(2,2,0)}  + \frac{[5]\mspace{1mu}[2]}{[6]\mspace{1mu}[3]} \, s_{(2,1,1)}$ \\[.8ex]
		& $(3,2,0)$ & $s_{(3,2,0)} + \frac{1}{[3]} \, s_{(3,1,1)}  + \frac{[5]\mspace{1mu}[2]}{[6]\mspace{1mu}[3]} \, s_{(2,2,1)}$ \\[.8ex]
		& $(3,3,0)$ & $s_{(3,3,0)} + \frac{[2]}{[4]} \, s_{(3,2,1)}$ \\ \midrule
		$4$	& $(1,0,0,0)$ & $s_{(1,0,0,0)} = e_1$ \\
		& $(1,1,0,0)$ & $s_{(1,1,0,0)} = e_2$ \\
		& $(1,1,1,0)$ & $s_{(1,1,1,0)} = e_3$ \\
		\bottomrule
		\hline
	\end{tabular}
	\caption{Continuation of Table~\ref{tb:Macdonald_N=8} required for $N = 9$. We use \eqref{eq:Macdonald_shift_property} to set $n = 0$. The particularly simple polynomials correspond to \textsf{q}-spinons.}
	\label{tb:Macdonald_N=9}
\end{table}

\begin{table}[ht]
	\begin{tabular}{cll} \toprule
		$M$ & $\bar{\nu}$ & $P_{\bar{\nu}}^\star(z_1,\To,z_M)$ \\ \midrule
		$2$ & $(6,0)$ & $s_{(6,0)} + \frac{[5]}{[7]} \, s_{(5,1)} + \frac{[3]}{[7]} \, s_{(4,2)} + \frac{1}{[7]} \, s_{(3,3)}$ \\ \midrule
		$3$ & $(4,0,0)$ & $s_{(4,0,0)} + \frac{[3]}{[5]} \, s_{(3,1,0)} + \frac{1}{[5]} \, s_{(2,2,0)}$ \\[.8ex]
		& $(4,1,0)$ & $s_{(4,1,0)} + \frac{[2]}{[4]} \, s_{(3,2,0)} + \frac{[6]\mspace{1mu}[3]}{[7]\mspace{1mu}[4]} \, s_{(3,1,1)} + \frac{[6]}{[7]\mspace{1mu}[4]} \, s_{(2,2,1)}$ \\[.8ex]
		& $(4,2,0)$ & $s_{(4,2,0)} + \frac{1}{[3]} \, s_{(3,3,0)} + \frac{1}{[3]} \,s_{(4,1,1)} + \frac{[6]\mspace{1mu}[2]^3}{[7]\mspace{1mu}[3]^2} \, s_{(3,2,1)} + \frac{[5]}{[7]\mspace{1mu}[3]} \, s_{(2,2,2)}$ \\[.8ex]
		& $(4,3,0)$ & $s_{(4,3,0)} + \frac{[2]}{[4]} \, s_{(4,2,1)} + \frac{[6]\mspace{1mu}[3]}{[7]\mspace{1mu}[4]} \, s_{(3,3,1)} + \frac{[6]}{[7]\mspace{1mu}[4]} \, s_{(3,2,2)}$ \\[.8ex]
		& $(4,4,0)$ & $s_{(4,4,0)} + \frac{[3]}{[5]} \, s_{(4,3,1)} + \frac{1}{[5]} \, s_{(4,2,2)}$ \\ \midrule
		$4$ & $(2,0,0,0)$ & $s_{(2,0,0,0)} + \frac{1}{[3]} \, s_{(1,1,0,0)}$ 	\\[.8ex]
		& $(2,1,0,0)$ & $s_{(2,1,0,0)} + \frac{[4]}{[5]\mspace{1mu}[2]} \, s_{(1,1,1,0)}$ \\[.8ex] 
		& $(2,1,1,0)$ & $s_{(2,1,1,0)} + \frac{[6]}{[7]\mspace{1mu}[2]} \, s_{(1,1,1,1)}$ \\[.8ex]
		& $(2,2,0,0)$ & $s_{(2,2,0,0)} + \frac{1}{[3]} \, s_{(2,1,1,0)} + \frac{1}{[5]} \, s_{(1,1,1,1)}$	\\[.8ex] 
		& $(2,2,1,0)$ & $s_{(2,2,1,0)} + \frac{[4]}{[5]\mspace{1mu}[2]} \, s_{(2,1,1,1)}$ \\[.8ex]
		& $(2,2,2,0)$ & $s_{(2,2,2,0)} + \frac{1}{[3]} \, s_{(2,2,1,1)}$ \\ \midrule
		\bottomrule
		\hline
	\end{tabular}
	\caption{Continuation of Table~\ref{tb:Macdonald_N=9} needed for $N = 10$. We omit $P_{(0^5)} = 1$.}
	\label{tb:Macdonald_N=10}
\end{table} 

\subsection{Plan of proofs: spin-Ruijsenaars and freezing} \label{s:intro_spin-RM} The origin of the \textsf{q}-deformed Haldane--Shastry spin chain with all its remarkable properties is once again an integrable quantum many-body system~\cite{Pol_93,BG+_93,TH_95,Ugl_95u}: the spin-version of the trigonometric Ruijsenaars model~\cite{Rui_87}. The Ruijsenaars model is the \textsf{q}-deformation of the Calogero--Sutherland model from \textsection\ref{s:intro_spin-CS}, parametrised by $\mathsf{q}$ and $\mathsf{p} = \mathsf{q}^{2\hbar/k}$ with $k$ as in \eqref{eq:spin-CS}. The spin-Ruijsenaars model was studied in \cite{BG+_93,Che_94a,JK+_95a,JK+_95b} and more explicitly in \cite{Kon_96}. Like in the isotropic case, this model already
\begin{enumerate}
	\item[i.] (\emph{abelian symmetries}) belongs to a family of commuting operators~\cite{BG+_93,Che_94a}, each of which
	\item[ii.] (\emph{nonabelian symmetries}) commutes with an action of the quantum-loop algebra \cite{BG+_93}, cf.~\cite{CP_96};
	\item[iii.] (\emph{explicit eigenvectors}) has eigenvectors that are determined by a suitably symmetric polynomial, which for pseudo highest-weight eigenvectors can be described explicitly in terms of a Macdonald polynomial.
\end{enumerate}

\subsubsection{Physical (\textsf{q}-bosonic) space} \label{s:intro_phys_space}
Consider $N$ relativistic spin-1/2 particles of equal mass moving on a circle. The particles are `\textsf{q}-bosons' in that they are invariant under simultaneous \textsf{q}-exchange of spins and coordinates. More precisely, 

\begin{definition}[\cite{FR_92,BG+_93}]
We call an element of $(\mathbb{C}^2[z])^{\otimes N} \cong (\mathbb{C}^2)^{\otimes N} \otimes \mathbb{C}[z_1,\To,z_N]$ on which $\check{R}_{i,i+1}(z_i/z_{i+1}) = s_i$ a \emph{physical vector}. The subspace consisting of all such vectors is the \emph{physical space}, denoted by $\widetilde{\mathcal{H}}$. The \emph{simple component} of a physical vector $\ket{\widetilde{\Psi}}$ in the $M$-particle sector is defined to be $\cbraket{1\To M}{\widetilde{\Psi}}$, the component with all $\downarrow$s on the left.
\end{definition}
\noindent Note that the property defining physical vectors already appeared in the remark of~\textsection\ref{s:intro_nonabelian_2}. For more about the physical space, including a characterisation via the Hecke algebra, see \textsection\ref{s:physical_space}.

Any vector in the $M$-particle sector of $\widetilde{\mathcal{H}}$ is determined by a single polynomial. Several explicit descriptions are available in the literature, cf.~e.g.~\cite{DZ_05a,DZ_05b,KP_07,dGP_10}. We will use the following characterisation in terms of the coordinate basis, which explains the significance of the simple component.

\begin{proposition}[cf.~\cite{RSZ_07}] \label{prop:intro_physical_vectors}
Let $\mathbb{C}[z_1,\To,z_N]^{\mathfrak{S}_M \times \mathfrak{S}_{N-M}}$ denote the ring of polynomials that are symmetric in $z_1,\To,z_M$ and in $z_{M+1},\To,z_N$ separately. Any physical vector in the $M$-particle sector is determined by such a polynomial: with the notation \eqref{eq:Hecke_graphical},
\begin{equation} \label{eq:phys_vector_intro}
	\normalfont
	\sum_{i_1 < \cdots < i_M}^N \!\!\!\!\! T_{\{i_1,\To,i_M\}}^\text{pol} \widetilde{\Psi}(\vect{z}) \, \cket{i_1,\To,i_M} \, , \qquad \widetilde{\Psi}(\vect{z}) \in \mathbb{C}[z_1,\To,z_N]^{\mathfrak{S}_M \times \mathfrak{S}_{N-M}} \, .
\checkedMma
\end{equation}
\end{proposition}
\noindent The recursion leading to \eqref{eq:phys_vector} was already given in \cite{RSZ_07}, see the unnumbered equation after \mbox{(14)} therein. We will give a proof of Proposition~\ref{prop:intro_physical_vectors} in \textsection\ref{s:physical_space}.

Since physical vectors are determined by their simple component, any operator that preserves the physical space can be reconstructed from its action on the simple component:

\begin{corollary}
Let $\widetilde{O}$ be an operator on $\widetilde{\mathcal{H}}$. Let $0\leq M\leq N$ and assume that $\widetilde{O}$ maps the $M$-particle sector into some $M'$-particle sector. Then the restriction of $\widetilde{O}$ to the $M$-particle sector is completely determined by the assignment $\cbraket{1,\To, M}{\widetilde{\Psi}} \mapsto \cbra{1,\To,M'} \, \widetilde{O}\,\ket{\widetilde{\Psi}}$.
\end{corollary}
\noindent This trick, which is described in more detail in \textsection\ref{s:physical_space} and exploited in \textsection\ref{s:nonabelian_tilde}, provides an efficient tool for computations. To the best of our knowledge it is new.

\subsubsection{Abelian symmetries (spin-Macdonald operators)} \label{s:intro_abelian_tilde} 
Let $\mathsf{p} \in \mathbb{C}^\times$ set the speed of light~$c$ via $\mathsf{p} = \E^{\I \, \hbar/m\mspace{1mu}c}$, with $m$ the rest mass of the particles. The spin-Ruijsenaars model is quantum integrable, with a hierarchy of commuting Hamiltonians. In \textsection\ref{s:abelian_tilde} we obtain explicit expressions for these spin-Macdonald operators governing the dynamics.\,%
\footnote{\ Another type of matrix-valued Macdonald operators were constructed in~\cite{EV_00}.} 
Our expressions are as follows. 

Consider the $j$th momentum (translation) operator in multiplicative notation (\textsection\ref{s:Hecke_AHA}),
\begin{equation} \label{eq:p_hat_graphical}
	\hat{\mathsf{p}}_j \colon z_j \mapsto \mathsf{p} \, z_j \, , \qquad 
	\hat{\mathsf{p}}_j = 
	\tikz[baseline={([yshift=-.5*11pt*0.4]current bounding box.center)},	xscale=0.6,yscale=0.4,font=\footnotesize]{
		\draw[->] (0,0) node[below] {$z_1$} -- (0,2) node[above] {$\vphantom{z_j}z_1$};
		\draw[->] (1.5,0) node[below] {$z_{j-1}$} -- (1.5,2) node[above] {$z_{j-1}$};
		\draw[very thick] (2.5,0) node[below] {$\mathsf{p}\mspace{1mu}z_j$} -- (2.5,1) node[yshift=-1pt] {$\mathllap{\mathsf{p}\,}\tikz[baseline={([yshift=-.5*12pt*.35]current bounding box.center)},scale=.35]{\fill[black] (0,0) circle (.25)}$};
		\draw[->] (2.5,1) -- (2.5,2) node[above]{$z_j$};
		\draw[->] (3.5,0) node[below] {$z_{j+1}$} -- (3.5,2) node[above] {$z_{j+1}$};
		\draw[->] (5,0) node[below] {$z_N$} -- (5,2) node[above] {$\vphantom{z_j}z_N$};
		\foreach \x in {-1,...,1} \draw (.75+.2*\x,1) node{$\cdot\mathstrut$};	
		\foreach \x in {-1,...,1} \draw (4.25+.2*\x,1) node{$\cdot\mathstrut$};	
	} \, ,
\end{equation}
with \textsf{p}-deformed canonical commutation relations $\hat{\mathsf{p}}_j \, z_i = \mathsf{p}^{\delta_{ij}} \, z_i \, \hat{\mathsf{p}}_j$. 

\begin{definition} 
Using the graphical notation~\eqref{eq:R_graphical}, \eqref{eq:p_hat_graphical} the (first) \emph{spin-Macdonald operator} is
\begin{equation} \label{eq:intro_spin-D_1}
	\widetilde{D}_1 = \sum_{j=1}^N A_j(\vect{z}) \times \ \, \tikz[baseline={([yshift=-.5*11pt*0.4]current bounding box.center)},	xscale=0.6,yscale=0.4,font=\footnotesize]{
		\draw[->] (10.5,0) node[below]{$z_N$} -- (10.5,9) node[above]{$z_N\vphantom{z_j}$};
		\draw[->] (9,0) node[below]{$z_{j+1}$} -- (9,9) node[above]{$z_{j+1}$};
		\draw[very thick,rounded corners=3.5pt] (8,0) node[below]{$\mathsf{p}\mspace{1mu}z_j$} -- (8,.5) -- (7,1.5) node[inner sep=1.5pt,fill=white]{$\mathsf{p}\mspace{1mu}z_j$} -- (6,2.5) node[inner sep=1.5pt,fill=white]{$\mathsf{p}\mspace{1mu}z_j$} -- (5,3.5) node[shift={(.05,-.1)}, inner sep=1.5pt,fill=white]{$\mathsf{p}\mspace{1mu}z_j$} -- (5,4.5) node[yshift=-1pt]{$\mathllap{\mathsf{p}\,}\tikz[baseline={([yshift=-.5*12pt*.35]current bounding box.center)},scale=.35]{\fill[black] (0,0) circle (.25)}$};
		\draw[rounded corners=3.5pt,->] (5,4.5) -- (5,5.5) node[shift={(0,.05)}, inner sep=1.7pt,fill=white]{$z_j$} -- (6,6.5) node[inner sep=1.5pt,fill=white]{$z_j$} -- (7,7.5) node[inner sep=1.5pt,fill=white]{$z_j$} -- (8,8.5) -- (8,9) node[above]{$z_j$};
		\draw[rounded corners=3.5pt,->] (7,0) node[below]{$z_{j-1}$} -- (7,.5) -- (8,1.5) -- (8,4.5) node[inner sep=1.5pt,fill=white]{$z_{j-1}$} -- (8,7.5) -- (7,8.5) -- (7,9) node[above]{$z_{j-1}$};
		\draw[rounded corners=3.5pt,->] (6,0) node[below]{$\cdots$} -- (6,1.5) -- (7,2.5) -- (7,4.5) node[inner sep=1.5pt,fill=white]{$\cdots$} -- (7,6.5) -- (6,7.5) -- (6,9) node[above]{$\cdots{\vphantom{z_j}}$};
		\draw[rounded corners=3.5pt,->] (5,0) node[below]{$z_1$} -- (5,2.5) -- (6,3.5) -- (6,4.5) node[inner sep=1.5pt,fill=white]{$z_1$} -- (6,5.5) -- (5,6.5) -- (5,9) node[above]{$z_1\vphantom{z_j}$};
		\foreach \x in {-1,...,1} \draw (9.75+.2*\x,4.5) node{$\cdot\mathstrut$};		
	} , \qquad
	\begin{aligned}
	A_j(\vect{z}) \coloneqq {} & \prod_{\bar\jmath \mspace{1mu} (\neq j)}^N \!\! f(z_j/z_{\bar\jmath})^{-1} \, , \\
	= {} & \prod_{\bar\jmath \mspace{1mu} (\neq j)}^N \!\! \frac{\mathsf{q}\,z_j - \mathsf{q}^{-1}z_{\bar\jmath}}{z_j-z_{\bar\jmath}} \,.
	\end{aligned}
\end{equation}
\end{definition}

This operator was also found by Cherednik~\cite{Che_94a}.\,%
\footnote{\ Equation \mbox{(4.15)} in \cite{Che_94a} can be recognised as the last form in \eqref{eq:intro_spin-D_1_N=3}. See also Footnote~2 in \cite{Che_94a}.}
For $N=3$ it becomes 
\begin{equation} \label{eq:intro_spin-D_1_N=3}
	\begin{aligned}
	\widetilde{D}_1 = {} & A_1(z_1,z_2,z_3) \, \hat{\mathsf{p}}_1 + A_2(z_1,z_2,z_3) \, \check{R}_{12}(z_2/z_1) \, \hat{\mathsf{p}}_2 \, \check{R}_{12}(z_1/z_2) \\
	& + A_3(z_1,z_2,z_3) \, \check{R}_{23}(z_3/z_2) \, \check{R}_{12}(z_3/z_1) \, \hat{\mathsf{p}}_3 \, \check{R}_{12}(z_1/z_3) \, \check{R}_{23}(z_2/z_3) \\
	= {} & A_1(z_1,z_2,z_3) \, \hat{\mathsf{p}}_1 + A_2(z_1,z_2,z_3) \, \check{R}_{12}(z_2/z_1) \, \check{R}_{12}(\mathsf{p}^{-1} z_1/z_2) \, \hat{\mathsf{p}}_2 \\
	& + A_3(z_1,z_2,z_3) \, \check{R}_{23}(z_3/z_2) \, \check{R}_{12}(z_3/z_1) \, \check{R}_{12}(\mathsf{p}^{-1}\,z_1/z_3) \, \check{R}_{23}(\mathsf{p}^{-1}\,z_2/z_3) \, \hat{\mathsf{p}}_3 \, .
	\end{aligned}
\checkedMma
\end{equation}
The difference with the spinless case (\textsection\ref{s:Macdonald}) is that the $\hat{\mathsf{p}}_j$ are `dressed' by \textit{R}-matrices. 
In the nonrelativistic limit $c\to\infty$, taken by setting $\mathsf{p} = \mathsf{q}^{2\hbar/k}$ and Taylor expanding at $\mathsf{q} = 1$, \eqref{eq:intro_spin-D_1} reduces to the effective Hamiltonian $\widetilde{H}^{\text{eff},\mspace{1mu}\text{nr}}$ of the spin-Calogero--Sutherland model, related to~\eqref{eq:spin-CS} by a `gauge transformation'. This limit is reviewed in \textsection\ref{s:app_nonrlt_limit}.

The higher spin-Macdonald operators $\widetilde{D}_r$, $1\leq r \leq N$, involve more and more `layers' of \textit{R}-matrices, see \eqref{eq:spin_D_r}. For example,
\begin{equation} \label{eq:intro_spin-D_2}
	\widetilde{D}_2 = \sum_{j<j'}^N A_{j j'}(\vect{z}) \times \ \,
	\tikz[baseline={([yshift=-.5*11pt*0.4*.9]current bounding box.center)},	xscale=0.6*.9,yscale=0.4*.9,font=\footnotesize]{
		\draw[->] (13.5,0) node[below]{$z_N$} -- (13.5,13) node[above]{$z_N\vphantom{z_j}$};
		\draw[->] (12,0) -- (12,13);
		\draw[very thick,rounded corners=3.5pt] (11,0) node[below]{$\mathsf{p}\mspace{1mu}z_{j'}$} -- (11,.5) -- (6,5.5) -- (6,6.5) node[yshift=-1pt]{$\mathllap{\mathsf{p}\,}\tikz[baseline={([yshift=-.5*12pt*.35]current bounding box.center)},scale=.35]{\fill[black] (0,0) circle (.25)}$};
		\draw[rounded corners=3.5pt,->] (6,6.5) -- (6,7.5) -- (11,12.5) -- (11,13) node[above]{$z_{j'}$};
		\draw[rounded corners=3.5pt,->] (10,0) -- (10,.5) -- (11,1.5) -- (11,11.5) -- (10,12.5) -- (10,13);
		\node at (9.5,0) [below] {$\cdots$}; 
		\node at (9.5,13) [above] {$\cdots{\vphantom{z_j}}$};
		\draw[rounded corners=3.5pt,->] (9,0) -- (9,1.5) -- (10,2.5) -- (10,10.5) -- (9,11.5) -- (9,13);
		\draw[very thick,rounded corners=3.5pt] (8,0) node[below]{$\mathsf{p}\mspace{1mu}z_j$} -- (8,1.5) -- (5,4.5) -- (5,6.5) node[yshift=-1pt]{$\mathllap{\mathsf{p}\,}\tikz[baseline={([yshift=-.5*12pt*.35]current bounding box.center)},scale=.35]{\fill[black] (0,0) circle (.25)}$};
		\draw[rounded corners=3.5pt,->] (5,6.5) -- (5,8.5) -- (8,11.5) -- (8,13) node[above]{$z_j$};
		\draw[rounded corners=3.5pt,->] (7,0) -- (7,1.5) -- (9,3.5)	-- (9,9.5) -- (7,11.5) -- (7,13);
		\node at (6.5,0) [below] {$\cdots$}; 
		\node at (6.5,13) [above] {$\cdots{\vphantom{z_j}}$};
		\draw[rounded corners=3.5pt,->] (6,0) -- (6,2.5) -- (8,4.5) -- (8,8.5) -- (6,10.5) -- (6,13);
		\draw[rounded corners=3.5pt,->] (5,0) node[below]{$z_1$} -- (5,3.5) -- (7,5.5)	-- (7,7.5) -- (5,9.5) -- (5,13) node[above]{$z_1\vphantom{z_j}$};
		\foreach \x in {-1,...,1} \draw (12.75+.2*\x,6.5) node{$\cdot\mathstrut$};	
	}
	, \quad 
	\begin{aligned}
	& \\ 
	A_{j j'}(\vect{z}) \coloneqq {} & \! \prod_{\bar\jmath (\neq j,j')}^N \!\!\!\! f(z_j/z_{\bar\jmath})^{-1} f(z_{j'},z_{\bar\jmath})^{-1} \\
	= {} & f(z_j/z_{j'}) \, f(z_{j'}/z_j) \\
		& \quad \times \, A_j(\vect{z}) \, A_{j'}(\vect{z}) \, .
	\end{aligned}
\checkedMma
\end{equation}
When $N=3$ this gives
\begin{equation*}
	\begin{aligned}
	\widetilde{D}_2 = {} & A_{12}(z_1,z_2,z_3) \, \hat{\mathsf{p}}_1 \, \hat{\mathsf{p}}_2 + A_{13}(z_1,z_2,z_3) \, \check{R}_{23}(z_3/z_2) \, \hat{\mathsf{p}}_1 \, \hat{\mathsf{p}}_3 \, \check{R}_{23}(z_2/z_3) \\
	& + A_{23}(z_1,z_2,z_3) \, \check{R}_{12}(z_2/z_1) \, \check{R}_{23}(z_3/z_1) \, \hat{\mathsf{p}}_2 \, \hat{\mathsf{p}}_3 \, \check{R}_{23}(z_1/z_3) \, \check{R}_{12}(z_1/z_2) \, .
	\end{aligned}
\checkedMma
\end{equation*}

Beyond the `equator' $r =\lfloor N/2\rfloor$ the expressions become simpler again. In particular, the (multiplicative) translation operator is the same as in the spinless case, 
\begin{equation} \label{eq:intro_spin-D_N}
	\widetilde{D}_N = \hat{\mathsf{p}}_1 \cdots \hat{\mathsf{p}}_N \, ,
\checkedMma
\end{equation} 
and the counterpart of \eqref{eq:intro_spin-D_1} with opposite chirality is
\begin{equation} \label{eq:intro_spin-D_-1}
	\widetilde{D}_{-1} \coloneqq \widetilde{D}_N^{-1} \, \widetilde{D}_{N-1} = \sum_{i=1}^N A_{-i}(\vect{z}) \times \tikz[baseline={([yshift=-.5*11pt*0.4*.9]current bounding box.center)},	xscale=-0.6*.9,yscale=0.4*.9,font=\footnotesize]{
		\draw[->] (10.5,0) node[below]{$\vphantom{\mathsf{p}^{-1}}z_1$} -- (10.5,9) node[above]{$z_1$};
		\draw[->] (9,0)	-- (9,9);
		\draw[very thick,rounded corners=3.5pt] (8,0) node[below]{$\mathsf{p}^{-1} z_i$} -- (8,.5) -- (5,3.5)	-- (5,4.5) node[yshift=-1pt]{$\tikz[baseline={([yshift=-.5*12pt*.35]current bounding box.center)},scale=.35]{\fill[black] (0,0) circle (.25)} \mathrlap{\,\mathsf{p}^{-1}}$};
		\draw[rounded corners=3.5pt,->] (5,4.5) -- (5,5.5) -- (8,8.5) -- (8,9) node[above]{$z_i$};
		\draw[rounded corners=3.5pt,->] (7,0) -- (7,.5) -- (8,1.5) -- (8,7.5) -- (7,8.5) -- (7,9);
		\node at (6.5,0) [below] {$\vphantom{\mathsf{p}^{-1}}\cdots$}; \node at (6.5,9) [above] {$\cdots$};
		\draw[rounded corners=3.5pt,->] (6,0) -- (6,1.5) -- (7,2.5)	-- (7,6.5) -- (6,7.5) -- (6,9);
		\draw[rounded corners=3.5pt,->] (5,0) node[below]{$\vphantom{\mathsf{p}^{-1}}z_N$} -- (5,2.5) -- (6,3.5)	-- (6,5.5) -- (5,6.5) -- (5,9) node[above]{$z_N$};
		\foreach \x in {-1,...,1} \draw (9.75+.2*\x,4.5) node{$\cdot\mathstrut$};		
	} 
	\ \ \ , \qquad A_{-i}(\vect{z}) \coloneqq \prod_{\bar\imath \mspace{1mu} (\neq i)}^N \!\! f(z_{\bar\imath},z_i)^{-1} \, .
\checkedMma
\end{equation}

The key property of these operators is
\begin{theorem}[cf.~\cite{BG+_93,Che_94a}] \label{thm:intro_spin_D_r}
The spin-Macdonald operators \eqref{eq:intro_spin-D_1}, \eqref{eq:intro_spin-D_2}--\eqref{eq:intro_spin-D_-1} are part of a commuting family of operators on the physical space~$\widetilde{\mathcal{H}}$.
\end{theorem}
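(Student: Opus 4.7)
The strategy is to recognise the spin-Macdonald operators as symmetric functions of mutually commuting Cherednik \textit{Y}$\mspace{-2mu}$-operators (with $\mathsf{p}^\circ$ replaced by the full $\mathsf{p}$), so that commutativity becomes a statement inside the double affine Hecke algebra. I would set this up in three stages: (i) show each $\widetilde D_r$ preserves the physical space $\widetilde{\mathcal H}$; (ii) identify its restriction with the $r$th elementary symmetric function $e_r(Y_1,\To,Y_N)$ in the polynomial representation; (iii) conclude using $[Y_i,Y_j]=0$.

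For the first stage, I would use the defining relation $\check R_{i,i+1}(z_i/z_{i+1})=s_i$ on $\widetilde{\mathcal H}$, together with the Yang--Baxter equation for $\check R$, to rewrite every nested conjugation by \textit{R}-matrices appearing in the graphical definition of $\widetilde D_r$ as conjugation by a product of simple transpositions $s_i$, but with inhomogeneities carefully tracked along the lines. Since $\hat{\mathsf p}_j$ shifts only $z_j$, the Yang--Baxter moves can be carried out in the prescribed order; the preservation of $\widetilde{\mathcal H}$ then follows because the result is manifestly built from the polynomial Hecke action and the shift operators, which together generate the polynomial representation of the affine Hecke algebra, and the latter is compatible with the $T_i^{\mathrm{sp}}=T_i^{\mathrm{pol}}$ constraint.

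For the second stage I would proceed as follows. Recall that the affine Hecke \textit{Y}$\mspace{-2mu}$-operators admit the \emph{Lusztig-type} factorisation
\[
Y_j \;=\; T_j^{-1}\cdots T_{N-1}^{-1}\,\hat{\mathsf{p}}_j\, T_1\cdots T_{j-1}\,\omega^{-1}
\]
(or the suitable variant compatible with the conventions used here) and that $e_r(Y_1,\To,Y_N)$ is central in the subalgebra generated by the $T_i$'s and the symmetric polynomials in the $Y$'s. On symmetric polynomials, $e_r(Y_1,\To,Y_N)$ reduces to the classical scalar Macdonald operator $D_r$. The elementary-symmetric structure and the nested chain of $\check R$-factors in $\widetilde D_r$ mirror exactly the multi-index sum of conjugates of products $\hat{\mathsf{p}}_{j_1}\cdots\hat{\mathsf{p}}_{j_r}$ that appears when one expands $e_r(Y_1,\To,Y_N)$ after moving all Hecke generators to the outside. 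Matching the scalar prefactor $A_{j_1\cdots j_r}$ with the coefficient produced by the Hecke factors is a bookkeeping exercise using $a_{i\bar\imath}=f(z_i/z_{\bar\imath})^{-1}$ and the Baxterisation formula relating $\check R$ to $T_i$. Once this identification is established on $\widetilde{\mathcal H}$, the third stage is immediate: since $[Y_i,Y_j]=0$, one has $[e_r(\vec Y),e_s(\vec Y)]=0$ for all $r,s$, hence $[\widetilde D_r,\widetilde D_s]=0$ as operators on the physical space.

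The main obstacle will be stage (ii)---writing down the correct identification between the nested graphical expression for $\widetilde D_r$ and a genuine symmetric function of $Y$'s. The combinatorics of which \textit{R}-matrices appear and in which order, together with the precise scalar prefactors $A_{j\bar\jmath}$, has to reproduce the Lusztig expansion exactly; any missing or extra factor will obstruct the identification. A secondary subtlety is that the preservation of $\widetilde{\mathcal H}$ and the identification with $e_r(\vec Y)$ must be done in a way that does not implicitly assume commutativity. I would therefore first verify the small cases $r=1$ and $r=N$ directly (the latter is trivial by \eqref{eq:intro_spin-D_N}), and then induct on $r$ using the recursion that expresses $e_r$ in terms of lower $e_s$ and power sums of $Y$'s, or alternatively establish the identification uniformly by matching leading symbols in a suitable asymptotic expansion of the \textit{R}-matrix (e.g.\ as $z_j/z_{\bar\jmath}\to\infty$) where both sides become transparent.
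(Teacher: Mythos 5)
Your overall architecture coincides with the paper's: realise each $\widetilde D_r$ as $e_r(Y_1,\To,Y_N)$ acting on $\widetilde{\mathcal H}$ and conclude from $[Y_i,Y_j]=0$. But the step you yourself flag as ``the main obstacle''\,---\,stage~(ii)\,---\,is where the entire proof lives, and the mechanism needed to carry it out is absent from your sketch. Moreover your stage~(i) is both unnecessary as a separate step and incorrectly justified: the claim that the polynomial AHA representation ``is compatible with the $T_i^{\mathrm{sp}}=T_i^{\mathrm{pol}}$ constraint'' is false for individual generators (none of $T_i^{\text{pol}}$, $Y_i$ preserves $\widetilde{\mathcal H}$); what saves the day is that $e_r(\vect Y)$ lies in the \emph{centre} $\mathbb C[\vect Y]^{\mathfrak S_N}$ of the affine Hecke algebra, hence commutes with every $T_i^{\text{pol}}$ and therefore preserves each $\ker(T_i^{\text{sp}}-T_i^{\text{pol}})$. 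Starting from $e_r(\vect Y)|_{\widetilde{\mathcal H}}$ and deriving the explicit formula (rather than the reverse) also dissolves the circularity worry you raise.

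The identification itself is not done by Newton-type induction on $r$ nor by matching leading symbols in an asymptotic regime (the latter only constrains leading terms and cannot establish an operator identity); it is done by a symmetry argument that produces all terms from one. Using the triangular factorisation $Y_i = x_{i,i+1}\cdots x_{iN}\,\hat{\mathsf p}_i\,x_{i1}\cdots x_{i,i-1}$ of \eqref{eq:Yi_via_x}\,---\,note the shift operator sits in the \emph{middle}, not where your displayed ``Lusztig-type'' formula places it, and there is no trailing $\omega^{-1}$\,---\,one computes only the coefficient of $\hat{\mathsf p}_1\cdots\hat{\mathsf p}_r$ in the normal form of $Y_1\cdots Y_r$: the $x_{j\bar\jmath}$ with $\bar\jmath\le r$ cancel in pairs and one reads off $A_{1\cdots r}$, as in \eqref{eq:Y_1...r_contribution}. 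Every other term $A_J\,\check R^{-1}_{\{j_1,\To,j_r\}^{-1}}\hat{\mathsf p}_J\,\check R_{\{j_1,\To,j_r\}^{-1}}$ is then obtained in one stroke by conjugating with the deformed permutation action $s^{\text{tot}}_w=s_w\check R_w$ from \eqref{eq:s_tot}, under which $e_r(\vect Y)$ is invariant on $\widetilde{\mathcal H}$; this conjugation is precisely what manufactures the nested \textit{R}-matrix dressing in \eqref{eq:intro_spin-D_1}--\eqref{eq:intro_spin-D_-1}. Without this trick your ``bookkeeping exercise'' degenerates into expanding each product $Y_{j_1}\cdots Y_{j_r}$ separately, which is not feasible in closed form. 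Once the identification is supplied, your stage~(iii) is indeed immediate.
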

\noindent The existence of this family of commuting operators on $\widetilde{\mathcal{H}}$ was shown in \cite{BG+_93}. Their expressions were found by \cite{Che_94a}, albeit in a less explicit form. We will prove Theorem in~\ref{thm:intro_spin_D_r} in \textsection\ref{s:abelian_tilde}, see Theorem~\ref{thm:spin_D_r} therein.

\begin{remarks} \label{p:Ruijsenaars}
\textbf{i.}~See \eqref{eq:spin_D_r} for a general expression for $\widetilde{D}_r$. \textbf{ii.}~The eigenvalues of the $\widetilde{D}_r$ are, by construction (\textsection\ref{s:abelian_tilde}), as in the spinless case (\textsection\ref{s:Macdonald}). 
\textbf{iii.}~The `full', or physical, spin-Ruijsenaars Hamiltonian is $(\widetilde{D}_1 + \widetilde{D}_{-1})/2$, while the physical momentum operator is $(\widetilde{D}_1 - \widetilde{D}_{-1})/2$. By a conjugation (`gauge transformation') one can pass to the spin-generalisation of Ruijsenaars's manifestly Hermitian form~\cite{Rui_87}, see \textsection\ref{s:Ruijsenaars}.
\end{remarks}

\subsubsection{Nonabelian symmetries} \label{s:intro_nonab_dyn}
One reason for going through the spin-Ruijsenaars model is that the latter already enjoys quantum-loop symmetry. Recalling~\eqref{eq:intro_xij} let
\begin{equation} \label{eq:intro_Y-ops}
	Y_i \coloneqq x_{i,i+1} \, x_{i,i+2} \cdots x_{iN} \; \hat{\mathsf{p}}_i \; x_{i1} \cdots x_{i,i-2} \, x_{i,i-1}
\end{equation}
be the \textsf{q}-deformed Dunkl operators of the affine Hecke algebra, see \textsection\ref{s:Hecke_AHA}.

\begin{theorem}[\cite{BG+_93}, cf.~\cite{CP_96}] \label{thm:monodromy_tilde}
The physical space $\widetilde{\mathcal{H}}$ carries an action of $\widehat{\mathfrak{U}}$, given by the monodromy matrix
\begin{equation} \label{eq:intro_L_tilde}
	\widetilde{L}_a(u) \coloneqq R_{aN}(u\,Y_N) \cdots  R_{a1}(u\,Y_1)
\end{equation}
on $V_a \otimes \widetilde{\mathcal{H}}$. This operator commutes with the spin-Macdonald operators.
\end{theorem}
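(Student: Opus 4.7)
The proof splits into three tasks: checking that $\widetilde{L}_a(u)$ preserves the physical subspace $\widetilde{\mathcal{H}}$, establishing the \textit{RLL}-relation (which then defines the $\widehat{\mathfrak{U}}$-action via the standard construction), and verifying commutation with each $\widetilde{D}_r$. The ingredients are those supplied by the affine Hecke algebra in \textsection\ref{s:Hecke_AHA}: the pairwise commutativity $[Y_i,Y_j]=0$, the Lusztig-type intertwining between $T_i^\text{pol}$ and the $Y_j$, and the Yang--Baxter equation for the \textit{R}-matrix~\eqref{eq:R_mat}.

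For the \textit{RLL}-relation, the key observations are that for $i\neq j$ the factors $R_{ai}(u\,Y_i)$ and $R_{bj}(v\,Y_j)$ commute (they act on disjoint quantum sites, and their operator-valued spectral parameters commute by $[Y_i,Y_j]=0$), while for $i=j$ the usual YBE
\[
\check{R}_{ab}(u/v)\,R_{ai}(u\,Y_i)\,R_{bi}(v\,Y_i) \;=\; R_{bi}(v\,Y_i)\,R_{ai}(u\,Y_i)\,\check{R}_{ab}(u/v)
\]
still holds, since $\check{R}_{ab}$ commutes with $Y_i$ and the common $Y_i$ can be treated as a scalar in the auxiliary YBE. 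Given this, the identity $\check{R}_{ab}(u/v)\,\widetilde{L}_a(u)\,\widetilde{L}_b(v) = \widetilde{L}_b(v)\,\widetilde{L}_a(u)\,\check{R}_{ab}(u/v)$ follows by the standard interlacing argument: rearrange $\widetilde{L}_a(u)\,\widetilde{L}_b(v)$ so that the pair $R_{ai}(u\,Y_i)$, $R_{bi}(v\,Y_i)$ is adjacent for each site $i$, push $\check{R}_{ab}(u/v)$ through each such pair using the YBE, and restore the original ordering.

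To see that $\widetilde{L}_a(u)$ preserves $\widetilde{\mathcal{H}}$, I would compare the spin- and polynomial-Hecke actions on $\widetilde{L}_a(u)\,\ket{\Psi}$ for $\ket{\Psi}\in\widetilde{\mathcal{H}}$. The polynomial side is controlled by the affine Hecke intertwining between $T_i^\text{pol}$ and the $Y_j$'s, which conjugates $R_{a,i}(u\,Y_i)\,R_{a,i+1}(u\,Y_{i+1})$ into $R_{a,i}(u\,Y_{i+1})\,R_{a,i+1}(u\,Y_i)$ up to Hecke corrections; the spin side is controlled by the YBE $\check{R}_{i,i+1}(x/y)\,R_{a,i}(x)\,R_{a,i+1}(y) = R_{a,i}(y)\,R_{a,i+1}(x)\,\check{R}_{i,i+1}(x/y)$ in auxiliary space $a$. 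Identifying $T_i^\text{sp} = T_i^\text{pol}$ on physical vectors, the two identities should match, giving $\widetilde{L}_a(u)\,\widetilde{\mathcal{H}}\subseteq\widetilde{\mathcal{H}}$. This bookkeeping is the most delicate part of the argument; the matching of the Hecke corrections on the two sides is where the full force of the affine Hecke relations is needed.

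Finally, $[\widetilde{L}_a(u),\widetilde{D}_r]=0$ reflects that, by the construction in \textsection\ref{s:spin-Macdonald}, each $\widetilde{D}_r$ acts on $\widetilde{\mathcal{H}}$ as a symmetric function of the commuting $Y_j$: the \textit{R}-matrix dressings visible in~\eqref{eq:intro_spin-D_1}--\eqref{eq:intro_spin-D_-1} arise only when this symmetric function is rewritten using the physical-space identification of the two Hecke actions. Since $\widetilde{D}_r$ acts trivially on the auxiliary space $a$, and since each factor $R_{aj}(u\,Y_j)$ of $\widetilde{L}_a(u)$ depends on the Dunkl operators only through the single $Y_j$, both operators are built out of the mutually commuting $Y_j$ and therefore commute.
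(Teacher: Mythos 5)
Your overall architecture matches the paper's. The \textit{RLL}-relations follow, as you say, because $R_{ab}(u/v)$ commutes with all the $Y_i$, so the standard interlacing argument goes through with operator-valued spectral parameters. Commutation with the spin-Macdonald operators likewise holds because on $\widetilde{\mathcal{H}}$ these are the central elements $e_r(\vect{Y})$ of the affine Hecke algebra (equivalently, the generating function $\widetilde{\Delta}(v)$ is central), which commute with every $Y_j$ and act trivially on spins; your phrasing that ``both operators are built out of the mutually commuting $Y_j$'' is loose, since $\widetilde{L}_a(u)$ also carries spin content, but the conclusion stands because that spin content commutes with any purely polynomial operator.

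The genuine gap is that the one step carrying the actual content of the theorem---that $\widetilde{L}_a(u)$ preserves $\widetilde{\mathcal{H}}$---is precisely the step you label ``the most delicate part'' and then do not carry out. Asserting that the polynomial-side intertwining and the spin-side YBE ``should match'' is not a proof: the cross relation $T_i^{-1}Y_iT_i^{-1}=Y_{i+1}$ is a \emph{twisted} conjugation, so the Hecke corrections on the two sides look genuinely different, and showing they cancel is where the Hecke quadratic relation must be used. The paper's execution: Baxterise each factor as $R_{ai}(u\,Y_i)\propto P\,(u\,T^{\text{sp}}Y_i - T^{\text{sp}\,-1})$, collect the permutations into a single cyclic $P_{(a1\cdots N)}$---which shifts indices, so one must compare $T_{i-1}^{\text{sp}}$ with $T_i^{\text{pol}}$, an offset absent from your sketch---strip the central denominator $\Delta(-t\,u)$, and verify on $\ker(T_i^{\text{sp}}-T_i^{\text{pol}})$ that the two actions agree on the product of the two adjacent factors, order by order in $u$: the $u^2$ term needs only the braid relation and $[T_i^{\text{pol}},Y_iY_{i+1}]=0$; the $u^0$ term the braid relations plus the Hecke condition; and the $u^1$ term the identity $T_i^{\text{sp}}Y_{i+1}=T_i^{\text{pol}\,-1}Y_i$ valid on physical vectors together with $T_i^{\text{pol}\,-1}+T_{i-1}^{\text{sp}} = T_i^{\text{pol}}+T_{i-1}^{\text{sp}\,-1}$. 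Until that computation (or an equivalent one) is supplied, your proposal establishes everything except that $\widetilde{L}_a(u)$ is a well-defined operator on $\widetilde{\mathcal{H}}$ in the first place.
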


\noindent In \textsection\ref{s:nonabelian_tilde} we recall the proof from \cite{BG+_93}, which uses all relations of the affine Hecke algebra, and give the action in terms of Chevalley generators, which is due to~\cite{CP_96}.

The Corollary of Proposition~\ref{prop:intro_physical_vectors} provides a convenient way for working with the $\widehat{\mathfrak{U}}$-action on $\widetilde{\mathcal{H}}$, as we will show in \textsection\ref{s:physical_space}. For $\mathfrak{U} \subset \widehat{\mathfrak{U}}$ the result is quite simple, see \eqref{eq:Uqsl2_pol} and \eqref{eq:Uqsl2_pol_alt}: up to a prefactor these are partial (Hecke) \textsf{q}-symmetrisers that ensure the resulting polynomials have the correct symmetry. Likewise, the affine generators are essentially partial $\mathsf{q}^{-1}$-symmetrisers, besides a simple factor depending on the parameter~$\mathsf{p}$ from the $Y_i$, see \eqref{eq:affinisation_pol} and \eqref{eq:affinisation_pol_alt}.

\subsubsection{Abelian spin-chain symmetries from freezing} \label{s:intro_freezing} The \textsf{q}-deformed Haldane--Shastry spin chain arises from the spin-Ruijsenaars model by freezing. This is the topic of \textsection\ref{s:freezing}. 
As we mentioned in \textsection\ref{s:intro_spin-CS}, the idea of freezing is due to Polychronakos~\cite{Pol_93} and was further developed in \cite{BG+_93,TH_95}. In the \textsf{q}-case it is due to \cite{BG+_93,Ugl_95u}. In the limit\,%
\footnote{\ The limit $\mathsf{p}\to 1$ (at fixed $\mathsf{q}$) should not be confused with the (Jack) limit $\mathsf{p} = \mathsf{q}^{2\alpha} \to 1$ at fixed $\alpha$. Physically, $\mathsf{p} = \mathsf{q}^{2\hbar/k} = \E^{\I \, \hbar/m\mspace{1mu}c}$ so $\mathsf{p}\to 1$ corresponds to $\alpha = \hbar/k \to 0$. This can be interpreted either as a \emph{classical} limit ($\hbar \to 0$) or as letting $k \to \infty$, cf.~\textsection\ref{s:intro_spin-CS}. (The \emph{semiclassical} limit is the next order in $\hbar$.) Instead, $\mathsf{p} = \mathsf{q}^{2\alpha} \to 1$ can be interpreted as the \emph{non-relativistic} limit $c\to\infty$, see \textsection\ref{s:app_nonrlt_limit}, or the \emph{isotropic} limit from the spin-chain perspective.}
$\mathsf{p}\to 1$ the kinetic energy is negligible compared to the potential energy and the particles slow down to come to a halt at their (equispaced) classical equilibrium positions~$\ev z_j$, which are the same as for Calogero--Sutherland (\textsection\ref{s:intro_HS}), to give rise to the spin chain. 

More precisely, as in the spinless case, the $\widetilde{D}_r$ become trivial at $\mathsf{p}=1$. The spin-chain Hamiltonians thus arise as the `semiclassical' limit of the spin-Macdonald operators, by linearising at $\mathsf{p} = 1$. We denote this operation by $\partial/\partial\mathsf{p}\big|_{\mathsf{p}=1}$. Let $\begin{bmatrix} N \\ r \end{bmatrix} = \displaystyle \frac{[N]\,[N-1]\cdots [N-r+1]}{[r]\,[r-1]\cdots [2]}$ be the \textsf{q}-binomial coefficients.

\begin{theorem}[cf.~\cite{Ugl_95u}] \label{thm:freezing_abelian}
{\normalfont\textbf{i.}}~The spin-chain Hamiltonians, for $1\leq r \leq N-1$, arise from the spin-Macdonald operators as
\begin{equation} \label{eq:H_r}
	H_r = \ev \widetilde{H}_r \, , \qquad \widetilde{H}_r = \frac{1}{\mathsf{q}-\mathsf{q}^{-1}} \, \frac{\partial}{\partial\mspace{1mu}\mathsf{p}}\bigg|_{\mathsf{p}=1} \! \biggl( \widetilde{D}_r - \frac{r}{N} \mspace{1mu} \begin{bmatrix} N \\ r \end{bmatrix} \mspace{1mu} \widetilde{D}_N \biggr) \, .
\checkedMma
\end{equation}
In particular we recover $\normalfont H^\textsc{l} = H_1$ and $\normalfont H^\textsc{r} = H_{N-1}$.

{\normalfont\textbf{ii.}}~The eigenvalue of \eqref{eq:H_r} on the joint eigenspace $\mathcal{H}^\mu$, labelled by the motif $\mu \in \mathcal{M}_N$, is
\begin{equation} \label{eq:energy_r}
	\begin{aligned}
		E_r(\mu) = \sum_{m=1}^M & \varepsilon_r(\mu_m) \, , \\ 
		& \varepsilon_r(\mu_m) = \frac{1}{\mathsf{q}-\mathsf{q}^{-1}} \,\Biggl(\,\sum_{s=1}^r \, (-1)^{s-1} \, \begin{bmatrix} N \\ r - s \end{bmatrix} \, \mathsf{q}^{s \,(N-\mu_m)} \, \frac{[s\,\mu_m]}{[s]} \ \, - \, \frac{r}{N} \begin{bmatrix} N \\ r \end{bmatrix} \, \mu_m \Biggr) \, .
	\end{aligned}
	\checkedMma
\end{equation}
In particular we retrieve the dispersions $\normalfont \varepsilon^\textsc{l}(\mu_m) = \varepsilon_1(\mu_m)$ and $\normalfont \varepsilon^\textsc{r}(\mu_m) = \varepsilon_{N-1}(\mu_m)$.
\checkedMma
\end{theorem}
\noindent For $r=1$ this result is due to Uglov~\cite{Ugl_95u}. We prove Theorem~\ref{thm:freezing_abelian} in \textsection\ref{s:abelian_freezing}.

We discuss the two parts of Theorem~\ref{thm:freezing_abelian} in turn. The subtraction involving $\widetilde{D}_N$ in \eqref{eq:H_r} is to get rid of the differential operators $z_j \, \partial_{z_j}$ coming from the linearisation of the $\hat{\mathsf{p}}_j$ in $\widetilde{D}_r$.
\checkedMma
Concretely this subtraction amounts to moving the $\hat{\mathsf{p}}$s in $\widetilde{D}_r$ to the right as in~\eqref{eq:intro_spin-D_1_N=3} and then discarding them. Let us illustrate this with a

\begin{proof}[Example of \eqref{eq:H_r} for $r=1$ (sketch)]
\checkedMma
Note that $\ev A_j(\vect{z}) = [N]/N$. We compute
\checkedMma 
\begin{equation*}
	\frac{\partial}{\partial \mathsf{p}}\bigg|_{\mathsf{p}=1} \,
		\tikz[baseline={([yshift=-.5*11pt*0.4*.9]current bounding box.center)},	xscale=0.6*.9,yscale=0.4*.9,font=\footnotesize]{
		\draw[->] (9.5,0) node[below]{$z_N$} -- (9.5,9) node[above]{$z_N\vphantom{z_j}$};
		\draw[very thick,rounded corners=3.5pt] (8,0) node[below]{$\mathsf{p}\mspace{1mu}z_j$} -- (8,.5) -- (5,3.5) -- (5,4.5) node[yshift=-1pt]{$\mathllap{\mathsf{p}\,}\tikz[baseline={([yshift=-.5*12pt*.35]current bounding box.center)},scale=.35]{\fill[black] (0,0) circle (.25)}$};
		\draw[rounded corners=3.5pt,->] (5,4.5) -- (5,5.5) -- (8,8.5) -- (8,9) node[above]{$z_j$};
		\draw[rounded corners=3.5pt,->] (7,0) node[below]{$z_{j-1}$} -- (7,.5) -- (8,1.5) -- (8,4.5) -- (8,7.5) -- (7,8.5) -- (7,9) node[above]{$z_{j-1}$};
		\draw[rounded corners=3.5pt,->] (6,0) node[below]{$\cdots$} -- (6,1.5) -- (7,2.5) -- (7,4.5) -- (7,6.5) -- (6,7.5) -- (6,9) node[above]{$\cdots{\vphantom{z_j}}$};
		\draw[rounded corners=3.5pt,->] (5,0) node[below]{$z_1$} -- (5,2.5) -- (6,3.5) -- (6,4.5) -- (6,5.5) -- (5,6.5) -- (5,9) node[above]{$z_1\vphantom{z_j}$};
		\foreach \x in {-1,...,1} \draw (8.75+.2*\x,4.5) node{$\cdot\mathstrut$};		
	} 
	\!\! = z_j \, \partial_{z_j} + \sum_{i=1}^{j-1} \!\!
	\tikz[baseline={([yshift=-.5*11pt*0.4*.9]current bounding box.center)},	xscale=0.6*.9,yscale=0.4*.9,font=\footnotesize]{
		\draw[->] (9.5,0) node[below]{$z_N$} -- (9.5,9) node[above]{$z_N$};
		\draw[rounded corners=3.5pt,->] (8,0) node[below]{$\mspace{1mu}z_j$} -- (8,.5) -- (5,3.5) -- (5,5.5) -- (8,8.5) -- (8,9) node[above]{$z_j$};
		\draw[rounded corners=3.5pt,->] (7,0) node[below]{$\cdots{\vphantom{z_j}}$} -- (7,.5) -- (8,1.5) -- (8,7.5) -- (7,8.5) -- (7,9) node[above]{$\cdots{\vphantom{z_j}}$};
		\draw[rounded corners=3.5pt,->] (6,0) node[below]{$z_i$} -- (6,1.5) -- (7,2.5) -- (7,6.5) -- (6,7.5) -- (6,9) node[above]{$z_i{\vphantom{z_j}}$};
		\draw[rounded corners=3.5pt,->] (5,0) node[below]{$\cdots{\vphantom{z_j}}$} -- (5,2.5) -- (6,3.5) -- (6,5.5) -- (5,6.5) -- (5,9) node[above]{$\cdots{\vphantom{z_j}}$};
		\foreach \x in {-1,...,1} \draw (8.75+.2*\x,4.5) node{$\cdot\mathstrut$};
		\draw (6.5,2) [thick] ellipse (.25*.66 and .25);
	} 
	\!, \qquad 
	\tikz[baseline={([yshift=-.5*11pt*0.4*.9]current bounding box.center)},	xscale=0.6*.9,yscale=0.4*.9,font=\footnotesize]{
		\draw[rounded corners=3.5pt,->] (1,0) node[below]{$v$} -- (1,.5) -- (0,1.5) -- (0,2) node[above]{$v$};
		\draw[rounded corners=3.5pt,->] (0,0) node[below]{$u$} -- (0,.5) -- (1,1.5) -- (1,2) node[above]{$u$};
		\draw (.5,1) [thick] ellipse (.25*.66 and .25);
	}
	\! \coloneqq \frac{\partial}{\partial \mathsf{p}}\bigg|_{\mathsf{p}=1} \!\! \check{R}(u/\mathsf{p}\,v) \, .
\end{equation*}
Discard the derivative and observe that
\begin{equation*} \label{p:V_computation}
	\tikz[baseline={([yshift=-.5*11pt*0.4*.9]current bounding box.center)},	xscale=0.6*.9,yscale=0.4*.9,font=\footnotesize]{
		\draw[->] (9.5,0) node[below]{$z_N$} -- (9.5,9) node[above]{$z_N$};
		\draw[rounded corners=3.5pt,->] (8,0) node[below]{$\mspace{1mu}z_j$} -- (8,.5) -- (5,3.5) -- (5,5.5) -- (8,8.5) -- (8,9) node[above]{$z_j$};
		\draw[rounded corners=3.5pt,->] (7,0) node[below]{$\cdots{\vphantom{z_j}}$} -- (7,.5) -- (8,1.5) -- (8,7.5) -- (7,8.5) -- (7,9) node[above]{$\cdots{\vphantom{z_j}}$};
		\draw[rounded corners=3.5pt,->] (6,0) node[below]{$z_i$} -- (6,1.5) -- (7,2.5) -- (7,6.5) -- (6,7.5) -- (6,9) node[above]{$z_i{\vphantom{z_j}}$};
		\draw[rounded corners=3.5pt,->] (5,0) node[below]{$\cdots{\vphantom{z_j}}$} -- (5,2.5) -- (6,3.5) -- (6,5.5) -- (5,6.5) -- (5,9) node[above]{$\cdots{\vphantom{z_j}}$};
		\foreach \x in {-1,...,1} \draw (8.75+.2*\x,4.5) node{$\cdot\mathstrut$};
		\draw (6.5,2) [thick] ellipse (.25*.66 and .25);
	} 
	\! = \!
	\tikz[baseline={([yshift=-.5*11pt*0.4*.9]current bounding box.center)},	xscale=0.6*.9,yscale=0.4*.9,font=\footnotesize]{
		\draw[->] (9.5,0) node[below]{$z_N$} -- (9.5,6) node[above]{$z_N$};
		\draw[rounded corners=3.5pt,->] (8,0) node[below]{$\mspace{1mu}z_j$} -- (8,.5) -- (6,2.5) -- (6,3.5) -- (8,5.5) -- (8,6) node[above]{$z_j$};
		\draw[rounded corners=3.5pt,->] (7,0) node[below]{$\cdots{\vphantom{z_j}}$} -- (7,.5) -- (8,1.5) -- (8,4.5) -- (7,5.5) -- (7,6) node[above]{$\cdots{\vphantom{z_j}}$};
		\draw[rounded corners=3.5pt,->] (6,0) node[below]{$z_i$} -- (6,1.5) -- (7,2.5) -- (7,3.5) -- (6,4.5) -- (6,6) node[above]{$z_i{\vphantom{z_j}}$};
		\draw[rounded corners=3.5pt,->] (5,0) node[below]{$\cdots{\vphantom{z_j}}$} -- (5,6) node[above]{$\cdots{\vphantom{z_j}}$};
		\foreach \x in {-1,...,1} \draw (8.75+.2*\x,3) node{$\cdot\mathstrut$};
		\draw (6.5,2) [thick] ellipse (.25*.66 and .25);
	} 
	\! = \, \bigl(\mathsf{q}-\mathsf{q}^{-1}\bigr) 
	\, V(z_i,z_j) \times\!
	\tikz[baseline={([yshift=-.5*11pt*0.4*.9]current bounding box.center)},	xscale=0.6*.9,yscale=0.4*.9,font=\footnotesize]{
		\draw[->] (9.5,0) node[below]{$z_N$} -- (9.5,6) node[above]{$z_N$};
		\draw[rounded corners=3.5pt,->] (8,0) node[below]{$z_j$} -- (8,1.5) -- (7,2.5) -- (7,3.5) -- (8,4.5) -- (8,6) node[above]{$z_j$};
		\draw[rounded corners=3.5pt,->] (7,0) node[below]{$\cdots{\vphantom{z_j}}$} -- (7,1.5) -- (8,2.5) -- (8,3.5) -- (7,4.5) -- (7,6) node[above]{$\cdots{\vphantom{z_j}}$};
		\draw[rounded corners=3.5pt,->] (6,0) node[below]{$z_i$} -- (6,6) node[above]{$z_i{\vphantom{z_j}}$};
		\draw[rounded corners=3.5pt,->] (5,0) node[below]{$\cdots{\vphantom{z_j}}$} -- (5,6) node[above]{$\cdots{\vphantom{z_j}}$};
		\foreach \x in {-1,...,1} \draw (8.75+.2*\x,3) node{$\cdot\mathstrut$};
		\draw[style={decorate, decoration={zigzag,amplitude=.5mm,segment length=2mm}}] (6,3) -- (7,3);
	} .
\end{equation*}
We thus get $H^\textsc{l}$ from \eqref{eq:ham_left} at $r=1$, where we remove the prefactor $\mathsf{q}-\mathsf{q}^{-1}$ to ensure that the limit $\mathsf{q}\to 1$ is nontrivial. A detailed proof will be given in \textsection\ref{s:abelian_freezing}.
\end{proof}

Likewise, \eqref{eq:intro_spin-D_-1} yields $H_{N-1} = H^\textsc{r}$, as we will also show in \textsection\ref{s:abelian_freezing}.
\checkedMma
Explicit expressions for the higher spin-chain Hamiltonians are similarly computed. For instance, \eqref{eq:intro_spin-D_2} gives rise to
\begin{equation} \label{eq:H_2}
	\begin{aligned}
	\widetilde{H}_2 = \
	\vphantom{\sum^N} \smash{ 
	\frac{1}{\mathsf{q}-\mathsf{q}^{-1}} \, \sum_{j<j'}^N A_{j j'}(\vect{z}) \ \frac{\partial}{\partial \mathsf{p}}\bigg|_{\mathsf{p}=1} 
	} & \check{R}_{j-1,j}(z_j/z_{j-1}) \cdots \check{R}_{12}(z_j/z_1) \\
	& \times \check{R}_{j'-1,j'}(z_{j'}/z_{j'-1}) \cdots \check{R}_{j+1,j+2}(z_{j'}/z_{j+1}) \\
	& \quad \times \check{R}_{j,j+1}(z_{j'}/z_{j-1}) \cdots \check{R}_{23}(z_{j'}/z_1) \\
	& \quad \times \check{R}_{23}(z_1/\mathsf{p}\,z_{j'}) \cdots \check{R}_{j,j+1}(z_{j-1}/\mathsf{p}\,z_{j'}) \\
	& \times \check{R}_{j+1,j+2}(z_{j+1}/\mathsf{p}\,z_{j'}) \cdots \check{R}_{j'-1,j'}(z_{j'-1}/\mathsf{p}\,z_{j'}) \! \\
	\times \ & \check{R}_{12}(z_1/\mathsf{p}\,z_j) \cdots \check{R}_{j-1,j}(z_{j-1}/\mathsf{p}\,z_j) \, .
	\end{aligned}
\checkedMma
\end{equation}
Here the linearisation can be explicitly evaluated as for $r=1$. Notice that $H_N=0$.
\checkedMma
 
As for the second part of Theorem~\ref{thm:freezing_abelian} we conclude with some
\begin{remarks} \textbf{i.}~Note the symmetries $\varepsilon_{N-r}(\mu_m) = \varepsilon_r(\mu_m)|_{\mathsf{q}\mapsto\mathsf{q}^{-1}} = \varepsilon_r(N-\mu_m)$.
\checkedMma

\textbf{ii.}~The isotropic limit of these eigenvalues is conveniently computed from those of $H^\text{full}_r \coloneqq (H_r + H_{N-r})/2$, which are determined from the dispersion
\begin{equation} \label{eq:energy_r_full}
	\begin{aligned}
	\varepsilon^\text{full}_r(\mu_m) = {} & \frac{1}{2} \, \sum_{s=1}^r (-1)^{s-1} \, \begin{bmatrix} N \\ r - s \end{bmatrix} \, [s\,(N-\mu_m)] \, \frac{[s\,\mu_m]}{[s]} \\
	\to {} & \frac{1}{2} \, \sum_{s=1}^r (-1)^{s-1} \, \binom{N}{r - s} \, s \ (N-\mu_m) \, \mu_m = \binom{N-2}{r-1} \, \varepsilon^\textsc{hs}(\mu_m) \, , \qquad \mathsf{q} \to 1 \, .
	\end{aligned}
	\checkedMma
\end{equation}
As the very weak dependence on $r$ in the result signals, the higher spin-chain Hamiltonians all become dependent in the isotropic limit. It should be possible to extract the explicit expressions for the first few higher Hamiltonians of the ordinary Haldane--Shastry chain~\cite{Ino_90,HH+_92,TH_95} from the above by carefully taking the isotropic limit. 

\textbf{iii.}~Observe that the \textsf{q}-deformed spin-chain Hamiltonians are obtained by linearising at $\mathsf{p}= 1$ and give the ordinary Haldane--Shastry spin chain by setting $\mathsf{q}=1$. Instead, the quantum-affine symmetries~\eqref{eq:intro_L_spin_chain} are obtained from \eqref{eq:intro_L_tilde} for the spin-Ruijsenaars model by putting $\mathsf{p}=1$ (\textsection\ref{s:nonabelian_freezing}) but, as usual, have to be linearised in at $\mathsf{q}=1$ to get the (double) Yangian symmetry of the Haldane--Shastry model. Both specialisations involve linearising once.
\end{remarks}

Table~\ref{tb:abelian_symmetries} gives an overview of the abelian symmetries.

\begin{table}[h]
	\begin{tabular}{cccc|cc} \toprule
		\multicolumn{2}{c}{spin-Ruijsenaars model (\textsection\ref{s:spin_Ruijsenaars})} & 
		\begin{tabular}{c} classical \\ ($\mathsf{p}=1$) \end{tabular} & 
		$\!\!\!\!$\begin{tabular}{c} semiclass. \\ $\bigl(\frac{\partial}{\partial\mspace{1mu}\mathsf{p}}\big|_{\mathsf{p}=1}\bigr)$ \end{tabular} & 
		\multicolumn{2}{c}{ \begin{tabular}{c} \textsf{q}-deformed \\ Haldane--Shastry \end{tabular} (\textsection\ref{s:freezing})} \\ \midrule
		$\widetilde{D}_0 = D_0 =1$ & & $1$ & $\!\!\!0$ & $0$ & \\
		$\widetilde{D}_1$ & \eqref{eq:intro_spin-D_1}, \eqref{eq:spin-D_1}$\!\!\!$ & $[N]$ & $\!\!\!\!$\eqref{eq:H_tilde_expansion} & $H^\textsc{l}=H_1$ & \eqref{eq:ham_left}, \eqref{eq:ham_left_from_freezing} \\
		$\widetilde{D}_2$ & \eqref{eq:intro_spin-D_2}$\!\!\!$ & 
		& & $H_2$ & \eqref{eq:H_2} \\[0ex]
		$\smash{\vdots}$ & & & & $\smash{\vdots}$ & \\
		$\widetilde{D}_r$ & \eqref{eq:spin_D_r}, \eqref{eq:spin_D_symmetry}$\!\!\!$ & \eqref{eq:Macdonald_gen_fn_class_pt} & & $H_r$ & \eqref{eq:H_r}, \textsection\ref{s:abelian_freezing} \\
		$\smash{\vdots}$ & & & & $\smash{\vdots}$ & \\
		$\widetilde{D}_{N-1} = \widetilde{D}_N \, \widetilde{D}_{-1}$ & \eqref{eq:intro_spin-D_-1}, \eqref{eq:spin_D_-1}$\!\!\!$ & $[N]$ & $\!\!\!\!$\eqref{eq:ham_right_semiclassical} & $H^\textsc{r} = H_{N-1}$ & \eqref{eq:ham_right}, \eqref{eq:ham_right_from_freezing} \\
		$\widetilde{D}_N = D_N$ & \eqref{eq:intro_spin-D_N}, \eqref{eq:D_N}$\!\!\!$ & $1$ & $\!\!\!\!$\eqref{eq:total_degree_semiclassical} & $G$ & \eqref{eq:q-translation}, \textsection\ref{s:abelian_freezing} \\ 
		\bottomrule
		\hline
	\end{tabular}
	\caption{Summary of the abelian symmetries: the spin-Macdonald operators, their leading terms near $\mathsf{p}= 1$, and the derived symmetries of the spin chain. (Recall that in \textsection\ref{s:setup}--\ref{s:proofs} we write $q=\mathsf{p}$.)}
	\label{tb:abelian_symmetries}
\end{table}

\subsubsection{Explicit spin-chain eigenvectors from freezing} \label{s:intro_explicit_evrs_freezing} To find the eigenvectors of the spin chain we exploit the algebraic structure available prior to evaluation.\,%
\footnote{\ Note that we do \emph{not} derive the explicit spin-chain eigenvectors by freezing those of the spin-Ruijsenaars model. A reason is that the procedure of freezing is highly surjective; many vectors simplify significantly or are killed in the process. We briefly comment on the exact eigenvectors of the spin-Ruijsenaars model in \textsection\ref{s:explicit_evrs_tilde}.}
This is the topic of \textsection\ref{s:explicit_evrs}, where we derive the pseudo highest-weight eigenvectors that we presented in \textsection\ref{s:intro_explicit_evrs}. In a nutshell we proceed as follows.

As usual we work per $M$-particle sector. The Corollary of Proposition~\ref{prop:intro_physical_vectors} allows us to pass to the world of polynomials by focussing on the simple component~$\cbraket{1,\To,M}{\widetilde{\Psi}} = \widetilde{\Psi}(\vect{z})$, symmetric in $z_1,\To,z_M$ and in $z_{M+1},\To,z_N$. Evaluation helps selecting a suitable subspace of polynomials: it does not just tell us to restrict to degree at most $N-1$ in each variable, but allows us to consider polynomials that depend only on the first $M$ variables. In \textsection\ref{s:explicit_evrs} we will show that this is how we get from Proposition~\ref{prop:intro_physical_vectors} to Theorem~\ref{thm:phys_vector_spinchain}.

The simple component may thus be taken to be a symmetric polynomial in $z_1,\To,z_M$. Like in \cite{BG+_93} we determine it by passing through the \emph{non-symmetric} theory: the spin-chain Hamiltonians can be diagonalised along with the \textit{Y}$\mspace{-2mu}$-operators with `classical' parameters
\begin{equation*}
	\mathsf{p}^\circ = 1 \, , \qquad \mathsf{q}^\circ = \mathsf{q} \, ,
\end{equation*}
Indeed, before evaluation the spin-chain Hamiltonians~\eqref{eq:H_r} commute with $Y_i^\circ = Y_i \, |_{\mathsf{p}=1}$ from \eqref{eq:intro_Yi_circ}. We may therefore look for simultaneous eigenfunctions of these \emph{classical} \textit{Y}$\mspace{-2mu}$-operators. The restriction to polynomials in $z_1,\To,z_M$ suggests focussing on $Y_m^\circ$ for $1 \leq m \leq M$. These operators still depend on all $N$ variables; although they don't preserve the subspace of polynomials in $z_1,\To,z_M$ in general, they do so \emph{on shell}, i.e.\ upon evaluation:

\begin{theorem} \label{thm:intro_Ym_to_Y'm} 
The classical ($\mathsf{p}=1$) representation of the affine Hecke algebra on $\mathbb{C}[z_1,\To,z_N]$ contains a finite-dimensional subspace (though not submodule) that is \emph{on shell} preserved by $Y_1^\circ,\To,Y_M^\circ$. On this subspace the latter are \emph{on shell} conjugate to (a multiple of) \textit{Y}$\mspace{-2mu}$-operators that act on polynomials in only $M$ variables and have their parameters shifted to 
\begin{equation*}
	\mathsf{p}' = \mathsf{q}^{\mspace{1mu}\prime\,2} = \mathsf{q}^{-2} \, .
\end{equation*}
\end{theorem}
\noindent This is the main technical result of \textsection\ref{s:explicit_evrs}. For $Y_1^\circ$ a part of our derivation is closely related to a result of \cite{NS_17} and proof of \cite{Cha_19}, see our Lemma~\ref{lem:Y1_circ} and Proposition~\ref{prop:Ym_circ} plus ensuing discussion.

The parameters of the \textit{Y}$\mspace{-2mu}$-operators are shifted further when we pass from the joint eigenfunctions of the \textit{Y}$\mspace{-2mu}$-operators (nonsymmetric Macdonald polynomials) back to symmetric Macdonald polynomials. At the end of the day we obtain the wave functions~\eqref{eq:intro_wavefn} from Theorem~\ref{thm:nice_polynomial} involving Macdonald polynomials with parameters at the quantum zonal spherical point
\begin{equation*}
	\mathsf{p}^\star = \mathsf{q}^\star = \mathsf{q}^2 \, .
\end{equation*} 

The precise steps are summarised at the end of \textsection\ref{s:explicit_evrs}. These results suggest that it should be possible to relate the (polynomial) action of the spin chain on the $M$-particle sector to that of the quantum zonal spherical case of Macdonald operators, $D^\star_r$. We have not yet managed to find such a relation, which would also allow for a direct way of computing the energy eigenvalues from \textsection\ref{s:intro_abelian}. At the moment our derivation is computational; it would be desirable to understand it from a more structural (representation-theoretic, or perhaps geometric) point of view.

We finally prove that the condition $\ell(\mu) = M$ in Theorem~\ref{thm:nice_polynomial} from \textsection\ref{s:intro_explicit_evrs} is a pseudo highest-weight condition in the sense of \textsection\ref{s:intro_nonabelian_2}, and compute the Drinfeld polynomial~\eqref{eq:Drinfeld_polyn} from \textsection\ref{s:intro_nonabelian_1} with the help of the trick described at the end of \textsection\ref{s:intro_phys_space}.

\subsection{Outline} The main text is organised as follows. In \textsection\ref{s:setup} we review the algebraic preliminaries. The \emph{polynomial} representation of the affine Hecke algebra and its relation to Macdonald polynomials and the Ruijsenaars model are discussed in \textsection\ref{s:polynomial} in a way that will readily extend to the spin-Ruijsenaars setting. The \emph{spin} representation of the (finite) Hecke algebra, the quantum groups $\mathfrak{U}$ and $\widehat{\mathfrak{U}}$, and their relation to Heisenberg-type spin chains is the topic of \textsection\ref{s:spin}. 

The core of this work is \textsection\ref{s:proofs}, where we prove the results described above. Following \cite{BG+_93,TH_95,Ugl_95u} we derive the Hamiltonian of the \textsf{q}-deformed Haldane--Shastry spin chain in pairwise form~\cite{Lam_18} from the trigonometric spin-Ruijsenaars model (\textsection\ref{s:spin_Ruijsenaars}) by freezing (\textsection\ref{s:freezing}). In \textsection\ref{s:explicit_evrs} we construct the exact spin-chain eigenvectors and prove their on-shell pseudo highest-weight property.

There are three appendices. \textsection\ref{s:app_glossary} contains a glossary of our notation. In \textsection\ref{s:app_nonrlt_limit} we evaluate the istropic/nonrelativistic limit to facilitate comparison with the literature on the Haldane--Shastry model. Finally, in \textsection\ref{s:app_stochastic_twist} we discuss the stochastic version of the \textsf{q}-deformed Haldane--Shastry model in \textsection\ref{s:app_presentations} we derive the Chevalley generators of the Drinfeld--Jimbo presentation of $\widehat{\mathfrak{U}}$ from the monodromy matrix of the Faddeev--Reshetikhin--Takhtajan presentation.

\subsection{Acknowledgements} JL acknowledges the Knut and Alice Wallenberg Foundation (\textsc{kaw}) and the Australian Research Council Centre of Excellence for Mathematical and Statistical Frontiers (\textsc{acems}) for subsequent support during the course of this work. 

We thank the program \textit{Correlation functions in solvable models} (\textsc{Nordita}, 2018) and the meeting \textit{New trends in integrable systems} (Osaka City University, 2019), where part of this work was done. JL thanks the IPhT Saclay for hospitality at several stages of this work, and VP and DS the University of Melbourne for hospitality.

We thank P.~Di~Francesco, P.~Fendley, J.~de Gier, T.~Görbe, M.~Hallnäs, J.~Jacobsen, R.~Kedem, I.~Kostov, E.~Langmann, P.~McNamara, A.~Ram, D.~Volin, S.~O.~Warnaar, R.~Weston and M.~Wheeler for useful discussions. JL is especially grateful to O.~Chalykh, J.~Stokman and P.~Zinn-Justin for numerous discussions. We furthermore thank the anonymous referees for their careful reading of the manuscript and their detailed feedback, which enabled us to significantly improve the structure and exposition of this work. Finally we thank R.~Klabbers for proof reading \textsection\ref{s:intro}.

\section{Algebraic setup} \label{s:setup}
\noindent
In this section we recall various notions from the \textsf{q}-world to fix our notation and conventions and pave the way for the algebraic framework that we will use in \textsection\ref{s:proofs}. One might wish to skip this section; we will refer to the relevant parts when we need them in \textsection\ref{s:proofs}.

From now on we follow~\cite{Mac_95,Mac_98} and work with parameters $t^{1/2} = \mathsf{q}$ and $q =\mathsf{p}$. (The latter is denoted by $\rho$ in \cite{BG+_93} and $p$ in \cite{JK+_95a,JK+_95b,Ugl_95u}.) We will keep using the terminology `\textsf{q}-deformed'. One can either fix $t^{1/4} \in \mathbb{C} \setminus \{-1,0,1\}$\,---\,with exponent~$1/4$ in view of e.g.~\eqref{eq:T_sp_eigenspaces}\,---\,or work over the ring $\mathbb{C}(\!(t^{1/4})\!)$ of formal Laurent polynomials in $t^{1/4}$; to keep the notation light we use the former point of view. We work with the symmetric definition of the \textsf{q}-analogues of integers, factorials and binomial coefficients (Gaussian polynomials),
\begin{equation} \label{eq:q-numbers_etc}
	[n] \coloneqq [n]_{t^{1/2}} = \frac{t^{n/2} - t^{-n/2}}{t^{1/2} - t^{-1/2}} \, , \qquad [n]! \coloneqq [n] \, [n-1] \cdots [2] \, , \qquad \begin{bmatrix} n \\ k \end{bmatrix} \coloneqq \frac{[n]!}{[k]!\,[n-k]!} \, .
\end{equation}
We'll often factor out fractional powers of $t$ but all normalisations remain as in \textsection\ref{s:intro}.

\subsection{Polynomial side} \label{s:polynomial} Consider the algebra $\mathbb{C}[\vect{z}] \coloneqq \mathbb{C}[z_1,\To,z_N] \cong \mathbb{C}[z]^{\otimes N}$ of polynomials in $N$ variables. This space naturally is a module of the symmetric group~$\mathfrak{S}_N$ by permuting variables, generated by simple transpositions~$s_1,\To,s_{N-1}$ acting as $s_i \, z_i = z_{i+1} \, s_i$, so $w\in\mathfrak{S}_N$ acts by $(w \, F)(\vect{z}) = F(\vect{z}_w)$ where $(\vect{z}_w)_i \coloneqq z_{w\,i}$. As the notation suggests the latter is a right action on $\vect{z}$, yielding a left action on $F \in \mathbb{C}[\vect{z}]$. We use the cycle notation for permutations. We write $\mathbb{C}[\vect{z}]^{\mathfrak{S}_N}$ for the ring of symmetric polynomials in $N$ variables.

\subsubsection{Hecke algebras} \label{s:Hecke_AHA} 
The following \textsf{q}-deformation of (the group algebra $\mathbb{C}[\mathfrak{S}_N]$ of) the symmetric group plays a central role in this work.
\begin{definition}
The \emph{(Iwahori--)Hecke algebra} $\mathfrak{H}_N \coloneqq \mathfrak{H}_N\bigl(t^{1/2}\bigr)$ of type $A_{N-1}$ is the unital associative algebra with generators $T_1,\To,T_{N-1}$ obeying
\begin{equation} \label{eq:Hecke}
	\begin{gathered}
	\text{braid relations:} \qquad \hfill T_i \, T_{i+1} \, T_i = T_{i+1} \, T_i \, T_{i+1} \, , \qquad T_i \, T_j = T_j \, T_i \quad \text{if } |i-j|>1 \, , \hfill \\
	\text{Hecke condition:} \qquad \hfill \bigl(T_i-t^{1/2}\bigr)\bigl(T_i+t^{-1/2}\bigr) = 0 \, . \hfill
	\end{gathered}
\checkedMma
\end{equation}
\end{definition}

The Hecke condition means that $T_i$ is invertible, with $t^{1/2}-t^{-1/2}$ measuring the extent by which $T_i$ fails to be an involution:
\begin{equation} \label{eq:Hecke_inverse}
	T_i^{-1} = T_i -  (t^{1/2}-t^{-1/2}) \, .
\checkedMma
\end{equation}
The Hecke algebra has dimension $\dim \mathfrak{H}_N =N!$ for generic $t^{1/2} \in\mathbb{C}^\times$, with a basis $\{T_w\}_{w\in \mathfrak{S}_N}$ indexed by the symmetric group, $T_w = T_{i_1} \cdots T_{i_r}$ for any reduced decomposition $w = s_{i_1} \cdots s_{i_r}$;  e.g.\ $T_e = 1$, $T_{s_i} = T_i$ and \eqref{eq:Hecke_graphical}.

The Hecke condition fixes the possible eigenvalues of any representation of $T_i$ to $t^{1/2}$ and $-t^{-1/2}$. Although \eqref{eq:Hecke} is invariant under replacing $t^{1/2} \rightsquigarrow {-t^{-1/2}}$, this symmetry might be broken when picking a representation, cf.\ the dimensions in \eqref{eq:T_sp_eigenspaces} (Appendix~\ref{s:app_stochastic_twist}). We will work with representations where eigenvectors with eigenvalue $t^{1/2}$ ($-t^{-1/2}$) become (anti)symmetric at $t = 1$, see \eqref{eq:T_pol_eigenspaces} and \eqref{eq:T_sp_eigenspaces}. 

The Hecke algebra has two well-known representations: one on polynomials, and one on spins (\textsection\ref{s:Hecke_TL_Uqsl}). On $\mathbb{C}[\vect{z}]$ the action of $\mathfrak{S}_N$ is deformed to \eqref{eq:Hecke_graphical}, i.e.\ to the Demazure--Lusztig operator
\begin{equation} \label{eq:Hecke_pol_2a}
	T_i^\text{pol} \coloneqq {-}t^{-1/2} \, (t \, z_i- z_{i+1}) \, \partial_i + t^{1/2} \, ,
\checkedMma
\end{equation}
where the (Newton) divided difference is defined as
\begin{equation} \label{eq:div_diff}
	\partial_i \coloneqq (z_i-z_{i+1})^{-1} \, (1-s_i) \, .
\checkedMma
\end{equation}
Since $1-s_i$ antisymmetrises, $\partial_i$ preserves polynomials despite its denominator, so \eqref{eq:Hecke_pol_2a} does indeed act on $\mathbb{C}[\vect{z}]$. The divided differences obey the braid relations and $\partial_i^2 = 0$, yielding a representation of the \emph{nil-Hecke algebra}. In terms of the rational functions
\begin{equation} \label{eq:a,b}
	\begin{aligned}
	a(u) & \coloneqq t^{-1/2} \, \frac{t\,u-1}{u-1} \, , \qquad\quad & a_{ij} & \coloneqq a(z_i/z_j) \, , \\
	b(u) & \coloneqq -t^{-1/2} \, \frac{t-1}{u-1} \, , \qquad\quad & b_{ij} & \coloneqq b(z_i/z_j) \, ,
	\end{aligned}
\checkedMma
\end{equation}
we have
\begin{equation} \label{eq:Hecke_pol_2b}
	T^\text{pol}_i = a_{i,i+1} \, s_i + b_{i,i+1} \, , \qquad T^{\text{pol}\,-1}_i = a_{i,i+1} \, s_i - b_{i+1,i} \, .
\checkedMma
\end{equation}

For $N=2$ the decomposition into $t^{1/2}$- and $-t^{-1/2}$-eigenspaces of \eqref{eq:Hecke_pol_2a} is
\begin{equation} \label{eq:T_pol_eigenspaces}
	\mathbb{C}[z_1,z_2] \cong \, \mathbb{C}[z_1,z_2\,]^{\mathfrak{S}_2} \, \oplus \, (t\,z_1- z_2) \, \mathbb{C}[z_1,z_2\,]^{\mathfrak{S}_2} \, ,
\checkedMma
\end{equation}
In general the $\mathfrak{H}_N$-irreps in $\mathbb{C}[\vect{z}]$ are classified by Young diagrams, just as for $\mathfrak{S}_N$. We will be interested in the totally \textsf{q}-(anti)symmetric cases. Denote the \textsf{q}-$\mspace{-1mu}$Vandermonde polynomial by
\begin{equation} \label{eq:q-Vand}
	\Delta_t(z_1,\To,z_N) \coloneqq t^{-N\,(N-1)/4} \, \prod_{i<j}^N (t\,z_i-\,z_j) \, ,
\checkedMma
\end{equation}
and write $\ell(w)$ for the length of $w \in \mathfrak{S}_N$. The total \textsf{q}-(anti)symmetrisers are~\cite{Jim_86a}
\begin{equation} \label{eq:Hecke_projectors}
	\Pi_\pm \coloneqq \frac{t^{\mp N\,(N-1)/4}}{[N]!} \! \sum_{w \in \mathfrak{S}_N} \!\! (\pm t^{\pm 1/2})^{\ell(w)} \, T_w \, , 
\checkedMma
\end{equation}
The exponent in the prefactor is $N(N-1)/2 = \ell(w_0)$, with $w_0 \coloneqq (1\cdots N)\cdots (123)(12)$ the longest permutation in $\mathfrak{S}_N$, reversing the order of the coordinates ($z_i \leftrightarrow z_{N-i+1}$ for all $i$). In the polynomial case an efficient implementation uses the associated divided difference $\partial_{w_0} = (\partial_1 \cdots \partial_{N-1}) \cdots (\partial_1 \partial_2) \, \partial_1$, see Theorem~3.1 in \cite{DKL+_95}:
\begin{equation} \label{eq:Hecke_projectors_pol}
	\begin{aligned}
	\Pi_+^\text{pol} & = \frac{1}{[N]!} \, \partial_{w_0} \bigl( \Delta_{1/t}(\vect{z}) \, \cdot \, \bigr) \, , 
	\qquad \qquad
	& \Pi_+^\text{pol} \, \mathbb{C}[\vect{z}] & = \mathbb{C}[\vect{z}]^{\mathfrak{S}_N} \, , \\ 
	\Pi_-^\text{pol} & = \frac{1}{[N]!} \, \Delta_t(\vect{z}) \, \partial_{w_0} \, ,
	& \Pi_-^\text{pol} \, \mathbb{C}[\vect{z}] & = \Delta_t(\vect{z})\,\mathbb{C}[\vect{z}]^{\mathfrak{S}_N} \, .
	\end{aligned}
\checkedMma
\end{equation}
Note that $\mathfrak{H}_N$-symmetric polynomials are $\mathfrak{S}_N$-symmetric, yet $\mathfrak{H}_N$-skew (totally antisymmetric) polynomials are not $\mathfrak{S}_N$-skew.

\begin{definition}
The (extended) \emph{affine Hecke algebra}, or \textsc{aha}, $\widehat{\mathfrak{H}}_N \coloneqq \widehat{\mathfrak{H}}_N\bigl(t^{1/2}\bigr)$ of type $\widehat{A}_{N-1}$ (or more precisely $\mathfrak{gl}_N$) \cite{Lus_83,Lus_89} is a unital associative algebra that extends the (`finite') Hecke algebra~$\mathfrak{H}_N$ by $\mathbb{C}[\vect{Y}] = \mathbb{C}[Y_1,\To,Y_N]$\,---\,this notation means that the additional (Jucys--Murphy) generators $Y_i$ commute\,---\,with cross relations
\begin{equation} \label{eq:AHA}
	T_i^{-1} \, Y_i \, T_i^{-1} = Y_{i+1} \, , \qquad\quad T_i \, Y_j = Y_j \, T_i \quad \text{if } j\neq i,i+1 \, .
\checkedMma
\end{equation}
\end{definition}
\noindent Observe that this `chiral' setting may be extended to the `full' \textsc{aha} by including the inverses of $Y_i$. These will play a role in \textsection\ref{s:Macdonald} and \ref{s:abelian_tilde}.

The \emph{basic} representation of $\widehat{\mathfrak{H}}_N$ is an extension,  depending on a parameter~$q$, of the polynomial representation~\eqref{eq:Hecke_pol_2a} of $\mathfrak{H}_N$. To keep the notation light we'll think of $q\in \mathbb{C}^\times$ as fixed. Since we will only work with the polynomial representation of the \textsc{aha} we omit the superscript `pol' for the following operators. Define the \textit{q}-dilatation, or (multiplicative) difference, operator~$\hat{q}_i$ on $\mathbb{C}[\vect{z}]$ by 
\begin{subequations} \label{eq:q_hat}
\begin{gather}
	(\hat{q}_i \, F)(\vect{z}) \coloneqq F(z_1,\To,z_{i-1},q\,z_i,z_{i+1},\To,z_N) \, . 
\checkedMma
\shortintertext{It formally shifts the position of the $i$th coordinate, and can be expressed as} 
	\hat{q}_i = \sum_{n \geq 0} \frac{1}{n!} \, (q-1)^n \, z_i^n \, \partial_{z_i}^n = q^{z_i \, \partial_{z_i}} \, , 
\end{gather}
\end{subequations}
Here $z_i \, \partial_{z_i}$ counts the degree in $z_i$, and is the $i$th (continuum) momentum operator~$-\I \, \partial_{x_i}$ ($\hbar \equiv 1$) in multiplicative notation, cf.~\textsection\ref{s:app_CalSut_limit}. (The partial derivatives $\partial_{z_i}$, etc., should not be confused with divided differences $\partial_i$.)

There are two ways to express the affine generators, found independently in~\cite{Che_92b,BG+_93}. One features the twisted cyclic shift operator $\pi$ acting on $\mathbb{C}[\vect{z}]$ by
\begin{equation} \label{eq:pi}
	(\pi \, F)(\vect{z}) \coloneqq F(q\,z_N,z_1,\To,z_{N-1}) \, ,
\checkedMma
\end{equation}
In this notation the \textsf{q}-deformed (difference) Dunkl operators are~\cite[\textsection{A}]{Che_92b}
\begin{equation} \label{eq:Yi}
	Y_i \coloneqq T_i^\text{pol} \cdots T_{N-1}^\text{pol} \, \pi \, T_1^{\text{pol}\,-1} \cdots T_{i-1}^{\text{pol}\,-1} \, .
\checkedMma
\end{equation}
In view of \eqref{eq:AHA} the first affine generator $Y_1 = T_1^\text{pol} \cdots T_{N-1}^\text{pol} \pi$ determines $Y_2,\cdots,Y_N$. 

In terms of the braid diagrams \eqref{eq:Hecke_graphical} supplemented with the graphical notation~\eqref{eq:p_hat_graphical} the expression~\eqref{eq:Yi} may be depicted as in the first diagram in
\begin{equation*}
	Y_i = \ \,
	\tikz[baseline={([yshift=-.5*11pt*0.4*.9+6pt+4pt]current bounding box.center)},xscale=0.5*.9,yscale=0.4*.9,font=\scriptsize, cross line/.style={-,preaction={draw=white,-,line width=6pt}}]{
	\draw[rounded corners=3.5pt,->] (0,0) node[below]{$1$} -- (0,2.5) -- (1,3.5) -- (1,4) -- (0,5) -- (0,8);
		\draw[rounded corners=3.5pt,->] (1,0) -- (1,1.5)-- (2,2.5) -- (2,4) -- (1,5) -- (1,8);
		\draw[rounded corners=3.5pt,->] (2,0) -- (2,.5) -- (3,1.5) -- (3,4) -- (2,5) -- (2,8);
		\draw[cross line,rounded corners=3.5pt,very thick] (3,0) node[below]{$i$} -- (3,.5) -- (0,3.5) -- (-1,4.5) -- (-1.5,4.5) node{$\tikz[baseline={([yshift=-.5*12pt*.35]current bounding box.center)},scale=.35]{\fill[black] (0,0) circle (.25)}$};
		\draw (-1.5,4.5) node[below,yshift=-2pt]{$q$} -- (-2,4.5) arc(-90:-270:.2 and .1);
		\draw (6.5,4.5) arc(-90:90:.2 and .1);
		\draw[rounded corners=3.5pt,->] (6.5,4.5) -- (6,4.5) -- (3,7.5) -- (3,8);
		\draw[cross line,rounded corners=3.5pt,->] (5,0) node[below]{$N$} -- (5,4) -- (4,5) -- (4,5.5) -- (5,6.5) -- (5,8);
		\draw[cross line,rounded corners=3.5pt,->] (4,0) -- (4,4) -- (3,5) -- (3,6.5) -- (4,7.5) -- (4,8);
	} \ \ = \ \
	\tikz[baseline={([yshift=-.5*11pt*0.4*.9+6pt+4pt]current bounding box.center)},xscale=0.5*.9,yscale=0.4*.9,font=\scriptsize, cross line/.style={-,preaction={draw=white,-,line width=6pt}}]{
		\draw[->] (0,0) node[below]{$1$} -- (0,8);
		\draw[->] (1,0) -- (1,8);
		\draw[->] (2,0) -- (2,8);
		\draw[cross line,rounded corners=7pt,very thick] (3,0) node[below]{$i$} |- (-1,4.5) node{$\tikz[baseline={([yshift=-.5*12pt*.35]current bounding box.center)},scale=.35]{\fill[black] (0,0) circle (.25)}$};
		\draw (-1,4.5) node[below,yshift=-2pt]{$q$} -- (-1.5,4.5) arc(-90:-270:.2 and .1);
		\draw (6,4.5) arc(-90:90:.2 and .1);
		\draw[rounded corners=7pt,->] (6,4.5) -| (3,8);
		\draw[cross line,->] (5,0) node[below]{$N$} -- (5,8);
		\draw[cross line,->] (4,0) -- (4,8);
	} \ \, .
\end{equation*}
This is compatible with the relations satisfied by the $Y\mspace{-1mu}$s. The second diagram corresponds to the following way of rewriting these difference operators, cf.~\eqref{eq:intro_Yi_circ}.

For calculations it's convenient (\textsection\ref{s:Macdonald}) to use a manifestly triangular form of $Y_i$. It is obtained from \eqref{eq:Yi} by distributing the simple transpositions in $\pi= s_{N-1} \cdots s_1 \, \hat{q}_1$
\checkedMma
over the Hecke generators. Write $s_{ij}$ for the transposition $z_i \leftrightarrow z_j$ (so $s_{i,i+1} = s_i$) and set
\begin{gather} \label{eq:x_ij}
	x_{ij} \coloneqq \begin{cases}
	\displaystyle \hphantom{-}t^{-1/2}\,\frac{(t-1) \, z_j}{z_i - z_j} \, (1-s_{ij}) + t^{1/2\hphantom{-}} = a_{ij} + b_{ij} \, s_{ij} \, , & i<j \, , \\ 
	\displaystyle -t^{-1/2}\,\frac{(t^{\vphantom{1}} -1) \, z_i}{z_j - z_i} \, (1-s_{ji}) + t^{-1/2} = a_{ij} - b_{ji} \, s_{ji} \, , \qquad & i>j \, .
	\end{cases}
\checkedMma
\end{gather}
These are defined so that $x_{i,i+1} = T_i^\text{pol} \mspace{1mu} s_i$,
\checkedMma
cf.~\eqref{eq:Hecke_pol_2b}, and $x_{ij} \, x_{ji} = 1$.
\checkedMma 
These operators obey the Yang--Baxter equation $x_{ij} \, x_{ik} \, x_{jk} = x_{jk} \, x_{ik} \, x_{ij}$, while $x_{ij}$ and $x_{kl}$ commute if $\{i,j\} \cap \{k,l\} = \varnothing$. In terms of this notation~\cite[\textsection4]{BG+_93}, cf.~\cite{Pas_96},
\begin{equation} \label{eq:Yi_via_x}
	Y_i = x_{i,i+1} \, x_{i,i+2} \cdots x_{iN} \, \hat{q}_i \, x_{i1} \cdots x_{i,i-2} \, x_{i,i-1} \, .
\checkedMma
\end{equation}

As an aside note that multiplication by $z_i^{-1}$ also obeys the relations~\eqref{eq:AHA}, though it does not preserve the space of polynomials. One can avoid the passage to Laurent polynomials by considering operators $Z_i$ that act on $\mathbb{C}[\vect{z}]$ by multiplying by $z_i$, at the price that the relations \eqref{eq:AHA} are inverted to
\begin{equation} \label{eq:AHA'}
	T_i \, Z_i \, T_i = Z_{i+1} \, , \qquad\quad T_i \, Z_j = Z_j \, T_i \quad \text{if } j\neq i,i+1 \, .
\checkedMma
\end{equation}
The $Z_i$ can be combined with \eqref{eq:Yi} into a polynomial representation of the \emph{double} affine Hecke algebra (\textsc{daha})~\cite{Che_92a}, \cite[\textsection1.4.3]{Che_05}. This unital associative algebra extends the \textsc{aha} by $\mathbb{C}[\vect{Z}]$, where the (mutually commuting) affine generators~$Z_i$ obey the cross relations \eqref{eq:AHA'} along with~\cite{Che_92a}
\begin{equation*} 
	\begin{gathered}
	Y_i \, Z_1 \cdots Z_N = q \, Z_1 \cdots Z_N \, Y_i \, , \qquad\quad Z_i \, Y_1 \cdots Y_N = q^{-1} \, Y_1 \cdots Y_N \, Z_i \, , \\
	Y_2^{-1} \, Z_1 \, Y_2 \, Z_1^{-1} = T_1^2 \, .
	\end{gathered}
\checkedMma
\end{equation*}
In particular~$q$ is a parameter of the \textsc{daha} itself, just as $t$ already is for the Hecke algebra, whereas for the \textsc{aha} the parameter~$q$ is associated to the representation~\eqref{eq:Yi}. The \textsc{daha} has a graphical representation in terms of ribbon diagrams~\cite{BW+_13}.

\subsubsection{Macdonald theory} \label{s:Macdonald}
Bernstein \cite{Lus_83,Lus_89} noticed that the centre of the \textsc{aha} consists of symmetric polynomials in the $Y_i$:
\begin{equation} \label{eq:centre} 
	Z\bigl(\widehat{\mathfrak{H}}_N\bigr) = \mathbb{C}[\vect{Y}]^{\mathfrak{S}_N} \, .
\end{equation}
This is also known as the \emph{spherical \textsc{aha}}. As generators of \eqref{eq:centre} we choose elementary symmetric polynomials in the $Y_i$, which are packaged together in the generating function 
\begin{subequations} \label{eq:Delta}
\begin{gather}
	\Delta(u) \coloneqq \prod_{i=1}^N (1+ u\,Y_i) = \sum_{r=0}^N u^r \, e_r(\vect{Y}) \, , \qquad e_r(\vect{Y}) = \!\!\! \sum_{i_1 < \cdots < i_r}^N \!\!\!\!\! Y_{i_1} \cdots Y_{i_r} \, .
\checkedMma
\shortintertext{(The notation $\Delta(u)$ should not be confused with the \textsf{q}-$\mspace{-1mu}$Vandermonde~\eqref{eq:q-Vand}.) So this operator commutes with all generators of the \textsc{aha}, and of course}
	\bigl[ \Delta(u) , \Delta(v) \bigr] = 0 \, .
\end{gather}
\end{subequations}
From the viewpoint of integrability the latter says that $\Delta(u)$ is a good candidate for a generating function of commuting charges for an integrable model: see  \textsection\ref{s:Ruijsenaars}.

Consider the subspace of (completely) symmetric polynomials,
\begin{equation} \label{eq:phys_space_scalar}
	\mathbb{C}[\vect{z}]^{\mathfrak{S}_N} = \bigcap_{i=1}^{N-1} \! \ker(s_i - 1) = \bigcap_{i=1}^{N-1} \! \ker\bigl(T_i^\text{pol} - t^{1/2} \bigr) \, ,
\end{equation}
where the second equality uses~\eqref{eq:Hecke_pol_2a}, cf.~\eqref{eq:T_pol_eigenspaces}. The description in terms of Hecke generators makes clear that the generating function~\eqref{eq:Delta} preserves \eqref{eq:phys_space_scalar}. 

\begin{proposition}[\cite{Che_92a}] \label{prop:D_r}
Write
\begin{equation} \label{eq:e_to_D}
	D_r = e_r(\vect{Y}) \qquad \text{on} \quad \mathbb{C}[\vect{z}]^{\mathfrak{S}_N} \ , \qquad 0 \leq r \leq N \, .
\checkedMma
\end{equation}
Then $D_r$ are Macdonald operators~\cite{Mac_95,Mac_98,Mac_03},
\begin{equation} \label{eq:D_r}
	D_r = \!\!\! \sum_{J \colon \# J = r} \!\!\!\!\! A_J(\vect{z}) \, \hat{q}^{\phantom{x}}_J \, , \qquad 
	A_J(\vect{z}) \coloneqq \prod_{j \in J \niton \bar{\jmath}} \!\!\! a_{j\bar{\jmath}} \, , \quad 
	\hat{q}^{\phantom{x}}_J \coloneqq \prod_{j\in J} \! \hat{q}_j \, ,
\checkedMma
\end{equation}
where the sum ranges over all $r$-element subsets $J \subseteq \{ 1,\To,N\}$. 
\end{proposition}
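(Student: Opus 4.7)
\smallskip

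\noindent\textit{Proof plan.} First I would verify that the restriction in \eqref{eq:e_to_D} makes sense. By Bernstein's description \eqref{eq:centre}, $e_r(\vect{Y}) \in \mathbb{C}[\vect{Y}]^{\mathfrak{S}_N} = Z(\widehat{\mathfrak{H}}_N)$, so it commutes with every $T_i^\text{pol}$. By \eqref{eq:phys_space_scalar} the symmetric subspace $\mathbb{C}[\vect{z}]^{\mathfrak{S}_N} = \bigcap_i \ker(T_i^\text{pol} - t^{1/2})$ is characterised as a joint eigenspace of the $T_i^\text{pol}$, and it is therefore preserved by $e_r(\vect{Y})$. My aim is then to compute this restriction on an arbitrary symmetric $f$ and match the result to the Macdonald form \eqref{eq:D_r}.

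For the computation itself I would use the manifestly triangular form \eqref{eq:Yi_via_x}, $Y_i = x_{i,i+1} \cdots x_{iN} \, \hat{q}_i \, x_{i1} \cdots x_{i,i-1}$, and expand $e_r(\vect{Y}) = \sum_{|J|=r} \prod_{i\in J} Y_i$. For each $J$, I would commute all $\hat{q}_j$'s with $j \in J$ to the far right through the $x_{ij}$'s (this merely renormalises their rational arguments and produces the overall shift $\hat{q}_J$). Substituting $x_{ij} = a_{ij} \pm b_{ij} s_{ij}$ from \eqref{eq:x_ij} and acting on the symmetric $f$, every surviving $s_{ij}$-tail collapses because $s_{ij} f = f$. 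The Yang--Baxter relations and the identity $x_{ij} \, x_{ji} = 1$ then let one check that all rational coefficients $a_{jk}$ with $\{j,k\} \subseteq J$ or $\{j,k\} \subseteq J^c$ cancel in pairs, while for every "crossing pair" $j \in J$, $\bar{\jmath} \notin J$ exactly one factor $a_{j\bar{\jmath}}$ survives. This assembles into $A_J = \prod_{j\in J \niton \bar{\jmath}} a_{j\bar{\jmath}}$ and yields \eqref{eq:D_r}.

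The main obstacle is the combinatorial bookkeeping of these cancellations: one has to keep track of how many times each $a_{jk}$ appears after pushing the $\hat{q}_j$'s to the right, and show that the non-diagonal contributions cancel globally after symmetrisation. A conceptually cleaner route that bypasses this bookkeeping is representation-theoretic: both $e_r(\vect{Y})|_{\mathbb{C}[\vect{z}]^{\mathfrak{S}_N}}$ and $D_r$ are mutually commuting $q$-difference operators that are upper triangular with respect to the dominance order \eqref{eq:dominance} on the monomial basis $\{m_\lambda\}$, with identical diagonal entries (easily read off from the leading term of each $Y_i$ acting on $\vect{z}^\lambda$, and from \eqref{eq:D_r} on the other side). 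Standard Macdonald theory (\textsection{VI} of \cite{Mac_95}) then guarantees that a commuting family of triangular operators with simple, matching spectrum is uniquely determined, so the two sides coincide; equivalently, both are diagonalised by the Macdonald basis $\{P_\lambda\}$ with the same eigenvalues. Either route suffices to establish \eqref{eq:e_to_D}.
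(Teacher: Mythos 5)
Your second route is sound but genuinely different from the paper's argument. The paper never invokes triangularity with respect to dominance or the Macdonald basis; it computes the restriction of $e_r(\vect{Y})$ to $\mathbb{C}[\vect{z}]^{\mathfrak{S}_N}$ directly in the normal form $\sum_J A_J\,\hat{q}_J$. Your eigenbasis argument does establish \eqref{eq:e_to_D}, with two caveats. First, within the paper's own logic \eqref{eq:Macd_eigenvalues} is a \emph{consequence} of Proposition~\ref{prop:D_r}, so you must make explicit that you are importing the external characterisation (orthogonality plus triangularity, \eqref{eq:Macd_orthog}, and Macdonald's self-adjointness of the operators \eqref{eq:D_r}) rather than the paper's; and note that ``commuting triangular operators with matching diagonal'' are \emph{not} automatically equal --- it is the self-adjointness that forces both sides to be diagonalised by the same basis $\{P_\lambda\}$, which is the correct form of your ``equivalently'' clause. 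Second, the direct computation is not dispensable here: the paper re-runs exactly this proof in \textsection\ref{s:spin-Macdonald} to derive the spin-Macdonald operators \eqref{eq:spin_D_r}, where no eigenbasis is available in advance, so the eigenfunction argument would not transfer.

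Your first route has a genuine gap, and it is the one you flag yourself. The asserted cancellation pattern --- the $a_{jk}$ with $\{j,k\}\subseteq J$ or $\{j,k\}\subseteq J^{c}$ ``cancelling in pairs'' after symmetrisation --- is not substantiated and is not how the computation actually organises itself: expanding every $x_{ij}=a_{ij}+b_{ij}s_{ij}$ produces cross terms in which the $s_{ij}$ must first be commuted past the $\hat{q}$'s (changing which variables are shifted) before symmetry of the argument can be used, and the surviving coefficients are not pairwise cancellations of $a$'s. The paper sidesteps this bookkeeping entirely with a two-step device that is missing from your plan. One first observes that the result must have normal form $\sum_J A_J\hat{q}_J$ and computes \emph{only} the coefficient $A_{1\cdots r}$, for which no cancellations are needed: in $Y_r\cdots Y_1$ written via \eqref{eq:Yi_via_x} the factors $x_{ij}$ with both indices $\leq r$ cancel using $x_{ij}x_{ji}=1$, the $\hat{q}$'s commute through what remains, and the coefficient of $\hat{q}_1\cdots\hat{q}_r$ is just the product of leading $a$-terms, $A_{1\cdots r}=\prod_{j\leq r}\prod_{\bar{\jmath}>r}a_{j\bar{\jmath}}$, cf.~\eqref{eq:Y_1...r_contribution}. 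All other coefficients then follow for free from the identity $e_r(\vect{Y})=s_{\{j_1,\To,j_r\}}^{\vphantom{-1}}\,e_r(\vect{Y})\,s_{\{j_1,\To,j_r\}}^{-1}$ on $\mathbb{C}[\vect{z}]^{\mathfrak{S}_N}$, since permutations act trivially on symmetric polynomials and conjugation transports $A_{1\cdots r}\,\hat{q}_1\cdots\hat{q}_r$ to $A_J\,\hat{q}_J$. If you want the direct route to close, this symmetry step is the missing idea.
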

\noindent For example,
\begin{equation} \label{eq:D_1}
	D_1 \coloneqq \sum_{j=1}^N A_j(\vect{z}) \, \hat{q}_j \, , 
	\qquad A_j(\vect{z}) = \prod_{\bar{\jmath}\mspace{1mu}(\neq j)}^N \!\! a_{j\mspace{1mu}\bar{\jmath}} = \, t^{-(N-1)/2} \! \prod_{\bar{\jmath}\mspace{1mu}(\neq j)}^N \!\! \frac{t\,z_j - z_{\bar{\jmath}}}{z_j - z_{\bar{\jmath}}} \, .
\checkedMma
\end{equation}
In particular we get the multiplicative translation operator, cf.~\eqref{eq:q_hat}, which counts the total degree:
\begin{equation} \label{eq:D_N}
	D_N = Y_1 \cdots Y_N = \pi^N = \hat{q}_1 \cdots \hat{q}_N \, .
\checkedMma
\end{equation}
The following proof of these well-known facts will be useful in \textsection\ref{s:abelian_tilde}. After we obtained this proof ourselves we came across it in Appendix~B of \cite{JK+_95b}.

\begin{proof}[Proof of Proposition~\ref{prop:D_r} (\/\cite{JK+_95b})]
We start with $r=1$. Let us consider the contribution due to $Y_i$ written as in~\eqref{eq:Yi_via_x}. On $\mathbb{C}[\vect{z}]^{\mathfrak{S}_N}$ we can replace $x_{ji} = t^{-1/2}$
\checkedMma
to the right of $\hat{q}_i$. Since the individual $Y_i$ do not preserve $\mathbb{C}[\vect{z}]^{\mathfrak{S}_N}$ the $x_{ij}$ to the left have to be commuted through $\hat{q}_i$ before we can replace $x_{ij} = t^{1/2}$.
\checkedMma 
The result is a linear combination of terms with $\hat{q}_j$ for $j \geq i$. It follows that $D_1$ can be written in the (`normal') form $\sum_j A_j(\vect{z}) \, \hat{q}_j$ for some rational function~$A_j(\vect{z})$ that we have to find. 

Note that $A_j(\vect{z})$ receives contributions from the $Y_i$ with $i\leq j$. One of the coefficients is therefore easy to determine: for $j=1$ we only need to consider
\begin{equation} \label{eq:Y1_contribution}
	\begin{aligned}
	Y_1 & = x_{12} \, x_{13} \cdots x_{1N} \, \hat{q}_1 \\
	& = 
	(a_{12} + b_{12} \, s_{12}) \, (a_{13} + b_{13} \, s_{13}) \cdots (a_{1N} + b_{1N} \, s_{1N}) \, \hat{q}_1 \\
	& = a_{12} \cdots a_{1N} \, \hat{q}_1 \ + \ \text{contributions to all other } A_j \,\hat{q}_j \, (j>1) \, .
	\end{aligned}
\checkedMma
\end{equation}
Thus $A_1(\vect{z}) = a_{12} \cdots a_{1N}$. To find the other coefficients we exploiting the fact that $\Delta(u)$ preserves \eqref{eq:phys_space_scalar}. Since permutations act trivially on symmetric polynomials we have\,%
\footnote{\ Here we could equally well conjugate by $s_{1j}$.
\checkedMma
However, $s_{(j\cdots 21)}$ is the shortest permutation such that $1\mapsto j$, which will be the prudent choice in \textsection\ref{s:abelian_tilde}. (A similar remark applies to higher~$r$.)}
\begin{equation*}
	e_1(\vect{Y}) \, = \, \underbrace{s_{j-1} \cdots s_1}_{= \, s_{(j\cdots 21)}} \, e_1(\vect{Y}) \, \underbrace{s_1 \cdots s_{j-1}}_{= \, s_{(12\cdots j)}} \qquad \text{on} \quad \mathbb{C}[\vect{z}]^{\mathfrak{S}_N} \, .
\checkedMma
\end{equation*}
But on the right-hand side the term with $i=1$ is the only to contribute to $A_j(\vect{z})$. By \eqref{eq:Y1_contribution} we have $s_{(j\cdots 21)} \, Y_1 \, s_{(12\cdots j)} = s_{(j\cdots 21)} \, \bigl(A_1(\vect{z}) \,\hat{q}_1 + \dots) \, s_{(12\cdots j)} = A_j(\vect{z}) \, \hat{q}_j + \dots$
\checkedMma
for $A_j(\vect{z})$ as in \eqref{eq:D_1} and with final ellipsis denoting terms with $\hat{q}_i$ for $i\neq j$. This proves \eqref{eq:D_1}.

To see how to adapt the argument to the general case we turn to $r=2$. Like before on symmetric polynomials the result can be written in normal form $\sum_{j<j'} A_{j j'}(\vect{z}) \, \hat{q}_j \, \hat{q}_{j'}$, where $A_{j j'}(\vect{z})$ receives contributions from $Y_i \, Y_{i'}$ for all $i<i'$ with $i\geq j$ and $i'\geq j'$. We compute the simple term like in \eqref{eq:Y1_contribution}:
\begin{equation*}
	\begin{aligned}
	Y_1 \, Y_2 = Y_2 \, Y_1 & = x_{23} \cdots x_{2N} \, \hat{q}_2 \, x_{21} \ x_{12} \, x_{13} \cdots x_{1N} \, \hat{q}_1 \\
	& = x_{23} \cdots x_{2N} \, \hat{q}_2 \, x_{13} \cdots x_{1N} \, \hat{q}_1 \\
	& = x_{23} \cdots x_{2N} \, x_{13} \cdots x_{1N} \, \hat{q}_1 \, \hat{q}_2 \\
	& = 
	(a_{23} + \dots) \cdots (a_{2N} + \dots ) \, (a_{13} + \dots ) \cdots (a_{1N} + \dots) \, \hat{q}_1 \, \hat{q}_2 \\
	& = A_{12}(\vect{z}) \, \hat{q}_1 \, \hat{q}_2 \ + \ \text{contributions to all other } A_{j j'}(\vect{z}) \, \hat{q}_j \, \hat{q}_{j'} \, ,
	\end{aligned}
\checkedMma
\end{equation*}
where in the middle equality we commuted $\hat{q}_2$ through operators independent of $z_2$.
The computation is similar for higher~$r$:
\begin{equation} \label{eq:Y_1...r_contribution}
	\begin{aligned}
	Y_1 \cdots Y_r = Y_r \cdots Y_1 & = x_{r,r+1} \cdots x_{rN} \ \cdots \ x_{1,r+1} \cdots x_{1N} \, \hat{q}_1 \cdots \hat{q}_r \\
	& = A_{1\cdots r}(\vect{z}) \, \hat{q}_1 \cdots \hat{q}_r \ + \ \text{contrib.\ to all other }  A_{j_1 \cdots j_r}(\vect{z}) \, \hat{q}_{j_1} \!\cdots \hat{q}_{j_r} \, ,
	\end{aligned}
\checkedMma
\end{equation}
from which we read off $A_{1\cdots r}(\vect{z}) = \prod_{j (\leq r)} \prod_{\bar{\jmath}\mspace{1mu}(>r)} a_{j \mspace{1mu} \bar{\jmath}}\mspace{1mu}$.
\checkedMma
If $r=N$ there are no $x_{ij}$ left and we already get \eqref{eq:D_N}. The remaining $A_{j_1 \cdots j_r}(\vect{z})$ can again readily be obtained by a suitable conjugation. Indeed, in terms of the notation \eqref{eq:perm_grassmannian} we have
\begin{equation*}
	e_r(\vect{Y}) \, = \, s_{\{j_1,\To,j_r\}}^{\vphantom{-1}} \, e_r(\vect{Y})  \, s_{\{j_1,\To,j_r\}}^{-1} \qquad \text{on} \quad \mathbb{C}[\vect{z}]^{\mathfrak{S}_N} \, .
\end{equation*}
Applying the same conjugation to \eqref{eq:Y_1...r_contribution} we conclude \eqref{eq:D_r}.
\end{proof}

The expression \eqref{eq:D_r} becomes more complicated as $r$ increases to $\lfloor N/2\rfloor$, but starts to simplify again beyond the `equator', cf.~\eqref{eq:D_N}. By \eqref{eq:e_to_D} we have for all $0 \leq r \leq N $
\begin{equation} \label{eq:D_symmetry}
	D_{N-r} = D_N \, D_{-r}\, , \qquad D_{-r} = e_r(Y_1^{-1},\To,Y_N^{-1}) \qquad \text{on} \quad \mathbb{C}[\vect{z}]^{\mathfrak{S}_N} \ . 
\end{equation}
Note that the affine generators are indeed invertible: \eqref{eq:Yi} and \eqref{eq:Yi_via_x} imply
\begin{equation*}
	\begin{aligned}
	Y_i^{-1} & = T_{i-1}^\text{pol} \cdots T_1^\text{pol} \, \pi^{-1} \, T_{N-1}^{\text{pol}\,-1} \cdots T_i^{\text{pol}\,-1} \\
	& = x_{i-1,i} \cdots x_{1i} \, \hat{q}_i^{-1} x_{Ni} \cdots x_{i+1,i} \, .
	\end{aligned}
\checkedMma
\end{equation*}
\begin{proposition} \label{prop:D_-r} 
The operators defined in \eqref{eq:D_symmetry} are given by
\begin{equation} \label{eq:D_-r}
	D_{-r} = \!\!\! \sum_{I \colon \# I = r} \!\!\!\!\! A_{-I}(\vect{z}) \, \hat{q}^{-1}_I \, , \qquad
	A_{-I}(\vect{z}) \coloneqq \prod_{i \in I \niton \bar\imath} \!\!\! a_{\bar\imath\mspace{1mu} i} \, .
\checkedMma
\end{equation}
\end{proposition}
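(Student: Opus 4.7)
The plan is to derive \eqref{eq:D_-r} directly from Proposition~\ref{prop:D_r} via the identity $D_{-r} = D_N^{-1} \, D_{N-r}$ provided by \eqref{eq:D_symmetry}, sparing us a separate normal-form analysis of the inverse affine generators.

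Proposition~\ref{prop:D_r} gives $D_{N-r} = \sum_{J} A_J \, \hat{q}_J$ summed over $(N{-}r)$-element subsets $J \subseteq \{1,\To,N\}$. The two facts I would use are: (i) each coefficient $A_J(\vect z) = \prod a(z_j/z_{\bar\jmath})$ depends only on ratios of the $z_i$, and since $D_N$ acts on rational functions by $f(\vect z) \mapsto f(q\vect z)$, we have $[D_N^{\pm 1}, A_J] = 0$; (ii) the $\hat{q}_i^{\pm 1}$ all commute, so $D_N^{-1} \, \hat{q}_J = \hat{q}_{\bar J}^{-1}$ with $\bar J \coloneqq \{1,\To,N\} \setminus J$. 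Combining these,
\begin{equation*}
	D_{-r} \, = \! \sum_{J \colon \# J = N-r} \!\!\!\!\! A_J \, \hat{q}_{\bar J}^{-1} \, = \! \sum_{I \colon \# I = r} \! A_{\bar I} \, \hat{q}_I^{-1} \, ,
\end{equation*}
after re-indexing $I \coloneqq \bar J$.

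It only remains to check that $A_{\bar I}$ coincides with the $A_{-I}$ defined in \eqref{eq:D_-r}. Unwinding definitions, $A_{\bar I} = \prod a_{j\bar\jmath}$ ranges over ordered pairs with $j \notin I$ and $\bar\jmath \in I$; relabelling $\bar\imath \coloneqq j$ and $i \coloneqq \bar\jmath$ turns the product into $\prod_{i \in I, \bar\imath \notin I} a_{\bar\imath i}$, which is precisely $A_{-I}$. This is a purely combinatorial re-indexation with no real obstacle\,---\,the only subtlety is in tracking the roles of the two subscripts of $a$, where the order matters since $a(u) \neq a(u^{-1})$ in general. An alternative route would be to mimic the proof of Proposition~\ref{prop:D_r} directly, exploiting that $Y_i^{-1}$ admits the mirror factorisation $x_{i-1,i} \cdots x_{1i} \, \hat{q}_i^{-1} \, x_{Ni} \cdots x_{i+1,i}$ and that $e_r(\vect Y^{-1})$ is again central; but the derivation via \eqref{eq:D_symmetry} is shorter and avoids repeating the triangularity bookkeeping.
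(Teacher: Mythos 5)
Your proof is correct, but it follows a genuinely different route from the paper's. The paper establishes \eqref{eq:D_-r} by redoing the normal-form analysis of Proposition~\ref{prop:D_r}: it uses the mirror factorisation $Y_N^{-1} = x_{N-1,N}\cdots x_{1N}\,\hat q_N^{-1}$ to read off the coefficient of the simple term $\hat q_N^{-1}$ as $a_{1N}\cdots a_{N-1,N}$, and then recovers the remaining coefficients by conjugating with $s_i\cdots s_{N-1}$ on $\mathbb{C}[\vect{z}]^{\mathfrak{S}_N}$ --- precisely the ``alternative route'' you mention at the end. You instead deduce the formula algebraically from the already-proved expression for $D_{N-r}$ via $D_{-r} = D_N^{-1}D_{N-r}$, using that $D_N=\hat q_1\cdots\hat q_N$ commutes with multiplication by $A_J$ (the $a_{j\bar\jmath}$ depend only on ratios, hence are invariant under the overall dilation), that $D_N^{-1}\hat q_J = \hat q_{\bar J}^{-1}$, and the combinatorial identification $A_{\bar I}=A_{-I}$, which you correctly verify while keeping track of the asymmetry $a(u)\neq a(1/u)$. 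Both arguments are sound; yours is shorter, avoids repeating the triangularity bookkeeping, and makes the duality $r\leftrightarrow N-r$ manifest, at the cost of leaning on \eqref{eq:D_symmetry} and the invertibility of $D_N$ on $\mathbb{C}[\vect{z}]^{\mathfrak{S}_N}$ (harmless for $q\in\mathbb{C}^\times$). The paper's conjugation argument has the advantage that it carries over essentially verbatim to the spin case in \textsection\ref{s:spin-Macdonald}, where the $\hat q_j^{\pm1}$ are dressed by \textit{R}-matrices and the complementation shortcut is no longer immediate.
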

\noindent Note that the arguments in the definition of $A_{-I}(\vect{z})$ are the inverse of those in \eqref{eq:D_1}.

\begin{proof}
We proceed as in the proof of Proposition~\ref{prop:D_r}. Consider $r=1$. On $\mathbb{C}[\vect{z}]^{\mathfrak{S}_N}$ the result will have normal form $\sum_i \widetilde{Y}_{-i}$ where $\widetilde{Y}_{-i} = A_{-i}(\vect{z})\, \hat{q}_i^{-1}$. This time the rational function $A_{-i}(\vect{z})$ receives contributions from all $Y_j^{-1}$ with $j\geq i$. The simple term thus is $\widetilde{Y}_{-N}$, with sole contribution coming from
\begin{equation*}
	\begin{aligned}
	Y_N^{-1} & = x_{N-1,N} \cdots x_{1N} \, \hat{q}_N^{-1} \\
	& = (a_{N-1,N} + b_{N-1,N} \, s_{N-1,N}) \cdots (a_{1N} + b_{1N} \, s_{1N}) \, \hat{q}_N^{-1} \\
	& = a_{1N} \cdots a_{N-1,N} \, \hat{q}_N^{-1} \ + \ \text{contributions to all other $\widetilde{Y}_{-i}$} \, .
	\end{aligned}	
\end{equation*}
The remaining terms are found from this like before, now conjugating by $s_i \cdots s_{N-1}$. Higher $r$ are treated analogously.
\end{proof}

Since the $Y_i$ commute they can be simultaneously diagonalised. Their joint eigenfunctions are labelled by (weak) \emph{compositions}~$\alpha$ with at most $N$ parts. Let $\alpha^+$ be the corresponding partition of length $\ell(\alpha^+)\leq N$, viewed as a weak partition by appending zeros if necessary. Write $w^\alpha$ for the shortest permutation such that $\alpha_i = \alpha^+_{w^\alpha(i)}$. The monomial basis $\vect{z}^\alpha \coloneqq z_1^{\alpha_1} \cdots z_N^{\alpha_N}$ of $\mathbb{C}[\vect{z}]$ has a (partial) ordering induced by the \emph{dominance} (partial) order on compositions. Define
\begin{subequations} \label{eq:dominance}
	\begin{gather}
	\lambda \geq \nu \qquad\quad \text{if{f}} \qquad\quad \sum_{i=1}^n \lambda_i \geq \sum_{i=1}^n \nu_i \quad \text{for all $1\leq n\leq N$} \, ,
\shortintertext{and write $\lambda > \nu$ if $\lambda \geq \nu$ but $\lambda \neq \nu$. This is refined to compositions as~\cite{BG+_93,Opd_95}}
	\alpha \succ \beta \qquad\quad \text{if{f}} \qquad\quad \text{either} \quad \alpha^+ > \beta^+ \quad \text{or} \quad \alpha^+ = \beta^+ , \, \alpha > \beta \, .
	\end{gather}
\end{subequations}
We will say that $\vect{z}^\beta$ is \emph{lower} than $\vect{z}^\alpha$ if $\alpha \succ \beta$.

\begin{definition}
The \emph{nonsymmetric Macdonald polynomial} $E_\alpha(\vect{z}) \coloneqq E_\alpha(\vect{z};q,t)$ \cite{Mac_03} is the unique polynomial such that
\begin{equation} \label{eq:nonsymm_Macdonalds}
	\begin{aligned}
	E_\alpha(\vect{z}) & = \vect{z}^\alpha + \text{lower monomials} \, , \\
	Y_i \, E_\alpha(\vect{z}) & = t^{(N - 2\,w^\alpha(i)\, +1)/2} \, q^{\alpha_i} \, E_\alpha(\vect{z}) \, , \qquad 1\leq i\leq N \, .
	\end{aligned}
\end{equation}
\end{definition}
\begin{proof}[Proof of uniqueness (sketch)]
To show that the $Y$\/s are triangular it suffices by \eqref{eq:Yi_via_x} to verify that the $x_{ij}$ are already triangular. The eigenvalue in~\eqref{eq:nonsymm_Macdonalds} can then be read off as the coefficient in $Y_i \, \vect{z}^\alpha = \text{coeff} \times \vect{z}^\alpha + \text{lower}$. The uniqueness follows since the joint spectrum of the $Y_i$ is simple (multiplicity free).
\end{proof}

By (\textsf{q}-)symmetrising one obtains \emph{(symmetric) Macdonald polynomials}:
\begin{equation} \label{eq:Macd_polyn} 
	P_\lambda(\vect{z}) = \!\! \sum_{\alpha \, :\, \alpha^+ = \lambda} \!\!\!\!\! E_\alpha(\vect{z}) = \text{cst} \times\, \Pi_+^\text{pol} \, E_\alpha(\vect{z}) \, ,
\end{equation}
where $\Pi_+^\text{pol}$ is the projector from~\eqref{eq:Hecke_projectors_pol} and the constant is such that $P_\lambda$ is monic.
These are joint eigenfunctions of the Macdonald operators~\eqref{eq:D_r},
\begin{equation} \label{eq:Macd_eigenvalues} \checkedMma
	D_r \, P_\lambda(\vect{z}) = \Lambda_r(\lambda) \, P_\lambda(\vect{z}) \, , \qquad \Lambda_r(\lambda) \coloneqq \sum_{I \colon \# I = r} \ \prod_{i\in I} t^{(N-2\,i+1)/2} \, q^{\lambda_i} \, .
\end{equation}
Macdonald polynomials are orthogonal, cf.\ e.g.\ \cite[\textsection3.4]{Mac_98}, \cite[\textsection{VI.9}]{Mac_95}. Set
\begin{equation} \label{eq:Macd_measure}
	\mu_{q,t}(\vect{z}) \coloneqq \prod_{i\neq j}^N \frac{(z_i/z_j;q)_\infty}{(t\,z_i/z_j;q)_\infty} \, , \qquad (z;q)_\infty \coloneqq \prod_{k=0}^\infty (1-z\,q^k) \, ,
\end{equation}
where the infinite products truncate if $t = q^k$ for $k \in \mathbb{N}$. Define the scalar product~\cite{Mac_00}
\begin{equation*}
	(F,G) \coloneqq \text{constant term}\bigl(\mu_{q,t} \, F^* \, G\bigr) \, , \qquad F^*(\vect{z}) \coloneqq F(z_1^{-1},\To,z_N^{-1}) \, .
\end{equation*}
Then the Macdonald polynomials~\eqref{eq:Macd_polyn} can also be uniquely characterised as
\begin{equation} \label{eq:Macd_orthog}
	\begin{aligned}
	P_\lambda(\vect{z}) & = m_\lambda(\vect{z}) + \text{lower terms} \, , \\ 
	(P_\lambda,P_\nu) & = 0 \quad \text{if} \quad \lambda \neq \nu \, ,
	\end{aligned}
\end{equation}
where $m_\lambda$ is a monomial symmetric polynomial, see \eqref{eq:monomial}.

\begin{figure}[h]
	\centering
	\begin{tikzpicture}[scale=3]
		\draw[->] (-.1,0) -- node[below] {$W_\lambda(q)$} (1.15,0) node[right] {$q$};
		\draw[->] (0,-.1) -- node[left] {$P_\lambda(t)$} (0,1.15) node[above]{$t$};
		\draw [fill=black] (0,0) circle (.02cm) node [below left] {$0$} node [above left] {$s_\lambda$};
		\draw[smooth,samples=100,domain=0:1] plot(\x,\x*\x);
		\node at (.5,.25) [shift={(.5,0)}] {$P_\lambda^\star$};
		\draw[dashed] (0,0) -- node[shift={(-.5,0)}] {$s_\lambda$} (1.1,1.1);
		\draw[dotted] (1,0) node[below] {$1$} -- node[below right] {$e_{\lambda'}$} (1,1.1);
		\draw[dotted] (0,1) node[left] {$1$} -- node[above left] {$m_\lambda$} (1.1,1);
		\draw (1-.04,1-2*.04) -- (1+.04,1+2*.04) node [above] {$P^{(1/2)}_\lambda$};
		\draw (1-2*.04,1-.04) -- (1+2*.04,1+.04) node [right] {$Z_\lambda$};;
	\end{tikzpicture}
	\caption{(Adapted from \cite{Mac_98}.) The parameter space of the Macdonald polynomial $P_\lambda(q,t)$, where the dependence on $\vect{z}$ is suppressed. Special cases include elementary symmetric (for the conjugate partition) when $q=1$, monomial symmetric for $t=1$, Schur on the diagonal $t=q$ as well as the origin (and infinity) approached from any direction, Hall--Littlewood at $q=0$, and \textit{q}-$\!$Whittaker at $t=0$. The quantum spherical zonal polynomial, which we denote by $P_\lambda^\star$ and features in our wave function~\eqref{eq:intro_polynomial}, lives on the parabola $t = q^2$. Jack polynomials~$P^{(\alpha)}_\lambda$, in the (monic) `P-normalisation', are associated to tangent lines at $t=q=1$ with slope $\alpha^{-1} = k$. This includes $P^{(1)}_\lambda = s_\lambda$, the spherical zonal polynomial $P^{(1/2)}_\lambda$ from \eqref{eq:HS_polynomial}, and the zonal polynomial $P^{(2)}_\lambda = Z_\lambda$.}
	\label{fg:Macdonalds}
\end{figure}
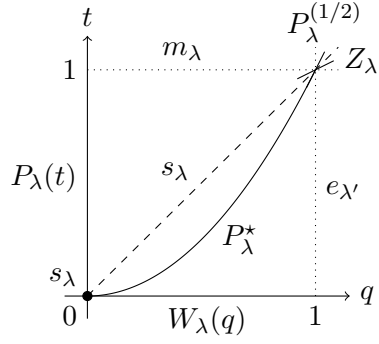

\subsubsection{Ruijsenaars model} \label{s:Ruijsenaars}
From a physical point of view \eqref{eq:phys_space_scalar} describes $N$~indistinguishable (\textsf{q}-)bosons moving on a circle with coordinates $z_j$. Macdonald operators can be understood as `effective' (gauge-transformed) Hamiltonians. Namely, Koornwinder observed~\cite{vD_95} that conjugation by the square root of the measure~\eqref{eq:Macd_measure} turns \eqref{eq:D_r} and \eqref{eq:D_-r} into Ruijsenaars's hermitian operators~\cite{Rui_87}. Indeed,
\begin{equation*}
	\frac{(t\,z_i/z_j;q)_\infty}{(z_i/z_j;q)_\infty} \ \hat{q}_i \ \frac{(z_i/z_j;q)_\infty}{(t\,z_i/z_j;q)_\infty} = \frac{t\,z_i-z_j}{z_i-z_j} \ \hat{q}_i \ \frac{z_i-z_j}{t\,z_i-z_j} \, ,
\end{equation*}
whence 
\begin{equation} \label{eq:Macd_to_Ruij}
	\mu_{q,t}(\vect{z})^{-1/2} \, D_{\pm r} \ \mu_{q,t}(\vect{z})^{1/2} = D^\text{Rui}_{\pm r} \, , \qquad D^\text{Rui}_{\pm r} \coloneqq \!\! \sum_{J \colon \# J = r} \!\!\!\!\!
	A_{\pm J}^{1/2} \ \hat{q}^{\pm 1}_J \, A_{\mp J}^{1/2} \, .
\end{equation}
The physical Hamiltonian and momentum operator~\cite{Rui_87} are $H^\text{rel} \coloneqq (D^\text{Rui}_1 + D^\text{Rui}_{-1})/2$ and $P^\text{rel} \coloneqq (D^\text{Rui}_1 - D^\text{Rui}_{-1})/2$, cf.\ \textsection\ref{s:app_CalSut_limit}. The model is relativistic in the sense that it enjoys two-dimensional Poincaré invariance with `boost' $B^\text{rel} \coloneqq \log(z_1 \cdots z_N)/\log q$,
\begin{equation*}
	[P^\text{rel},H^\text{rel}] = 0 \, , \qquad [H^\text{rel},B^\text{rel}] = P^\text{rel} \, , \qquad [P^\text{rel},B^\text{rel}] = H^\text{rel} \, .
\end{equation*}
 
The square root of \eqref{eq:Macd_measure} can be interpreted as the ground-state wave function, with energy $\varepsilon_0 = [N]$, as follows from \eqref{eq:Macd_to_Ruij} and the identity
\begin{equation} \label{eq:D_r_q=1}
	\sum_{J : \# J = r} \!\!\!\! A_J = D_r\,\big|_{q=1} = \begin{bmatrix} N \\ r \end{bmatrix} \, ,
\end{equation}
which is a consequence of \eqref{eq:Macd_eigenvalues} as the eigenvalues become independent of $\lambda$ at $q=1$, whence $D_r$ diagonal.
The other eigenfunctions now follow from \textsection\ref{s:Macdonald}.

In the \textsf{q}-fermionic case \eqref{eq:phys_space_scalar} is replaced by
\begin{equation} \label{eq:phys_space_scalar_fermionic}
	\Delta_t(\vect{z})\,\mathbb{C}[\vect{z}]^{\mathfrak{S}_N} = \bigcap_{i=1}^{N-1} \! \ker\bigl(T_i^\text{pol} + t^{-1/2} \bigr) \, .
\end{equation}
Though we will not explicitly use it, the spin-generalisation of this space plays an important role in the background in \textsection\ref{s:explicit_evrs}. We plan to return to this in the future. The corresponding hierarchy is obtained from \eqref{eq:e_to_D} using
\begin{lemma}[cf.~(5.8.12) in~\cite{Mac_03}] \label{lem:conj_by_qVandermonde}
Conjugation with the \textsf{q}-$\mspace{-2mu}$Vandermonde factor gives
\begin{equation} \label{eq:conj_by_qVandermonde}
	e_r(\vect{Y}) \, \Delta_t(\vect{z}) \, = \, q^{r\mspace{1mu}(N-1)/2} \, \Delta_t(\vect{z}) \, e_r(\vect{Y}') \qquad \text{on} \quad \mathbb{C}[\vect{z}]^{\mathfrak{S}_N} \, ,
\checkedMma
\end{equation}	
where $Y'_i$ denotes the affine generator~\eqref{eq:Yi} with parameters $q'=q$ and $t'=q\,t$, shifting $k'=k+1$ for $t = q^k$ and likewise with primes.
\end{lemma}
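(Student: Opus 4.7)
The plan is to verify \eqref{eq:conj_by_qVandermonde} as an identity of operators $\mathbb{C}[\vect{z}]^{\mathfrak{S}_N} \to \Delta_t \cdot \mathbb{C}[\vect{z}]^{\mathfrak{S}_N}$ by comparing both sides on the basis of symmetric Macdonald polynomials.

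First I would record the sanity input that $\Delta_t$ transforms by the $\mathfrak{H}_N$-skew character: using \eqref{eq:Hecke_pol_2b} and the product form of $\Delta_t$, only the factor $(t\mspace{1mu}z_i - z_{i+1})$ is affected by $s_i$, and a short calculation collapses the Demazure--Lusztig combination to $T_i^{\text{pol}}\mspace{1mu}\Delta_t = -t^{-1/2}\Delta_t$. Consequently $T_i^{\text{pol}}(\Delta_t\mspace{1mu} F) = -t^{-1/2}\Delta_t\mspace{1mu} F$ for every symmetric~$F$, so multiplication by $\Delta_t$ is an isomorphism from $\mathbb{C}[\vect{z}]^{\mathfrak{S}_N}$ onto the skew subspace \eqref{eq:phys_space_scalar_fermionic}. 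Because $e_r(\vect{Y}) \in Z\bigl(\widehat{\mathfrak{H}}_N\bigr)$ commutes with every $T_i^{\text{pol}}$, it preserves $\mathfrak{H}_N$-isotypical components, and both sides of \eqref{eq:conj_by_qVandermonde} map the skew subspace to itself; dividing by $\Delta_t$ reduces the claim to an identity of operators on $\mathbb{C}[\vect{z}]^{\mathfrak{S}_N}$.

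Next comes the key numerical identity. Applying both sides to $P_\mu^{(q,qt)}$, the right-hand side acts by the scalar $q^{r(N-1)/2}\,\Lambda_r(\mu;q,qt) \cdot \Delta_t$ via \eqref{eq:Macd_eigenvalues}. Setting $\delta \coloneqq (N-1,N-2,\To,1,0)$, the elementary identity
\begin{equation*}
t^{(N-2i+1)/2}\, q^{\mu_i + N - i} \;=\; q^{(N-1)/2}\,(qt)^{(N-2i+1)/2}\, q^{\mu_i} \, ,
\end{equation*}
multiplied over $i \in I$ with $|I| = r$, yields $\Lambda_r(\mu+\delta;q,t) = q^{r(N-1)/2}\,\Lambda_r(\mu;q,qt)$. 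The lemma thus reduces to showing that $\Delta_t \cdot P_\mu^{(q,qt)}$ is a common eigenvector of all $e_r(\vect{Y}^{(q,t)})$ with eigenvalue $\Lambda_r(\mu+\delta;q,t)$.

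\textbf{Main obstacle.} The outstanding ingredient is precisely this last statement --- the content of Macdonald's parameter-shift formula (5.8.12) in \cite{Mac_03}. The cleanest self-contained proof routes through the nonsymmetric theory: using \eqref{eq:Yi_via_x} and the uniqueness characterisation of nonsymmetric Macdonald polynomials in \eqref{eq:nonsymm_Macdonalds}, one verifies that $\Delta_t$ acts as a shift operator intertwining the $\vect{Y}$-eigenbasis at parameters $(q,qt)$ with that at $(q,t)$ via the staircase shift $\alpha \mapsto \alpha+\delta$ on composition labels, so that $\Delta_t \cdot P_\mu^{(q,qt)}$ is a joint $\vect{Y}^{(q,t)}$-eigenvector with the required eigenvalues. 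Since symmetric Macdonald polynomials span $\mathbb{C}[\vect{z}]^{\mathfrak{S}_N}$ (for generic $q,t$), this completes the proof.
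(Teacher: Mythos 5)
Your reductions are sound as far as they go: $\Delta_t$ does map $\mathbb{C}[\vect{z}]^{\mathfrak{S}_N}$ isomorphically onto the skew space \eqref{eq:phys_space_scalar_fermionic}, both sides of \eqref{eq:conj_by_qVandermonde} do land there, and your numerical identity $\Lambda_r(\mu+\delta;q,t) = q^{r(N-1)/2}\,\Lambda_r(\mu;q,qt)$ checks out. The problem is what you are left with after these reductions: the claim that $\Delta_t\,P_\mu^{(q,qt)}$ is a joint eigenvector of the $e_r(\vect{Y}^{(q,t)})$ with eigenvalue $\Lambda_r(\mu+\delta;q,t)$ is, given your own setup, \emph{equivalent} to the lemma, not a stepping stone towards it. You flag it as the ``main obstacle'' and then dispose of it by asserting that ``one verifies'' an intertwining of the nonsymmetric $\vect{Y}$-eigenbases under multiplication by $\Delta_t$ via $\alpha\mapsto\alpha+\delta$. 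That assertion is at least as hard as the lemma and is delicate as stated: $\Delta_t$ is not symmetric, does not interact with the individual $Y_i$ of \eqref{eq:Yi_via_x} in any simple way, and $\Delta_t\,E_\alpha^{(q,qt)}$ is in general not a single $E_\beta^{(q,t)}$; pinning down the correct composition-level statement (and which staircase to add, depending on $w^\alpha$) is precisely the content of Macdonald's shift-operator theory that the lemma is a repackaging of. Triangularity alone does not rescue this: it identifies the leading monomial $\vect{z}^{\mu+\delta}$ of $\Delta_t\,P_\mu^{(q,qt)}$, but to conclude that this particular element is the triangular eigenvector of $e_r(\vect{Y})$ in the skew space you would need to know that the conjugated operators are the $(q,qt)$-Macdonald operators --- which is the statement being proved. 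So as written the argument is circular modulo a citation.

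For contrast, the paper's proof avoids eigenfunctions entirely. Since $e_r(\vect{Y})$ preserves the skew space, $\Delta_t^{-1}\,e_r(\vect{Y})\,\Delta_t$ preserves $\mathbb{C}[\vect{z}]^{\mathfrak{S}_N}$ and can be written in normal form $\sum_I A'_I\,\hat{q}_I$; by the same symmetry trick as in the proof of Proposition~\ref{prop:D_r} only the single coefficient $A'_{1\cdots r}$ needs computing, and commuting $\hat{q}_{1\cdots r}$ through $\Delta_t$ produces the factor $q^{r(r-1)/2}\prod_{j\leq r}\prod_{\bar\jmath>r}(q\,t\,z_j - z_{\bar\jmath})/(t\,z_j - z_{\bar\jmath})$, which combines with $A_{1\cdots r}$ to give $q^{r(N-1)/2}$ times $A_{1\cdots r}|_{t\mapsto qt}$. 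If you want to salvage your eigenfunction route, the missing ingredient can be supplied by orthogonality rather than by the nonsymmetric theory: the measures satisfy $\mu_{q,qt} = \mu_{q,t}\cdot\Delta_t^*\,\Delta_t$ up to a constant, so $\{\Delta_t\,P_\mu^{(q,qt)}\}$ is the triangular orthogonal basis of the skew space for $(\cdot,\cdot)_{q,t}$, and one then has to argue that the $e_r(\vect{Y})$ restricted to the skew space are normal for this pairing with simple joint spectrum. That is a genuinely different (and longer) proof; either carry it out or simply adopt the paper's two-line computation.
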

\begin{proof}
Since $e_r(\vect{Y})$ preserves \eqref{eq:phys_space_scalar_fermionic}, $\Delta_t(\vect{z})^{-1} \, e_r(\vect{Y}) \, \Delta_t(\vect{z})$ does so for $\mathbb{C}[\vect{z}]^{\mathfrak{S}_N}$. We can thus proceed like in \textsection\ref{s:Macdonald}. Write $\Delta_t(\vect{z})^{-1} \, e_r(\vect{Y}) \, \Delta_t(\vect{z}) = \sum_I A'_I(\vect{z}) \, \hat{q}_I$ in normal form. By symmetry it suffices to find $A'_{1\cdots r}(\vect{z})$, with only contribution due to
\begin{equation*}
	\begin{aligned}
	\Delta_t(\vect{z})^{-1} \, e_r(\vect{Y}) \, \Delta_t(\vect{z}) & = \Delta_t(\vect{z})^{-1} \, \bigl( A_{1\cdots r}(\vect{z}) \, \hat{q}_{1\cdots r} + \cdots \bigr) \, \Delta_t(\vect{z}) \\
	& = A_{1\cdots r}(\vect{z}) \, \Biggl( q^{r\,(r-1)/2} \, \prod_{j=1}^r \prod_{\bar\jmath(>r)}^N \!\! \frac{q\,t\,z_j - z_{\bar\jmath}}{t\,z_j - z_{\bar\jmath}} \Biggr) \, \hat{q}_{1\cdots r} + \cdots \, .
	\end{aligned}
\end{equation*}
Hence $A'_{1\cdots r}(\vect{z})$ equals $q^{r\mspace{1mu}(N-1)/2}$ times $A_{1\cdots r}(\vect{z})$ with $t$ replaced by $q\,t$.
\end{proof}

In the nonrelativistic limit $q=t^\alpha$ (so $\alpha = k^{-1}$), $t\to 1$, the \textit{Y}$\mspace{-2mu}$-operators reduce to Dunkl(--Cherednik) operators, Macdonald polynomials to Jack polynomials, and the Ruijsenaars model to the trigonometric Calogero--Sutherland model. This is summarised in \textsection\ref{s:app_nonrlt_limit}.

\subsection{Spin side} \label{s:spin} The spin-chain Hilbert space is $\mathcal{H} \coloneqq V^{\otimes N}$, where $V = \mathbb{C} \, \ket{\uparrow} \oplus \mathbb{C} \, \ket{\downarrow}$ is the spin-1/2 (defining, or vector) one-particle Hilbert space. A good introduction to quantum $\mathfrak{sl}_2$ and the quantum-loop algebra of $\mathfrak{sl}_2$ can be found in \cite{Jim_92}.

\subsubsection{Hecke, Temperley--Lieb and quantum $\mathfrak{sl}_2$} \label{s:Hecke_TL_Uqsl}
The second well-known representation of the Hecke algebra $\mathfrak{H}_N$ (\textsection\ref{s:Hecke_AHA}) \textsf{q}-deforms the natural action of $\mathfrak{S}_N$ on $\mathcal{H}$: 
\begin{equation} \label{eq:Hecke_spin}
	T_i^\text{sp} \coloneqq \id^{\otimes(i-1)} \otimes \,T^\text{sp} \otimes \id^{\otimes(N-i-1)} \, , \qquad T^\text{sp} \coloneqq \begin{pmatrix} 
	\, t^{1/2} & \!\! \color{gray!80}{0} & \color{gray!80}{0} \! & \color{gray!80}{0} \, \\
	\, \color{gray!80}{0} & \!\! t^{1/2} - t^{-1/2} & 1 \! & \color{gray!80}{0} \, \\
	\, \color{gray!80}{0} & \!\! 1 & 0 \! & \color{gray!80}{0} \, \\
	\, \color{gray!80}{0} & \!\! \color{gray!80}{0} & \color{gray!80}{0} \! & t^{1/2} \, \\
	\end{pmatrix} \, .
\end{equation}
Here the matrix is given with respect to the standard basis $\ket{\uparrow\uparrow},\ket{\uparrow\downarrow},\ket{\downarrow\uparrow},\ket{\downarrow\downarrow}$ of $V^{\otimes 2}$. 

The spin analogue of the eigenspace decomposition~\eqref{eq:T_pol_eigenspaces} is (see also \textsection\ref{s:app_stochastic_twist})
\begin{subequations} \label{eq:T_sp_eigenspaces}
	\begin{gather}
	V \otimes V \cong \, \mathrm{Sym}_{t}^2(V) \oplus \Lambda_t^{\!2}(V) \, ,
\shortintertext{where we write}
	\begin{aligned}
	\mathrm{Sym}_{t}^2(V) & \coloneqq \mathbb{C} \, \ket{\uparrow\uparrow} \, \oplus \, \mathbb{C} \, \bigl( t^{1/4} \, \ket{\uparrow\downarrow} + t^{-1/4} \, \, \ket{\downarrow\uparrow}\bigr) \, \oplus \, \mathbb{C} \, \ket{\downarrow\downarrow} \, , \\
	\Lambda_t^{\!2}(V) & \coloneqq \mathbb{C} \, \bigl(t^{-1/4} \, \ket{\uparrow\downarrow}- t^{1/4} \,\ket{\downarrow\uparrow}\bigr)
	\end{aligned}
	\end{gather}
\end{subequations}	
for the (\textsf{q}-symmetric) $t^{1/2}$- and (\textsf{q}-antisymmetric) $-t^{-1/2}$-eigenspaces, respectively.

A close cousin is the \emph{Temperley--Lieb algebra}~$\mathfrak{T}_N(\beta)$ of type $A_{N-1}$, with `loop fugacity'~$\beta\in\mathbb{C}^\times$. This is the unital associative algebra with generators~$e_1,\To,e_{N-1}$ (not to be confused with elementary symmetric polynomials) and defining relations
\begin{equation} \label{eq:TL}
	\begin{gathered}
	e_i \, e_{i\pm 1} \, e_i = e_i \ , \qquad\qquad e_i \, e_j = e_j \, e_i \quad \text{if } |i-j|>1 \, , \\
	e_i^2 = \beta\,e_i \ .
	\end{gathered}
\end{equation}
It can be obtained as a quotient of $\mathfrak{H}_N$ if $[2] = t^{1/2} + t^{-1/2} \neq 0$. Indeed, consider the shifted generator $e'_i \coloneqq t^{1/2} - T_i$. By the Hecke condition $e'_i$ is, up to normalisation, the projector onto the $-t^{-1/2}$-eigenspace of $T_i$. In fact, \eqref{eq:Hecke} implies that the $e'_i$ obey \eqref{eq:TL} with $\beta = [2]$, except that the first relation in \eqref{eq:TL} is replaced by the weaker condition $e'_i \, e'_{i+1} \, e'_i - e'_i = e'_{i+1} \, e'_i \, e'_{i+1} - e'_{i+1}$. Requiring both sides to vanish one arrives at \eqref{eq:TL}.
The relevance for us is that all of these relations are satisfied by
\begin{equation} \label{eq:TL_spin}
	\begin{aligned}
	e_i^\text{sp} \coloneqq {} & t^{1/2} - T_i^\text{sp} \\
	= {} & \id^{\otimes(i-1)} \otimes \, e^\text{sp} \otimes \id^{\otimes(N-i-1)} \, ,
	\end{aligned}
	\qquad e^\text{sp} = \begin{pmatrix} 
	\, 0 & \color{gray!80}{0} & \! \color{gray!80}{0} & \color{gray!80}{0} \, \\
	\, \color{gray!80}{0} & \!\!\hphantom{-}t^{-1/2}\! & \!-1\! & \color{gray!80}{0} \, \\
	\, \color{gray!80}{0} & \!\!-1\! & \!\hphantom{-}t^{1/2}\! & \color{gray!80}{0} \, \\
	\, \color{gray!80}{0} & \color{gray!80}{0} & \! \color{gray!80}{0} & 0 \, \\
	\end{pmatrix} \, .
\end{equation}
That is, the spin representation~\eqref{eq:Hecke_spin} of $\mathfrak{H}_N$ on $\mathcal{H}$ factors through $\mathfrak{T}_N\bigl([2]\bigr)$. Up to normalisation $e^\text{sp}$ is the projector onto $\Lambda_t^{\!2}(V)$ in \eqref{eq:T_sp_eigenspaces}.	
	
Next, 
\begin{definition} The quantum group $\mathfrak{U} \coloneqq U_{t^{1/2}}(\mathfrak{sl}_2)$ is the unital associative algebra generated by $E,F,K,K^{-1}$ with relations $K \, K^{-1} = K^{-1} \, K = 1$ and
\begin{equation} \label{eq:Uqsl2}
	\begin{aligned}
	K \, E\, K^{-1} & = t \, E \, , \\ 
	K \, F \, K^{-1} & = t^{-1} \, F \, , 
	\end{aligned} \qquad \qquad [E,F] = \frac{K-K^{-1}}{t^{1/2}-t^{-1/2}} \, .
\checkedMma
\end{equation}
It comes equipped with a coproduct $\Delta \colon \mathfrak{U} \longrightarrow \mathfrak{U} \otimes \mathfrak{U}$,
\begin{equation} \label{eq:Uqsl2_coprod}
	\begin{aligned} \checkedMma
	\Delta E & \coloneqq E \otimes 1 + K \otimes E \, , \\
	\Delta F & \coloneqq F \otimes K^{-1} + 1\otimes\, F \, , 
	\end{aligned} 
	\qquad\quad
	\Delta K^{\pm1} \coloneqq K^{\pm1} \otimes K^{\pm1} \, ,
\end{equation}
as well as a counit and antipode. It is the \textsf{q}-deformation of \eqref{eq:sl_2}.
\end{definition}

For generic $t$ the representation theory parallels that of $\mathfrak{sl}_2$ (except that, due to the Hopf-algebra automorphism $K\mapsto -K, E \mapsto -E$, there are two non-isomorphic irreps for each dimension). As in \textsection\ref{s:intro_HS} $\sigma^x,\sigma^y,\sigma^z$ are the Pauli matrices on $V$. The vector ($\Box$) representation of $\mathfrak{U}$ is given by $\sigma^\pm = \sigma^x \pm \I \, \sigma^y$ for $E,F$ and $k \coloneqq t^{\sigma^z\!/2} = \diag(t^{1/2},t^{-1/2})$ for $K$. Repeated application of the coproduct yields a (reducible) representation on $\mathcal{H}$,
\begin{equation} \label{eq:Uqsl2_spin}
	\begin{aligned}
	E_1^\text{sp} & = \sum_{i=1}^N k_1 \cdots k_{i-1} \, \sigma^+_i \, , \\
	F_1^\text{sp} & = \sum_{i=1}^N \sigma^-_i \, k^{-1}_{i+1} \cdots k^{-1}_N \, ,
	\end{aligned} 	
	\qquad\quad
	K_1^\text{sp} = k_1 \cdots k_N = t^{S^z} \, .
\end{equation}
The reason for the subscript~`1' on the left-hand sides, not to be confused with the `tensor-leg' subscripts on the right-hand sides, will become clear soon (\textsection\ref{s:quantum-affine}). 
Of course \eqref{eq:weight_decomp} is also a weight decomposition for $\mathfrak{U}$, with $\mathcal{H}_M = \ker(K_1^\text{sp} - t^{(N-2M)/2})$.

Jimbo noted that the \textit{R}-matrix for $\mathfrak{U}$ is essentially a Hecke generator: $\check{R} = P \, R = t^{-1/2} \, T^\text{sp}$. Since $\check{R}$ commutes with the case $N=2$ of \eqref{eq:Uqsl2_spin} it follows that the $T_i^\text{sp}$ commute with \eqref{eq:Uqsl2_spin} for general~$N$, whence so do the Temperley--Lieb generators~$e_i^\text{sp}$: this is the \textsf{q}-analogue of Schur--Weyl duality~\cite{Jim_86a}, see also \cite{CP_94}. A concrete example is \eqref{eq:T_sp_eigenspaces}, where the $T^\text{sp}$-eigenspaces are $\mathfrak{U}$-irreps under the case $N=2$ of \eqref{eq:Uqsl2_spin}.

\subsubsection{Quantum-loop algebra of $\mathfrak{sl}_2$} \label{s:quantum-affine} We will use two descriptions. The first is

\begin{definition}[Drinfeld--Jimbo presentation]
The quantum-loop algebra $\widehat{\mathfrak{U}}$
\footnote{\ \label{fn:U'gl} The quantum-loop algebra is denoted by $U_{t^{\smash{1/2}}}(L\,\mathfrak{gl}_2) = U'_{t^{\smash{1/2}}}(\widehat{\vphantom{t}\smash{\mathfrak{gl}_2}})_{c=0}$ in \cite{CP_94} and \cite{JM_95} respectively. We don't require the quantum determinant to be unity. The prime indicates the absence of the degree operator. As we only deal with finite-dimensional modules we focus on `level' $c = 0$, which is why we get a quantum loop algebra rather than quantum-affine algebra.} 
is the unital associative algebra generated by two copies of the Chevalley generators of $\mathfrak{U}$, which we denote by $E_b,F_b,K_b$ ($b=0,1$), each obeying \eqref{eq:Uqsl2} with cross relations $K_0 \, K_1 = 1$ (i.e.\ the `level' $c = 0$ condition), $[E_b,F_{b'}] = 0$ if $b \neq b'$, 
\checkedMma 
while
\begin{equation*}
	\begin{gathered}
	E_b^3 \, E_{b'}^{\vphantom{1}} - [3] \, E_b^2 \, E_{b'}^{\vphantom{1}} \, E_b^{\vphantom{1}} + [3] \, E_b^{\vphantom{1}} \, E_{b'}^{\vphantom{1}} \, E_b^2 - E_{b'}^{\vphantom{1}} \, E_b^3 = 0 \\ 
	F_b^3 \, F_{b'}^{\vphantom{1}} - [3] \, F_b^2 \, F_{b'}^{\vphantom{1}} \, F_b^{\vphantom{1}} + [3] \, F_b^{\vphantom{1}} \, F_{b'}^{\vphantom{1}} \, F_b^2 - F_{b'}^{\vphantom{1}} \, F_b^3 = 0
	\end{gathered}
	\qquad (b \neq b') \, .
\checkedMma
\end{equation*}
In these final (\textsf{q}-Serre) relations $[3] = t + 1 + t^{-1}$.
\end{definition} 

By $\mathfrak{U} \subset \widehat{\mathfrak{U}}$ we will mean the copy of $\mathfrak{U}$ generated by $E_1,F_1,K_1$. The representation~\eqref{eq:Uqsl2_spin} of $\mathfrak{U}$ on $\mathcal{H}$ can be `affinised'~\cite{Jim_86a,Cha_95} to get a module of $\widehat{\mathfrak{U}}$. (We use the homogeneous, rather than principal, gradation~\cite{JM_95}.) Namely, given $N$~parameters~$z_i$, supplement \eqref{eq:Uqsl2_spin} by\,%
\footnote{\ This is compatible with the coproduct \eqref{eq:Uqsl2_coprod}, cf.~\cite{JM_95}, and matches \cite[\textsection2.3]{Jim_92}. Notice that \cite[\textsection2.1]{CP_96} uses the opposite coproduct; thus, the expression from \cite[\textsection4.2]{CP_96} is the opposite (left-right reverse) of \eqref{eq:affinisation_z}. Besides a difference in normalisation the monodromy matrix of \cite{BG+_93} is built from $\bar{R}(u) \coloneqq \check{R}(u)\,P$ as $\bar{L}_a(u) = \bar{R}_{a1}(u) \cdots \bar{R}_{aN}(u)$, cf.~\eqref{eq:L_inh}; this yields \eqref{eq:affinisation_z} with $z_i$ inverted if one proceeds as in \textsection\ref{s:app_presentations}. The monodromy matrix of \cite[\textsection2.3--2.4]{JK+_95b} differs in several aspects from \eqref{eq:L_inh}; the resulting Chevalley generators \cite[\textsection4.1]{JK+_95a}, \cite[\textsection2.4]{JK+_95b} are again the opposite of \eqref{eq:Uqsl2_spin}--\eqref{eq:affinisation_z}, matching \cite{CP_96}.\label{fn:coprod}}
\begin{equation} \label{eq:affinisation_z}
	\begin{aligned}
	E^\text{inh}_0 & = \sum_{i=1}^N z_i \ k^{-1}_1 \cdots k^{-1}_{i-1} \,  \sigma^-_i \, , \\
	F^\text{inh}_0 & = \sum_{i=1}^N z_i^{-1} \sigma^+_i \,  k_{i+1} \cdots k_N \, , 
	\end{aligned}
	\qquad\quad
	K^\text{inh}_0 = k_1^{-1} \cdots k_N^{-1} = t^{-S^z} \, .
\checkedMma
\end{equation}
Then the relations of $\widehat{\mathfrak{U}}$ hold. The superscript is for `inhomogeneous', see~\textsection\ref{s:spin_chains}.

\begin{definition} \label{p:hw_def} 
In the affine case one can make different choices of Borel subalgebra. Note that even for $N=1$ the usual highest-weight condition $E_0^\text{inh} \, \ket{\Psi} = E_1^\text{sp} \, \ket{\Psi} = 0$ implies $\ket{\Psi} = 0$, cf.~e.g.~\cite[\textsection2.3]{Jim_92}. We will take a \emph{pseudo highest-weight vector} for $\widehat{\mathfrak{U}}$ to mean a vector that is an eigenvector of both $K_b$ and annihilated by $E_0$ and $F_1$ (rather than both $E_b$). This property is called `pseudo highest weight' in \cite{CP_94} and `\textit{l}-highest weight' in \cite{Nak_01}. It is the usual notion for quantum-integrable spin chains: both $E_0^\text{inh}$ and $F_1^\text{sp}$ flip one spin up, mapping $\mathcal{H}_M$ to $\mathcal{H}_{M-1}$, while $F_0^\text{inh}$ and $E_1^\text{sp}$ act from $\mathcal{H}_M$ to $\mathcal{H}_{M+1}$.
\end{definition}

For $N=1$ \eqref{eq:affinisation_z} is the \emph{evaluation representation} of $\widehat{\mathfrak{U}}$ on $V(z_1) = V \otimes \mathbb{C}[z_1^{\pm 1}]$. For $N=2$ we get a tensor product of two such modules; for generic values of the parameters it is irreducible, and $V(z_1) \otimes V(z_2)$ and $V(z_2) \otimes V(z_1)$ are isomorphic as $\widehat{\mathfrak{U}}$-irreps. Thus there exists an intertwiner $\check{R}(z_1,z_2) \colon V(z_1) \otimes V(z_2) \longrightarrow V(z_2) \otimes V(z_1)$ that is generically invertible. Following Jimbo~\cite{Jim_85,Jim_86a} one can write $\check{R}$ as a linear combination of $\mathfrak{U}$-invariants and determine the coefficients (up to a common normalisation) from the intertwining property. The result only depends on the ratio $u=z_1/z_2$: this is the `difference property' in multiplicative notation. This gives Jimbo's quantum-affine $\mathfrak{sl}_2$ \textit{R}-matrix (in the `homogeneous grading', cf.~\textsection5.4 in \cite{JM_95}) from~\eqref{eq:R_mat}:
\begin{equation} \label{eq:baxterisation} \checkedMma
	\check{R}(u) = t^{1/2} \, \frac{u \, T^\text{sp} - T^{\text{sp}\,-1}}{t \, u - 1} = f(u)\, T^\text{sp} + g(u) = 1 - f(u) \, e^\text{sp} \, ,
\end{equation}
where as in \textsection\ref{s:intro} we switch from the rational functions~$a,b$ defined in \eqref{eq:a,b} to
\begin{equation} \label{eq:f,g}
	f(u) = t^{1/2} \, \frac{u-1}{t\,u-1} = \frac{1}{a(u)} \, , \quad\qquad g(u) = \frac{t-1}{t\,u-1} = -\frac{b(u)}{a(u)} \, ,
\end{equation}
obeying $t^{1/2} \mspace{1mu} f(u) + g(u) = t^{-1/2} \mspace{1mu} f(u) + u \, g(u) = 1$. The usual symmetry property of the \textit{R}-matrix (from the `principal gradation') is broken, $P\,\check{R}(u) \, P \neq \check{R}(u)$ for $t\neq 1$. Note that the final expression in \eqref{eq:baxterisation}, together with \eqref{eq:TL_spin}, implies the relation \eqref{eq:R'(1)}. Since $\det \check{R}(u) = (t-u)/(t\,u-1)$ the \textit{R}-matrix is invertible unless $u = t$ ($u=t^{-1}$), where it becomes proportional to the \textsf{q}-(anti)symmetriser.

The Hecke-algebra relations~\eqref{eq:Hecke} guarantee~\cite{Jon_90} that on $V \otimes V \otimes V$ the \textit{R}-matrix obeys the Yang--Baxter equation in the braid-like form:
\begin{subequations} \label{eq:R_relations}
\begin{gather} 
	\check{R}_{12}(u/v) \, \check{R}_{23}(u) \, \check{R}_{12}(v) = \check{R}_{23}(v) \, \check{R}_{12}(u) \, \check{R}_{23}(u/v) \, , \label{eq:YBE}
\shortintertext{where $\check{R}_{12}(u) = \check{R}(u)\otimes \id$ and $\check{R}_{23}(u) = \id \otimes \, \check{R}(u)$. The normalisation in~\eqref{eq:baxterisation} is chosen such that the (braiding) unitarity and `initial' conditions read}
	\check{R}(u)\,\check{R}(1/u) = \id\otimes\id \, , \qquad \check{R}(1) = \id\otimes\id \, . \label{eq:unitarity_initial}
\end{gather}
\end{subequations}
Together, these properties imply that we can depict the \textit{R}-matrix as in \eqref{eq:R_graphical}, where by unitarity we do not have to distinguish between under- and overcrossings. Setting $w=u\,v$ we may then translate \eqref{eq:R_relations} to
\begin{equation} \label{eq:R_conditions_graphical}
	\tikz[baseline={([yshift=-.5*12pt*0.6]current bounding box.center)},	xscale=0.5,yscale=0.4,font=\footnotesize]{
		\draw[rounded corners=3.5pt,->] (3,0) node[below]{$v$} -- (3,1.5) -- (2,2.5) node[inner sep=1.5pt,fill=white]{$v$} -- (1,3.5) -- (1,4) node[above]{$v$};
		\draw[rounded corners=3.5pt,->] (2,0) node[below]{$u$} -- (2,0.5) -- (1,1.5) -- (1,2) node[inner sep=1.5pt,fill=white]{$u$} -- (1,2.5) -- (2,3.5) -- (2,4) node[above]{$u$};
		\draw[rounded corners=3.5pt,->] (1,0) node[below]{$w$} -- (1,0.5) -- (2,1.5) node[inner sep=1.5pt,fill=white]{$w$} -- (3,2.5) -- (3,4) node[above]{$w$};
	}
	=
	\tikz[baseline={([yshift=-.5*12pt*0.6]current bounding box.center)},	xscale=0.5,yscale=0.4,font=\footnotesize]{
		\draw[rounded corners=3.5pt,->] (3,0) node[below]{$v$} -- (3,0.5) -- (2,1.5) node[inner sep=1.5pt,fill=white]{$v$} -- (1,2.5) -- (1,4) node[above]{$v$};
		\draw[rounded corners=3.5pt,->] (2,0) node[below]{$u$} -- (2,0.5) -- (3,1.5) -- (3,2) node[inner sep=1.5pt,fill=white]{$u$} -- (3,2.5) -- (2,3.5) -- (2,4) node[above]{$u$};
		\draw[rounded corners=3.5pt,->] (1,0) node[below]{$w$} -- (1,1.5) -- (2,2.5) node[inner sep=1.5pt,fill=white]{$w$} -- (3,3.5) -- (3,4) node[above]{$w$};
	}
	\, , \qquad
	\tikz[baseline={([yshift=-.5*12pt*0.6]current bounding box.center)},	xscale=0.5,yscale=0.4,font=\footnotesize]{
		\draw[rounded corners=3.5pt,->] 
		(2,0) node[below]{$v$} -- (2,0.5) -- (1,1.5) -- (1,2) node[inner sep=1.5pt,fill=white]{$v$} -- (1,2.5) -- (2,3.5) -- (2,4) node[above]{$v$};
		\draw[rounded corners=3.5pt,->] (1,0) node[below]{$w$} -- (1,0.5) -- (2,1.5) -- (2,2) node[inner sep=1.5pt,fill=white]{$w$} -- (2,2.5) -- (1,3.5) -- (1,4) node[above]{$w$};
	}
	=
	\tikz[baseline={([yshift=-.5*12pt*0.6]current bounding box.center)},	xscale=0.5,yscale=0.4,font=\footnotesize]{
		\draw[->] (1,0) node[below]{$v$} -- (1,4) node[above]{$v$};
		\draw[->] (2,0) node[below]{$w$} -- (2,4) node[above]{$w$};
	}
	, \qquad 
	\tikz[baseline={([yshift=-.5*12pt*0.6]current bounding box.center)},	xscale=0.5,yscale=0.4,font=\footnotesize]{
		\draw[rounded corners=3.5pt,->] 
		(2,0) node[below]{$u$} -- (2,0.5) -- (1,1.5) -- (1,2) node[above]{$u$};
		\draw[rounded corners=3.5pt,->] (1,0) node[below]{$u$} -- (1,0.5) -- (2,1.5) -- (2,2) node[above]{$u$};
	}
	=
	\tikz[baseline={([yshift=-.5*12pt*0.6]current bounding box.center)},	xscale=0.5,yscale=0.4,font=\footnotesize]{
		\draw[->] (1,0) node[below]{$u$} -- (1,2) node[above]{$u$};
		\draw[->] (2,0) node[below]{$u$} -- (2,2) node[above]{$u$};
	}
\end{equation}

The \textit{R}-matrix is key for the second characterisation. 
\begin{definition}[Faddeev--Reshetikhin--Takhtajan presentation]
Consider an auxiliary space $V_a \cong V$ with spectral (affine) parameter~$u$ and a monodromy matrix  
\begin{equation} \label{eq:ABCD}
	L_a(u) = \begin{pmatrix}
	A(u) & B(u) \\ C(u) & D(u)
	\end{pmatrix}_{\!a}
\end{equation}
which should be understood as a matrix on $V_a$ with noncommutative entries.
Introduce another copy $V_b \cong V$ of the auxiliary space. Then $\widehat{\mathfrak{U}}$ is the unital associative algebra generated by the four operators in \eqref{eq:ABCD}, or more precisely by the operator-valued coefficients (`modes') in expansions as a formal power series in $u^{\pm1}$, with defining relations expressed on $V_a\otimes V_b$ as
\begin{equation} \label{eq:RLL}
	R_{ab}(u/v) \, L_a(u)\,L_b(v) = L_b(v)\,L_a(u)\,R_{ab}(u/v) \, , \qquad R(u) \coloneqq P \, \check{R}(u) \, .
\end{equation}
The coproduct is $L_a(u) \mapsto L_a(u) \otimes L_a(u) = L_{a1}(u) \, L_{a2}(u)$, and the antipode the inverse of \eqref{eq:ABCD} as a $2\times 2$ matrix with noncommutative entries.
\end{definition}
The centre of $\widehat{\mathfrak{U}}$ is generated by the quantum determinant of the monodromy matrix, which is obtained by fusion in the auxiliary space. Indeed, remove the denominators of $R$ in \eqref{eq:RLL} and take $u=t^{-1}\,v$ to get $(t^{1/2} - t^{-1/2}) \, e_{ab}^\text{sp}$, i.e.\ essentially the \textsf{q}-antisymmetriser on $V_a\otimes V_b$, times
\begin{equation} \label{eq:qdet_ABCD}
	\begin{aligned}
	\qdet_a L_a(u) 
	& = A(t \, u) \, D(u) - t^{1/2} \, B(t\,u) \, C(u) = D(t \, u) \, A(u) - t^{-1/2} \, C(t\,u) \, B(u) \! \\
	& = A(u) \, D(t \, u) - t^{1/2} \, C(u) \, B(t \, u) = D(u) \, A(t \, u) - t^{-1/2} \, B(u) \, C(t\,u) \, .\!
	\end{aligned}
\end{equation}

Representations of $\widehat{\mathfrak{U}}$ can be directly constructed from the \textit{R}-matrix, which itself obeys \eqref{eq:RLL} for $N=1$. Repeated application of the (opposite) coproduct yields a (`global') representation on $\mathcal{H}$:\,%
\footnote{\ Note that the order in $R(u) = P \,\check{R}(u)$, see \eqref{eq:RLL}, and the order of the \textit{R}-matrices in \eqref{eq:L_inh} are reversed compared to \cite{BG+_93,JK+_95b,Ugl_95u}; cf.~Footnote~\ref{fn:coprod} on p.\,\pageref{fn:coprod}. The normalisation of \eqref{eq:L_inh} differs from \cite{BG+_93,Ugl_95u} to avoid a pole at $u=1$, cf.~\eqref{eq:xxz}.}
\begin{equation} \label{eq:L_inh}
	\begin{aligned}
	L_a^\text{inh}(u;\vect{z}) \coloneqq \ & R_{aN}(u/z_N) \cdots R_{a2}(u/z_2) \, R_{a1}(u/z_1) \\
	= \ & P_{(a12\cdots N)} \, \check{R}_{N-1,N}(u/z_N) \cdots \check{R}_{12}(u/z_2) \, \check{R}_{a1}(u/z_1) \, .
	\end{aligned}
\end{equation}
The Drinfeld--Jimbo presentation by \eqref{eq:Uqsl2_spin}--\eqref{eq:affinisation_z} is recovered by expanding in $u^{\pm 1}$, as we show \textsection\ref{s:app_presentations}. 
In particular a pseudo highest-weight vector now is as in \eqref{eq:intro_ABCD_on_shell}. The quantum determinant is multiplicative, yielding a multiple of the identity
\begin{equation} \label{eq:qdet}
	\qdet_a L_a^\text{inh}(u;\vect{z})  = \prod_{i=1}^N \qdet_a R_{ai}(u/z_i) = t^{N/2} \prod_{i=1}^N \frac{u-z_i}{t \, u - z_i} \, .
\end{equation}

\subsubsection{Integrable spin chains} \label{s:spin_chains} The \textit{RLL}-relations~\eqref{eq:RLL} yield a one-parameter family of commuting operators via the transfer matrix
\begin{equation*}
	\tau(u) \coloneqq \tr_a L_a(u) = A(u) + D(u) \, , \qquad\quad \bigl[\tau(u),\tau(v)\bigr]=0 \, .
\end{equation*}
This is the generating function for an abelian subalgebra of $\widehat{\mathfrak{U}}$, sometimes called the Bethe subalgebra, whose elements are commuting charges of quantum-integrable models. Consider the representation \eqref{eq:L_inh} in the `homogeneous limit' $\tau^\textsc{xxz}(u) \coloneqq \mathrm{ev}_1 \, \tau^\text{inh}(u;\vect{z})$, where $\mathrm{ev}_1 \colon z_j \longmapsto 1$ for all $j$. This is the transfer matrix of the six-vertex model. It generates the symmetries of the (spin-1/2) Heisenberg \textsc{xxz} spin chain. Indeed,
\begin{equation} \label{eq:xxz_transl}
	\tau^\textsc{xxz}(1) = \tr_a P_{(a12\cdots N)} = P_{(12\cdots N)}
\checkedMma
\end{equation}
is the (right) translation operator, while the logarithmic derivative
\begin{equation} \label{eq:xxz}
	\begin{gathered}
	H^\textsc{xxz} = -(t^{1/2}-t^{-1/2}) \, \frac{\partial}{\partial \mspace{1mu}u}\bigg|_{u=1} \!\! \log \tau^\textsc{xxz}(u) = -\! \sum_{i \in \mathbb{Z}_N} \!\! h_{i,i+1}^\textsc{xxz}  = \sum_{i=1}^{N-1} e^\text{sp}_i + P_{(1\To N)}^{-1} \, e^\text{sp}_1 \, P_{(1\To N)} \, , \\
	h_{i,i+1}^\textsc{xxz} \coloneqq \frac{[2]}{2} \, \frac{\sigma^z_i \, \sigma^z_{i+1} - 1}{2} + \sigma^+_i \sigma^-_i + \sigma^-_i \sigma^+_i	\, ,
	\end{gathered}
\checkedMma
\end{equation}
is the spin-chain Hamiltonian with anisotropy parameter~$\Delta = [2]/2 = (t^{1/2}+t^{-1/2})/2$. Here we used \cite{TL_71}
\begin{equation} \label{eq:TL_vs_XXZ}
	-(t^{1/2}-t^{-1/2}) \, \check{R}'_{i,i+1}(1) = e^\text{sp}_i = -h_{i,i+1}^\textsc{xxz} -\frac{t^{1/2} - t^{-1/2}}{2} \, \frac{\sigma^z_i - \sigma^z_{i+1}}{2} \, .
\checkedMma
\end{equation} 
The periodic boundary conditions, visible in the term $P_{(1\To N)}^{-1} \, e^\text{sp}_1 \, P_{(1\To N)}$ in \eqref{eq:xxz}, break the $\mathfrak{U}$-invariance of the monodromy matrix. We stress that, although $\widehat{\mathfrak{U}}$ plays an important role in its exact solution, this Hamiltonian does \emph{not} have quantum-affine \emph{symmetries}: this is the whole point of the algebraic Bethe ansatz, where $B^\textsc{xxz}(u)$ is used to construct the model's (highest-weight) eigenvectors; its action changes the energy.

There are also Heisenberg-type spin chains for which the $\mathfrak{U}$-symmetry is preserved. One of these is the \emph{open} Temperley--Lieb spin chain~\cite{PS_90}, with Hamiltonian
\begin{equation*}
	\sum_{i=1}^{N-1} e^\text{sp}_i = -\! \sum_{i=1}^{N-1} \! h_{i,i+1}^\textsc{xxz} - \frac{t^{1/2} - t^{-1/2}}{2} \, \frac{\sigma^z_1 - \sigma^z_N}{2} \, .
\end{equation*}
The final term can be interpreted as carefully adjusted boundary magnetic fields. This Hamiltonian can be obtained from a double-row transfer matrix~\cite{Skl_88,KS_91}.

In general \eqref{eq:L_inh} yields an `inhomogeneous' version of the Heisenberg spin chain, with \emph{inhomogeneities}~$z_j$. These inhomogeneities are natural from the six-vertex model's perspective. Although they are often considered a mere computational tool for the spin chain one can view the inhomogeneous Heisenberg spin chain as a bona fide spin chain in its own right. It has $N$ commuting `inhomogeneous translation operators' $G^\text{inh}_i = \tau^\text{inh}(z_i;\vect{z})$, including $G^\text{inh}_1 = P_{(1\cdots N)} \, \widetilde{G}$ where the latter is as in \eqref{eq:q-translation}, that obey $G^\text{inh}_1 \cdots G^\text{inh}_N = 1$. Interestingly, the Hamiltonian at $u=z_1$ features \emph{long-range} interactions, with $N-1$ terms that are very similar to the (unevaluated) summands of~\eqref{eq:ham_right} along with a truly cyclic term. We will return to this connection in the future.

\section{Derivations} \label{s:proofs}

\noindent With these preliminaries in place we are all set to combine the polynomial and spin sides to get the setting in which the \textsf{q}-deformed Haldane--Shastry model is best understood.

\subsection{Spin-Ruijsenaars model} \label{s:spin_Ruijsenaars} 
Consider the tensor product\,---\,over $\mathbb{C}$ but see \eqref{eq:phys_space} and \eqref{eq:phys_space_alt}\,---\,of $\mathcal{H} = V^{\otimes N}$ and the ring of polynomials in the coordinates,
\begin{equation} \label{eq:big_space}
	\mathcal{H}[\vect{z}] \coloneqq \mathcal{H}[z_1,\To,z_N] = \mathcal{H} \otimes \mathbb{C}[z_1,\To,z_N] \cong V[z]^{\,\otimes N} \, .
\end{equation}
The physical picture is that of $N$ particles with spin~$1/2$ and coordinates $z_j$. 

More mathematically the picture seems to be that the parameter~$z$ of the evaluation module $V(z) = V \otimes \mathbb{C}[z,z^{-1}]$ is reinterpreted as a coordinate (cf.\ the definition of loop algebras). We only consider the positive modes $V[z] \subset V(z)$, which will eventually be justified by the evaluation~\eqref{eq:ev} of the $z_j$ for the spin chain; it is also in accordance with \textsection\ref{s:Macdonald}. We should also point out that in this section $\mathcal{H}$ may be replaced by the Hecke algebra itself, viewed as a $\mathfrak{H}_N$-module in the obvious way. The present setting is recovered when picking the spin representation, the $\mathfrak{gl}_n$-case arises if instead $V = \mathbb{C}^n$, and the (spinless) setting from \textsection\ref{s:Macdonald}--\ref{s:Ruijsenaars} corresponds to $V = \mathbb{C}$ the trivial representation $T_i \mapsto t^{1/2}$. We will elaborate on this elsewhere.

In \cite{BG+_93} it was shown that, analogously to the spinless case from \textsection\ref{s:Macdonald} the big vector space~\eqref{eq:big_space} has a `physical' subspace on which the action of the centre of the affine Hecke algebra gives rise to a spin-version of the Ruijsenaars model. (The connection with the latter was made more explicit in \cite{Kon_96}.) Importantly, as we will see, this model has quantum-affine symmetry.

\subsubsection{Physical (\textsf{q}-bosonic) space} \label{s:physical_space} We want to think of elements of \eqref{eq:big_space} as indistinguishable particles with spin~$1/2$ and coordinates $z_j$. Ordinary bosons (fermions) are defined by their (anti)symmetry under the exchange operators, $s_i \, P_{i,i+1} = \pm 1$. In the form $P_{i,i+1} = \pm s_i$  this is a relation between the spin and polynomial representation of the symmetric group $\mathfrak{S}_N$ acting on either factor of \eqref{eq:big_space}. This is the relevant setting for the spin-Calogero--Sutherland model and isotropic Haldane--Shastry, but breaks the structure from \textsection\ref{s:setup}. The appropriate generalisation to the \textsf{q}-case can be described in terms of $\mathfrak{H}_N$ and in terms of $\mathfrak{S}_N$. We begin with the former characterisation.

Morally the \emph{\textsf{q}-bosonic Fock space} or \emph{physical space}, which we will denote by $\widetilde{\mathcal{H}}$, is the subspace of \eqref{eq:big_space} on which the spin and polynomial representations of the Hecke algebra coincide~\cite{BG+_93}. This couples the two Hecke-representations from \textsection\ref{s:setup}. To motivate the precise definition consider a vector $\ket{\widetilde{\Psi}} = \ket{\widetilde{\Psi}(\vect{z})} \in \mathcal{H}[\vect{z}]$ on which $T_i^\text{pol} \, \ket{\widetilde{\Psi}} = T_i^\text{sp} \, \ket{\widetilde{\Psi}}$ for all~$i$. (We retain the notation $T_i^\text{sp},T_i^\text{pol}$ for the actions from \textsection\ref{s:Hecke_AHA} and \textsection\ref{s:Hecke_TL_Uqsl} extended to $\mathcal{H}[\vect{z}]$ by acting by the identity on the other factor.) To ensure that this extends from the generators to all of $\mathfrak{H}_N$ we more precisely have to ask for the two actions to \emph{anti}coincide. Indeed, in
\begin{equation} \label{eq:Hecke_anticoincide}
	T_i^\text{pol} \, T_j^\text{pol} \, \ket{\widetilde{\Psi}} = T_i^\text{pol} \, T_j^\text{sp} \, \ket{\widetilde{\Psi}} = T_j^\text{sp} \, T_i^\text{pol} \, \ket{\widetilde{\Psi}} = T_j^\text{sp} \, T_i^\text{sp} \, \ket{\widetilde{\Psi}}
\end{equation}
the order of the generators is reversed, so we should treat one representation as a \emph{left} and the other as a \emph{right} action. (See \textsection\ref{s:app_stochastic_twist} for another incarnation of this.) In more mathematical language:

\begin{definition}[\cite{BG+_93}]
The \emph{physical space} is a tensor product over the Hecke algebra
\begin{equation} \label{eq:phys_space}
	\begin{aligned}
	\widetilde{\mathcal{H}} \coloneqq \, {} & \mathcal{H} \! \underset{\mathfrak{H}_N}{\otimes} \! \mathbb{C}[\vect{z}] \, = \, \raisebox{.3em}{$\mathcal{H}[\vect{z}]$} \left/ \raisebox{-.3em}{$\mathcal{N}$} \right. \qquad\qquad \mathcal{N} \coloneqq \sum_{i=1}^{N-1} \mathrm{im}\bigl(T_i^\text{sp} - T_i^\text{pol}\bigr) \subset \mathcal{H}[\vect{z}] \\
	\cong \, {} & \mathcal{B} \, \coloneqq \bigcap_{i=1}^{\smash{N-1}} \ker\bigl(T_i^\text{sp} - T_i^\text{pol}\bigr) 
	= \mathcal{H}[\vect{z}]^{\,\mathfrak{H}_N} \, .
	\end{aligned}
\end{equation}
The first line is the quotient of \eqref{eq:big_space} by the (vector, not direct) sum of the images $\mathrm{im}(T_i^\text{sp} - T_i^\text{pol})$. In the second line we instead view $\widetilde{\mathcal{H}}$ as a subspace of \eqref{eq:big_space}. 
\end{definition}

These two descriptions are isomorphic (as vector spaces; neither is an $\mathfrak{H}_N$-module). Before we explain the final equality in \eqref{eq:phys_space}, describing $\mathcal{B}$ as the $\mathfrak{H}_N$-invariants in $\mathcal{H}[\vect{z}]$, let us demonstrate that $\widetilde{\mathcal{H}} \cong \mathcal{B}$.

\begin{proof}[Proof of isomorphism in \eqref{eq:phys_space}] 
First consider the case $N=2$. From Table~\ref{tb:q-bosons/fermions} we read off that
\begin{subequations}
	\begin{align*}
	\mathrm{im}(T^\text{sp}_1 - T^\text{pol}_1) & = \mathrm{Sym}^2_{t}(V)\otimes (t\,z_1 - z_2) \, \mathbb{C}[z_1,z_2]^{\mathfrak{S}_2} \, \oplus \, \Lambda^{\!2}_{t}(V) \otimes \mathbb{C}[z_1,z_2]^{\mathfrak{S}_2} \, ,
\intertext{and that killing this subspace yields the \textsf{q}-bosonic space}
	\widetilde{\mathcal{H}}^{(N=2)} & = \, \raisebox{.3em}{$V^{\otimes 2} \otimes \mathbb{C}[z_1,z_2]$} \left/ \raisebox{-.3em}{$\mathrm{im}(T^\text{sp}_1 - T^\text{pol}_1)$} \right. \\
	& \cong \ker(T^\text{sp}_1 - T^\text{pol}_1) \\
	& = \mathrm{Sym}^2_{t}(V) \otimes \mathbb{C}[z_1,z_2]^{\mathfrak{S}_2} \, \oplus \, \Lambda^{\!2}_{t}(V) \otimes (t\,z_1 - z_2) \, \mathbb{C}[z_1,z_2]^{\mathfrak{S}_2} \, .
	\end{align*}
\end{subequations}

For general $N$ consider a vector in the complement of $\mathcal{B}$ in $\mathcal{H}[\vect{z}]$. This means there is some $1\leq i \leq N-1$ for which that vector does not lie in the kernel of $T^\text{sp}_i - T^\text{pol}_i$. By the preceding argument it therefore lies in the image of $T^\text{sp}_i - T^\text{pol}_i$, i.e.\ it belongs to $\mathcal{N}$. Hence $\mathcal{H}[\vect{z}] = \mathcal{B} \oplus \mathcal{N}$, which implies the isomorphism.
\begin{table}[h]
	\begin{tabular}{l ccc} \toprule
		subspace & $T_i^\text{sp} - T_i^\text{pol}$ & & $T_i^\text{sp} + T_i^{\text{pol}\,-1} = T_i^{\text{sp}\,-1} + T_i^\text{pol}$ \\ \midrule
		$\mathrm{Sym}^2_{t}(V)\otimes \mathbb{C}[z_i,z_{i+1}]^{\mathfrak{S}_2}$ & $0$ & & $[2]$ \\
		$\mathrm{Sym}^2_{t}(V)\otimes (t\,z_i - z_{i+1}) \, \mathbb{C}[z_i,z_{i+1}]^{\mathfrak{S}_2}$ & $[2]$ & & $0$ \\
		$\mspace{23mu}\Lambda^{\!2}_{t}(V) \otimes \mathbb{C}[z_i,z_{i+1}]^{\mathfrak{S}_2}$ & $\mathllap{-}[2]$ & & $0$ \\
		$\mspace{23mu}\Lambda^{\!2}_{t}(V) \otimes (t\,z_i - z_{i+1}) \, \mathbb{C}[z_i,z_{i+1}]^{\mathfrak{S}_2}$ & $0$ & & $\mathllap{-}[2]$ \\
		\bottomrule
		\hline
	\end{tabular}
	\caption{The eigenvalues of $T_i^\text{sp} \mp T_i^{\text{pol}\,\pm 1}$ on the four direct summands of $V^{\otimes 2} \otimes \mathbb{C}[z_i,z_{i+1}] \cong V[z]^{\,\otimes 2}$ decomposed according to \eqref{eq:T_pol_eigenspaces} and \eqref{eq:T_sp_eigenspaces}.}
	\label{tb:q-bosons/fermions}
\end{table}
\end{proof}

The final equality in \eqref{eq:phys_space} gives a more intrinsic characterisation of $\mathcal{B} \subset \mathcal{H}[\vect{z}]$. Rather than coupling two commuting Hecke actions define \cite{GRV_94} (cf.~\cite{TU_98} and the references therein)
\begin{equation} \label{eq:Hecke_tot}
	T_i^\text{tot} \coloneqq  s_i \, (T_i^\text{sp} - T_i^\text{pol}) + t^{1/2} \, .
\end{equation}
This generates a diagonal action of $\mathfrak{H}_N$ on $\mathcal{H}[\vect{z}]$ that \textsf{q}-deforms $s_i \, P_{i,i+1}$ in a nontrivial way. (The obvious guess $T_i^{\text{sp}\,\pm1} \, T_i^{\text{pol}\,\mp1}$ fails the Hecke condition.) It clearly commutes with the action of $\mathfrak{U}$ on the spin factor. The presence of $s_i$ makes direct verification of the Hecke-algebra relations~\eqref{eq:Hecke} rather tedious. For the Hecke condition one can use $s_i \, T_i^\text{pol} \, s_i = [2] \, s_i - T_i^{\text{pol}\mspace{1mu}-1}$. The explicit decomposition of $V^{\otimes 2} \otimes \mathbb{C}[z_i,z_{i+1}] \cong V[z]^{\,\otimes 2}$ into $T_i^\text{tot}$-eigenspaces is given in Table~\ref{tb:total_Hecke}. Comparing this with Table~\ref{tb:q-bosons/fermions} shows that $\ker(T_i^\text{sp} - T_i^\text{pol}) = \ker(T_i^\text{tot} - t^{1/2})$. Since $T_i^\text{tot} = t^{1/2}$ is as close as it gets to invariance under the Hecke algebra given our normalisation of the Hecke generators, with the Hecke condition as in \eqref{eq:Hecke}, we identify 
\begin{equation} \label{eq:phys_space_via_Ttot}
	\mathcal{B} = \bigcap_{i=1}^{N-1} \ker(T_i^\text{sp} - T_i^\text{pol}) = \bigcap_{i=1}^{N-1} \ker(T_i^\text{tot} - t^{1/2}) \eqqcolon \mathcal{H}[\vect{z}]^{\,\mathfrak{H}_N} \, .
\end{equation}
This allows us to describe $\mathcal{B}$ as the totally \textsf{q}-symmetric subspace, obtained by projecting with the total \textsf{q}-symmetriser $\Pi_+^\text{tot}$, cf.~\eqref{eq:Hecke_projectors}.

\begin{table}[h]
	\begin{tabular}{l cc} \toprule
		subspace & & $T_i^\text{tot}$ \\ \midrule
		$\mathrm{Sym}^2_{t}(V)\otimes \mathbb{C}[z_i,z_{i+1}]^{\mathfrak{S}_2}$ & & $t^{1/2}$ \\
		$\mathrm{Sym}^2_{t}(V)\otimes (t\,z_{i+1} - z_i) \, \mathbb{C}[z_i,z_{i+1}]^{\mathfrak{S}_2}$ & & $-t^{-1/2}$ \\
		$\mspace{23mu}\Lambda^{\!2}_{t}(V) \otimes \mathbb{C}[z_i,z_{i+1}]^{\mathfrak{S}_2}$ & & $-t^{-1/2}$ \\
		$\mspace{23mu}\Lambda^{\!2}_{t}(V) \otimes  (t\,z_i - z_{i+1}) \, \mathbb{C}[z_i,z_{i+1}]^{\mathfrak{S}_2}$ & & $t^{1/2}$ \\
		\bottomrule
		\hline
	\end{tabular}
	\caption{Eigenvalues of \eqref{eq:Hecke_tot} on four direct summands of $V^{\otimes 2} \otimes \mathbb{C}[z_i,z_{i+1}] \cong V[z]^{\,\otimes 2}$. The first, third and fourth rows are immediate from Table~\ref{tb:q-bosons/fermions}, while the second row requires a computation. This time both decompositions $\mathbb{C}[z_i,z_{i+1}] \cong \mathbb{C}[z_i,z_{i+1}]^{\mathfrak{S}_2} \oplus (t\,z_i - z_{i+1}) \, \mathbb{C}[z_i,z_{i+1}]^{\mathfrak{S}_2} \cong \mathbb{C}[z_i,z_{i+1}]^{\mathfrak{S}_2} \oplus (t\,z_{i+1} - z_i) \, \mathbb{C}[z_i,z_{i+1}]^{\mathfrak{S}_2}$ appear.}
	\label{tb:total_Hecke}
\end{table}

For generic $t \in \mathbb{C}^\times$ we have $\mathfrak{H}_N \cong \mathbb{C}[\mathfrak{S}_N]$ (as algebras), where the latter is the group algebra of the symmetric group. Accordingly the physical space also admits a characterisation in terms of a ($t$-dependent) representation of $\mathfrak{S}_N$. Using the functions $f,g$ from \eqref{eq:f,g} and the `Baxterisation' formula~\eqref{eq:baxterisation} we can recast
\begin{equation} \label{eq:phys_space_equivalence}
	\begin{aligned}
	T_i^\text{sp} - T_i^\text{pol} & = T_i^\text{sp} - (a_{i,i+1} \, s_i + b_{i,i+1}) \\
	& = a_{i,i+1} \, \bigl(f_{i,i+1} \, T_i^\text{sp} + g_{i,i+1} - s_i \bigr) \\
	& = a_{i,i+1} \, \bigl(\check{R}_{i,i+1}(z_i/z_{i+1}) - s_i \bigr) \\
	& = a_{i,i+1} \, s_i \, (s^\text{tot}_i - 1) \, ,
	\end{aligned}
\end{equation}
where in the last line we defined (cf.~\textsection10.2 in \cite{Gau_83}), Prop.~6.2 in \cite{FR_92}) 
\begin{equation} \label{eq:s_tot}
	s^\text{tot}_i \coloneqq s_i \, \check{R}_{i,i+1}(z_i/z_{i+1}) \, .
\end{equation}
Thanks to \eqref{eq:R_relations} the latter obeys the braid relations and is an involution, $(s_i^\text{tot})^2 = 1$, yielding a representation of $\mathfrak{S}_N$ on $\mathcal{H}[\vect{z}]$ that depends on~$t$ and deforms $s_i \, P_{i,i+1}$ too. We'll write $s^\text{tot}_w$ for the operator representing $w \in \mathfrak{S}_N$ in this way. This gives 
\begin{proposition}
The physical space may also be characterised as
\begin{equation} \label{eq:phys_space_alt}
	\normalfont
	\begin{aligned}
	\widetilde{\mathcal{H}} & = \, \mathcal{H} \! \underset{\mathfrak{S}_N}{\otimes} \! \mathbb{C}[\vect{z}] = \, \raisebox{.3em}{$\mathcal{H}[\vect{z}]$} \left/ \raisebox{-.3em}{$\mathcal{N}$} \right. \qquad\qquad\qquad \mathcal{N} = \sum_{i=1}^{N-1} \mathrm{im}\bigl(s^\text{tot}_i - 1\bigr) \\
	& \cong \, \mathcal{B} \, = \, \mathcal{H}[\vect{z}]^{\,\mathfrak{S}_N} = \bigcap_{i=1}^{N-1} \ker(s^\text{tot}_i - 1) \, .
	\end{aligned}
\end{equation}
\end{proposition}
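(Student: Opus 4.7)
The plan rests on the factorisation \eqref{eq:phys_space_equivalence},
\[
T_i^\text{sp} - T_i^\text{pol} = a_{i,i+1} \, s_i \, (s_i^\text{tot} - 1) \, ,
\]
which will let us transfer the characterisation \eqref{eq:phys_space} of $\widetilde{\mathcal{H}}$ in terms of $\mathfrak{H}_N$ to the characterisation \eqref{eq:phys_space_alt} in terms of $\mathfrak{S}_N$. First I would check that the $s_i^\text{tot}$ give an honest action of $\mathfrak{S}_N$: the involution $(s_i^\text{tot})^2 = 1$ reduces to the braided unitarity \eqref{eq:unitarity_initial} combined with the elementary intertwining $s_i \, \check{R}_{i,i+1}(z_i/z_{i+1}) \, s_i = \check{R}_{i,i+1}(z_{i+1}/z_i)$; the braid relation $s_i^\text{tot} \, s_{i+1}^\text{tot} \, s_i^\text{tot} = s_{i+1}^\text{tot} \, s_i^\text{tot} \, s_{i+1}^\text{tot}$ reduces, after moving bare permutations to one side, to the Yang--Baxter equation \eqref{eq:YBE}; far-commutativity is trivial. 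Hence $w \mapsto s_w^\text{tot}$ defines a representation of $\mathfrak{S}_N$, diagonally coupling coordinate permutations to the $\vect{z}$-dependent spin twist $\check{R}_{i,i+1}$.

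With this in hand, the factorisation directly gives the inclusions $\ker(s_i^\text{tot} - 1) \subseteq \ker(T_i^\text{sp} - T_i^\text{pol})$ and $\mathrm{im}(T_i^\text{sp} - T_i^\text{pol}) \subseteq \mathrm{im}(s_i^\text{tot} - 1)$. For the reverse inclusions I would cancel the prefactor $a_{i,i+1} \, s_i$: it acts invertibly on $\mathcal{H}$-valued rational functions of $\vect{z}$, since $a(u)$ is nowhere zero generically and $s_i$ is a permutation. Combined with the descriptions of $\widetilde{\mathcal{H}}$ from \eqref{eq:phys_space}, this yields both $\bigcap_i \ker(s_i^\text{tot} - 1) = \mathcal{B} \cong \widetilde{\mathcal{H}}$ and $\mathcal{H}[\vect{z}]/\mathcal{N} = \widetilde{\mathcal{H}}$, as required.

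The main obstacle is that $s_i^\text{tot}$ does not literally preserve $\mathcal{H}[\vect{z}]$: it inherits the pole of $\check{R}(u)$ at $t \, u = 1$, so a~priori $\mathrm{im}(s_i^\text{tot} - 1)$ need not sit inside $\mathcal{H}[\vect{z}]$. The cleanest way to handle this is to carry out the cancellation argument in the localisation $\mathcal{H}(\vect{z}) \supset \mathcal{H}[\vect{z}]$, and then to observe that the full combination $a_{i,i+1} \, s_i \, (s_i^\text{tot} - 1) = T_i^\text{sp} - T_i^\text{pol}$ lives on the nose in $\mathcal{H}[\vect{z}]$. It follows that $\mathrm{im}(s_i^\text{tot} - 1) \cap \mathcal{H}[\vect{z}] = \mathrm{im}(T_i^\text{sp} - T_i^\text{pol})$ and that $\bigcap_i \ker(s_i^\text{tot} - 1)$ automatically sits inside $\mathcal{H}[\vect{z}]$; with this interpretation the symbol $\mathcal{H} \otimes_{\mathfrak{S}_N} \mathbb{C}[\vect{z}]$ in \eqref{eq:phys_space_alt} is read as the coinvariants of the diagonal (twisted) $\mathfrak{S}_N$-action.
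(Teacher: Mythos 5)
Your overall route coincides with the paper's: the proposition is stated there without a separate proof, as an immediate consequence of the factorisation \eqref{eq:phys_space_equivalence} together with the Hecke characterisation \eqref{eq:phys_space}, and your verification that the $s_i^\text{tot}$ represent $\mathfrak{S}_N$ (braiding unitarity for the involution, Yang--Baxter for the braid relation) matches the remark preceding the proposition. The kernel half of your argument is correct: since $a_{i,i+1}$ is a nonzero rational function and $s_i$ a bijection, the prefactor $a_{i,i+1}\,s_i$ is injective on rational-function-valued vectors, so $\ker(s_i^\text{tot}-1)=\ker(T_i^\text{sp}-T_i^\text{pol})$ inside $\mathcal{H}[\vect{z}]$ and hence $\bigcap_i\ker(s_i^\text{tot}-1)=\mathcal{B}$. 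You are also right to flag that $s_i^\text{tot}$ does not preserve $\mathcal{H}[\vect{z}]$, a point the paper passes over in silence.

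Your resolution of that last point is, however, where the argument breaks: the claimed identity $\mathrm{im}(s_i^\text{tot}-1)\cap\mathcal{H}[\vect{z}]=\mathrm{im}(T_i^\text{sp}-T_i^\text{pol})$ is false, and with $\mathcal{N}$ defined as $\sum_i\bigl(\mathrm{im}(s_i^\text{tot}-1)\cap\mathcal{H}[\vect{z}]\bigr)$ the quotient is strictly larger than $\widetilde{\mathcal{H}}$. Take $N=2$ and use Table~\ref{tb:q-bosons/fermions}: $(s_1^\text{tot}-1)=s_1\,a_{12}^{-1}\,(T_1^\text{sp}-T_1^\text{pol})$ annihilates the first and fourth summands and, up to nonzero constants, sends $\mathrm{Sym}_{t}^2(V)\otimes(t\,z_1-z_2)\,\mathbb{C}[z_1,z_2]^{\mathfrak{S}_2}$ onto $\mathrm{Sym}_{t}^2(V)\otimes(z_1-z_2)\,\mathbb{C}[z_1,z_2]^{\mathfrak{S}_2}$, while it sends $\Lambda_t^{\!2}(V)\otimes P$ with $P$ symmetric to a multiple of $\Lambda_t^{\!2}(V)\otimes\frac{z_1-z_2}{t\,z_2-z_1}\,P$. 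The latter is polynomial only when $(t\,z_1-z_2)(t\,z_2-z_1)$ divides $P$, so
\[
\mathrm{im}(s_1^\text{tot}-1)\cap\mathcal{H}[\vect{z}] \,=\, \mathrm{Sym}_{t}^2(V)\otimes(z_1-z_2)\,\mathbb{C}[z_1,z_2]^{\mathfrak{S}_2}\ \oplus\ \Lambda_t^{\!2}(V)\otimes(z_1-z_2)(t\,z_1-z_2)\,\mathbb{C}[z_1,z_2]^{\mathfrak{S}_2}\,,
\]
which differs from $\mathrm{im}(T_1^\text{sp}-T_1^\text{pol})$ in both summands; moreover $\mathcal{B}$ plus this subspace only accounts for $(t\,z_1-z_2)\,\mathbb{C}[z_1,z_2]$ in the $\Lambda_t^{\!2}$ sector, so the quotient by your $\mathcal{N}$ is not isomorphic to $\mathcal{B}$. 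The underlying issue is that $a_{i,i+1}\,s_i$ is invertible only after localising, so it identifies the two kernels on the nose but conjugates the two images onto each other only inside $\mathcal{H}(\vect{z})$, not after intersecting with $\mathcal{H}[\vect{z}]$. The quotient statement should therefore be read with $\mathcal{N}=\sum_i\mathrm{im}(T_i^\text{sp}-T_i^\text{pol})$ exactly as in \eqref{eq:phys_space}; the genuinely new content of \eqref{eq:phys_space_alt} is the description of $\mathcal{B}$ as $\bigcap_i\ker(s_i^\text{tot}-1)$, which your kernel argument does establish.
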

\noindent Here $s_i^\text{tot} = 1$ is the `local condition' from the quantum Knizhnik--Zamolodchikov (\textsf{q}KZ) system (reduced \textsf{q}KZ equation) \cite{Smi_86,FR_92,Che_92b}.

Next we turn to the elements of $\widetilde{\mathcal{H}}$. Physical vectors have a rather nice form with respect to the coordinate basis. Since the weight decomposition~\eqref{eq:weight_decomp} is preserved by the $T^\text{sp}_i$ the physical space decomposes into \textit{M}-particle sectors
\begin{equation} \label{eq:phys_space_weight}
	\widetilde{\mathcal{H}} = \bigoplus_{M=0}^N \widetilde{\mathcal{H}}_M \, , \qquad  \widetilde{\mathcal{H}}_M \coloneqq \mathcal{H}_M \! \underset{\mathfrak{H}_N}{\otimes} \! \mathbb{C}[\vect{z}] \cong \mathcal{B} \cap (\mathcal{H}_M \otimes \mathbb{C}[\vect{z}]) \, .
\end{equation}
Vectors in this \textit{M}-particle sector have the explicit form given in Proposition~\ref{prop:intro_physical_vectors} from \textsection\ref{s:intro_phys_space}:

\begin{proposition}[cf.~\cite{RSZ_07}] \label{prop:physical_vectors}
A vector in $\mathcal{H}_M \otimes \mathbb{C}[\vect{z}]$ is physical, i.e.\ lies in $\widetilde{\mathcal{H}}_M \subset \widetilde{\mathcal{H}}$, if{f} with respect to the coordinate basis~\eqref{eq:coord_basis} it has the form\,%
\footnote{\ Note that this `Hecke form' of physical vectors is closely related to the characterisation of the physical space in terms of the Hecke algebra~$\mathfrak{H}_N$. It has an analogue corresponding to the characterisation of $\widetilde{\mathcal{H}}$ via $\mathfrak{S}_N$, which gives an `$\mspace{-1mu}$\textit{R}-matrix form' for physical vectors. Surprisingly, we find that we then have to replace $\Delta_t \, \Delta_{1/t} \rightsquigarrow \Delta_1^2$ in \eqref{eq:intro_polynomial}. We will get back to this in the future. \label{fn:R-mat_form}}
\begin{equation}\label{eq:phys_vector}
	\normalfont
	\begin{aligned} 
		\ket{\widetilde{\Psi}(\vect{z})} \coloneqq {} & \!\! \sum_{i_1 < \cdots < i_M}^N \!\!\!\!\! T_{\{i_1,\To,i_M\}}^\text{pol} \widetilde{\Psi}(\vect{z}) \, \cket{i_1,\To,i_M} \\
		= {} & \!\! \sum_{w \in \mathfrak{S}_N/(\mathfrak{S}_M \times \mathfrak{S}_{N-M})} \!\!\!\!\!\!\!\!\!\!\!\!\!\!\!\!\! T_w^\text{pol} \, \widetilde{\Psi}(\vect{z}) \, \cket{w\,1,\To,w\,M} \, , \\
	\end{aligned} 
	\qquad\quad \widetilde{\Psi}(\vect{z}) \in \mathbb{C}[\vect{z}]^{\mathfrak{S}_M \times \mathfrak{S}_{N-M}} \, ,
\end{equation}
where in the equality we recognise the (Grassmannian) permutations $w=\{i_1,\To,i_M\}$ as representatives for the coset $\mathfrak{S}_N/(\mathfrak{S}_M \times \mathfrak{S}_{N-M})$. In other words, each \textit{M}-particle sector in \eqref{eq:phys_space_weight} has a `polynomial avatar' consisting of polynomials with definite symmetry: we have a bijection
\begin{equation} \label{eq:phys_space_M_particle}
	\begin{aligned}
		\widetilde{\mathcal{H}}_M \ \ & \, \xrightarrow{\quad\sim\quad} \, \mathbb{C}[\vect{z}]_M \coloneqq \, \mathbb{C}[\vect{z}]^{\mathfrak{S}_M \times \mathfrak{S}_{N-M}} \, , \\
		\rotatebox{90}{$\in$} \ \ \ \, & \, \hphantom{\xmapsto{\quad\sim\quad}} \qquad\!\! \rotatebox{90}{$\in$} \\[-.2\baselineskip]
		\ket{\widetilde{\Psi}(\vect{z})} & \, \xmapsto{\hphantom{\quad\sim\quad}} \, \ \widetilde{\Psi}(\vect{z}) \ = \, \cbraket{1,\To,M}{\widetilde{\Psi}(\vect{z})} \, .
	\end{aligned}
\end{equation}
\end{proposition}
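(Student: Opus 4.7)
The plan is to translate the physicality condition into relations among the coefficient polynomials in the expansion $\ket{\Psi} = \sum_{i_1 < \cdots < i_M}^N \Psi(i_1,\To,i_M;\vect{z})\,\cket{i_1,\To,i_M}$ of a weight-$M$ vector, by a case analysis on the action of $T_i^{\mathrm{sp}}$ on the coordinate basis. Reading off \eqref{eq:Hecke_spin}: if positions $i$ and $i+1$ are either both occupied or both empty for $\cket{i_1,\To,i_M}$ (i.e.\ both in or both out of $\{i_1,\To,i_M\}$), then $T_i^{\mathrm{sp}}$ acts as the scalar $t^{1/2}$, so physicality forces $T_i^{\mathrm{pol}}\Psi(i_1,\To,i_M;\vect{z}) = t^{1/2}\,\Psi(i_1,\To,i_M;\vect{z})$; by~\eqref{eq:phys_space_scalar} this means the coefficient is $s_i$-symmetric. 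If instead exactly one of $i, i+1$ lies in $\{i_1,\To,i_M\}$, $T_i^{\mathrm{sp}}$ pairs the two basis vectors related by swapping this single magnon between sites $i$ and $i+1$; collecting coefficients (and using the Hecke condition~\eqref{eq:Hecke_inverse} in one of the two subcases) yields the uniform propagation rule
\begin{equation*}
	\Psi(\ldots,i{+}1,\ldots;\vect{z}) = T_i^{\mathrm{pol}}\,\Psi(\ldots,i,\ldots;\vect{z}) \, .
\end{equation*}

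Starting from the simple component $\widetilde\Psi(\vect{z}) := \Psi(1,\To,M;\vect{z})$, iterating the propagation rule to move the magnon initially at site~$M$ rightward to $i_M$ (which is possible as sites $M+1,\To,i_M$ are empty), then the magnon at $M-1$ to $i_{M-1}$, and so on down to moving the magnon at $1$ to $i_1$, produces
\begin{equation*}
	\Psi(i_1,\To,i_M;\vect{z}) = T_{(i_1,\To,1)}^{\mathrm{pol}} \cdots T_{(i_M,\To,M)}^{\mathrm{pol}}\,\widetilde\Psi(\vect{z}) = T_{\{i_1,\To,i_M\}}^{\mathrm{pol}}\,\widetilde\Psi(\vect{z}) \, ,
\end{equation*}
using~\eqref{eq:perm_grassmannian}. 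Since each factor $(i_m,\To,m) = s_{i_m-1}\cdots s_m$ has length $i_m-m$ and these add up to the length of the Grassmannian permutation $\{i_1,\To,i_M\}$, this decomposition is reduced, and Matsumoto's theorem (i.e.\ the braid relations in $\mathfrak{H}_N$) guarantees that the result is independent of the particular reduced expression chosen. Simultaneously, the ``double-occupation'' constraint applied to $\widetilde\Psi$ ranges over $i \in \{1,\To,M-1\}\cup\{M+1,\To,N-1\}$, which is exactly the condition $\widetilde\Psi \in \mathbb{C}[\vect{z}]^{\mathfrak{S}_M\times\mathfrak{S}_{N-M}}$.

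Conversely, given such a $\widetilde\Psi$, one verifies directly that \eqref{eq:phys_vector} defines a physical vector. The check runs case-by-case on $i$: for each Grassmannian $w=\{i_1,\To,i_M\}$ either $s_i\, w$ is again a longer Grassmannian representative, in which case the propagation rule is immediate from the definition $T_{s_i w}^{\mathrm{pol}} = T_i^{\mathrm{pol}}\,T_w^{\mathrm{pol}}$; or it is shorter, in which case the extra factor combines with the leftmost $T_i^{\mathrm{pol}}$ in the reduced word via the Hecke condition $(T_i^{\mathrm{pol}} - t^{1/2})(T_i^{\mathrm{pol}} + t^{-1/2}) = 0$; or $s_i\, w$ is no longer minimal in its coset and coincides with $w\,s_j$ for some simple transposition $s_j \in \mathfrak{S}_M\times \mathfrak{S}_{N-M}$. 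The main obstacle is this last, ``boundary'', situation, in which the extra factor of $T_i^{\mathrm{pol}}$ must be commuted through $T_w^{\mathrm{pol}}$ using the braid relations to reach~$\widetilde\Psi$ from the right, and only then be absorbed into the scalar $t^{1/2}$ via the parabolic symmetry of~$\widetilde\Psi$ together with~\eqref{eq:phys_space_scalar}. Conceptually this is simply the parabolic-induction statement that $\mathcal{H}_M$, as an $\mathfrak{H}_N$-module generated by $\cket{1,\To,M}$ with stabiliser $\mathfrak{H}_M\times\mathfrak{H}_{N-M}$ (acting by $T_i\mapsto t^{1/2}$), tensors over $\mathfrak{H}_N$ with $\mathbb{C}[\vect{z}]$ to give $\mathbb{C}[\vect{z}]^{\mathfrak{H}_M\times\mathfrak{H}_{N-M}} = \mathbb{C}[\vect{z}]^{\mathfrak{S}_M\times\mathfrak{S}_{N-M}}$; but carrying this out directly via the coordinate basis gives the explicit formula~\eqref{eq:phys_vector} that we need for the eigenvector construction in~\textsection\ref{s:diagonalisation}.
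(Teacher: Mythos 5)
Your proof is correct and follows essentially the same route as the paper's: the propagation rule $\Psi(\To,i{+}1,\To)=T_i^{\text{pol}}\,\Psi(\To,i,\To)$ extracted from the physical condition (which the paper obtains via $\cbra{i}=\cbra{i-1}\,T_{i-1}^{\text{sp}}$ rather than your component-wise case analysis), iterated from the simple component along a reduced word for the Grassmannian permutation, together with the $t^{1/2}$-eigenvalue condition forcing $\widetilde\Psi\in\mathbb{C}[\vect{z}]^{\mathfrak{S}_M\times\mathfrak{S}_{N-M}}$. The only substantive difference is that you spell out the converse (including the boundary case requiring the parabolic symmetry of $\widetilde\Psi$), which the paper dismisses as a ``straightforward, if tedious, check''.
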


We will call $\widetilde{\Psi}(\vect{z})$ the \emph{simple component} of $\ket{\widetilde{\Psi}(\vect{z})}$.

\begin{proof}
A straightforward, if tedious, check shows that the generators of the two Hecke actions coincide on any vector of the form \eqref{eq:phys_vector}, so the latter lies in $\mathcal{B} \cong \widetilde{\mathcal{H}}$. It remains to show that any $\ket{\widetilde{\Psi}} \in \widetilde{\mathcal{H}}_M$ is of this form. 
	
By \eqref{eq:Hecke_spin} we have $\bra{\downarrow\uparrow}\,T^\text{sp} = \bra{\uparrow\downarrow}$ whence $\cbra{i-1}\,T_{i-1}^\text{sp} = \cbra{i}$. Iterating this yields $\cbra{i} = \cbra{1} \, T^\text{sp}_1 \cdots T^\text{sp}_{i-1}$. The physical condition~\eqref{eq:phys_space} thus interrelates the components of vectors in $\widetilde{\mathcal{H}}$ with respect to the coordinate basis. Let us show this in detail for $M=1$: 
\begin{equation*}
	\begin{aligned}
	\cbraket{i}{\widetilde\Psi} & = \cbra{1}\, T^\text{sp}_1 \cdots T^\text{sp}_{i-2} \, T^\text{sp}_{i-1} \, \ket{\widetilde\Psi} \\
	& = \cbra{1}\, T^\text{sp}_1 \cdots T^\text{sp}_{i-2} \, T^\text{pol}_{i-1} \, \ket{\widetilde\Psi} \\
	& = T^\text{pol}_{i-1} \, \cbra{1}\, T^\text{sp}_1 \cdots T^\text{sp}_{i-2} \, \ket{\widetilde\Psi} \\
	& = \cdots \\ 
	& = T^\text{pol}_{i-1} \, T^\text{pol}_{i-2} \cdots T^\text{pol}_1 \, \cbraket{1}{\widetilde\Psi} \\ 
	& = T^\text{pol}_{(i,\To,1)} \, \widetilde\Psi(\vect{z}) \, .
	\end{aligned}
\end{equation*}
For general $M$ we just repeat this:
\begin{equation*}
	\begin{aligned}
	\cbraket{i_1,i_2,\To,i_M}{\widetilde{\Psi}}
	& = T_{(i_1,\To,1)}^\text{pol} \cbraket{1,i_2,\To,i_M}{\widetilde{\Psi}} \\
	& = \cdots \\
	& = T_{(i_1, \To, 1)}^\text{pol} \cdots T_{(i_M, \To, M)}^\text{pol} \cbraket{1,\To,M}{\widetilde{\Psi}} \\
	& = T_{\{i_1,\To,i_M\}}^\text{pol} \, \widetilde{\Psi}(\vect{z}) \, .
	\end{aligned}
\end{equation*}
Moreover, the simple component inherits the symmetry of $\cbra{1,\To,M}$,
\begin{equation*}
	(T_i^\text{pol} - t^{1/2}) \, \cbraket{1,\To,M}{\widetilde{\Psi}} = \cbra{1,\To,M} \, (T_i^\text{sp} - t^{1/2}) \, \ket{\widetilde{\Psi}} = 0 \, , \qquad i\neq M \, .
	\qedhere
\end{equation*}
\end{proof}
\noindent This proves Proposition~\ref{prop:intro_physical_vectors}.

An operator on the big vector space $\mathcal{H}[\vect{z}]$ descends to the physical space if it preserves $\mathcal{N}$ in~\eqref{eq:phys_space}, or equivalently if it preserves $\widetilde{\mathcal{H}} \cong \mathcal{B} \subset \mathcal{H}[\vect{z}]$. That is,
\begin{definition}
An operator $\widetilde{O}$ on $\mathcal{H}[\vect{z}]$ is called \emph{physical} if for any $\widetilde{\Psi} \in \widetilde{\mathcal{H}}$ we have $T_i^\text{sp} \, \widetilde{O} \, \widetilde{\Psi} = T_i^\text{pol} \, \widetilde{O} \, \widetilde{\Psi}$ for all $1\leq i \leq N-1$.
\end{definition}
Since a physical vector is completely determined by a single component with definite symmetry we may forget about the spin part and pass to the world of polynomials. In particular any physical operator is completely determined by its action on the simple component, inducing an action on polynomials. That is, any linear operator $\widetilde{O}^\text{sp}$ on $\widetilde{\mathcal{H}}$ is equivalent to some $\widetilde{O}^\text{pol}$ acting on polynomials such that 
\begin{equation*}
	\widetilde{O}^\text{sp} \, \ket{\widetilde{\Psi}(\vect{z})} = \ket{\widetilde{O}^\text{pol}\, \widetilde{\Psi}(\vect{z})} \, .
\end{equation*}
More specifically,
\begin{definition}
Assume that $\widetilde{O}^\text{sp} \, \widetilde{\mathcal{H}}_M \subset \widetilde{\mathcal{H}}_{M'}$ for some $M'$ (typically depending on $M$); this holds for all physical operators that we will consider with $M' \in \{M-1,M,M+1\}$. Taking the simple component of the preceding equality leads us to define $\widetilde{O}^\text{pol}_{M,M'} \colon \mathbb{C}[\vect{z}]_M \longrightarrow \mathbb{C}[\vect{z}]_{M'}$ by
\begin{equation} \label{eq:induced_pol}
	\begin{aligned}
	\widetilde{O}^\text{pol}_{M,M'} \, \widetilde{\Psi}(\vect{z}) \, \coloneqq \, & \cbra{1,\To,M'} \, \widetilde{O}^\text{sp} \, \ket{\widetilde{\Psi}(\vect{z})} \\
	= \, & \!\! \sum_{i_1 < \cdots < i_M}^N \!\!\!\!\! \cbra{1,\To,M'} \, \widetilde{O}^\text{sp} \, \cket{i_1,\To,i_M} \, T_{\{i_1,\To,i_M\}}^\text{pol} \widetilde{\Psi}(\vect{z}) \, .
	\end{aligned}
\end{equation}
We set $\mathbb{C}[\vect{z}]_{-1} = \mathbb{C}[\vect{z}]_{N+1} = \{0\}$.
\end{definition}
This trick will be particularly useful for dealing with the nonabelian symmetries in \textsection\ref{s:nonabelian_tilde}.

\subsubsection{Abelian symmetries (spin-Macdonald operators)} \label{s:abelian_tilde} 
The Hecke action on the polynomial factor of the big vector space~\eqref{eq:big_space} readily extends to an action of the \textsc{aha} as in \textsection\ref{s:polynomial}. We retain the notation $Y_i$ for the operators $Y_i^\text{pol} = \id \otimes Y_i$ acting nontrivially on polynomials.

In \cite{BG+_93} it was realised that elements of the \emph{centre} of the polynomial action of the \textsc{aha} are physical operators, and so is generating function~\eqref{eq:Delta}. (See \textsection\ref{s:nonabelian_tilde} for another, more subtle example of a physical operator.)

We can now derive a hierarchy of commuting difference operators on the physical space proceeding like in \textsection\ref{s:Macdonald}. The spin analogues of the Macdonald operators arise as 
\begin{equation} \label{eq:spin_D_r_def}
	\widetilde{D}_r \coloneqq e_r(\vect{Y}) \qquad \text{on}  \quad \widetilde{\mathcal{H}} \, , \qquad 0\leq r\leq N \, .
\end{equation}
Note that $\widetilde{D}_N = \hat{q}_1 \cdots \hat{q}_N = D_N$ is the same as in the scalar case. As in \eqref{eq:D_symmetry} we have
\begin{equation} \label{eq:spin_D_symmetry}
	\widetilde{D}_{N-r} = \widetilde{D}_N \, \widetilde{D}_{-r} \, , \qquad\quad \widetilde{D}_{-r} \coloneqq \!\!\! \sum_{i_1<\cdots<i_r}^N \!\!\!\!\! Y_{i_1}^{-1} \cdots Y_{i_r}^{-1} \qquad \text{on}  \quad \widetilde{\mathcal{H}} \, , \qquad 0\leq r\leq N \, .
\end{equation}
Using the physical condition~\eqref{eq:phys_space} we obtain explicit expressions for these operators, which by construction commute. This yields Theorem~\ref{thm:intro_spin_D_r} from \textsection\ref{s:intro_abelian_tilde} in terms of the convenient shorthand (cf.~Proposition~6.3 in \cite{FR_92})
\begin{equation} \label{eq:Rcheck_w}
	\check{R}_w \coloneqq s_{w^{-1}} \, s^\text{tot}_w \qquad \text{so that} \qquad s^\text{tot}_w = \check{R}_{w^{-1}}^{-1} \, s_w = s_w \, \check{R}_w \, , \qquad w\in \mathfrak{S}_N \, .
\end{equation}

\begin{theorem}[cf.~\cite{Che_94a}]\label{thm:spin_D_r}
The spin analogue~\eqref{eq:spin_D_r_def} of the Macdonald operators are
\begin{equation} \label{eq:spin_D_r}
	\widetilde{D}_r = \!\!\! \sum_{J \colon \# J = r} \!\!\!\!\! A_J(\vect{z}) \, \check{R}_{\{j_1,\To,j_M\}^{-1}}^{-1} \, \hat{q}^{\phantom{x}}_J \, \check{R}_{\{j_1,\To,j_M\}^{-1}}^{\vphantom{-1}} \, ,
\end{equation}
with $A_J(\vect{z})$ from \eqref{eq:D_r}.
In particular $\widetilde{D}_{N-1} = \hat{q}_1 \cdots \hat{q}_N \, \widetilde{D}_{-1} = \widetilde{D}_{-1} \, \hat{q}_1 \cdots \hat{q}_N$, where
\begin{equation} \label{eq:spin_D_-1}
	\widetilde{D}_{-1} = \sum_{i=1}^N A_{-i}(\vect{z}) \, \check{R}_{(N,N-1 \cdots i)}^{-1} \, \hat{q}_i^{-1} \, \check{R}_{(N,N-1 \cdots i)} \, , \qquad 
	A_{-i}(\vect{z}) = \prod_{\bar\imath\mspace{1mu}(\neq i)}^N \!\! a_{\bar\imath \mspace{1mu} i} \, .
\end{equation}
The other $\widetilde{D}_{-r}$ are similarly obtained by conjugating with more layers of \textit{R}-matrices.
\end{theorem}

Before we get to the proof let us illustrate this formula. The coefficient $A_j(\vect{z})$ is just as in the scalar case, see~\eqref{eq:D_1}.
Let us illustrate the notation~\eqref{eq:Rcheck_w} with some examples. To start with, $\check{R}_i \coloneqq \check{R}_{s_i} = \check{R}_{i,i+1}(z_i/z_{i+1})$. In general $\check{R}_w$ is obtained by drawing the braid diagram for a reduced decomposition of $w$ and reinterpreting it in terms of graphical notation~\eqref{eq:R_graphical} for the \textit{R}-matrix. (The $s_w$ ensures that all $z_j$, which are moved around by the \textit{R}-matrices, end up at their original positions.) Observe that \eqref{eq:Rcheck_w} is \emph{not} a representation of $\mathfrak{S}_N$; for example $\check{R}_{(321)} = \check{R}_{23}(z_1/z_3)\,\check{R}_{12}(z_1/z_2)$ is not the inverse of $\check{R}_{(123)} = \check{R}_{12}(z_1/z_3)\,\check{R}_{23}(z_2/z_3)$. Now recall that $\{j_1,\To,j_M\} \in \mathfrak{S}_N$ was defined in \eqref{eq:perm_grassmannian}, so that
\begin{equation}
\check{R}_{\{j_1,\To,j_M\}^{-1}} = \check{R}_{(M,\To,j_M)} \cdots \check{R}_{(1,\To,j_1)} = 
\tikz[baseline={([yshift=-.5*11pt*0.4]current bounding box.center)},	xscale=0.5,yscale=-0.4,font=\scriptsize]{
	\draw[<-] (7,0) -- (7,5); 
	\draw[<-] (8.5,0) node[above]{$z_N$} -- (8.5,5) node[below]{$z_N$};
	\foreach \xx in {-1,...,1} \node at (7.75+.2*\xx,2.5) {$\cdot\mathstrut$};
	\draw[rounded corners=3.5pt,<-] (6,0) -- (6,3.5) -- (5,4.5) -- (5,5);
	\draw[rounded corners=3.5pt,<-] (5,0) -- (5,2.5) -- (3,4.5) -- (3,5);
	\draw[rounded corners=3.5pt,<-] (4,0) -- (4,1.5) -- (2,3.5) -- (2,5);
	\draw[rounded corners=3.5pt,<-] (3,0) node[above]{$z_1$} -- (3,.5) -- (0,3.5) -- (0,5) node[below]{$z_1$};
	\draw[rounded corners=3.5pt,<-] (2,0) node[above]{$z_{j_M}$} -- (2,.5) -- (6,4.5) -- (6,5) node[below]{$z_{j_M}$};
	\draw[rounded corners=3.5pt,<-] (1,0) node[above]{$\cdots$} -- (1,1.5) -- (4,4.5) -- (4,5) node[below]{$\cdots$};
	\draw[rounded corners=3.5pt,<-] (0,0) node[above]{$z_{j_1}$} -- (0,2.5) -- (1,3.5) -- (1,5) node[below]{$z_{j_1}$};
} \, .
\end{equation}
For $r=1$ Theorem~\ref{thm:spin_D_r} gives \eqref{eq:intro_spin-D_1}:
\begin{equation} \label{eq:spin-D_1}
	\widetilde{D}_1 = \sum_{j=1}^N A_j(\vect{z}) \, \check{R}_{(12 \cdots j)}^{-1} \, \hat{q}_j \, \check{R}_{(12\cdots j)}^{\vphantom{-1}} \, , \qquad A_j(\vect{z}) = \prod_{\bar\jmath(\neq j)}^N \!\! a_{j \bar\jmath} \, .
\end{equation}
The expression for $r=2$ is given in \eqref{eq:intro_spin-D_2}.

\begin{proof}[Proof of Theorem~\ref{thm:spin_D_r}]
Our proof of Proposition~\ref{prop:D_r} in \textsection\ref{s:Macdonald} readily adapts to the spin case. As in the scalar case the Hamiltonian can be written in normal form $\widetilde{D}_1 = \sum_j \widetilde{A}_j(\vect{z})$, where $\widetilde{A}_j(\vect{z})$ acts on polynomials only by a rational factor times $\hat{q}_j$. The computation of $\widetilde{A}_1(\vect{z}) = A_1(\vect{z}) \, \hat{q}_1$ is as before. The only new feature in the spin case is that the trick for getting the other $\widetilde{A}_j(\vect{z})$ now involves conjugation by the $t$-dependent $\mathfrak{S}_N$-action~\eqref{eq:s_tot}:
\begin{equation*}
	\widetilde{D}_1 = s^\text{tot}_{(j \cdots 21)} \, \widetilde{D}_1 \, s^\text{tot}_{(12\cdots j)} \qquad \text{on} \quad \widetilde{\mathcal{H}} \, .
\end{equation*}
On the right-hand side the contribution to the term with $\hat{q}_j$ is easy to compute, whence $\widetilde{A}_j(\vect{z}) = s^\text{tot}_{(j \cdots 21)} \, \widetilde{A}_1(\vect{z}) \, s^\text{tot}_{(12\cdots j)} = \check{R}_{(12\cdots j)}^{-1} \, s_{(j \cdots 21)} \, A_1(\vect{z}) \, \hat{q}_1 \, s_{(12\cdots j)} \, \check{R}_{(12\cdots j)} = A_j(\vect{z}) \,\check{R}_{(12\cdots j)}^{-1} \, \hat{q}_j \, \check{R}_{(12\cdots j)}$.

The higher spin-Macdonald operators \eqref{eq:spin_D_r} are found analogously. Finally, for $r=-1$ the proof of Proposition~\ref{prop:D_-r} in \textsection\ref{s:Macdonald} readily adapts to the spin-case as well. We obtain \eqref{eq:spin_D_-1} from
\begin{equation*}
	\widetilde{D}_{-1} = s^\text{tot}_{(i\cdots N-1,N)} \, \widetilde{D}_{-1} \, s^\text{tot}_{(N,N-1 \cdots i)} \qquad \text{on} \quad \widetilde{\mathcal{H}} \, .
	\qedhere
\end{equation*}
\end{proof}

We will denote the generating function of the spin-Macdonald operators by
\begin{equation} \label{eq:Delta_tilde}
	\widetilde{\Delta}(u) \coloneqq \prod_{i=1}^N (1 + u \, Y_i) = \sum_{r=0}^N u^r \, \widetilde{D}_r  \qquad \text{on}  \quad \widetilde{\mathcal{H}} \, .
\end{equation}

\subsubsection{Nonabelian symmetries} \label{s:nonabelian_tilde} 
The interesting new feature of the spin version of the Ruijsenaars model is the presence of $\widehat{\mathfrak{U}}$-symmetry~\cite{BG+_93}, cf.~\cite{CP_96}. We begin with the \textsc{frt} presentation (Theorem~\ref{thm:monodromy_tilde} from \textsection\ref{s:intro_nonab_dyn}):

\begin{theorem}[\cite{BG+_93}] \label{thm:nonabelian_tilde}
Introduce an auxiliary space~$V_a \cong V$ with spectral parameter~$u$ and define on $V_a \otimes \mathcal{H}[\vect{z}]$ the monodromy matrix
\begin{equation} \label{eq:L_tilde}
	\begin{aligned}
	\widetilde{L}_a(u) \coloneqq {} & R_{aN}(u\,Y_N) \cdots R_{a2}(u\,Y_2) \, R_{a1}(u\,Y_1) \\
	= {} & \normalfont \frac{(-t^{1/2})^N}{\Delta(-t\,u)} \, P_{(a1\cdots N)} \, \bigl( u \, T_{N-1}^\text{sp} \, Y_N - T_{N-1}^{\text{sp}\,-1} \bigr) \cdots	\bigl(u \, T_0^\text{sp} \, Y_1 - T_0^{\text{sp}\,-1} \bigr) \, .
	\end{aligned}
\end{equation}
This endows $\widetilde{\mathcal{H}}$ with an action of $\widehat{\mathfrak{U}}$ that commutes with the spin-Macdonald operators. 
\end{theorem}
\noindent The appearance of some sort of inhomogeneities in~\eqref{eq:L_tilde} is not surprising from the Heisenberg point of view, whose nearest-neighbour interactions~\eqref{eq:xxz} are deformed to long-ranged ones away from the homogeneous point. In the more algebraic language of the \textsc{daha} \eqref{eq:L_tilde} is the dual, $Y_i \leftrightarrow Z_i^{-1}$ (if we allow for Laurent polynomials in the~$z_i$), of the inhomogeneous \textsc{xxz} monodromy matrix~\eqref{eq:L_inh}. Since the following proof only relies on the \textsc{aha} relations it follows that the monodromy matrix~\eqref{eq:L_inh}, and therefore the inhomogeneous \textsc{xxz} spin chain, act on the physical space too. This played an important role in e.g.~\cite{DZ_05a,Pas_06,DZ_05b}.

\begin{proof}[Proof of Theorem~\ref{thm:nonabelian_tilde}]
As $R_{ab}(u/v)$ commutes with the $Y_i$ the (level $c=0$) \textit{RLL}-relations~\eqref{eq:RLL} are verified as usual. As $\widetilde{\Delta}(u)$ is central it is furthermore clear that
\begin{equation} \label{eq:commutator_L_Delta}
	\bigl[ \widetilde{L}_a(u) , \widetilde{\Delta}(v) \bigr] = 0 \, ,
\end{equation}
which entails commutativity with the spin-Macdonald operators.

The crucial step is to prove that \eqref{eq:L_tilde} descends to the physical space. Viewing $\widetilde{\mathcal{H}} \cong \mathcal{B}$ as a subspace of $\mathcal{H}[\vect{z}]$ this amounts to showing that $\widetilde{L}_a(u)$ preserves $\mathcal{B}$. We will demonstrate that for any $1\leq i \leq N-1$ we have $\widetilde{L}_a(u) \, \ker(T_i^\text{sp} - T_i^\text{pol}) \subset \ker(T_i^\text{sp} - T_i^\text{pol})$, i.e.\ that $T_i^\text{sp}$ and $T_i^\text{pol}$ coincide on $\widetilde{L}_a(u) \, \ker(T_i^\text{sp} - T_i^\text{pol})$. 

The equality in \eqref{eq:L_tilde} uses the `Baxterisation' formula \eqref{eq:baxterisation}. Together, $T_0^\text{sp} \coloneqq T^\text{sp}_{a1}$ and the other $T_i^\text{sp}$ form a representation of $\mathfrak{H}_{N+1}$ on $V_a \otimes \mathcal{H} \cong V^{\otimes (N+1)}$. Let us remove the (central) denominator of \eqref{eq:L_tilde}. Since
\begin{equation*}
	T_i^\text{sp} \, P_{(a1\cdots N)} = P_{(a1\cdots N)} \, T_{i-1}^\text{sp} 
\end{equation*}
we have to show that $T_{i-1}^\text{sp}$ and $T_i^\text{pol}$ act in the same way on the factors
\begin{equation*}
	\begin{aligned}
	& \bigl( u \, T_i^\text{sp}\, Y_{i+1} - T_i^{\text{sp}\, -1} \bigr) \bigl( u \, T_{i-1}^\text{sp} \, Y_i - T_{i-1}^{\text{sp}\, -1} \bigr) \\
	& = u^2 \, T_i^\text{sp} \, T_{i-1}^\text{sp} \, Y_i \, Y_{i+1} - u\, (T_i^\text{sp} \, T_{i-1}^{\text{sp}\mspace{1mu}-1} \, Y_{i+1} + T_i^{\text{sp}\mspace{1mu}-1} \, T_{i-1}^\text{sp} \, Y_i) + T_i^{\text{sp}\mspace{1mu}-1}\,T_{i-1}^{\text{sp}\mspace{1mu}-1}
	\end{aligned}
\end{equation*}
where on the right we can use the physical condition. We proceed order by order in $u$.

Quadratic order in $u$ just uses the braid relation and the fact that $T_i^\text{pol}$ commutes with $Y_i \, Y_{i+1}$. Order $u^0$ is straightforward too as the Hecke condition and the braid relation for the inverse Hecke generators imply
\begin{equation*}
	T_{i-1}^\text{sp} \, T_i^{\text{sp}\mspace{1mu}-1} \, T_{i-1}^{\text{sp}\mspace{1mu}-1} = T_i^{\text{sp}\mspace{1mu}-1} \, T_{i-1}^{\text{sp}\mspace{1mu}-1} \, T_i^\text{sp}	\, .
\end{equation*}
Finally, for the part linear in $u$ we rewrite
\begin{equation*}
	T_i^\text{sp} \, T_{i-1}^{\text{sp}\mspace{1mu}-1} \, Y_{i+1} + T_i^{\text{sp}\mspace{1mu}-1} \, T_{i-1}^\text{sp} \, Y_i = T_i^\text{sp} \, T_{i-1}^\text{sp} \, (Y_i + Y_{i+1}) - (t^{1/2}-t^{-1/2}) (T_i^\text{sp} \, Y_{i+1} + T_{i-1}^\text{sp} \, Y_i) \, .
\end{equation*}
Commutation with the part featuring $Y_i + Y_{i+1}$ is again simple. For the remainder use $T_i^\text{sp} \, Y_{i+1} = Y_{i+1} \, T_i^\text{sp} = Y_{i+1} \, T_i^\text{pol} = T_i^{\text{pol}\mspace{1mu}-1} \, Y_i$ to see that on $T_i^{\text{pol}\mspace{1mu}-1} + T_{i-1}^\text{sp} = T_i^\text{pol} + T_{i-1}^{\text{sp}\mspace{1mu}-1}$ the actions of $T_i^\text{pol}$ and $T_{i-1}^\text{sp}$ coincide too.
\end{proof}

Replacing $z_j \rightsquigarrow Y_j^{-1}$ in \eqref{eq:qdet} yields the quantum determinant \cite{BG+_93}
\begin{equation} \label{eq:qdet_Y}
	\qdet_a \widetilde{L}_a(u) = t^{N/2} \, \frac{\Delta(-u)}{\Delta(-t \,u)} \, .
\end{equation}
This is a scalar as far as the spins are concerned, but still acts nontrivially on polynomials. 

\begin{proposition}[\cite{CP_96,JK+_95b}, cf.~Footnote~\ref{fn:coprod} on p.\,\pageref{fn:coprod}] \label{prop:chevalley_tilde}
The Chevalley generators obtained from $\widetilde{L}_a(u)$ are \eqref{eq:Uqsl2_spin}, together with \eqref{eq:affinisation_z} where $z_i \rightsquigarrow Y_i^{-1}$, i.e.
\begin{equation} \label{eq:affinisation_Y}
	\normalfont
	\begin{aligned}
	\widetilde{E}_0^\text{sp} & = \sum_{i=1}^N Y_i^{-1} \, k^{-1}_1 \cdots k^{-1}_{i-1} \,  \sigma^-_i \, , \\
	\widetilde{F}_0^\text{sp} & = \sum_{i=1}^N Y_i \ \sigma^+_i \,  k_{i+1} \cdots k_N \, , 
	\end{aligned}
	\qquad\quad
	\widetilde{\!K}_0^\text{sp} = K_1^{\text{sp}\mspace{1mu}-1} \, .
\checkedMma
\end{equation}
Since $\widetilde{L}_a(u)$ preserves the physical space all Chevalley generators do so too.
\end{proposition}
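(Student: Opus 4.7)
The plan is to extract the Drinfel'd--Jimbo generators from the monodromy matrix $\widetilde{L}_a(u)$ by expanding in the auxiliary spectral parameter, paralleling the derivation carried out in full detail in \textsection\ref{s:app_presentations} for the inhomogeneous \textsc{xxz} monodromy matrix~\eqref{eq:L_inh}. The key observation is that each local factor $R_{ai}(u\,Y_i)$ in~\eqref{eq:L_tilde} takes formally the same functional form as $R_{ai}(u/z_i)$ does in~\eqref{eq:L_inh} --- the operator $Y_i$ on the polynomial factor now plays the role of the inverse inhomogeneity $z_i^{-1}$. Accordingly, the FRT-to-Drinfel'd--Jimbo extraction applies verbatim, and the Chevalley generators read off from the appropriate matrix entries of $\widetilde{L}_a(u)$ are those of \eqref{eq:Uqsl2_spin}--\eqref{eq:affinisation_z} with the substitution $z_i \rightsquigarrow Y_i^{-1}$.

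Concretely, I would expand each $R_{ai}(u\,Y_i) = R_{ai}^{(0)} + u\,Y_i \cdot R_{ai}^{(1)} + O(u^2)$ around $u=0$ and, in parallel, around $u=\infty$ in powers of $(u\,Y_i)^{-1}$. Multiplying these expansions across all $N$ sites and collecting the leading and first-subleading matrix entries in the auxiliary space $V_a$ produces, respectively, an upper-triangular ($L^+$-type) and a lower-triangular ($L^-$-type) matrix whose entries encode the six Chevalley generators. Because $Y_i$ commutes with every spin operator $R_{aj}$ (they act on disjoint tensor factors) and the $Y_j$ mutually commute, each $Y_i^{\pm 1}$ can be freely moved to its natural slot at position~$i$, producing the characteristic $K$-tails $k_1^{-1}\cdots k^{-1}_{i-1}$ and $k_{i+1}\cdots k_N$ that come from the iterated coproduct of~$\widehat{\mathfrak{U}}$. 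The zeroth-order (in $Y$) terms reproduce the $\mathfrak{U}$-generators \eqref{eq:Uqsl2_spin}, unchanged from the inhomogeneous case since there they already carry no $z$-dependence; the first subleading terms produce the formulas of~\eqref{eq:affinisation_Y}.

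The main technical obstacle is bookkeeping: in the second, Baxterised form of~\eqref{eq:L_tilde} the factors $T_i^\text{sp}$ interlace with the $Y_j$, and one must verify that moving each $Y_i$ into position alongside the spin operators at site~$i$ does not produce unwanted contributions at the order of interest. However, since at the leading and first-subleading orders only a single $Y_i$ appears in each extracted term, and $Y_i$ commutes with all spin operators, this reduces to a routine check.

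Finally, the claim that every Chevalley generator preserves $\widetilde{\mathcal{H}}$ is immediate from Theorem~\ref{thm:nonabelian_tilde}: the generators are specific coefficients in the Laurent expansion of $\widetilde{L}_a(u)$, and $\widetilde{L}_a(u)$ preserves $\widetilde{\mathcal{H}}$ for every value of~$u$, hence so does each of its modes.
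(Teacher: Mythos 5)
Your proposal is correct and takes essentially the same route as the paper: the paper itself proves this by the expansion of \textsection\ref{s:app_presentations} applied with $z_i^{-1}\rightsquigarrow Y_i$, which goes through verbatim precisely because the $Y_i$ mutually commute and commute with all spin operators, and the preservation of $\widetilde{\mathcal{H}}$ is inherited mode-by-mode from Theorem~\ref{thm:nonabelian_tilde} exactly as you argue.
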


For practical purposes the preceding description of the $\widehat{\mathfrak{U}}$-action on the physical space is rather cumbersome. It will be much more convenient to work with the polynomial action it induces using the trick from the end of \textsection\ref{s:physical_space}. First we consider the presentation by Chevalley generators from Proposition~\ref{prop:chevalley_tilde}. To avoid a proliferation of subscripts let us write
\begin{equation} \label{eq:Uqaff_induced_polynomial}
	\begin{aligned}
	E_M^\text{pol} & \coloneqq (E_1)_{M,M-1}^\text{pol} \, , \qquad & \widehat{E}_M^\text{pol} & \coloneqq (\widetilde{E}_0)_{M,M+1}^{\text{pol}} \, , \\
	F_M^\text{pol} & \coloneqq (F_1)_{M,M+1}^\text{pol} \, , & \widehat{F}_M^\text{pol} & \coloneqq (\widetilde{F}_0)_{M,M-1}^\text{pol} \, , \\
	K_M^\text{pol} & \coloneqq (K_1)_{M,M}^\text{pol} \, , & \,\widehat{\!K}_{\!M}^\text{pol} & \coloneqq (\,\widetilde{\!K}_0)_{M,M}^\text{pol} \, .
	\end{aligned}
\end{equation}

\begin{proposition} \label{prop:chevalley_pol}
These Chevalley generators induce the following action on polynomials:
\begin{equation} \label{eq:Uqsl2_pol}
	\normalfont
	\begin{aligned}
	K_M^\text{pol} & = t^{(N-2M)/2} \, , \\
	E_0^\text{pol} = 0 \, , \qquad 
	E_M^{\text{pol}} & = t^{(1-M)/2} \sum_{j=M}^N t^{(j-M)/2} \ T^{\text{pol}}_{(j,j-1 \cdots M)} \, , \\
	F_M^{\text{pol}} & = t^{(M+1-N)/2} \sum_{i=1}^{M+1} \! t^{(M+1-i)/2} \ T^{\text{pol}}_{(i,i+1 \cdots M+1)} \, , \qquad
	F_N^\text{pol} = 0 \, ,
	\end{aligned}
\checkedMma
\end{equation}
and
\begin{equation} \label{eq:affinisation_pol} 
	\normalfont
	\begin{aligned} 
	\widehat{\!K}_{\!M}^\text{pol} & = t^{-(N-2M)/2} \, , \\
	\widehat{E}_M^{\text{pol}} & = t^{M/2} \, \Biggl( \sum_{i=1}^{M+1} \! t^{-(M+1-i)/2} \, T^{\text{pol}\,-1}_i \cdots T^{\text{pol}\,-1}_M \Biggr) Y_{M+1}^{-1} \, , 
	\qquad \widehat{E}_N^{\text{pol}}  = 0 \, , \\
	\widehat{F}_0^{\text{pol}} = 0 \, , \qquad
	\widehat{F}_M^{\text{pol}} & = t^{(N-M)/2} \, \Biggl( \sum_{j=M}^N t^{-(j-M)/2} \, T^{\text{pol}\,-1}_{j-1} \cdots T^{\text{pol}\,-1}_M \Biggr) Y_M \, .
	\end{aligned}
\checkedMma
\end{equation}
\end{proposition}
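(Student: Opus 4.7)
The plan is a direct computation using the defining formula~\eqref{eq:induced_pol} together with the Chevalley generators from \eqref{eq:Uqsl2_spin} and~\eqref{eq:affinisation_Y}. For $K^\text{sp}_1 = t^{S^z}$ the $M$-particle sector is an eigenspace with eigenvalue $t^{(N-2M)/2}$, so its induced polynomial action is this scalar; likewise for $\widetilde{K}^\text{sp}_0 = K^{\text{sp}\,-1}_1$. This takes care of the Cartan generators.

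For the remaining four generators I would compute, once and for all, the matrix element $\cbra{1,\dots,M\pm 1}\,X\,\cket{i_1,\dots,i_M}$ in \eqref{eq:induced_pol}. Consider $E^\text{sp}_1 = \sum_i k_1\cdots k_{i-1}\,\sigma^+_i$: its action on $\cket{i_1,\dots,i_M}$ gives a sum over $m=1,\dots,M$ of $t^{(i_m - 2m + 1)/2}\,\cket{i_1,\dots,\widehat{i_m},\dots,i_M}$, the $t$-factor coming from counting the $m{-}1$ downs and $i_m{-}m$ ups among the first $i_m{-}1$ sites. Pairing with $\cbra{1,\dots,M-1}$ is nonzero only when the reduced tuple equals $(1,\dots,M-1)$; since $i_k\geq k$ for each $k$, this forces $m=M$ and $i_1=1,\dots,i_{M-1}=M-1$, $i_M = j$ with $M\leq j\leq N$. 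The corresponding Grassmannian permutation \eqref{eq:perm_grassmannian} then simplifies to $\{1,\dots,M-1,j\} = (j,j-1,\dots,M)$, reproducing the formula for $E^\text{pol}_M$ in \eqref{eq:Uqsl2_pol}. An entirely parallel analysis, now inserting a new index $i = k\in\{1,\dots,M{+}1\}$ into $\vect{i}$ and tracking the $k^{-1}_{i+1}\cdots k^{-1}_N$ factor, yields $F^\text{pol}_M$.

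For the affine generators the same coordinate-basis argument applies verbatim: the spin part of $\widetilde{E}^\text{sp}_0$ (resp.\ $\widetilde{F}^\text{sp}_0$) differs from $F^\text{sp}_1$ (resp.\ $E^\text{sp}_1$) only by inverted $k$-factors and, crucially, an additional operator $Y_i^{-1}$ (resp.\ $Y_i$) that acts on the polynomial factor $T^\text{pol}_{\{i_1,\dots,i_M\}}\widetilde{\Psi}(\vect{z})$. After fixing the unique contributing tuple as above, the expression I obtain is
\begin{equation*}
	\widehat{F}^\text{pol}_M\,\widetilde{\Psi} \,=\, \sum_{j=M}^N t^{(N-j)/2}\, Y_j\, T^\text{pol}_{j-1}\cdots T^\text{pol}_M\,\widetilde{\Psi} \, ,
\end{equation*}
and an analogous formula for $\widehat{E}^\text{pol}_M$ with $Y_k^{-1}$ on the left of $T^\text{pol}_k\cdots T^\text{pol}_M$. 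The final step is to telescope the $Y$-operator through the string of Hecke generators using the \textsc{aha} cross relation~\eqref{eq:AHA} in the form $Y_{i+1}\,T_i = T_i^{-1}\,Y_i$ (equivalently $Y_i^{-1}\,T_i = T_i^{-1}\,Y_{i+1}^{-1}$); iterating $j-M$ times collapses $Y_j\,T^\text{pol}_{j-1}\cdots T^\text{pol}_M$ to $T^{\text{pol}\,-1}_{j-1}\cdots T^{\text{pol}\,-1}_M\,Y_M$, matching~\eqref{eq:affinisation_pol}. The same telescoping with inverted $Y$'s converts $Y_k^{-1}\,T^\text{pol}_k\cdots T^\text{pol}_M$ to $T^{\text{pol}\,-1}_k\cdots T^{\text{pol}\,-1}_M\,Y^{-1}_{M+1}$, giving $\widehat{E}^\text{pol}_M$.

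The whole argument is essentially bookkeeping, so I expect no conceptual obstacle. The main practical hurdle is tracking the half-integer powers of $t$ arising from the two independent sources\,---\,the $k$-factors in each Chevalley generator, and the shift $i_m = j \geq M$ in the index that survives the projection\,---\,and verifying that they combine to the prefactors $t^{\pm(1-M)/2}, t^{\pm(M+1-N)/2}, t^{\pm M/2}, t^{\pm(N-M)/2}$ displayed in \eqref{eq:Uqsl2_pol}--\eqref{eq:affinisation_pol}. The vanishing boundary cases $E^\text{pol}_0=F^\text{pol}_N=\widehat{F}^\text{pol}_0=\widehat{E}^\text{pol}_N = 0$ follow because the relevant summation range is empty or because the coordinate-basis projection cannot be satisfied.
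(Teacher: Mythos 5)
Your proposal is correct and follows essentially the same route as the paper: the Cartan generators are read off from the weight decomposition, the matrix elements $\cbra{1,\To,M\mp1}\,X\,\cket{i_1,\To,i_M}$ are computed in the coordinate basis (forcing the unique contributing tuple and yielding exactly the paper's $t$-powers $t^{(j-2M+1)/2}$ and $t^{-(N-2(M+1)+i)/2}$), and the affine generators are handled via the \textsc{aha} cross relations. Your explicit telescoping $Y_j\,T_{j-1}\cdots T_M = T_{j-1}^{-1}\cdots T_M^{-1}\,Y_M$ just spells out the step the paper compresses into ``with the help of the \textsc{aha} relations one likewise obtains \eqref{eq:affinisation_pol}''.
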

\noindent Up to normalisation the generators of $\mathfrak{U} \subset \widehat{\mathfrak{U}}$, are just partial Hecke symmetrisers, cf.~\eqref{eq:Hecke_projectors}, ensuring that the resulting polynomial has the correct symmetry type. (Note that $j-M = \ell(j,j-1,\To,M)$ and $M+1-i = \ell(i,i+1,\To,M+1)$.) 

The affine generators also involve projectors onto the right symmetry type, now by $t^{-1}$-symmetrising. Indeed, the sums in $\widehat{E}_M^{\text{pol}}$ and $F_M^{\text{pol}}$, and those in $\widehat{F}_M^{\text{pol}}$ and $E_M^{\text{pol}}$, are related by the bar involution $t\mapsto t^{-1}$ and $T_w \mapsto T^{-1}_{w^{-1}}$ of the Hecke algebra~\cite{KL_79}.

\begin{proof}[Proof of Proposition~\ref{prop:chevalley_pol}]
The diagonal operators immediately follow from the definition \eqref{eq:weight_decomp} of the $M$-particle sector and the `level-zero' condition $K_{\!M}^\text{pol} \, \widehat{\!K}_{\!M}^\text{pol}=1$.

The matrix element $\cbra{1,\To,M-1} \, E_1^\text{sp} \, \cket{i_1,\To,i_M}$ can only be nonzero if $i_m = m$ for all $m<M$. Denoting $j = i_M$ we find from \eqref{eq:Uqsl2_spin}
\begin{equation*}
	\cbra{1,\To,M-1} \, E_1^\text{sp} \, \cket{1,\To,M-1,j} = t^{(j-2M+1)/2} \, , \qquad j\geq M \, .
\end{equation*}
This gives the expression for $E_M^\text{pol}$ since $T^\text{pol}_{\{1,\To,M-1,j\}} = T^\text{pol}_{(j,j-1 \cdots M)}$.

Next, $\cbra{1,\To,M+1} \, F_1^\text{sp} \, \cket{i_1,\To,i_M}$ survives precisely if
$\{i_1,\To,i_M \} \subset \{ 1,\To,M+1\}$. Writing $i$ for the element in $\{ 1,\To,M+1\} \setminus \{i_1,\To,i_M \}$ we obtain
\begin{equation*}
	\cbra{1,\To,M+1} \, F_1^\text{sp} \, \cket{1,\To \widehat\imath \To,M+1} = t^{-(N-2(M+1)+i))/2} \, , \qquad i \leq M+1 \, ,
\end{equation*}
where the caret denotes omission. But $T^\text{pol}_{\{1,\To \, \hat\imath\, \To,M+1\}} = T^\text{pol}_i \cdots T^\text{pol}_M = T^\text{pol}_{(i,i+1,\To,M+1)}$.

With the help of the \textsc{aha} relations~\eqref{eq:AHA} one likewise obtains~\eqref{eq:affinisation_pol}.
\end{proof}

By construction these polynomial operators obey the relations of $\widehat{\mathfrak{U}}$, though direct verification of most relations is tedious.
The expressions \eqref{eq:Uqsl2_pol}--\eqref{eq:affinisation_pol} can be simplified drastically using the symmetry of $\mathbb{C}[\vect{z}]_M$.

\begin{proposition} \label{prop:chevalley_pol_alt}
The nontrivial Chevalley generators from Proposition~\ref{prop:chevalley_pol} reduce to
\begin{equation} \label{eq:Uqsl2_pol_alt}
	\normalfont
	\begin{aligned} 
	E_M^\text{pol} & = t^{(N-2M+1)/2} \, \Biggl(\, \sum_{j=M}^N \! A_{-j}(\vect{z}) \, s_{Mj} \Biggr) \prod_{m=1}^{M-1} \! f_{mM} \, , \\
	F_M^\text{pol} & = t^{-(N-2(M+1)+1)/2} \, \Biggl(\, \sum_{i=1}^M \! s_{i,M+1} \Biggr) \prod_{m=1}^M \! f_{M+1,m} \; ,
	\end{aligned}
\checkedMma
\end{equation}
with $A_{-j}$ as in \eqref{eq:spin_D_-1} and $f_{ij} = 1/a(z_i/z_j)$ from \eqref{eq:f,g}, along with
\begin{equation} \label{eq:affinisation_pol_alt}
	\normalfont
	\begin{aligned}
	\widehat{E}_M^\text{pol} & = t^{-(N-2(M+1)+1)/2} \, \Biggl(\, \sum_{i=1}^M \! s_{i,M+1} \Biggr) \Biggl(\,\prod_{m=1}^M \! f_{m,M+1}\Biggr) \hat{q}_{M+1}^{-1} \, , \\
	\widehat{F}_M^\text{pol} & = t^{(N-2M+1)/2} \, \Biggl(\, \sum_{j=M}^N \! A_j(\vect{z}) \, s_{Mj} \Biggr) \Biggl(\, \prod_{m=1}^{M-1} \! f_{Mm} \Biggr) \hat{q}_M \, .
	\end{aligned}
\checkedMma
\end{equation}
where $A_j(\vect{z})$ is as in \eqref{eq:spin-D_1}. 
\end{proposition}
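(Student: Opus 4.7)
The plan is to establish each of the four identities in \eqref{eq:Uqsl2_pol_alt}--\eqref{eq:affinisation_pol_alt} as operator equalities on the Grassmannian subspace $\mathbb{C}[\vect{z}]_M = \mathbb{C}[\vect{z}]^{\mathfrak{S}_M \times \mathfrak{S}_{N-M}}$. The identity for $E_M^\text{pol}$ is the keystone: after absorbing prefactors, it amounts to
\begin{equation*}
\sum_{j=M}^N t^{(j-M)/2}\,T^\text{pol}_{j-1} \cdots T^\text{pol}_M \,\widetilde{\Psi} \;=\; t^{(N-M)/2} \Biggl(\,\sum_{j=M}^N \! A_{-j}\, s_{Mj} \Biggr) \prod_{m=1}^{M-1} \! f_{mM}\, \widetilde{\Psi}
\end{equation*}
for every $\widetilde{\Psi} \in \mathbb{C}[\vect{z}]_M$. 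The identity for $F_M^\text{pol}$ is entirely symmetric, interchanging the roles of the two symmetry blocks. The two affine identities in \eqref{eq:affinisation_pol_alt} will follow by feeding \eqref{eq:Yi_via_x} for $Y_M$ and $Y_{M+1}^{-1}$ into \eqref{eq:affinisation_pol}: the translation $\hat{q}_M^{\pm 1}$ separates cleanly to the right, the $x_{ij}$'s whose indices both lie in $\{1,\To,M\}$ (resp.~in $\{M+1,\To,N\}$) act on $\mathbb{C}[\vect{z}]_M$ as scalars $t^{\pm 1/2}$ thanks to the symmetry, and only the ``bridging'' $x$-factors are left to combine with the Hecke sum handled in the non-affine case, producing the $A_j$, $A_{-j}$ and~$f$ of the alternative form.

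For the displayed identity I would expand $T^\text{pol}_{j-1} \cdots T^\text{pol}_M = \prod_{k=M}^{j-1}(a_{k,k+1}\,s_k + b_{k,k+1})$ into a sum indexed by subsets $S \subseteq \{M,\To,j-1\}$, picking $a_{k,k+1}\mspace{1mu}s_k$ for $k \in S$ and $b_{k,k+1}$ otherwise. A short direct computation, using that $\widetilde{\Psi}$ is $\mathfrak{S}_{N-M}$-invariant in $z_{M+1},\To,z_N$, shows that the net permutation contribution of $S$ reduces on $\widetilde{\Psi}$ to the identity when $M \notin S$, and to $s_{Mk}$ when $M \in S$, where $k = M + \ell$ and $\ell$ is the length of the maximal initial run $\{M,M+1,\To,M+\ell-1\} \subseteq S$: the factor $s_M$ transports $z_M$ from position $M$ to position $M+1$, each subsequent consecutive $s_{M+1}, s_{M+2}, \ldots$ continues this transport by one step, the first gap in $S$ halts $z_M$ at some position $k$, and the remaining $s_l$ with $l > k$ merely permute $z_{M+1},\To,z_N$ among themselves, hence act trivially on $\widetilde{\Psi}$. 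After collecting the coefficient of each $s_{Mk}$ and summing over $j$, the identity reduces to a family of explicit rational-function equalities indexed by $k \in \{M,\To,N\}$.

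The hardest step will be verifying these rational-function identities. Each one has the schematic form
\begin{equation*}
\Biggl(\,\prod_{l=M}^{k-1} \widetilde a_{M,l+1}\Biggr) \! \sum_{j \geq k} t^{(j-M)/2} \!\!\sum_{S' \subseteq \{k+1,\To,j-1\}}\!\!\! c_{S'}(\vect{z}) \;=\; t^{(N-M)/2} A_{-k} \!\prod_{m=1}^{M-1} \!f_{mM},
\end{equation*}
where the outer product gathers the contribution of the initial run (with $\widetilde a$ denoting $a$ with arguments shifted by the preceding $s_l$'s) and $c_{S'}(\vect{z})$ bundles the obligatory factor $b_{k,k+1}$ (since $k \notin S$) with the arbitrary-tail contribution from $S' = S \setminus \{M,\To,k-1\}$. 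I expect the inner sum to collapse by a telescoping induction on $j-k$, each step exploiting the unitarity relation $a(u)\,a(1/u) + b(u)\,b(1/u) = 1$ (equivalently \eqref{eq:unitarity_initial} for $\check R$); the remaining geometric sum in $t^{1/2}$ over $j$ then combines with the outer prefactor to reproduce $t^{(N-M)/2} A_{-k}$, after which $\prod_{l<k} \widetilde a_{M,l+1}$ telescopes against $A_{-k}$ to leave the claimed $\prod_{m<M} f_{mM}$. A more conceptual alternative I would pursue in parallel is to recognise $\sum_{j=M}^N t^{(j-M)/2} T^\text{pol}_{j-1}\cdots T^\text{pol}_M$ as the polynomial image of a partial Hecke symmetrizer attached to the Grassmannian coset $\mathfrak{S}_{N-M+1}/\mathfrak{S}_{N-M}$ inside $\mathfrak{H}_N$, whose action on the $\mathfrak{S}_{N-M}$-invariant subspace has a known normal form in terms of minimal coset representatives; if this succeeds, it would deliver all four identities simultaneously.
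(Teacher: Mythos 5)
Your setup is sound — the reduction to an identity on $\mathbb{C}[\vect{z}]_M$, the prefactor bookkeeping, and the observation that the net permutation content of each operator collapses to a single transposition $s_{Mk}$ on the partially symmetric space are all correct, and your remark that $x_{ij}$ acts as the scalar $t^{\pm1/2}$ on functions symmetric in $z_i,z_j$ is also right. But the proof is incomplete precisely at the step you yourself flag as the hardest: the family of rational-function identities obtained by collecting the coefficient of each $s_{Mk}$ is only \emph{conjectured} to collapse (``I expect the inner sum to collapse by a telescoping induction\dots if this succeeds''). That sum over subsets $S'$, with each $a$ and $b$ factor's argument shifted by all the transpositions to its right, is exactly the combinatorial mass that needs to be tamed, and no identity of the form $a(u)\,a(1/u)+b(u)\,b(1/u)=1$ is exhibited that actually does the telescoping. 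As written, the core of the argument is missing.

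The paper avoids this entirely with a trick you did not use and which is worth internalising, since it recurs throughout (cf.\ the proof of Proposition~\ref{prop:D_r}): once you know $E_M^\text{pol}=\sum_{j} c_j(\vect{z})\,s_{Mj}$ in normal form, you only need to compute \emph{one} coefficient — the extremal one $c_N$, which receives a contribution solely from the single term $j=N$ of \eqref{eq:Uqsl2_pol} and, within $T^\text{pol}_{(N,\To,M)}=(a_{N-1,N}s_{N-1}+b_{N-1,N})\cdots(a_{M,M+1}s_M+b_{M,M+1})$, solely from the all-$a$ summand. All other coefficients then follow for free from the symmetry of the \emph{range} $\mathbb{C}[\vect{z}]_{M-1}$ (symmetric in $z_M,\To,z_N$), via $c_k=s_{kN}\,c_N\,s_{kN}$; this is the step your expansion-by-subsets replaces with unverified computation. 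For the affine generators the paper likewise sidesteps your $x_{ij}$ route: using \eqref{eq:Yi} rather than \eqref{eq:Yi_via_x}, the product $T_i^{\text{pol}\,-1}\cdots T_M^{\text{pol}\,-1}\cdot Y_{M+1}^{-1}$ telescopes algebraically, and the trailing $T^{\text{pol}\,-1}_j$ with $j>M$ act as $t^{-1/2}$ on $\mathbb{C}[\vect{z}]_M$, reducing the affine case to the non-affine one up to an explicit $s_1\cdots s_M\,\hat{q}_{M+1}^{-1}$. If you adopt the one-coefficient-plus-range-symmetry argument, your proof closes; without it, you must actually supply the telescoping identities.
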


\begin{proof}
We use a symmetry argument like in \textsection\ref{s:Macdonald}. By \eqref{eq:Hecke_pol_2b} we can write
\begin{equation*}
	E_M^\text{pol} = \sum_{j=M}^N \! c_j(\vect{z}) \, s_{M j} 
\end{equation*}
for some coefficients $c_j(\vect{z})$ that we wish to determine. It is easy to find the coefficient with $j=N$, for which the only contribution comes from the term $j=N$ in \eqref{eq:Uqsl2_pol_alt}. Since
\begin{equation*}
	\begin{aligned}
	T^\text{pol}_{(N,N-1,\To,M)} & = (a_{N-1,N} \, s_{N-1} + b_{N-1,N}) \cdots (a_{M,M+1} \, s_M + b_{M,M+1}) \\ 
	& = a_{N-1,N} \, s_{N-1} \cdots a_{M,M+1} \, s_M + \cdots \\
	& = a_{N-1,N} \cdots a_{M,N} \, s_{M,N} + \text{contributions to other terms}
	\end{aligned}
\end{equation*}
we find $c_N(\vect{z}) = t^{(N-2M+1)/2} \, A_{-N}(\vect{z}) \, f_{1N} \cdots f_{M-1,N}$. By symmetry in the range $\mathbb{C}[\vect{z}]_{M-1}$ of $E_M^\text{pol}$ this determines the other coefficients via $c_k(\vect{z}) = s_{kN} \, c_N(\vect{z}) \, s_{kN}$.

One likewise determines the coefficients in
\begin{equation*}
	F_M^\text{pol} = \sum_{i=1}^{M+1} \! c'_i(\vect{z}) \, s_{i,M+1}
\end{equation*}
from the case $i=1$. This establishes \eqref{eq:Uqsl2_pol_alt}.

For the affine generators we first use \eqref{eq:Yi} to compute
\begin{equation*}
	\begin{aligned}
	\widehat{E}_M^{\text{pol}} & = \Biggl( \sum_{i=1}^{M+1} \! t^{(i-1)/2} \, T^\text{pol}_{i-1} \cdots T^\text{pol}_1 \Biggr) \pi^{-1} \, T^{\text{pol}\,-1}_{N-1} \cdots T^{\text{pol}\,-1}_{M+1} \\
	& = t^{-(N-(M+1))/2} \Biggl( \sum_{i=1}^{M+1} \! t^{(i-1)/2} \, T^\text{pol}_{i-1} \cdots T^\text{pol}_1 \Biggr) \, s_1 \cdots s_M \, \hat{q}_{M+1}^{-1} \, , \\
	\end{aligned}
\end{equation*}
where we use that $T_i^{\text{pol}\,-1} = t^{-1/2}$ and $s_i = 1$ for $i>M$ on $\widetilde{\Psi}(\vect{z}) \in \mathbb{C}[\vect{z}]_M$. At this point we can proceed as before, where the coefficient of $s_{i,M+1}$ with $i=M+1$ is easily found.

We similarly rewrite
\begin{equation*}
	\begin{aligned}
	\widehat{F}_M^{\text{pol}} & = \Biggl( \sum_{j=M}^N \! t^{(N-j)/2} \, T^\text{pol}_j \cdots T^\text{pol}_{N-1} \Biggr) \pi \, T^{\text{pol}\,-1}_1 \cdots T^{\text{pol}\,-1}_{M-1} \\
	& = t^{(1-M)/2} \Biggl( \sum_{j=M}^N \! t^{(N-j)/2} \, T^\text{pol}_j \cdots T^\text{pol}_{N-1} \Biggr) \, s_{N-1} \cdots s_M \, \hat{q}_M \, .
	\end{aligned}
\end{equation*}
Here the simple coefficient is that of $s_{Mj}$ with $j=M$.
\end{proof}

Next we turn to the generators of $\widehat{\mathfrak{U}}$ by the `quantum operators' obtained from $\widetilde{L}_a(u)$ as in \eqref{eq:ABCD}. To highlight the origin of the following expressions let us denote the entries of the \text{R}-matrix as the weights of the asymmetric six-vertex model,
\begin{equation}
	R(u) = P \, \check{R}(u) = 
	\begin{pmatrix} 
	\, 1 & \color{gray!80}{0} & \color{gray!80}{0} & \color{gray!80}{0} \, \\
	\, \color{gray!80}{0} & b_+(u) & c_-(u) & \color{gray!80}{0} \, \\
	\, \color{gray!80}{0} & c_+(u) & b_-(u) & \color{gray!80}{0} \, \\
	\, \color{gray!80}{0} & \color{gray!80}{0} & \color{gray!80}{0} & 1 \, \\
	\end{pmatrix}
	\, , \qquad
	\begin{aligned}
	b_\pm(u) & = f(u) \, , \\
	c_+(u) & = u \, g(u) \, , \\
	c_-(u) & = g(u) \, . 
	\end{aligned}
\end{equation}
\begin{proposition} \label{prop:ABCD_pol}
The polynomial action induced by the quantum operators is
\begin{equation} \label{eq:ABCD_Y_pol}
	\normalfont
	\begin{aligned}
	\widetilde{A}^\text{pol}_{M,M}(u) & = \prod_{m=1}^M \! b_+(u\,Y_m) \, , \\
	\widetilde{B}^\text{pol}_{M,M+1}(u) & = \sum_{i=1}^{M+1} c_-(u\,Y_i) \, \Biggl( \, \prod_{j(>i)}^{M+1} \! b_+(u\,Y_j) \Biggr) \, T^\text{pol}_{(i,i+1 \To M+1)} \, , \\
	\widetilde{C}^\text{pol}_{M,M-1}(u) & = \sum_{j=M}^N
	\Biggl( \prod_{m=1}^{M-1} \! b_+(u\,Y_m) \Biggr) \, c_+(u\,Y_j) \, \Biggl( \, \prod_{k(>j)}^N \! b_-(u\,Y_k) \Biggr) \, T^\text{pol}_{(j,j-1 \To M)} \, , \\ 
	\widetilde{D}^\text{pol}_{M,M}(u) & = \! \prod_{i(>M)}^N \!\!\! b_-(u \, Y_i) + \sum_{m=1}^M \sum_{j(>M)}^N \!\! c_-(u\,Y_m) \, \Biggl( \, \prod_{m'(>m)}^M \!\!\!\! b_+(u\,Y_{m'}) \Biggr) \\ 
	& \hphantom{ = \! \prod_{i(>M)}^N \!\! b_-(u \, Y_i) + \sum_{m=1}^M \sum_{j(>M)}^N \!\! } \times c_+(u\,Y_j) \, \Biggl( \, \prod_{k(>j)}^N \!\! b_-(u\,Y_k) \Biggr) \, T^\text{pol}_{\{1, \To \, \widehat{m} \, \To,M-1,j\}} \, .
	\end{aligned}
\end{equation}
\end{proposition}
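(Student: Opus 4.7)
The plan is to compute each entry $\widetilde{O}^\text{pol}_{M,M'}$ directly from \eqref{eq:induced_pol} by evaluating the spin matrix element $\cbra{1,\To,M'} \widetilde{O}^\text{sp} \cket{i_1,\To,i_M}$, a polynomial operator in the $Y_i$. Since $R_{ai}(uY_i)$ preserves total spin, the \emph{auxiliary line} threading through the product $R_{aN}(uY_N)\cdots R_{a1}(uY_1)$ carries one of two states at each intermediate position, and the four quantum operators are distinguished by the auxiliary boundary conditions: $A$ has auxiliary up at both ends, $D$ has auxiliary down, while $B$ (resp.~$C$) flips the auxiliary once from down to up (up to down). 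At every site one chooses between the diagonal weights $b_\pm(uY_i)$, which preserve both spins, and the off-diagonal weights $c_\pm(uY_i)$, which swap them; admissible paths are enumerated site by site.

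For $A(u)$ the auxiliary stays up throughout, forcing $b_+(uY_i)$ at every down site and weight $1$ at every up site; the spin configuration is unchanged, so only $(i_1,\To,i_M)=(1,\To,M)$ contributes and $T^\text{pol}_{\{1,\To,M\}}=1$ gives the stated product. For $B(u)$ the single $\downarrow\to\uparrow$ flip must occur at an up site, which in the final configuration $\cket{1,\To,M+1}$ must lie in $\{1,\To,M+1\}$; admissible initial configurations are therefore $\cket{1,\To,\widehat{\imath},\To,M+1}$ for $1\le i\le M+1$. The path produces $c_-(uY_i)\prod_{j=i+1}^{M+1} b_+(uY_j)$: the down sites before the flip and the up sites beyond position $M+1$ both contribute weight~$1$, while the down sites between $i$ and $M+1$ each contribute $b_+$. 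The derivation for $C(u)$ is symmetric: the lone $c_+$ flip occurs at some $j\in\{M,\To,N\}$, preceded by $b_+$-weights from the down sites $1,\To,M-1$ and followed by $b_-$-weights from the up sites $j+1,\To,N$.

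For $D(u)$ the auxiliary returns to down at the end. The key observation is that any site where $c_-$ occurs (an up-to-down transition) must belong to $\{1,\To,M\}$ in the final configuration $\cket{1,\To,M}$, while any $c_+$ site must lie beyond~$M$. Since the flips necessarily alternate $c_-,c_+,c_-,\To$, since they occur in increasing order of position, and since all $c_-$-flipped sites are $\le M$ while all $c_+$-flipped sites are $>M$, at most one flip of each type can occur. The no-flip path gives the diagonal contribution $\prod_{i>M}b_-(uY_i)$ on $\cket{1,\To,M}$; the two-flip paths give the double sum in \eqref{eq:ABCD_Y_pol}, with initial configurations $\{1,\To,\widehat{m},\To,M,j\}$.

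The main bookkeeping step is matching the Grassmannian permutation $\{j_1,\To,j_M\}$ attached to each surviving initial configuration to the cycle prescribed in \eqref{eq:ABCD_Y_pol}: using \eqref{eq:perm_grassmannian}, all factors $(i_k,\To,k)$ with $i_k=k$ collapse and the remainder telescopes into the stated single cycle. The main subtlety is operator ordering: the $Y_i$-factors and $T^\text{pol}$ do not commute in general, and the convention adopted in \eqref{eq:ABCD_Y_pol} of placing all $Y$-factors to the left of the Hecke factor is precisely what emerges from the derivation, because the R-matrix entries act via the $Y_i$ on the polynomial $T^\text{pol}_{\{j_1,\To,j_M\}}\widetilde{\Psi}$ after the Hecke factor has rearranged variables to match the coordinate label $\cket{j_1,\To,j_M}$.
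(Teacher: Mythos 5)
Your proof is correct and follows essentially the same route as the paper's sketch: compute the matrix elements $\cbra{1,\To,M'}\,\widetilde{O}^\text{sp}\,\cket{i_1,\To,i_M}$ from \eqref{eq:induced_pol} by enumerating admissible six-vertex configurations of the auxiliary line through $R_{aN}(u\,Y_N)\cdots R_{a1}(u\,Y_1)$, which is exactly the ``standard graphical notation'' argument the paper invokes (and carries out explicitly only for $\widetilde{A}$). Your at-most-one-pair-of-flips argument for $\widetilde{D}$ and the telescoping of $\{i_1,\To,i_M\}$ into a single cycle correctly fill in the cases the paper omits.
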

\begin{proof}[Proof (sketch)] 
The proof of the expression for $\widetilde{A}$ is easy: since $\bra{\uparrow\uparrow} \, R(u) = \bra{\uparrow\uparrow}$ we have 
\begin{equation*}
	\begin{aligned}
	\cbra{1,\To,M} \, \widetilde{A}(u) & = \bra{\underset{a\vphantom{1}}{\uparrow}\underset{1}{\downarrow}\To \underset{M}{\downarrow} \uparrow \To \underset{N}{\uparrow}} \, R_{aN}(u\,Y_N) \cdots R_{aM}(u\,Y_M) \cdots R_{a1}(u\,Y_1) \, \ket{\underset{a\vphantom{1}}{\uparrow}} \\
	& = \bra{\underset{a\vphantom{1}}{\uparrow}\underset{1}{\downarrow}\To \underset{M}{\downarrow} \uparrow \To \underset{N}{\uparrow}} \, R_{aM}(u\,Y_M) \cdots R_{a1}(u\,Y_1) \, \ket{\underset{a\vphantom{1}}{\uparrow}} \\
	& = \mspace{7mu} \bra{\underset{1}{\downarrow}\To \underset{M}{\downarrow} \uparrow \To \underset{N}{\uparrow}} \prod_{m=1}^M \! b_+(u\,Y_m) \, ,
	\end{aligned}
\end{equation*}
where the last equality again follows from the ice rule (weight conservation) for the \textit{R}-matrix and from the presence of $\ket{\uparrow}$. The computation for $\widetilde{B}, \widetilde{C}$ parallels the computation yielding \eqref{eq:Uqsl2_pol}. As the result attests there are various contributions to take into account for $\widetilde{D}$. Since we will only use $\widetilde{A}$ later we omit the details; all of these expressions are readily obtained using standard graphical notation for $R(u)$, cf.~e.g.~\cite{Lam_14}. 
\end{proof}

\subsubsection{Explicit eigenvectors} \label{s:explicit_evrs_tilde}
The spin-Ruijsenaars model can be diagonalised following \cite{TU_97}. Since this is not our main topic we suffice with an example of some simple eigenspaces. For $M=0$ the space $\mathbb{C}[\vect{z}]_0 = \mathbb{C}[\vect{z}]^{\mathfrak{S}_N}$ consists of (completely) symmetric polynomials. Consider the basis of Macdonald polynomials $P_\lambda$ indexed by partitions $\lambda$ of length $\ell(\lambda) \leq N$. Each $\ket{P_\lambda(\vect{z})} = P_\lambda(\vect{z}) \, \cket{\varnothing}$ is an eigenvector of the spin-Macdonald operators (abelian symmetries) with eigenvalues as in \eqref{eq:Macd_eigenvalues}. It furthermore has pseudo highest weight (nonabelian symmetries). Indeed, $\widetilde{C}(u)$ acts by zero, while e.g.\ from~\eqref{eq:ABCD_Y_pol} $\widetilde{A}(u)$ acts by $1$ and $\widetilde{D}(u)$ by $t^{N/2} \, \Delta(-u)/\Delta(-t\,u)$, whose value on $P_\lambda$ follows from \eqref{eq:nonsymm_Macdonalds} or \eqref{eq:Macd_eigenvalues}. As a consistency check we note that \eqref{eq:qdet_ABCD} yields \eqref{eq:qdet_Y}. The Drinfeld polynomial is $\Delta(-u) = \prod_{i=1}^N (1 - t^{(N-2i+1)/2} \, q^{\lambda_i} \, u)$, cf.~\eqref{eq:intro_ABCD_on_shell}; for generic $q$ the number of $t$-strings\,---\,whence in particular the dimension, cf.~the paragraph following \eqref{eq:Drinfeld_polyn}\,---\,depends on the number of repetitions in $\lambda$.

\subsection{Freezing} \label{s:freezing} The spin-Macdonald operators~\eqref{eq:spin_D_r} still act nontrivially on polynomials through the difference operators~$\hat{q}_i$. To extract a spin chain we proceed along the lines of Uglov~\cite{Ugl_95u}, who in turn followed Talstra and Haldane~\cite{TH_95}. In the `static limit' $q\to 1$ the kinetic ($q$-difference) part of the Hamiltonian is suppressed with respect to the potential energy. By Appendix~\ref{s:app_CalSut_limit} we have $q = t^{\hbar/k}$ in terms of the physical interpretation as a quantum many body system. The limit $q\to1$ can thus be viewed either as letting $k \to \infty$ or as the \emph{classical} limit~$\hbar\to 0$. The dynamical (polynomial) and spin parts are treated differently, however; the latter will remain fully quantum mechanical. The physical picture is that the particles moving on the circle come to a halt and `freeze', so that only the spin interactions remain. The idea of \emph{freezing} can already be found in \cite{Sha_88}, and was worked out more concretely in \cite{Pol_93}.

At the point $q=1$ the spin-Macdonald operators become trivial: by \eqref{eq:D_r_q=1}, the `classical' spin-Macdonald operators just are (Gaussian binomial) multiples of the identity,
\begin{equation} \label{eq:Macdonald_gen_fn_class_pt}
	\widetilde{\Delta}(u)\big|_{q=1} = \Delta(u)\big|_{q=1} = \prod_{i=1}^N \bigl(1 + t^{(N-2\mspace{1mu}i+1)/2} \, u \bigr) = \sum_{r=0}^N \, \begin{bmatrix} N \\ r \end{bmatrix} \, u^r \, .
\end{equation}
We therefore have to consider a neighbourhood of the classical point and linearise in $q$, i.e.\ take the \emph{semiclassical} limit. 
\begin{definition} 
We will write
\begin{equation} \label{eq:semiclass_expansion}
	\widetilde{O}^\circ \coloneqq \bigl. \widetilde{O} \bigr|_{q=1} \, , \quad \
	\delta \widetilde{O} \coloneqq \frac{\partial \widetilde{O}}{\partial \mspace{1mu} q}\bigg|_{q=1} \qquad 
	\text{so} \qquad 
	\widetilde{O} = \widetilde{O}^\circ + (q-1) \, \delta \widetilde{O} + \mathcal{O}(q-1)^2 \, .
\end{equation}
The physical condition is independent of $q$ so both $\widetilde{O}^\circ$ and $\delta \widetilde{O}$ are physical operators.
\end{definition}

\subsubsection{Abelian spin-chain symmetries} \label{s:abelian_freezing}
To start let us focus on the spin-Macdonald operator~$\widetilde{D}_1$ from~\eqref{eq:spin-D_1}. 
Taylor expanding at $q= 1$ gives
\begin{equation*}
	\widetilde{D}_1 = [N] + (q-1)\, \delta\widetilde{D}_1 + \mathcal{O}(q-1)^2 \, .
\end{equation*}
\begin{lemma} \label{lem:decoupling}
The semiclassical limit of the first spin-Macdonald operator takes the form 
\begin{equation} \label{eq:H_tilde_expansion}
	\normalfont
	\delta \widetilde{D}_1 = \sum_{j=1}^N A_j(\vect{z}) \, z_j \, \partial_{z_j} + (t^{1/2}-t^{-1/2}) \, \sum_{j=1}^N A_j(\vect{z}) \sum_{i=1}^{j-1} V\mspace{-1mu}(z_i,z_j) \, S_{[i,j]}^{\mspace{1mu}\textsc{l}} \, ,
\end{equation}
where the spin part features the potential~\eqref{eq:pot} and long-range interactions~\eqref{eq:Sij_left},
\begin{equation*}
	\normalfont
	S_{[i,j]}^{\mspace{1mu}\textsc{l}} = \check{R}_{(i+1,\To,j-1,j)}^{-1} \, e_i^\text{sp} \, \check{R}_{(i+1,\To,j-1,j)} \, .
\end{equation*}
\end{lemma}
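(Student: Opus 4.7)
The plan is to start from the explicit form \eqref{eq:spin-D_1}, in which only the $\check{R}_{(12\cdots j)}^{-1}\,\hat{q}_j\,\check{R}_{(12\cdots j)}$ piece depends on $q$. First I would push $\hat{q}_j$ all the way to the right using $\hat{q}_j\,\check{R}_{k,k+1}(z_k/z_j) = \check{R}_{k,k+1}\bigl(z_k/(q\,z_j)\bigr)\,\hat{q}_j$, recasting each summand in the form $A_j\,\check{R}_{(12\cdots j)}^{-1}\,\check{R}_{(12\cdots j)}\big|_{z_j\mapsto q\mspace{1mu}z_j}\,\hat{q}_j$, cf.~\eqref{eq:intro_spin-D_1_N=3}. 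Applying $\partial_q\big|_{q=1}$ via Leibniz then naturally splits the result into a \emph{kinetic} contribution (derivative on $\hat{q}_j$) and a \emph{potential} contribution (derivative on one of the $j-1$ shifted inner $\check{R}$-factors). In the kinetic contribution the two $\check{R}$-stacks cancel at $q=1$ by unitarity \eqref{eq:unitarity_initial} and $\partial_q\big|_{q=1}\hat{q}_j = z_j\,\partial_{z_j}$, directly producing the first sum in \eqref{eq:H_tilde_expansion}.

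For the potential contribution, for each $i \in \{1,\dots,j-1\}$ the derivative singles out the factor $\check{R}_{i,i+1}(z_i/(q\,z_j))$. At $q=1$ the un-differentiated inner factors with indices $<i$ telescope pairwise against the corresponding factors in $\check{R}_{(12\cdots j)}^{-1}$ via \eqref{eq:unitarity_initial}, leaving the sandwich
\[
\check{R}_{j-1,j}^{-1}(z_{j-1}/z_j)\cdots\check{R}_{i,i+1}^{-1}(z_i/z_j) \cdot \partial_q\big|_{q=1}\check{R}_{i,i+1}(z_i/(q\,z_j)) \cdot \check{R}_{i+1,i+2}(z_{i+1}/z_j)\cdots\check{R}_{j-1,j}(z_{j-1}/z_j) \, .
\]
Writing $\partial_q\big|_{q=1}\check{R}(z_i/(q\,z_j)) = -(z_i/z_j)\,\check{R}'(z_i/z_j)$ and using Baxterisation $\check{R}(u)=1-f(u)\,e^\text{sp}$ together with $(e^\text{sp})^2=[2]\,e^\text{sp}$ and the relation $\check{R}^{-1}(u)=\check{R}(1/u)$ from \eqref{eq:unitarity_initial}, the middle expression $-(z_i/z_j)\,\check{R}^{-1}(z_i/z_j)\,\check{R}'(z_i/z_j)$ reduces to a rational scalar times the Temperley--Lieb generator $e^\text{sp}_{i,i+1}$. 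The remaining $\check{R}^{-1}\cdots\check{R}$-sandwich around $e^\text{sp}_{i,i+1}$ in positions $i+1,\dots,j-1$ is then manifestly the definition \eqref{eq:Sij_left} of $S^{\mspace{1mu}\textsc{l}}_{[i,j]}$.

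The hard part will be the scalar bookkeeping: checking that the coefficient in $-(z_i/z_j)\,\check{R}^{-1}(z_i/z_j)\,\check{R}'(z_i/z_j)$ is exactly $(t^{1/2}-t^{-1/2})\,V(z_i,z_j)$ with $V$ from \eqref{eq:pot}. Using $f(u)=t^{1/2}(u-1)/(t\mspace{1mu}u-1)$ one computes $f'(u)=t^{1/2}(t-1)/(t\mspace{1mu}u-1)^2$ and $1-[2]\,f(1/u)=(t\mspace{1mu}u-1)/(t-u)$, so $\check{R}^{-1}(u)\,\check{R}'(u) = -t^{1/2}(t-1)/[(t\mspace{1mu}u-1)(t-u)]\,e^\text{sp}$; combining these with the prefactor $-(z_i/z_j)$, substituting $u = z_i/z_j$, and comparing with $V(z_i,z_j) = t\,z_i z_j/[(t\,z_i-z_j)(t\,z_j-z_i)]$ should deliver the desired match. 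Summing over $1 \le i<j \le N$ then assembles exactly the potential sum in \eqref{eq:H_tilde_expansion}.
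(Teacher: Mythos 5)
Your proposal is correct and follows essentially the same route as the paper's proof: push $\hat{q}_j$ to the right, split via Leibniz into a kinetic piece (where the $\check{R}$-stacks cancel by unitarity) and a potential piece (where telescoping isolates $\check{R}^{-1}(u)\,\check{R}'(u)$, reduced to a scalar times $e^{\text{sp}}$ by Baxterisation and the Temperley--Lieb relation). The scalar check you defer does close: with $u=z_i/z_j$ one finds $-u\,\check{R}^{-1}(u)\,\check{R}'(u) = t^{1/2}(t-1)\,u/[(tu-1)(t-u)]\,e^{\text{sp}} = (t^{1/2}-t^{-1/2})\,V(z_i,z_j)\,e^{\text{sp}}$, exactly as the paper obtains via $\check{R}(z_j/z_i)\,e^{\text{sp}}_i = -(f_{ji}/f_{ij})\,e^{\text{sp}}_i$.
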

\noindent The decoupling between kinetic and spin terms in \eqref{eq:H_tilde_expansion} was observed in~\cite{TH_95,Ugl_95u}.

\begin{proof}
The decoupling is a simple consequence of our expression~\eqref{eq:spin-D_1} for $\widetilde{D}_1$; cf.~the sketch of the proof in \textsection\ref{s:intro_freezing} in terms of the graphical notation. Write the summand of \eqref{eq:spin-D_1} as
\begin{equation*} 
	A_j(\vect{z}) \, \check{R}_{j-1,j}(z_j/z_{j-1}) \cdots \check{R}_{12}(z_j/z_1) \, \check{R}_{12}(z_1/q\,z_j) \cdots \check{R}_{j-1,j}(z_{j-1}/q\,z_j) \, \hat{q}_j \, .
\end{equation*}
We'll show that its linearisation in~$q$ is the summand of \eqref{eq:H_tilde_expansion}.

When the derivative hits $\hat{q}_j = 1 + (q-1) \, z_j \,\partial_{z_j} + \mathcal{O}(q-1)^2$ all \textit{R}-matrices, now at $q=1$, cancel pairwise by unitarity~\eqref{eq:R_relations}. This yields the first term in \eqref{eq:H_tilde_expansion}. 

By the Leibniz rule the derivative of the spin part produces a sum over $i(<j)$. Consider the term where $\delta$ hits the $i$th \textit{R}-matrix that was affected by $\hat{q}_j$,
\begin{equation*}
	\delta \check{R}_{i,i+1}(z_i/q \, z_j) = -\frac{z_i}{z_j} \, \check{R}'_{i,i+1}(z_i/z_j) \, . 
\end{equation*} 
The \textit{R}-matrices to its left again cancel in pairs by unitarity. The derivate of the \textit{R}-matrix
can be easily evaluated using \eqref{eq:baxterisation} (cf.~the `change of variables' in \cite{Lam_18}):
\begin{equation*}
	\check{R}'_{i,i+1}(z_i/z_j) = -f'(z_i/z_j) \, e^\text{sp}_i \, ,
\end{equation*}
which allows us to recognise $e^\text{sp}_i = -f'(1)^{-1} \check{R}'_{i,i+1}(1) = -(t^{1/2}-t^{-1/2})\,\check{R}'_{i,i+1}(1)$ from \eqref{eq:R'(1)}. By \eqref{eq:baxterisation} and the Temperley--Lieb relation~\eqref{eq:TL} we moreover have
\begin{equation*}
	\check{R}_{i,i+1}(z_j/z_i) \, e^\text{sp}_i = (1 - [2]\,f_{ji}) \, e^\text{sp}_i = -\frac{f_{ji}}{f_{ij}} \, e^\text{sp}_i .
\end{equation*}
Hence
\begin{equation} \label{eq:pot_origin}
	\begin{aligned}
	\check{R}_{i,i+1}(z_j/z_i) \, \delta \check{R}_{i,i+1}(z_i/q\,z_j) & = (-z_i/z_j)\,(-f'_{ij})\,(-f_{ij}/f_{ji}) \, e^\text{sp}_i \\
	& = (t^{1/2}-t^{-1/2})\,V\mspace{-1mu}(z_i,z_j) \, e^\text{sp}_i \, .
	\end{aligned}
\end{equation}
The remaining \textit{R}-matrices combine to give the long-range spin interaction~\eqref{eq:Sij_left}. 
\end{proof}

Note that the physical space~\eqref{eq:phys_space} is not affected by the limit $q\to 1$; in particular Proposition~\ref{prop:physical_vectors} remains valid.

It remains to get rid of the kinetic term in~\eqref{eq:H_tilde_expansion} to get an operator that can be viewed as acting on the spin-chain Hilbert space~$\mathcal{H}$. Uglov argued as follows. Consider the abelian symmetries, i.e.\ the tower of higher Hamiltonians generated by $\widetilde{\Delta}(u)$. Following~\cite{TH_95,Ugl_95u} we expand the commutation relation~\eqref{eq:Delta} around $q=1$. Dropping all commutators with the constants~\eqref{eq:Macdonald_gen_fn_class_pt} the first nontrivial relations appear at quadratic order~\cite{Ugl_95u}:
\begin{equation} \label{eq:semiclassical_symmetries}
	0 = [\widetilde{\Delta}(u),\widetilde{\Delta}(v)] = (q-1)^2 \, [\delta\widetilde{\Delta}(u) , \delta\widetilde{\Delta}(v)] + \mathcal{O}(q-1)^3 \, .
\end{equation}
That is, the abelian symmetries~\eqref{eq:spin_D_r} survive at the semiclassical level. In \textsection\ref{s:abelian_tilde} we already noted that one of these is particularly simple: the total degree operator
\begin{equation} \label{eq:total_degree_semiclassical}
	\widetilde{D}_N = \hat{q}_1 \cdots \hat{q}_N = 1 + (q-1) \sum_{j=1}^N z_j \, \partial_{z_j} \, + \mathcal{O}(q-1)^2 \, .
\end{equation}
Happily, the commutation \eqref{eq:semiclassical_symmetries} implies that we may modify $\delta \widetilde{D}_1$ from~\eqref{eq:H_tilde_expansion} by adding any multiple of the total degree operator~$\delta \widetilde{D}_N$. The result still acts on the physical space $\widetilde{\mathcal{H}}$ and commutes with all other operators in the expansion of $\widetilde{\Delta}(u)$. In this way we can get rid of the derivates in $\delta \widetilde{D}_1$ provided we can make all their coefficients $A_j(\vect{z})$ equal. 

This is where the evaluation comes in: we need to find a value for $\vect{z}$ where the $A_j(\vect{z})$ become independent of the value of~$j$~\cite{Ugl_95u}. Solving $A_1(\vect{z}) = \cdots = A_N(\vect{z})$ for the coordinates yields $\vect{z} = z_1\,(1,\omega,\cdots,\omega^{N-1})$, or any permutation thereof, for $\omega\coloneqq \E^{2\pi\I/N}$. These are precisely the stationary (equilibrium) positions, cf.~e.g.~\textsection5.2 in \cite{Rui_95}, of the trigonometric classical Ruijsenaars--Schneider model~\cite{RS_86},
with constant centre-of-mass (angular) momentum. Omitting the latter we come to the following
\begin{definition} 
Define the \emph{evaluation} (specialisation) map $\ev \colon \widetilde{\mathcal{H}} \longrightarrow \mathcal{H}$ as in \eqref{eq:ev}, and for a physical operator $\widetilde{O}$ by 
\begin{equation} \label{eq:ev_operator}
	\ev(\widetilde{O}) \, \ev = \ev \circ \, \widetilde{O} \, .
\end{equation}
Let us denote equality upon evaluation, or \emph{on-shell} equality, by
\begin{equation*}
	\widetilde{O}_1 \overset{\text{ev}}{=} \widetilde{O}_2 \, \qquad \text{as shorthand for } \qquad \ev \widetilde{O}_1 = \ev \widetilde{O}_2 \, .
\end{equation*}
\end{definition}

Since $\sum_j A_j(\vect{z}) = \widetilde{D}_1^\circ$ the common value is
\begin{equation} \label{eq:ev_Aj}
	A_j(\vect{z}) \, \overset{\mathrm{ev}}{=} \, \frac{\widetilde{D}_1^\circ}{N} = \frac{[N]}{N} \, .
\end{equation}
Thus we are finally led to the Hamiltonian~\eqref{eq:ham_left}: by construction,
\begin{equation} \label{eq:ham_left_from_freezing}
	\begin{aligned}
	\frac{1}{t^{1/2}-t^{-1/2}} \, \biggl(\delta \widetilde{D}_1 - \frac{[N]}{N} \, \delta \widetilde{D}_N \biggr) \ & \overset{\text{ev}}{=} \ \sum_{j=1}^N A_j(\vect{z}) \sum_{i=1}^{j-1} V\mspace{-1mu}(z_i,z_j) \, S_{[i,j]}^{\mspace{1mu}\textsc{l}} \\
	& \overset{\text{ev}}{=} \ \frac{[N]}{N} \sum_{i<j}^N V\mspace{-1mu}(z_i,z_j) \, S_{[i,j]}^{\mspace{1mu}\textsc{l}} = \widetilde{H}^\textsc{l}
	\end{aligned}
\end{equation}
acts nontrivially on spins only while preserving the physical space. Here we removed a factor of $t^{1/2}-t^{-1/2}$ to ensure the isotropic limit $t \to 1$ is nontrivial. We have arrived at the \textsf{q}-deformed Haldane--Shastry spin chain.

As a corollary we readily obtain the possible eigenvalues of \eqref{eq:ham_left_from_freezing} from those of the \mbox{(spin-)}Macdonald operators. This gives Theorem~\ref{thm:intro_energy_left}~(i):
\begin{proposition}[\cite{Ugl_95u}] \label{prop:energy_left}
Any eigenvalue of \eqref{eq:ham_left_from_freezing} can be written as $\normalfont E^\textsc{l}(\mu)$ from \eqref{eq:energy_left}.
\end{proposition}
\begin{proof}
We use the expression \eqref{eq:ham_left_from_freezing} in terms of the Hamiltonian in terms of (symmetric combinations of) the \textit{Y}$\!$-operators, whose eigenvalues we know (\textsection\ref{s:Macdonald}). Let $\lambda$ be any partition with $\ell(\lambda)\leq N$. By adding a string of zeros at the end if necessary we can view $\lambda$ as a weak partition with $N$~parts. By \eqref{eq:Macd_eigenvalues} the eigenvalues of $\widetilde{D}_1$ and $\widetilde{D}_N$ are given by
\begin{equation*}
	\Lambda_1(\lambda) = \sum_{i=1}^N t^{(N-2 \, i+1)/2} \, q^{\lambda_i} \, , \qquad\qquad \Lambda_N(\lambda) = q^{|\lambda|} \, , \quad |\lambda| \coloneqq \sum_{i=1}^N \lambda_i \, .
\end{equation*} 
The eigenvalues of the frozen Hamiltonian~\eqref{eq:ham_left_from_freezing} follow by linearisation. The crucial step is to recognise contributions of $M \coloneqq \lambda_1$ separate magnons. This goes as follows~\cite{Ugl_95u}. The linear part in $q$ of the eigenvalue of $\widetilde{D}_1$ is
\begin{equation} \label{eq:energy_identity}
	\begin{aligned}
	\delta \Lambda_1(\lambda) = \sum_{i=1}^N \lambda_i \, t^{(N-2\,i+1)/2} & = \sum_{i=1}^N \sum_{m=1}^{\lambda_i} t^{(N-2\,i+1)/2} \\
	& = \sum_{m=1}^M \sum_{i=1}^{\lambda'_m} t^{(N-2\,i+1)/2} = \sum_{m=1}^M t^{(N-\lambda'_m)/2}\, [\lambda'_m] \, .
	\end{aligned}
\end{equation}
In the second equality we reinterpret the sum on the first line as a double sum with one term for each box in the Young diagram of~$\lambda$, contributing $t^{(N-2\,i+1)/2}$ for each of the $\lambda_i$ boxes in row~$i$. In the second line we perform the sum per column instead to pass to the conjugate partition~$\lambda'$, with $\ell\bigl(\lambda'\bigr) = M$. In the final equality we summed a geometric progression. 
Combining this with $\delta \Lambda_N(\lambda) = |\lambda| = |\lambda'|$ we obtain
\begin{equation*}
	\frac{1}{t^{1/2}-t^{-1/2}} \, \biggl(\delta \Lambda_1(\lambda) - \frac{[N]}{N} \, \delta \Lambda_N(\lambda) \biggr) = \frac{1}{t^{1/2}-t^{-1/2}} \, \sum_{m=1}^M \biggl( t^{(N-\lambda'_m)/2}\, [\lambda'_m] \, - \frac{[N]}{N} \,\lambda'_m \biggr) \, .
\end{equation*}
Upon renaming $\mu_m \coloneqq \lambda'_{M-m+1}$ we arrive at Uglov's expression for $E^\textsc{l}(\mu)$.
\end{proof}

It remains to show which of the above possible eigenvalues actually occur. In \textsection\ref{s:explicit_evrs} we will prove that the eigenspace $\mathcal{H}^\mu$ is nontrivial if $\mu \in \mathcal{M}_N$ is a motif by explicitly constructing the corresponding pseudo highest-weight vector. The result will be the wave functions from \textsection\ref{s:intro_explicit_evrs}. We have not yet found a satisfactory way to verify that its energy is given by $E^\textsc{l}(\mu)$ by direct computation, except in special cases; cf.~the remarks on p.\,\pageref{rmk:direct_verification}.

Next we turn to the higher spin-chain Hamiltonians from Theorem~\ref{thm:freezing_abelian}. First of all we observe that continuing the expansion~\eqref{eq:semiclassical_symmetries} gives two nontrivial commutators at cubic order in $q-1$, and so on, so we are \emph{not} guaranteed to get any further symmetries of $\widetilde{D}_1$ at higher order in the expansion. The abelian symmetries of the spin chain instead just arise by freezing the higher spin-Macdonald operators.

\begin{proof}[Proof of Theorem~\ref{thm:ham_energy_right}]
The simplest higher spin-Macdonald operator is $\widetilde{D}_{-1} = \widetilde{D}_N^{-1} \, \widetilde{D}_{N-1}$ from \eqref{eq:intro_spin-D_-1}. Note that when we push $\hat{q}_i^{-1}$ to the right the $q$ again appears in the \emph{denominator} of the arguments of the \textit{R}-matrices it has passed. The semiclassical limit is computed just as for \eqref{eq:H_tilde_expansion}. The result is
\begin{equation} \label{eq:ham_right_semiclassical}
	\delta \widetilde{D}_{-1} = - \sum_{i=1}^N A_{-i}(\vect{z}) \, z_i \, \partial_{z_i} + (t^{1/2}-t^{-1/2}) \sum_{i=1}^N A_{-i}(\vect{z}) \! \sum_{j=i+1}^N \!\! V\mspace{-1mu}(z_i,z_j) \, S_{[i,j]}^{\mspace{1mu}\textsc{r}} \, ,
\end{equation}
where
\begin{equation*}
	S_{[i,j]}^{\mspace{1mu}\textsc{r}} = \check{R}_{(j-1,\To,i+1,i)}^{-1} \, e_{j-1}^\text{sp} \, \check{R}_{(j-1,\To,i+1,i)} \, .
\end{equation*}
As $A_{-i}(\vect{z}) \overset{\mathrm{ev}}{=} [N]_{t^{-1/2}}/N = [N]_{t^{1/2}}/N$ we find the spin-chain Hamiltonian~\eqref{eq:ham_right},
\begin{equation} \label{eq:ham_right_from_freezing}
	\begin{aligned}
	\frac{1}{t^{1/2}-t^{-1/2}} \, \biggl(\delta \widetilde{D}_{-1} + \frac{[N]}{N} \, \delta \widetilde{D}_N \biggr) \ & \overset{\text{ev}}{=} \ \sum_{i=1}^N A_{-i}(\vect{z}) \! \sum_{j=i+1}^N \!\! V\mspace{-1mu}(z_i,z_j) \, S_{[i,j]}^{\mspace{1mu}\textsc{r}} \\
	& \overset{\text{ev}}{=} \ \frac{[N]}{N} \sum_{i<j}^N V\mspace{-1mu}(z_i,z_j) \, S_{[i,j]}^{\mspace{1mu}\textsc{r}} = \widetilde{H}^\textsc{r} \, .
	\end{aligned}
\end{equation}
Here we note that this is consistent with \eqref{eq:H_r} as $[N] - A_{-i}(\vect{z}) \overset{\mathrm{ev}}{=} [N]\,(N-1)/N$:
\begin{align*}
	\delta \widetilde{D}_{N-1} & = \widetilde{D}^\circ_{-1} \, \delta D_N + \widetilde{D}^\circ_N \,\delta \widetilde{D}_{-1} = [N] \, \delta D_N + \delta \widetilde{D}_{-1} \\
	& = \sum_{i=1}^N \bigl([N] - A_{-i}(\vect{z})\bigr) \, z_i \, \partial_{z_i} + (t^{1/2}-t^{-1/2}) \sum_{i=1}^N A_{-i}(\vect{z}) \! \sum_{j=i+1}^N \!\! V\mspace{-1mu}(z_i,z_j) \, S_{[i,j]}^{\mspace{1mu}\textsc{r}} \, . 
\end{align*}

The eigenvalues of $H^\textsc{r}$ follow from \eqref{eq:Macd_eigenvalues} using \eqref{eq:spin_D_symmetry}:
\begin{equation*}
	\Lambda_{-1}(\lambda) = \frac{\Lambda_{N-1}(\lambda)}{\Lambda_N(\lambda)} = \sum_{i=1}^N t^{-(N-2\mspace{1mu}i+1)/2} \, q^{-\lambda_i} \qquad \text{so} \qquad \delta \Lambda_{-1}(\lambda) = -\delta \Lambda_1(\lambda)\big|_{t \, \mapsto t^{-1}} \, .
\end{equation*}
Since $\widetilde{D}_{-1}^\circ = [N] = \widetilde{D}_1^\circ \, {} \big|_{t \, \mapsto t^{-1}}$ while $t^{1/2}-t^{-1/2}$ changes sign under inverting $t$ it follows that $E^\textsc{r}(\mu) = E^\textsc{l}(\mu)\big|_{t \, \mapsto t^{-1}}$.
\end{proof}

The other spin-chain Hamiltonians are similarly obtained from \eqref{eq:spin_D_r}:
\begin{proof}[Proof of Theorem~\ref{thm:freezing_abelian}]
It is clear that the kinetic and spin part decouple for any~$r$. To find the required multiple of $\delta \widetilde{D}_N$ needed to remove the kinetic part we compute 
\begin{equation*}
	\sum_{J \colon \# J = r} \!\!\!\!\! A_J(\vect{z}) \, \delta \hat{q}_J \, = \!\!\! \sum_{J \colon \# J = r} \Biggl( A_J(\vect{z}) \, \sum_{j \in J} z_j \, \partial_{z_j} \Biggr) = \, \sum_{j=1}^N \Biggl( \, \sum_{\substack{J \colon \# J = r \\ J \ni j}} \!\!\!\!\! A_J(\vect{z}) \Biggr) z_j \, \partial_{z_j} \, .
\checkedMma
\end{equation*}
The prudent generalisation of \eqref{eq:ev_Aj} is the identity, valid for any $j$,
\begin{equation}
	\sum_{\substack{J : \# J = r \\ J \ni j}} \!\!\!\!\! A_J(\vect{z}) \, \overset{\mathrm{ev}}{=} \, \frac{r}{N} \, \widetilde{D}_r^\circ = \frac{r}{N} \, \begin{bmatrix} N \\ r \end{bmatrix} \, , \qquad 1 \leq r \leq N \, .
\checkedMma
\end{equation}
In this way we obtain \eqref{eq:H_r}.

For the eigenvalues of these higher spin-chain Hamiltonians we use the following generalisation of \eqref{eq:energy_identity}, valid for any partition $\lambda$ with $\ell(\lambda)\leq N$,
\begin{align*}
	\delta \Lambda_r(\lambda) & = \sum_{j_1 < \cdots < j_r}^N \!\!\!\! (\lambda_{j_1}+\cdots + \lambda_{j_r}) \prod_{s=1}^r t^{(N-2\,j_s +1)/2} \\
	& = \sum_{m=1}^{\lambda_1} \sum_{s=1}^r (-1)^{s-1} \, \begin{bmatrix} N \\ r - s \end{bmatrix} \, t^{s\,(N-\lambda'_m)/2} \, \frac{[s\,\lambda'_m]}{[s]} \, .
\checkedMma
\end{align*} 
This yields the additive form~\eqref{eq:energy_r}.
\end{proof}

The only spin-Macdonald operator that does \emph{not} give rise to a spin-chain symmetry in this way is the (multiplicative) translation operator $\widetilde{D}_N = \hat{q}_1 \cdots \hat{q}_N$ from~\eqref{eq:total_degree_semiclassical}. Let us show that it nevertheless gives \textsf{q}-homogeneity on $\mathcal{H}$ in return, establishing Proposition~\ref{prop:intro_q-translation}~(i) from \textsection\ref{s:intro_abelian}:

\begin{proposition} \label{prop:q-translation}
If $\widetilde{O}$ is an operator on $\widetilde{\mathcal{H}}$ that commutes with $\widetilde{D}_N$ then the evaluation $O = \ev \widetilde{O}$ is \textsf{q}-homogeneous: $O = G\,O\,G^{-1}$ with $G$ the \textsf{q}-translation operator~\eqref{eq:q-translation}.
\end{proposition}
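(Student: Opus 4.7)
The plan is to decompose $\widetilde{\mathcal{H}}$ by the total polynomial degree $d$ of the simple component and show that this grading aligns, under $\ev$, with the $G$-eigenspace decomposition of $\mathcal{H}$. Since $\widetilde{D}_N = \hat{q}_1 \cdots \hat{q}_N$ acts on a homogeneous polynomial of total degree $d$ in $z_1,\To,z_N$ by multiplication by $q^d$, and since the Hecke operators $T_w^\text{pol}$ preserve total degree (being sums of permutations and rational functions of ratios $z_i/z_{i+1}$), every physical vector $\ket{\widetilde{\Psi}} \in \widetilde{\mathcal{H}}$ whose simple component is homogeneous of degree $d$ is a $\widetilde{D}_N$-eigenvector with eigenvalue $q^d$. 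The commutation $[\widetilde{O},\widetilde{D}_N]=0$ for all $q$ is thus equivalent to $\widetilde{O}$ preserving each degree-$d$ subspace $\widetilde{\mathcal{H}}^{(d)} \subseteq \widetilde{\mathcal{H}}$.

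Next, I claim that $\ev$ maps $\widetilde{\mathcal{H}}^{(d)}$ into the $G$-eigenspace $\mathcal{H}^{(d \bmod N)}$ with eigenvalue $\omega^d$. The underlying geometric identity is that at the evaluation point $\vect{z}_0 = (\omega,\omega^2,\To,\omega^N)$, uniform scaling by $\omega$ coincides with the cyclic permutation of entries,
\begin{equation*}
	\omega \, \vect{z}_0 = (\omega^2,\omega^3,\To,\omega^{N+1}) = (\omega^2,\To,\omega^N,\omega) \, ,
\end{equation*}
using $\omega^{N+1} = \omega$. Consequently, for any monomial $\vect{z}^\alpha$ of total degree $d = |\alpha|$, a direct computation gives $\vect{z}^\alpha|_{\omega \vect{z}_0} = \omega^d \, \vect{z}^\alpha|_{\vect{z}_0}$. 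Since $\widetilde{G}$ is constructed (using Yang--Baxter braiding together with the physical condition $\check{R}_{i,i+1}(z_i/z_{i+1}) = s_i$) to implement precisely this cyclic shift at the level of the evaluation, we obtain $G \, \ev\ket{\widetilde{\Psi}} = \omega^d \, \ev\ket{\widetilde{\Psi}}$ for any $\ket{\widetilde{\Psi}} \in \widetilde{\mathcal{H}}^{(d)}$, hence $\ev\,\widetilde{\mathcal{H}}^{(d)} \subseteq \mathcal{H}^{(d \bmod N)}$.

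Combining the two steps, $\widetilde{O}$ preserves each $\widetilde{\mathcal{H}}^{(d)}$ and $\ev$ sends $\widetilde{\mathcal{H}}^{(d)}$ into $\mathcal{H}^{(d \bmod N)}$; hence $O = \ev\widetilde{O}$ preserves every $G$-eigenspace $\mathcal{H}^{(p)}$, which is equivalent to $[O,G]=0$, i.e.\ $O = G \, O \, G^{-1}$. The main technical obstacle is the rigorous justification that $\widetilde{G}$ acts on the \emph{full} physical vector (not merely its simple component) in a way producing the scalar phase $\omega^d$ upon evaluation: the individual Hecke-translated coordinate-basis coefficients do not literally cyclic-shift to those of the scaled polynomial, but their on-shell evaluations must combine to yield this common eigenvalue. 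This can be verified directly on monomial test vectors and extended by linearity, or alternatively deduced abstractly from $[\widetilde{G},\widetilde{D}_N]=0$, $\widetilde{G}^N = 1$, and the spectral decomposition into $\widetilde{D}_N$-graded pieces.
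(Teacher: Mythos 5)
Your argument is sound and rests on the same mechanism as the paper's proof --- the identity $\omega\,\vect{z}_0 = (\vect{z}_0)_{(12\cdots N)}$ at the evaluation point together with degree-zero homogeneity forced by commutation with $\widetilde{D}_N$ --- but you package it as an eigenspace statement ($\ev$ intertwines the degree grading with the $G$-eigenspace decomposition) where the paper works at the operator level: it inserts $s^{\text{tot}}_{(12\cdots N)} = 1$ on $\widetilde{\mathcal{H}}$ to write $\widetilde{O} = s_{(N\cdots21)}\,\widetilde{G}\,\widetilde{O}\,\widetilde{G}^{-1} s_{(12\cdots N)}$ and then notes that the residual cyclic permutation of coordinates is invisible on shell because $\widetilde{O}$ and $\widetilde{G}$ depend only on ratios $z_i/z_j$. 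The step you flag as "the main technical obstacle" is not actually an obstacle, and the right tool is the one you mention in passing rather than either of your proposed fallbacks: since each $s_i^{\text{tot}} = 1$ on $\widetilde{\mathcal{H}}$, the operator $\widetilde{G} = \check{R}_{(N\cdots21)} = s_{(12\cdots N)}$ acts on a physical vector \emph{purely} by cyclically permuting the coordinates in every coordinate-basis component, leaving the spin part untouched; and since the $T_w^{\text{pol}}$ preserve total degree, every component $T^{\text{pol}}_{\{i_1,\To,i_M\}}\widetilde{\Psi}$ of a degree-$d$ vector is itself homogeneous of degree $d$ and hence picks up the common phase $\omega^d$ upon evaluation. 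There is no need for the components to "literally cyclic-shift" into one another. By contrast, your suggested abstract fallback via $[\widetilde{G},\widetilde{D}_N]=0$ and $\widetilde{G}^N=1$ would \emph{not} close the gap: it only shows that $G$-eigenvalues are $N$th roots of unity and that $G$ preserves the degree-graded pieces, not that the eigenvalue on the degree-$d$ piece is $\omega^d$ (which is in any case more than the proposition needs, but is what your argument asserts). Finally, be aware that your first step implicitly reads "commutes with $\widetilde{D}_N$" as commutation for all (or generic) $q$, equivalently commutation with the total degree operator $\sum_j z_j\partial_{z_j}$; at special $q$ (e.g.\ $q=1$, where $\widetilde{D}_N=1$) the eigenvalues $q^d$ collide and preservation of the grading no longer follows. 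This is the intended reading, and it is exactly how the proposition is applied to $\delta\widetilde{D}_r$ and $\widetilde{L}_a^\circ(u)$, but it is worth stating.
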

\begin{proof}
For any $\widetilde{O}$ acting on $\widetilde{\mathcal{H}}$ we have, cf.~the proof of \eqref{eq:spin_D_r},
\begin{equation*}
	\widetilde{O} = s^\text{tot}_{(N \cdots 21)} \, \widetilde{O} \, s^\text{tot}_{(12 \cdots N)} = s_{(N \cdots 21)} \, \widetilde{G} \, \widetilde{O} \, \widetilde{G}^{-1} \, s_{(12\cdots N)} \qquad \text{on} \quad \widetilde{\mathcal{H}} \, ,
\end{equation*}
where we used $\check{R}_{(N \cdots 21)} = \widetilde{G}$. Note that the conjugation by $s_{(12\cdots N)}$ just cyclically permutes the $z_j$ in $\widetilde{G}\,\widetilde{O}\,\widetilde{G}^{-1}$. However, commutation with the total degree operator~$\hat{q}_1 \cdots \hat{q}_N$ means that $\widetilde{O}$ is homogeneous of total degree zero in~$\vect{z}$, i.e.\ depends only on ratios of coordinates. The same holds for $\widetilde{G}$. Thus the cyclic permutation is invisible upon evaluation, and we conclude that $\widetilde{O} \overset{\mathrm{ev}}{=} \widetilde{G} \, \widetilde{O} \, \widetilde{G}^{-1}$.
\end{proof}
\noindent We recall that the second part of  Proposition~\ref{prop:intro_q-translation} was already demonstrated in \textsection\ref{s:intro_abelian}.

Besides all Hamiltonians obtained from $\widetilde{\Delta}(u)$ it follows that $\widetilde{L}_a(u)$ is \textsf{q}-homogeneous. The abelian symmetries are summarised in Table~\ref{tb:abelian_symmetries} on p.\,\pageref{tb:abelian_symmetries}.

\subsubsection{Nonabelian spin-chain symmetries} \label{s:nonabelian_freezing} 
Finally we turn to the nonabelian symmetries, which are generated by the monodromy matrix~\eqref{eq:L_tilde}. 

\begin{proof}[Proof of Theorem~\ref{thm:monodromy_frozen}]
This time we don't have to go far in the expansion~\cite{TH_95} as the zeroth order already gives a nontrivial operator:
\begin{subequations} \label{eq:L_spin_chain}
	\begin{gather}
	\widetilde{L}_a(u) = \widetilde{L}_a^\circ(u) + \mathcal{O}(q-1)^1 \, , \qquad \widetilde{L}_a^\circ(u) = R_{aN}(u\,Y_N^\circ) \cdots R_{a1}(u\,Y_1^\circ) \, .
\shortintertext{It is clear that this operator still obeys the \textit{RLL}-relations~\eqref{eq:RLL}. To check that it also remains a symmetry of the spin chain we expand \eqref{eq:commutator_L_Delta} like in \eqref{eq:semiclassical_symmetries}:}
	0 = \bigl[ \widetilde{L}_a(u) , \widetilde{\Delta}(v) \bigr] =  (q-1) \, \bigl[ \widetilde{L}_a^\circ(u) , \delta \widetilde{\Delta}(v) \bigr] + \mathcal{O}(q-1)^2 \, .
	\end{gather}
\end{subequations}
where use \eqref{eq:Macdonald_gen_fn_class_pt}. So the abelian symmetries remain $\widehat{\mathfrak{U}}$-invariant semiclassically. 
\end{proof}

The Chevalley generators are \eqref{eq:Uqsl2_spin} and \eqref{eq:affinisation_Y} with $Y_i \mapsto Y_i^\circ$. The induced polynomial action acquires a neat symmetric form: at $q=1$ \eqref{eq:affinisation_pol_alt} is related to \eqref{eq:Uqsl2_pol_alt} by
\begin{equation} \label{eq:Uhat_pol_t_inv}
	\begin{aligned}
	\widehat{E}_M^{\text{pol},\,\circ} & \propto F_M^\text{pol}\big|_{t\mapsto t^{-1}} \, , \\
	\widehat{F}_M^{\text{pol},\,\circ} & \propto E_M^\text{pol}\big|_{t\mapsto t^{-1}}\, ,
	\end{aligned}
\end{equation}
where the proportionality signs just mean that we ignore the prefactors in \eqref{eq:Uqsl2_pol_alt}--\eqref{eq:affinisation_pol_alt}.

By \eqref{eq:Macdonald_gen_fn_class_pt} the quantum determinant~\eqref{eq:qdet_Y} now becomes a true, $\vect{z}$-independent scalar:
\begin{equation*}
	\qdet_a \widetilde{L}^\circ_a(u) = t^{N/2} \, \frac{\Delta^{\!\circ}(-u)}{\Delta^{\!\circ}(-t\,u)} = t^{N/2} \, \frac{t^{(1-N)/2} \, u - 1}{t^{(1+N)/2} \, u - 1} \, .
\end{equation*}
This is one way to justify our detour through the dynamical model~\cite{TH_95}: the quantum determinant of $\widetilde{L}_a(u)$ was nontrivial from the polynomial perspective, making it a suitable candidate for generating nontrivial abelian symmetries.

\subsubsection{Explicit spin-chain eigenvectors} \label{s:explicit_evrs} 
Our final task is to construct eigenvectors of the spin-chain Hamiltonian. As usual we proceed per $M$-particle sector~$\mathcal{H}_M$, cf.~\eqref{eq:weight_decomp}. As in \cite{BG+_93} we will exploit the rich algebraic structure available off shell, i.e.\ prior to evaluation. 
\bigskip

\noindent\textbf{General considerations.} By Proposition~\ref{prop:physical_vectors} from \textsection\ref{s:physical_space} we may pass to the polynomial world and work with $\widetilde{\Psi}(\vect{z}) \in \mathbb{C}[\vect{z}]_M = \mathbb{C}[\vect{z}]^{\mathfrak{S}_M \times \mathfrak{S}_{N-M}}$ to diagonalise the Hamiltonian~\eqref{eq:ham_left_from_freezing}, viewed prior to evaluation as acting on polynomials. Then we embed the eigenfunctions in the physical space via \eqref{eq:phys_vector}, and finally evaluate to land in the $M$-particle sector of the spin chain.

The origin of Theorem~\ref{thm:phys_vector_spinchain} is the following.
\begin{proposition}\label{prop:phys_vector_freezing}
Any $M$-particle spin-chain eigenvector obtained by freezing is determined by a symmetric polynomial in just $M$ variables, $\widetilde{\Psi}(z_1,\To,z_M) \in \mathbb{C}[z_1,\To,z_M]^{\mathfrak{S}_M} \subset \mathbb{C}[\vect{z}]_M$, as
\begin{equation}
	\normalfont \sum_{i_1 < \cdots < i_M}^N \!\!\!\!\! \ev\Bigl( T_{\{i_1,\To,i_M\}}^\text{pol} \widetilde{\Psi}(z_1,\To,z_M) \Bigr) \, \cket{i_1,\To,i_M}  \, .
\end{equation}
\end{proposition}
\begin{proof}
Our starting point is Proposition~\ref{prop:physical_vectors}; we have to show that the polynomial may be taken to be independent of $z_{M+1},\To,z_N$. Consider the power-sum basis for $\mathbb{C}[\vect{z}]_M$, which is given by $p_{\lambda^{(1)}}(z_1,\To,z_M) \, p_{\lambda^{(2)}}(z_{M+1},\To,z_N)$ for two partitions with $\ell(\lambda^{(1)}) \leq M$ and $\ell(\lambda^{(2)}) \leq N-M$. Here $p_\lambda = \prod_{r \in \lambda} p_r$ and $p_r(\vect{z}) = \sum_i z_i^r$ as usual. Notice that 
\begin{equation} \label{eq:ev_powersum}
	\ev \, p_r(z_1,\To,z_N) = \sum_{i=1}^N \omega^{i\,r} = N \, \delta_{r, 0 \,\mathrm{mod}\,N} \, ,
\end{equation}
so $p_r(z_{M+1},\To,z_N) = p_r(z_1,\To,z_N) - p_r(z_1,\To,z_M) \overset{\mathrm{ev}}{=} -p_r(z_1,\To,z_M)$ for $0<r<N$. Hence on shell $p_{\lambda^{(2)}}(z_{M+1},\To,z_N) \overset{\mathrm{ev}}{=} (-1)^{\ell(\lambda^{(2)})} \, p_{\lambda^{(2)}}(z_1,\To,z_M)$ since $\lambda^{(2)}_1 < N$. In this way we land in $\mathbb{C}[z_1,\To,z_M]^{\mathfrak{S}_M}$. 
\end{proof}

Our ansatz will be that sufficiently many spin-chain eigenvectors are obtained in this way. The goal of this section will be to show that this is indeed the case and prove Theorem~\ref{thm:nice_polynomial}. Theorem~\ref{thm:phys_vector_spinchain} then follows from the nonabelian symmetries and the proof in \textsection\ref{s:hw}.

By \eqref{eq:ham_left_from_freezing} and \eqref{eq:ham_right_from_freezing} we will seek joint $\delta\widetilde{D}_r$-eigenvectors in $\widetilde{\mathcal{H}}_M$. To this end we may in fact work at the \emph{classical} level: the operators $Y_i^\circ = Y_i|_{q=1}$, which didn't play a role in \textsection\ref{s:abelian_freezing} due to \eqref{eq:Macdonald_gen_fn_class_pt}, will be pivotal for our diagonalisation. Indeed, although the sum $\sum_i Y_i^\circ = [N]$ is trivial, the individual terms certainly are not. The $Y_i^\circ$ do not preserve $\widetilde{\mathcal{H}}$, but as in \textsection\ref{s:Macdonald} we can first view the magnons as distinguishable particles to develop the nonsymmetric theory, and then \mbox{(\textsf{q}-)}symmetrise at the end. Crucially, at the intermediate step the $Y_i^\circ$ commute with the Hamiltonians: just as in \eqref{eq:L_spin_chain} we have
\begin{equation*}
	0 = \bigl[ Y_i , \widetilde{\Delta}(u) \bigr] =  (q-1) \, \bigl[ Y_i^\circ , \delta \widetilde{\Delta}(u) \bigr] + \mathcal{O}(q-1)^2 \, .
\end{equation*}
The $Y_i^\circ$ still form a commuting family of operators, and can be jointly diagonalised. At $q=1$ a part of the dependence on the partition drops out of \eqref{eq:nonsymm_Macdonalds}, but the joint spectrum remains multiplicity free when taking into account $\delta \widetilde{\Delta}(u)$. This passage to a classical spinless model is quite a simplification!

The evaluation further facilitates our task. Firstly, it gave us Proposition~\ref{prop:phys_vector_freezing}. Secondly, it allows us to restrict ourselves to polynomials with degree $<N$ in each variable. (Of course $\omega^N = 1$ will already play a role for lower powers of $z_j$ as $j$ increases, but since the generators of the \textsc{aha} preserve the total degree we should allow the maximal degree in any variable to equal to that for $z_1$, which by evaluation is $N-1$.)

The corresponding nonsymmetric theory ought to take place in $\mathbb{C}[z_1,\To,z_M] \subset \mathbb{C}[\vect{z}]$ (with degree $<N$ in each variable) and therefore involve the $Y_m^\circ$ with $1\leq m\leq M$. However, the latter are associated to $\widehat{\mathfrak{H}}_N$ and depend on all $N$ variables, so do not preserve the subspace $\mathbb{C}[z_1,\To,z_M] \subset \mathbb{C}[\vect{z}]$. They do, however, preserve the slightly larger space $\mathbb{C}[z_1,\To,z_M] \otimes \mathbb{C}[z_{M+1},\To,z_N]^{\mathfrak{S}_{M-N}} \subset \mathbb{C}[\vect{z}]$. The key point of our derivation will be that, moreover, \emph{an appropriate subspace} of $\mathbb{C}[z_1,\To,z_M]$ is \emph{on shell} preserved by these $Y_m^\circ$, which reduce to the affine generators $Y'_m$ of $\widehat{\mathfrak{H}}'_M$ with parameters $q'=t'=t^{-1}$.
\bigskip

\noindent\textbf{Kernel for the \textit{q}-shift.} As for any spin chain with some form of translational invariance the Hamiltonian~\eqref{eq:ham_left} is readily diagonalised for $M=1$ by \textsf{q}-homogeneity, see \textsection\ref{s:intro_examples}. As a warm-up for general $M$ let us redo the derivation for $M=1$ following the strategy outlined above. Fist we need to develop a piece of technology.

By evaluation we may restrict ourselves to the subspace of polynomials of degree at most $N-1$ in any variable, which we denote by $\mathbb{C}[\vect{z}]^{<N} \subset \mathbb{C}[\vect{z}]$. This subspace is clearly preserved by the \textit{q}-shift operators~$\hat{q}_i$, $1\leq i\leq N$. Define the replacement map
\begin{equation} \label{eq:rho_mj}
	r_{mj} \coloneqq \,\cdot\,\, |_{z_m \, \mapsto \, z_j} \, .
\end{equation}
\begin{lemma}[off-shell kernel for the $q$-shift] \label{lem:q_hat_offshell}
On polynomials of degree at most $N-1$ in each variable the $q$-shift operator acts by a linear combination of replacements:
\begin{equation} \label{eq:q_hat_kernel_off_shell}
	\hat{q}_i \mspace{4mu} = \ \sum_{j=1}^N  \, \Biggl( \, \prod_{k(\neq j)}^N \!\! \frac{q\,z_i - z_k}{z_j - z_k} \Biggr) \, r_{ij} \qquad \text{on} \quad \mathbb{C}[\vect{z}]^{<N} \, .
	\checkedMma
\end{equation}
\end{lemma}
\noindent In particular the right-hand side preserves $\mathbb{C}[\vect{z}]^{<N}$ despite the denominators.

\begin{proof}
Write $z$ instead of $z_i$. Let $w_1,\To,w_N \in \mathbb{C}^\times$ be pairwise distinct. Then the $N$~polynomials
\begin{equation*}
	\varphi_j(z) \coloneqq \prod_{k(\neq j)}^N \! (z - w_k) \, , \qquad \varphi_j(w_k) = \delta_{jk} \prod_{l(\neq j)}^N \! (w_j - w_l) \, ,
\end{equation*}
form a basis for $\mathbb{C}[z]^{<N}$, with linear independence because only $\varphi_r$ is nonzero at $z=w_k$. By Lagrange interpolation we can thus write any $P(z) \in \mathbb{C}[z]^{<N}$ as
\begin{equation*}
	P(z) = \sum_{j=1}^N \, \frac{P(w_j)}{\varphi_j(w_j)} \, \varphi_j(z) = \sum_{j=1}^N \, \Biggl(\, \prod_{k(\neq j)}^N \! \frac{z - w_k}{w_j - w_k} \Biggr) P(w_j) \qquad \text{on} \quad \mathbb{C}[z]^{<N} \, .
\end{equation*}
Thus the $q$-shift operator acts by
\begin{equation*}
	\hat{q} \, P(z) = \sum_{j=1}^N \, \Biggl(\, \prod_{k(\neq j)}^N \! \frac{q\,z - w_k}{w_j - w_k} \Biggr) P(w_j) \qquad \text{on} \quad \mathbb{C}[z]^{<N} \, .
\end{equation*}
Taking $\vect{w} = \vect{z}$ and $z=z_i$ now gives a nontrivial expression, and we arrive at \eqref{eq:q_hat_kernel_off_shell}.
\end{proof}

On shell the kernel \eqref{eq:q_hat_kernel_off_shell} simplifies significantly.
\begin{lemma}[on-shell kernel for the $q$-shift] \label{lem:q_hat_onshell}
On shell the $q$-shift operator can be expressed as
\begin{equation} \label{eq:q_hat_kernel_on_shell}
	\normalfont
	\hat{q}_i \mspace{4mu} \overset{\text{ev}}{=} \ \frac{q^N - 1}{N} \, \sum_{j=1}^N \frac{1}{q\,\omega^{i-j} - 1} \, r_{ij} \qquad \text{on} \quad \mathbb{C}[\vect{z}]^{<N} \, .
	\checkedMma
\end{equation}
\end{lemma}
\begin{proof}
Note that the elementary symmetric polynomials evaluate to
\begin{equation} \label{eq:ev_elementary}
	\ev e_r = \delta_{r,0} + (-1)^{N-1} \, \delta_{r,N} \, .
\end{equation}
Indeed, by Newton's identity and \eqref{eq:ev_powersum} we have
\begin{equation*}
	\ev e_r = \frac{1}{r} \sum_{s=1}^r (-1)^{s-1} \ev p_s \, \ev e_{r-s} = (-1)^{N-1} \, \delta_{r,N} \, , \qquad r>0 \, .
\end{equation*}
Hence for any $1\leq i \leq N$
\begin{equation} \label{eq:ev_products} \checkedMma
	\prod_{k=1}^N (q\,z_i - z_k) =  \sum_{r=0}^N (-1)^r \, (q\,z_i)^{N-r} \, e_r(\vect{z}) \mspace{4mu} \overset{\text{ev}}{=} \mspace{4mu} q^N - 1 \, , \qquad z_i \prod_{k(\neq i)}^N \! (z_i - z_k) \mspace{4mu} \overset{\text{ev}}{=} \mspace{4mu} N \, .
\end{equation}
(Note that former implies the latter\,---\,take the semiclassical limit $\delta$\,---\,and together they yield~\eqref{eq:ev_Aj}.) By virtue of these relations the kernel~\eqref{eq:q_hat_kernel_off_shell} reduces to \eqref{eq:q_hat_kernel_on_shell} on shell.
\end{proof}

Equipped with this tool we return to the diagonalisation for $M=1$. We wish to find eigenfunctions in $\mathbb{C}[z_1] \subset \mathbb{C}[z_1] \otimes \mathbb{C}[z_2,\To,z_N]^{\mathfrak{S}_{N-1}}$, which is preserved by first affine generator since $Y_1 \, T_i = T_i \, Y_1$ for all $i>1$. For the spin chain we focus on its classical version~$Y_1^\circ$. This operator simplifies in a way similar to what happened for the polynomial action of $\widehat{\mathfrak{U}}$ in \eqref{eq:Uqsl2_pol_alt}--\eqref{eq:affinisation_pol_alt} at the end of \textsection\ref{s:nonabelian_tilde}:
\begin{lemma}[\cite{NS_17}] \label{lem:Y1_circ}
We have
\begin{equation} \label{eq:Y1_circ}
	Y_1^\circ = \sum_{j=1}^N A_j(\vect{z}) \, \frac{b_{1j}}{a_{j1}} \, s_{1j} = A_1(\vect{z}) + \sum_{j=2}^N A_j(\vect{z}) \, \frac{b_{1j}}{a_{j1}} \, s_{1j} \qquad \text{on} \quad \mathbb{C}[z_1] \otimes \mathbb{C}[z_2,\To,z_N]^{\mathfrak{S}_{N-1}} \ .
\checkedMma
\end{equation}
Here $A_j(\vect{z})$ was defined in \eqref{eq:D_1} and $b_{mj}/a_{jm} = (t-1)\,z_j/(t\,z_j - z_m)$
\checkedMma 
by \eqref{eq:a,b}. 
\end{lemma}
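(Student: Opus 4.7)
The plan is to begin from $Y_1^\circ = x_{12}\, x_{13} \cdots x_{1N}$, which is the $q = 1$ specialisation of \eqref{eq:Yi_via_x} (so the $\hat{q}_1$ drops out), expand each factor via $x_{1j} = a_{1j} + b_{1j}\,s_{1j}$ from \eqref{eq:x_ij}, and then exploit the symmetry of $G$ in $z_2,\To,z_N$ to collapse the resulting $2^{N-1}$ terms. The crucial subspace-symmetry fact I will record at the outset is: for any $\phi = F(z_1)\,G(z_2,\To,z_N)$ with $G \in \mathbb{C}[z_2,\To,z_N]^{\mathfrak{S}_{N-1}}$, and any pairwise distinct $j_1,\To,j_m \in \{2,\To,N\}$,
\begin{equation*}
s_{1 j_1}\, s_{1 j_2} \cdots s_{1 j_m}\, \phi \, = \, s_{1 j_m}\, \phi \, .
\end{equation*}
Indeed, after $s_{1 j_m}$ has acted, $F$ is evaluated at $z_{j_m}$ and $G$ at the multiset $\{z_i : i \neq j_m\}$; any earlier swap $s_{1 j_i}$ (with $j_i \neq j_m$) merely transposes two elements of that multiset, leaving both the symmetric $G$ and the argument $z_{j_m}$ of $F$ untouched.

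Next, I will expand $Y_1^\circ$ as a sum over subsets $S \subseteq \{2,\To,N\}$, with each term containing scalars $a_{1\ell}$ (for $\ell \notin S$) and factors $b_{1j}\,s_{1j}$ (for $j \in S$) in position order. Pushing each swap rightward using $s_{1j}\,f_{1\ell} = f_{j\ell}\,s_{1j}$ (valid for $\ell > j$), a short bookkeeping shows that each scalar at position $\ell$ ends up transformed by exactly one swap: the rational function there has its first index changed from $1$ to the largest element of $S \cup \{1\}$ strictly below $\ell$ (subsequent swaps past the transformed factor commute trivially, since then neither of the two indices equals $1$ or the swap's second index). By the collapse lemma above, only $s_{1,\max S}$ survives on the subspace. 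The empty subset gives $A_1 = \prod_{\ell = 2}^N a_{1\ell}$ as the coefficient of the identity; for each $j \in \{2,\To,N\}$, the coefficient of $s_{1j}$ becomes a finite sum indexed by $T \subseteq \{2,\To,j-1\}$ (the other elements of $S$).

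The main obstacle is then the purely rational identity
\begin{equation*}
\sum_{T \subseteq \{2,\To,j-1\}}\!\! \Phi_{j,T}(z_1,\To,z_N) \, = \, b_{1j}\!\!\! \prod_{\ell \in \{2,\To,N\} \setminus \{j\}}\!\!\!\!\! a_{j\ell} \, = \, A_j\,\frac{b_{1j}}{a_{j1}} \, ,
\end{equation*}
where $\Phi_{j,T}$ is the accumulated scalar from the previous step. I plan to prove this by induction on $j$, peeling off the outermost factor (either $a_{1,j-1}$ with $j-1 \notin T$, or $b_{\,\cdot\,,j-1}$ with $j-1 \in T$) and combining the two cases by means of the elementary three-term relation
\begin{equation*}
(a_{1\ell} - a_{k\ell})\, b_{1k} + b_{1k}\, b_{k\ell} \, = \, 0 \, ,
\end{equation*}
a direct consequence of the definitions \eqref{eq:a,b}. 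The base case $j = 2$ is tautological, and the case $j = 3$ reduces to $(a_{12} - a_{32})\,b_{13} = -b_{12}\,b_{23}$, readily verified. Since the rearrangements are essentially those of the trigonometric Dunkl calculus, one may alternatively appeal to the corresponding computation in \cite{NS_17} and match coefficients. Either route yields \eqref{eq:Y1_circ}; the second form, starting from $A_1$, then follows by separating the identity term from the sum.
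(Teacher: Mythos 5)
Your reduction to the form $\sum_j c_j(\vect{z})\,s_{1j}$ — expanding $Y_1^\circ = x_{12}\cdots x_{1N}$ over subsets $S\subseteq\{2,\To,N\}$, tracking how each swap renames the first index of the factors to its right, and collapsing $s_{1j_1}\cdots s_{1j_m}\,\phi = s_{1j_m}\,\phi$ on the partially symmetric subspace — coincides with the paper's first step. You then diverge: the paper computes only the two coefficients that come from a single subset, $c_1 = A_1$ and $c_2 = b_{12}\,a_{23}\cdots a_{2N}$, and obtains all $c_j$ with $j\geq 3$ for free from the conjugation identity $Y_1^\circ = s_{2j}\,Y_1^\circ\,s_{2j}$ on $\mathbb{C}[z_1]\otimes\mathbb{C}[z_2,\To,z_N]^{\mathfrak{S}_{N-1}}$ (legitimate because $Y_1$ commutes with $T_i^\text{pol}$ for $i\geq 2$, so $Y_1^\circ$ preserves that subspace and its output is again symmetric in $z_2,\To,z_N$), whence $c_j = s_{2j}\,c_2\,s_{2j}$. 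You instead evaluate the sum over $T\subseteq\{2,\To,j-1\}$ head-on by induction. That is more work but self-contained, and it is essentially the computation of \cite{NS_17}; the paper's symmetry trick is the same device it uses to prove \eqref{eq:D_r} and \eqref{eq:spin_D_r}, so it recycles an argument you will need anyway.

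One concrete correction: the ``three-term relation'' you state, $(a_{1\ell}-a_{k\ell})\,b_{1k}+b_{1k}\,b_{k\ell}=0$, is false — dividing by the common factor $b_{1k}$ it would assert $a_{1\ell}=a_{k\ell}-b_{k\ell}$, whose right-hand side equals $t^{-1/2}\bigl(t\,z_k+(t-2)\,z_\ell\bigr)/(z_k-z_\ell)$ and does not even depend on $z_1$. The identity you actually need, and of which your verified $j=3$ instance $(a_{12}-a_{32})\,b_{13}=-b_{12}\,b_{23}$ is a special case, is
\begin{equation*}
	a_{mk}\,b_{m\ell}+b_{mk}\,b_{k\ell}=a_{\ell k}\,b_{m\ell} \qquad (m,k,\ell \ \text{distinct}) \, ,
\end{equation*}
in which no common factor can be pulled out; the case $m\neq 1$ is genuinely needed at the later stages of the induction (e.g.\ $a_{23}\,b_{24}+b_{23}\,b_{34}=a_{43}\,b_{24}$ when assembling $c_4$). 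With the indices fixed your induction closes — for instance $c_4=\bigl[a_{12}(a_{13}b_{14}+b_{13}b_{34})+b_{12}(a_{23}b_{24}+b_{23}b_{34})\bigr]a_{45}\cdots a_{4N}=(a_{12}b_{14}+b_{12}b_{24})\,a_{43}\,a_{45}\cdots a_{4N}=b_{14}\,a_{42}\,a_{43}\,a_{45}\cdots a_{4N}$ — and the two proofs agree.
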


\noindent This simplification was also found in \cite{NS_17} for general~$q$; we will comment on this after Proposition~\ref{prop:Ym_circ}.

\begin{proof} Although we obtained this argument independently our presentation follows the proof of Lemma~3.4 in \cite{Cha_19}. From \eqref{eq:Y1_contribution} it is clear that $Y_1^\circ$ is of the form
\begin{equation*}
	Y_1^\circ = \sum_{j=1}^N c_j(\vect{z}) \, s_{1j} \qquad \text{on} \quad \mathbb{C}[z_1] \otimes \mathbb{C}[z_2,\To,z_N]^{\mathfrak{S}_{N-1}} \, .
\end{equation*}
Indeed, as soon as we pick up a permutation in \eqref{eq:Y1_contribution} the remaining permutations act by the identity due to symmetry in $z_2,\To,z_N$. The coefficients are found as in the proof of \eqref{eq:D_1} in \textsection\ref{s:Macdonald}, now using the partial symmetry $\mathbb{C}[z_1] \otimes \mathbb{C}[z_2,\To,z_N]^{\mathfrak{S}_{N-1}}$.
Two coefficients are easy to get. In \eqref{eq:Y1_contribution} we already read off $c_1(\vect{z}) = a_{12} \cdots a_{1N} = A_1(\vect{z})$. For $j=2$ the only contribution is $b_{12} \, s_{12} \, a_{13} \cdots a_{1N}$, so $c_2(\vect{z}) = b_{12} \, a_{23} \cdots a_{2N} = A_2(\vect{z}) \, b_{21}/a_{12}$. 
The remaining coefficients follow by symmetry: on $\mathbb{C}[z_1] \otimes \mathbb{C}[z_2,\To,z_N]^{\mathfrak{S}_{N-1}}$ we have $Y_1^\circ = s_{2j} \, Y_1^\circ s_{2j}$, whence $c_j(\vect{z}) = s_{2j} \, c_2(\vect{z}) \, s_{2j}$ for all $j\geq 2$.
\end{proof}

Because of the evaluation we restrict ourselves to degree at most $N-1$ in $z_1$.
\begin{proposition} \label{prop:Y1_freezing}
On the subspace of polynomials of degree at most $N-1$ in $z_1$ we have
\begin{equation} \label{eq:Y1_p=1_via_Y'}
	Y_1^\circ \mspace{4mu} \overset{\mathrm{ev}}{=} \mspace{4mu} t^{(N-1)/2} \, \hat{q}^{\mspace{1mu}\prime}_1 \, , \qquad q' \coloneqq t^{-1} \, , \qquad \text{on} \quad \mathbb{C}[z_1]^{<N} \subset \mathbb{C}[z_1] \otimes \mathbb{C}[z_2,\To,z_N]^{\mathfrak{S}_{N-1}} \, .
\checkedMma
\end{equation}
\end{proposition}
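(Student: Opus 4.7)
By Lemma~\ref{lem:Y1_circ}, on the relevant subspace
\begin{equation*}
Y_1^\circ \, = \, A_1 + \sum_{j=2}^N A_j \, \frac{b_{1j}}{a_{j1}} \, s_{1j} \, .
\end{equation*}
For $F(z_1) \in \mathbb{C}[z_1]$ the transposition $s_{1j}$ simply sends $F(z_1) \mapsto F(z_j)$. Hence on shell, where $z_j \mapsto \omega^j$, the claim reduces to the scalar identity
\begin{equation*}
A_1|_{\mathrm{ev}} \, F(\omega) + \sum_{j=2}^N \left. A_j \, \frac{b_{1j}}{a_{j1}} \right|_{\mathrm{ev}} \! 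F(\omega^j) \; = \; t^{(N-1)/2} \, F(t^{-1}\omega) \, ,
\end{equation*}
valid for every $F \in \mathbb{C}[z_1]$ of degree $<N$.

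The coefficients on the left are easy to read off: by \eqref{eq:ev_Aj} we have $A_j|_{\mathrm{ev}} = [N]/N$ independently of $j$, and from the definitions in \eqref{eq:a,b} one finds $b_{1j}/a_{j1} = (t-1)\,z_j/(t\,z_j - z_1)$, which evaluates to $(t-1)\,\omega^j/(t\,\omega^j - \omega)$. The key step is to recognise the right-hand side as the Lagrange interpolation of $F$ at the $N$th roots of unity. Since $\prod_{k=1}^N (z-\omega^k) = z^N - 1$, with derivative $N\,\omega^{-j}$ at $z = \omega^j$, any $F \in \mathbb{C}[z]$ of degree $<N$ satisfies
\begin{equation*}
F(w) \, = \, \frac{w^N-1}{N} \, \sum_{j=1}^N \frac{\omega^j}{w - \omega^j} \, F(\omega^j) \, .
\end{equation*}
Specialising $w = t^{-1}\omega$ and multiplying by $t^{(N-1)/2}$, the overall prefactor collapses via the elementary identity $t^{(N+1)/2}(1-t^{-N}) = (t-1)\,[N]$, yielding
\begin{equation*}
t^{(N-1)/2} \, F(t^{-1}\omega) \, = \, \frac{[N]}{N} \, F(\omega) + \frac{[N]}{N} \sum_{j=2}^N \frac{(t-1)\,\omega^j}{t\,\omega^j - \omega} \, F(\omega^j) \, ,
\end{equation*}
which is precisely the left-hand side displayed above.

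The only real work is the bookkeeping of the $t$-powers and root-of-unity products; no conceptual difficulty survives once one sees that the coefficients produced by Lemma~\ref{lem:Y1_circ} are exactly the Lagrange weights at $w = t^{-1}\omega$ with the right global normalisation. An equivalent route would be to test the identity on the basis $\{z_1^n : 0 \leq n < N\}$, reducing the claim to $\sum_{j=1}^N \omega^{j\,n}\,(\mathrm{something}) = t^{(N-1-2n)/2}\,\omega^n$, a family of discrete-Fourier type sums over the $N$th roots of unity; the Lagrange formulation packages all of these into a single polynomial identity.
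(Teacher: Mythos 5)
Your proposal is correct and follows essentially the same route as the paper: reduce via Lemma~\ref{lem:Y1_circ}, replace $s_{1j}$ by the substitution $z_1\mapsto z_j$ on $\mathbb{C}[z_1]$, evaluate the coefficients using $A_j\overset{\mathrm{ev}}{=}[N]/N$, and identify the resulting summation kernel with that of $\hat{q}'_1$ at $q'=t^{-1}$. The only cosmetic difference is that you verify the kernel identity by Lagrange interpolation at the roots of unity, whereas the paper's Lemma~\ref{lem:q_hat_onshell} does it by a discrete Fourier transform on the monomial basis (the Lagrange-interpolation version appears there too, as the off-shell Lemma~\ref{lem:q_hat_offshell} needed for $M>1$); your prefactor bookkeeping, including $t^{(N+1)/2}(1-t^{-N})=(t-1)[N]$, checks out.
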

\begin{proof}
On polynomials independent of $z_2,\To,z_N$ we may replace $s_{1j}=r_{1j}$. Further using \eqref{eq:ev_Aj} we thus find that \eqref{eq:Y1_circ} implies
\begin{equation*}
	Y_1^\circ \mspace{4mu} \overset{\mathrm{ev}}{=} \mspace{4mu} t^{(N-1)/2} \, \frac{t^{-N} - 1}{N} \sum_{j=1}^N \frac{1}{t^{-1} \, \omega^{1-j}-1} \, r_{1j} \qquad \text{on} \quad \mathbb{C}[z_1] \, . 
	\checkedMma
\end{equation*}
The result now follows from Lemma~\ref{lem:q_hat_onshell}.
\end{proof}

The action of the remaining affine generators~$Y_j^\circ$, $j>1$, on $\mathbb{C}[z_1] \otimes \mathbb{C}[z_2,\To,z_N]^{\mathfrak{S}_{N-1}}$ is more complicated, but we can do without them: in the one-particle sector the $\widetilde{D}_r$ can already be diagonalised together with just $Y_1^\circ$. Indeed, $Y_1^\circ \propto Y'_1$ has eigenfunctions $P_{(n)}(z_1) = z_1^n \in \mathbb{C}[z_1]$. The (orthogonal) plane waves $\ev z_1^n = \omega^n$ give all $N = \dim \mathcal{H}_1$ (orthogonal) eigenvectors in the one-particle sector. Like for \eqref{eq:q-magnon} the case $n=0$ is a multiple of the $\mathfrak{U}$-descendant $F_1^\text{sp} \, \cket{\varnothing}$ coming from $\mathcal{H}_0$. For $1\leq n<N$ we get $N-1$ vectors that have highest weight, at least for $\mathfrak{U}$. In \textsection\ref{s:hw} we will show that these have pseudo highest weight for $\widehat{\mathfrak{U}}$ too. This establishes \eqref{eq:intro_polynomial} for $M=1$ with $\nu = (n)$. (In this case the dependence of $P_\nu$ on the parameters drops out.)
\bigskip 

\noindent \textbf{General \textit{M}.} Now we turn to the proof of Theorem~\ref{thm:nice_polynomial}. We seek joint $M$-particle eigenvectors of the $\delta \widetilde{D}_r$ by simultaneously diagonalising $Y_m^\circ$ for $1\leq m\leq M$ on the subspace $\mathbb{C}[z_1,\To,z_M]\otimes \mathbb{C}[z_{M+1},\To,z_N]^{\mathfrak{S}_{N-M}} \subset \mathbb{C}[\vect{z}]$ in accordance with \textsection\ref{s:physical_space}. In physical terms we think of $z_1, \To,z_M$ as the coordinates of the magnons, which we treat as distinguishable particles for the moment. Let us try to proceed as for $M=1$. 
The start is easy: the analogue of Lemma~\ref{lem:Y1_circ} for general~$M$, $1\leq m\leq M$, is
\begin{proposition} \label{prop:Ym_circ}
We have
\begin{equation} \label{eq:Ym_circ}
	\begin{gathered}
	Y_m^\circ = x_{m,m+1} \cdots x_{mM} \Biggl( A_m(\vect{z}) + \! \sum_{j(>M)}^N \!\!\! A_j(\vect{z}) \, \frac{b_{mj}}{a_{jm}} \, s_{mj} \Biggr) \Biggl(\, \prod_{\bar{m}(\neq m)}^{\smash{M}} \!\!\!\! f_{m \bar{m}} \Biggr) x_{m1} \cdots x_{m,m-1} \\
	\hfill \text{on} \quad \mathbb{C}[z_1,\To,z_M]\otimes \mathbb{C}[z_{M+1},\To,z_N]^{\mathfrak{S}_{N-M}} \, ,
	\end{gathered}
\checkedMma
\end{equation}
where we recall that $f_{m\bar{m}} = 1/a_{m\bar{m}}$.
\end{proposition}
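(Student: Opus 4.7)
The strategy is to reduce the claim to a direct application of Lemma~\ref{lem:Y1_circ} in a smaller sub-system, followed by an elementary bookkeeping step.

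First, factor the definition $Y_m^\circ = x_{m,m+1}\cdots x_{mN}\,x_{m1}\cdots x_{m,m-1}$ (which is \eqref{eq:Yi_via_x} at $q=1$) as
\begin{equation*}
Y_m^\circ \;=\; x_{m,m+1}\cdots x_{mM}\,\cdot\,[\,x_{m,M+1}\cdots x_{mN}\,]\,\cdot\, x_{m1}\cdots x_{m,m-1}.
\end{equation*}
The outer factors $x_{mk}$ with $k\leq M$ involve only swaps $s_{mk}$ of indices in $\{1,\To,M\}$, so they preserve the partial $\mathfrak{S}_{N-M}$-symmetry in $z_{M+1},\To,z_N$. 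It therefore suffices to identify the middle block on the partially symmetric subspace.

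Now $x_{m,M+1}\cdots x_{mN}$ depends only on $z_m$ and $z_{M+1},\To,z_N$, and is literally the operator $x_{1',2'}\cdots x_{1',N'}$ of the $N'\!=\!N-M+1$-particle sub-system with coordinates $z_{1'}\!:=\!z_m$ and $z_{j'}\!:=\!z_{M+j-1}$ for $2\leq j\leq N'$, with the remaining coordinates $z_k$, $k\in\{1,\To,M\}\setminus\{m\}$, playing the role of inert parameters. The hypothesis of Lemma~\ref{lem:Y1_circ}\,---\,symmetry in $z_{2'},\To,z_{N'}$\,---\,matches exactly the partial symmetry we have, so the lemma applies verbatim to give
\begin{equation*}
x_{m,M+1}\cdots x_{mN} \;=\; A''_m + \sum_{j>M}^N A''_j\,\frac{b_{mj}}{a_{jm}}\,s_{mj},
\end{equation*}
where $A''_j := \prod_{k\in(\{m\}\cup\{M+1,\To,N\})\setminus\{j\}} a_{jk}$ are the sub-system coefficients\,---\,the factors $a_{j\bar m}$ for $\bar m\in\{1,\To,M\}\setminus\{m\}$ are absent.

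It remains to rewrite the $A''_j$ in terms of the full-system $A_j$'s. Directly from the definitions, $A''_m = A_m \prod_{\bar m(\neq m)}^M f_{m\bar m}$ and, for $j>M$, $A''_j = A_j \prod_{\bar m(\neq m)}^M f_{j\bar m}$. The latter product is then transported across the transposition $s_{mj}$ using the elementary commutation $f_{j\bar m}\,s_{mj} = s_{mj}\,f_{m\bar m}$ (valid whenever $\bar m\neq m,j$); one factors out the common right factor $\prod_{\bar m(\neq m)}^M f_{m\bar m}$ to obtain exactly the bracketed expression in \eqref{eq:Ym_circ}. Reinserting the outer blocks $x_{m,m+1}\cdots x_{mM}$ on the left and $x_{m1}\cdots x_{m,m-1}$ on the right completes the proof. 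The main conceptual step is the sub-system reduction, after which Lemma~\ref{lem:Y1_circ} is used as a black box; the conversion from $A''_j$ to $A_j$ is then routine.
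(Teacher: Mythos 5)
Your proof is correct and follows essentially the same route as the paper's: both split off the outer factors $x_{m,m+1}\cdots x_{mM}$ and $x_{m1}\cdots x_{m,m-1}$ (which preserve the partially symmetric space) and then identify the middle block $x_{m,M+1}\cdots x_{mN}$ by the $M=1$ symmetry argument, the only cosmetic difference being that you invoke Lemma~\ref{lem:Y1_circ} as a black box on the sub-system in the variables $z_m,z_{M+1},\To,z_N$ and then convert the sub-system coefficients $A''_j$ into the full $A_j$, whereas the paper re-runs that argument directly with the full-system coefficients. The bookkeeping identities ($A''_m=A_m\prod_{\bar m}f_{m\bar m}$, $A''_j=A_j\prod_{\bar m}f_{j\bar m}$, and $f_{j\bar m}\,s_{mj}=s_{mj}\,f_{m\bar m}$) are exactly the ones implicit in the paper's computation of $c_m$ and $c_{M+1}$.
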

\noindent If $m=1$ arbitrary~$q$ is included by postmultiplication with $\hat{q}_1$. The resulting operator was used in \cite{NS_17} to construct `covariant' \textit{Y}$\mspace{-2mu}$-operators (\textsf{q}-deformed Heckman operators).
For $m\geq 2$, however, the \textit{q}-shift acts after $x_{m1} \cdots x_{m,m-1}$, cf.~\eqref{eq:Yi_via_x}, affecting those $x_{m\bar{m}}$\,---\,unless $q=1$, as for us.

\begin{proof}
As the $x_{mm'}$ on the right preserve $\mathbb{C}[z_1,\To,z_M]\otimes \mathbb{C}[z_{M+1},\To,z_N]^{\mathfrak{S}_{N-M}}$ it suffices to show that on this space
\begin{equation} \label{eq:x_prod_simplification}
	x_{m,M+1} \cdots x_{mN} = \Biggl(\! A_m(\vect{z}) + \! \sum_{j(>M)}^N \!\!\! A_j(\vect{z}) \, \frac{b_{jm}}{a_{mj}} \, s_{mj} \Biggr) \! \prod_{\bar{m}(\neq m)}^M \!\!\!\! f_{m\bar{m}} \, .
\checkedMma
\end{equation}
This can be shown by a symmetry argument as for \eqref{eq:Y1_circ}. The result will be of the form
\begin{equation*}
	x_{m,M+1} \cdots x_{mN} = c_m(\vect{z}) + \! \sum_{j(>M)}^N \!\!\! c_j(\vect{z}) \, s_{mj} \, .
\end{equation*}
As before we read off $c_m(\vect{z}) = a_{m,M+1}\cdots a_{mN} = A_m(\vect{z}) \, \prod_{\bar{m}(\neq m)}^M f_{m\bar{m}}$, where the $f_{m\bar{m}}$ compensate for the superfluous factors of $a_{m\bar{m}}$ in the definition of $A_m(\vect{z})$. Next, $c_{M+1}(\vect{z}) = b_{m,M+1} \, a_{M+1,M+2} \cdots a_{M+1,N} = A_{M+1}(\vect{z}) \, (b_{m,M+1}/a_{M+1,m}) \prod_{\bar{m}(\neq m)}^M f_{M+1,\bar{m}}$. The remaining coefficients are obtained from this via conjugation by $s_{M+1,j}$.
\end{proof}

Motivated by our findings for $M=1$ we would like to recognise the kernel for the \textit{q}-shift in \eqref{eq:x_prod_simplification}. However, as the proof of \eqref{eq:q_hat_kernel_off_shell} shows the latter is only valid when acting on polynomials (of sufficiently low degree). We therefore have to get rid of the denominator of the product of $f$s in \eqref{eq:x_prod_simplification}, which is the \textsf{q}-$\mspace{-1mu}$Vandermonde-type product $\prod_{\bar{m} (\neq m)}^M (t\,z_m - z_{\bar{m}})$. For $M=2$ it is not hard to see that it suffices for the polynomial to be divisible by $t \, z_1 - z_2$, where for $Y_2^\circ$ one needs identity \eqref{eq:x_to_x'} below. Let us show that in general we will need the polynomials that we act on to be divisible by $\Delta_t \coloneqq \Delta_t(z_1,\To,z_M)$. 

Theorem~\ref{thm:intro_Ym_to_Y'm} from \textsection\ref{s:intro_explicit_evrs_freezing} can be stated more precisely as
\begin{theorem} \label{thm:Ym_to_Y'm}
For $1\leq m\leq M$
\begin{equation} \label{eq:Ym_to_Y'm}
	Y_m^\circ \, \overset{\mathrm{ev}}{=} \,\, t^{(N-M)/2} \, \Delta_t \, Y'_m \, \Delta_t^{-1} \qquad \text{on} \quad \Delta_t \, \mathbb{C}[z_1,\To,z_M]^{< N-M+1} \subset \mathbb{C}[z_1,\To,z_M] \, .
\checkedMma
\end{equation}
\end{theorem}
\begin{proof}
The idea is to knead \eqref{eq:x_prod_simplification} into a form that allows us to use \eqref{eq:q_hat_kernel_off_shell}. We divide the proof into four steps. It is instructive to keep the case $M=2$ in mind; the extension to arbitrary~$M$ is mostly a matter of bookkeeping.

\textit{Step~i.~Rewriting the coefficients.} Let us first show that $A_m(\vect{z})$ and $A_j(\vect{z})$ in \eqref{eq:Ym_circ} may on shell be replaced by ($t^{(N-1)/2}$ times) the coefficients of \eqref{eq:q_hat_kernel_off_shell} with $q\rightsquigarrow q' = t^{-1}$. Indeed, for $j=m$ as well as $j \neq m$
\begin{equation}
	\begin{aligned}
	A_j(\vect{z}) \, \frac{b_{mj}}{a_{jm}} \, = {} & \, \frac{t^{-(N-1)/2}}{t\,z_j - z_m} \, \frac{\prod_{k=1}^N (t \, z_j - z_k)}{\prod_{k(\neq j)}^N (z_j - z_k)} \\
	\overset{\mathrm{ev}}{=} {} & \, \frac{t^{-(N-1)/2}}{t\,(z_j - t^{-1} z_m)} \, \frac{t^N - 1}{t^{-N}-1} \, \frac{\prod_{k=1}^N (t^{-1} \, z_m - z_k)}{\prod_{k(\neq j)}^N (z_j - z_k)} \\
	= {} & \, t^{(N-1)/2} \prod_{k(\neq j)}^N \!\!\! \frac{t^{-1} \, z_m - z_k}{z_j - z_k} \, .
	\end{aligned}
\checkedMma
\end{equation}
Here the on-shell equality uses the first evaluation in \eqref{eq:ev_products}. Importantly, the value of the latter is independent of $1\leq j\leq N$. After all, by definition~\eqref{eq:ev_operator} evaluation takes place \emph{after} any permutation has acted. But permutations at most change the value of~$j$ in \eqref{eq:ev_products}, which doesn't matter upon evaluation.

On shell \eqref{eq:Ym_circ} can therefore be rewritten as
\begin{equation} \label{eq:Ym_p=1_rewrite}
	\begin{aligned}
	Y_m^\circ \mspace{4mu} \overset{\mathrm{ev}}{=} {} \mspace{4mu} & t^{(N-1)/2} \, x_{m,m+1} \cdots x_{mM} \, \Biggl( \prod_{k(\neq m)}^N \!\!\!\! \frac{t^{-1} \, z_m - z_k}{z_m - z_k} + \! \sum_{j(>M)}^N \prod_{k(\neq j)}^N \!\!\! \frac{t^{-1} \, z_m - z_k}{z_j - z_k} \, s_{mj} \Biggr) \\
	& \times \Biggl(\prod_{\bar{m}(\neq m)}^{\smash{M}} \!\!\!\! f_{m\bar{m}}\Biggr) x_{m1} \cdots x_{m,m-1} \qquad \text{on} \quad \mathbb{C}[z_1,\To,z_M]\otimes \mathbb{C}[z_{M+1},\To,z_N]^{\mathfrak{S}_{N-M}} \, .
	\end{aligned}
\checkedMma
\end{equation}

\textit{Step~ii.~Pulling the \textsf{q}-$\!$Vandermonde through.} Next we show that the \textsf{q}-$\mspace{-1mu}$Vandermonde factor ensures the denominators of the $f_{m \bar{m}}$ are cancelled, so that we stay in the world of polynomials. For $m>1$ we first need to move $x_{m,1} \cdots x_{m,m-1}$ through $\Delta_t$. Note that $x_{m,m-1}$ commutes with $\Delta_t /(t\,z_{m-1}-z_m)$, which is symmetric in $z_{m-1} \leftrightarrow z_m$, while 
\begin{equation} \label{eq:x_to_x'}
	x_{m,m-1} \, (t\,z_{m-1}-z_m) = (z_{m-1} - t\, z_m) \, x'_{m,m-1} \, , \qquad t' \coloneqq t^{-1} \, ,
\checkedMma
\end{equation}
where $x'_{ij}$ denotes \eqref{eq:x_ij} with $t\rightsquigarrow t'$. To verify \eqref{eq:x_to_x'} recall that $x_{m,m-1} = x_{m-1,m}^{-1} = s_{m-1} \, T^{\text{pol}\,-1}_{m-1}$ and check $(t\,z_{m-1}-z_m)^{-1} \, T^\text{pol}_{m-1} \, (t\,z_{m-1}-z_m) = -T^{\prime\,\text{pol}}_{m-1}$, which can be conveniently done on $\mathbb{C}[z_{m-1},z_m]^{\mathfrak{S}_2} \oplus (t'\,z_{m-1}-z_m)\,\mathbb{C}[z_{m-1},z_m]^{\mathfrak{S}_2}$.

For $m>2$ we next move $x_{m,m-2}$ through $\Delta_t \, (t'\,z_{m-1}-z_m)/(t\,z_{m-1}-z_m)$. But besides a factor of $t\,z_{m-2}-z_m$ the latter is symmetric in $z_{m-2} \leftrightarrow z_m$ so we can argue like before. \checkedMma
Continuing in this way we see that
\begin{equation*}
	x_{m,1} \cdots x_{m,m-1} \, \Delta_t \, = \, \cdots \, = \, \Delta_t \prod_{k =1}^{m-1} \! \frac{z_k - t\, z_m}{t \, z_k - z_m} \ x'_{m,1} \cdots x'_{m,m-1} \, .
\checkedMma
\end{equation*}
The denominator of $\prod_{\bar{m}(\neq m)}^M f_{m \bar{m}}$ is now precisely cancelled by $\prod_{k=1}^{m-1} (t' \, z_k - z_m)$ along with the factors $t \, z_m - z_l$ still contained in $\Delta_t$:
\begin{equation} \label{eq:vandermonde_factor}
	\Biggl(\, \prod_{\bar{m}(\neq m)}^{\smash{M}} \!\!\!\! f_{m \bar{m}}\Biggr) \Delta_t \! \prod_{k=1}^{m-1} \! \frac{z_k - t\, z_m}{t \, z_k - z_m} = \prod_{k=1}^{m-1} \! (z_k - z_m) \!\! \prod_{l=m+1}^M \!\!\! (z_m - z_l) \ \Delta_{t}(z_1,\To \widehat{z_m} \To,z_M) \, ,
\checkedMma
\end{equation}
where the caret indicates that $z_m$ is to be omitted from the \textsf{q}-$\mspace{-1mu}$Vandermonde.

\textit{Step~iii.~Recognising the $q$-shift.} On polynomials independent of $z_{M+1},\To,z_N$ we may, like for $M=1$, replace the $s_{mj}$ by $r_{mj}$. Comparing with \eqref{eq:q_hat_kernel_off_shell} we just miss the terms with $j\in \{1,\To,m-1,m+1,\To,M \}$. We observe, however, that \eqref{eq:vandermonde_factor} vanishes when $z_m = z_k$ for any $k\neq m$: the factor $\Delta_t$ does not only ensure that the sum in \eqref{eq:Ym_p=1_rewrite} acts on polynomials, but moreover allows us to complete the sum to all values of $1\leq j\leq N$, just as for the ordinary Haldane--Shastry model, see \textsection3.3 in~\cite{BG+_93}. This allows us to use \eqref{eq:q_hat_kernel_off_shell} provided we act on polynomials of degree $<N$ (including~$\Delta_t$). Therefore on $\mathbb{C}[z_1,\To,z_M]^{< N-M+1} \subset \mathbb{C}[z_1,\To,z_M] \otimes \mathbb{C}[z_{M+1},\To,z_N]^{\mathfrak{S}_{N-M}}$ we have
\begin{equation*}
	\begin{aligned}
	Y_m^\circ \, \Delta_t \, \overset{\mathrm{ev}}{=} \, {} & t^{(N-1)/2} \, x_{m,m+1} \cdots x_{mM} \, \hat{q}'_m  \prod_{k=1}^{m-1} \!(z_k-z_m) \!\! \prod_{l=m+1}^M \!\!\! (z_m-z_l) \\
	& \times \Delta_t(z_1,\To\widehat{z_m}\To,z_M) \, x'_{m1} \cdots x'_{m,m-1} \, . 
	\end{aligned}
\checkedMma
\end{equation*}	

\textit{Step~iv.~Pulling the \textsf{q}-$\!$Vandermonde through further.} Since
\begin{equation*}
	 \hat{q}'_m  \prod_{k=1}^{m-1} \!(z_k -z_m) \!\! \prod_{l=m+1}^M \!\!\! (z_m-z_l) \, = \, t^{1-M} \prod_{k=1}^{m-1} \! (t\,z_k - z_m) \!\! \prod_{l=m+1}^M \!\!\! (z_m - t \, z_l) \ \hat{q}'_m 
\checkedMma
\end{equation*}
it remains to move $x_{m,m+1} \cdots x_{mM}$ through
\begin{equation*}
	\Delta_t(z_1,\To,\widehat{z_m},\To,z_M) \prod_{k=1}^{m-1} \! (t\,z_k- z_m) \!\! \prod_{l=m+1}^M \!\!\! (z_m - t\,z_l) \, =\, t^{(M-1)/2} \, \Delta_t \!\! \prod_{l=m+1}^M \!\! \frac{z_m - t \, z_l}{t\,z_m- z_l} \, .
\checkedMma
\end{equation*}
This is done like in step~\textit{ii}: except for the factor $t\,z_m - z_M$, the latter is symmetric in $z_m \leftrightarrow z_M$, while $x_{mM}\,(z_m-t\,z_M)= (t\,z_m-z_M)\,x'_{mM}$. 
\checkedMma 
Hence
\begin{equation*}
	x_{m,m+1} \cdots x_{mM} \ \Delta_t \! \prod_{l=m+1}^M \!\! \frac{ t'\, z_m - z_l}{t\,z_m-z_l} \, = \, \cdots \, = \, \Delta_t \, x'_{m,m+1} \cdots x'_{mM} \, .
\checkedMma
\end{equation*}
Putting everything together we arrive at \eqref{eq:Ym_to_Y'm}.
\end{proof}

The $Y'_m$ are simultaneously diagonalised by the nonsymmetric Macdonald polynomials $E'_\alpha$, $q'=t'=t^{-1}$. However, to make contact with the $M$-particle Macdonald operators we need to act on symmetric polynomials, requiring conjugation by $\Delta_{1/t}$. By Lemma~\ref{lem:conj_by_qVandermonde} this changes the parameters once more, see \eqref{eq:conj_by_qVandermonde}, where now $N \rightsquigarrow M$, $q \rightsquigarrow q'=t^{-1}$, $t \rightsquigarrow t' = t^{-1}$. The new parameters $q'' = q' = t^{-1}$ and $t'' = q'\, t' = t^{-2}$ are related as $q'' = t''\mspace{1mu}{}^\alpha$ for $\alpha = 1/2$.
\bigskip

\noindent \textbf{Upshot.} On polynomials divisible by the symmetric square of the \textsf{q}-$\!$Vandermonde and of degree less than $N$ in each variable the operator $e_r(Y_1^\circ,\To,Y_M^\circ)$ --- in $N$ variables, so \eqref{eq:D_r_q=1} does not apply for $M<N$ --- is transformed, on shell, to a quantum spherical zonal Macdonald operator in $M$ variables:
\begin{equation} \label{eq:parameter_shift_upshot}
	\begin{aligned}
	& e_r(Y_1^\circ,\To,Y_M^\circ) \, \Delta_t \,\Delta_{1/t} && &&&& q^\circ=1, \ t^\circ = t \\
	& \overset{\mathrm{ev}}{=} t^{r \,(N-M)/2} \,\Delta_t \, e_r(Y'_1,\To,Y'_M) \, \Delta_{1/t} && \text{on} \quad \mathbb{C}[z_1,\To,z_M]^{<N-2\mspace{1mu}M+2} &&&& \, q' = t' = t^{-1} \\
	& = t^{r\,(N-2\mspace{1mu}M+1)/2} \, \Delta_t \,\Delta_{1/t} \, e_r(Y''_1,\To,Y''_M) && \text{on} \quad \mathbb{C}[z_1,\To,z_M]^{\mathfrak{S}_M}  &&&& q'' = t''{}^{1/2} = t^{-1}
	\end{aligned}
\end{equation}	
The joint eigenfunctions of the $e_r(Y''_1,\To,Y''_M) = D_r''$ (on valid on $\mathbb{C}[z_1,\To,z_M]^{\mathfrak{S}_M}$) are Macdonald polynomials (\textsection\ref{s:Macdonald}). Using the latter's invariance under simultaneous inversion of both parameters (\textsection{VI.4}~(iv) in \cite{Mac_95}) we conclude that the polynomials we set out to find are
\begin{equation} \label{eq:wavefn_polynomial}
	\widetilde{\Psi}_\nu = \Delta_{t} \, \Delta_{1/t} \, P_\nu'' = \Delta_{t} \, \Delta_{1/t}	\, P_\nu^\star \, \in \, \mathbb{C}[z_1,\To,z_M]^{\mathfrak{S}_M} \, , \qquad \ell(\nu) \leq M \, ,
\end{equation}
where finally $q^\star = t^\star\mspace{1mu}{}^\alpha = t$, still for the quantum spherical zonal case $\alpha = 1/2$. This proves Theorem~\ref{thm:nice_polynomial} from \textsection\ref{s:intro_explicit_evrs}. Our derivation is valid provided $P_\nu^\star$ has degree $\nu_1 \leq N-2\, M+1$ in each variable. This reproduces the motif condition, cf.~the line below \eqref{eq:motifs_vs_partitions}.

\subsubsection{Pseudo highest-weight property} \label{s:hw} Let us show that our eigenvectors are pseudo highest weight for $\widehat{\mathfrak{U}}$, i.e.\ that they are annihilated by the Chevalley generators $E_1^\text{sp}$ and $\widetilde{F}_0^{\text{sp},\mspace{1mu}\circ}$ that come from the \textit{C}-operator (see \textsection\ref{s:app_presentations}). In the polynomial setting we have to show that their simple component is annihilated by $E_M^\text{pol}$ and $\widehat{F}_M^{\text{pol},\mspace{1mu}\circ}$ from \eqref{eq:Uqsl2_pol_alt}--\eqref{eq:affinisation_pol_alt}. This is Theorem~\ref{eq:thm_hw_condition} from \textsection\ref{s:intro_nonabelian_2}:
\begin{theorem} \label{thm:hw_property}
The simple component of our eigenvectors have the pseudo highest-weight property
\begin{equation}
	\normalfont
	E_M^{\text{pol}} \, \widetilde{\Psi}_\nu \, \overset{\mathrm{ev}}{=} \, \widehat{F}_M^{\text{pol},\mspace{1mu}\circ} \, \widetilde{\Psi}_\nu \, \overset{\mathrm{ev}}{=} \, 0 \qquad \text{if{f}} \qquad \ell(\nu) = M \, .
\end{equation}
\end{theorem}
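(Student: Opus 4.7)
The approach, following~\cite{BPS_95a}, is to prove the stronger unified statement that, upon evaluation, both $E_M^{\text{pol}}$ and $\widehat{F}_M^{\text{pol},\circ}$ act on $\widetilde{\Psi}_\nu$ as a nonzero constant times the annihilator $\hat{0}_M \coloneqq \hat{q}_M|_{q = 0}$, which substitutes $z_M \mapsto 0$ in the last slot. The theorem is then immediate: since $\Delta_t$ and $\Delta_{1/t}$ do not vanish at $z_M = 0$ for generic $z_1,\ldots,z_{M-1}$, the condition $\hat{0}_M \, \widetilde{\Psi}_\nu = 0$ is equivalent to $P_\nu^\star$ being divisible by $z_M$, hence by symmetry by $z_1 \cdots z_M$, which by~\eqref{eq:Macdonald_shift_property} is equivalent to $\nu = \bar\nu + (1^M)$, i.e.\ to $\ell(\nu) = M$.

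\textbf{Execution.} Since $\widetilde{\Psi}_\nu$ depends only on $z_1,\ldots,z_M$, the transposition $s_{Mj}$ for $j > M$ amounts to the substitution $z_M \mapsto z_j$, and for $j = M$ it is the identity. Using $\ev A_{\pm j} = [N]/N$ from~\eqref{eq:ev_Aj}, the expression~\eqref{eq:Uqsl2_pol_alt} gives on shell
\begin{equation*}
\ev\bigl(E_M^{\text{pol}} \, \widetilde{\Psi}_\nu\bigr) \, = \, \frac{t^{(N-2M+1)/2} \, [N]}{N} \sum_{j=M}^N \ev\,\varphi(z_1,\ldots,z_{M-1},z_j) \, ,
\end{equation*}
with $\varphi(z_1,\ldots,z_{M-1},z) \coloneqq \bigl(\prod_{m=1}^{M-1} f(z_m/z)\bigr) \, \widetilde{\Psi}_\nu(z_1,\ldots,z_{M-1},z)$, and an analogous formula for $\widehat{F}_M^{\text{pol},\circ}$ (at $q=1$, $\hat{q}_M = 1$) with $f(z/z_m)$ replacing $f(z_m/z)$. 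The denominators $(tz_m - z)$ of the $f$-factors in $\varphi$ cancel against the corresponding factors of $\Delta_t$ in $\widetilde{\Psi}_\nu$, so $\varphi$ is a genuine polynomial in $z$, of degree at most $2(M-1) + \nu_1 \leq N - 1$ by the motif bound $\nu_1 \leq N - 2M + 1$. The analogous cancellation for $\widehat{F}_M^{\text{pol},\circ}$ proceeds through $\Delta_{1/t}$ instead. The terms with $1 \leq j < M$ vanish identically because the corresponding $\varphi$ contains the factor $f(1) = 0$, so the sum extends harmlessly to $\sum_{j=1}^N$. The discrete Fourier identity $\sum_{j=1}^N \omega^{jk} = N \, \delta_{k,0}$ for $0 \leq k < N$ then collapses this to $N \, \ev\,\varphi(\ldots,0)$; a direct calculation yields $\varphi(\ldots,0) = t^{-(M-1)/2} \, \widetilde{\Psi}_\nu(\ldots,0)$ (each $f(z_m/0) = t^{-1/2}$), and symmetrically $t^{(M-1)/2} \, \widetilde{\Psi}_\nu(\ldots,0)$ for $\widehat{F}_M^{\text{pol},\circ}$ (each $f(0/z_m) = t^{1/2}$). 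Both operators therefore reduce on shell to nonzero scalar multiples of $\hat{0}_M \, \widetilde{\Psi}_\nu$, completing the proof in view of the strategy above.

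\textbf{Main obstacle.} The most delicate part will be the degree count and the bookkeeping of cancellations: one must track precisely how the denominators of the $f$-factors in $E_M^{\text{pol}}$ are absorbed by $\Delta_t$ while those in $\widehat{F}_M^{\text{pol},\circ}$ are absorbed by $\Delta_{1/t}$, and confirm the sharp bound $\deg_z \varphi \leq N - 1$ required for the Fourier collapse. It is precisely because the motif bound $\nu_1 \leq N - 2M + 1$ is saturated at exactly this threshold that the argument goes through, and this simultaneously accounts for the appearance of both \textsf{q}-Vandermondes in~\eqref{eq:intro_polynomial}: the \emph{symmetric square} $\Delta_t \, \Delta_{1/t}$ is the natural prefactor that handles both chiralities at once.
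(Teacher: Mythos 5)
Your proposal is correct and follows essentially the same route as the paper's proof: on-shell simplification of the Chevalley generators via $A_j\overset{\mathrm{ev}}{=}[N]/N$, cancellation of the $f$-denominators against $\Delta_t$ (resp.\ $\Delta_{1/t}$), completion of the sum using $f(1)=0$, and identification of $\frac1N\sum_j r_{Mj}$ with the annihilator $\hat{0}_M$ (your discrete-Fourier collapse is exactly the $q=0$ case of Lemma~\ref{lem:q_hat_onshell}). The only cosmetic difference is that you compute the nonzero constants $t^{\mp(M-1)/2}$ explicitly where the paper merely argues proportionality.
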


\begin{proof}
The Chevalley operators \eqref{eq:Uqsl2_pol_alt} and \eqref{eq:affinisation_pol_alt} simplify further on shell: by \eqref{eq:ev_Aj} we have
\begin{equation*}
	\begin{aligned} 
	E_M^{\text{pol}} & \, \overset{\mathrm{ev}}{=} \, t^{(N-2M+1)/2} \, \frac{[N]}{N} \, \Biggl(\, \sum_{j=M}^N \! s_{Mj} \Biggr) \prod_{m=1}^{M-1} \! f_{mM} \, , \\
	\widehat{F}_M^{\text{pol},\mspace{1mu}\circ} & \, \overset{\mathrm{ev}}{=} \, t^{(N-2M+1)/2} \, \frac{[N]}{N} \, \Biggl(\, \sum_{j=M}^N \! s_{Mj} \Biggr) \prod_{m=1}^{M-1} \! f_{Mm} \, .
	\end{aligned}
\checkedMma
\end{equation*}
Note the symmetry from \eqref{eq:Uhat_pol_t_inv}. As \eqref{eq:wavefn_polynomial} is invariant under inverting~$t$ the proof for the two operators is parallel; we will give it for $E_M^{\text{pol}}$. We use various ingredients from the proof in \textsection\ref{s:explicit_evrs}.

The denominator of $\prod_m f_{mM}$ cancels with some factors of $\Delta_t$ in \eqref{eq:wavefn_polynomial}. On functions independent of $z_{M+1},\To,z_N$ we can replace $s_{Mj}$ by $r_{Mj}$ for $j>M$. Moreover, the numerator of $f_{mM}$ vanishes if $z_M$ is replaced by $z_m$ with $m\leq M$, so we can again complete the sum:
\begin{equation*}
	E_M^{\text{pol}} \, \overset{\mathrm{ev}}{=} \, t^{(N-2M+1)/2} \,  \frac{[N]}{N} \, \Biggl(\, \sum_{j=1}^{N} r_{Mj} \Biggr) \prod_{m=1}^{M-1} \! f_{mM} \qquad \text{on} \quad \mathbb{C}[z_1,\To,z_M]^{\mathfrak{S}_M} \, .
\checkedMma
\end{equation*}
When we act on polynomials that contain a factor of $\prod_{m=1}^{M-1} (t \, z_m -z_M)$ to cancel the denominators of the $f_{mM}$ the result will again be a polynomial.
Now take $q=0$ in the expressions \eqref{eq:q_hat_kernel_off_shell} and \eqref{eq:q_hat_kernel_on_shell} for the \textit{q}-shift to get the `annihilator'
\begin{equation*}
	\begin{aligned}
	\hat{0}_i \mspace{4mu} & = \ \sum_{j=1}^N  \Biggl(\, \prod_{k(\neq j)}^N \!\! \frac{- z_k}{z_j - z_k} \Biggr) \, r_{ij} \qquad \text{on} \quad \mathbb{C}[z_1,\To,z_N]^{<N} \\
	& \overset{\text{ev}}{=} \ \frac{1}{N} \, \sum_{j=1}^N r_{ij} \, .
	\end{aligned}
\checkedMma
\end{equation*}
Comparing this with our on-shell expression for $E_M^{\text{pol}}$, and its analogue for $\widehat{F}_M^{\text{pol},\mspace{1mu}\circ}$, we see that on the intersection
\begin{equation*}
	\Delta_t(z_1,\To,z_M) \, \Delta_{1/t}(z_1,\To,z_M) \, \mathbb{C}[z_1,\To,z_M]^{S_M} \cap \mathbb{C}[z_1,\To,z_N]^{<N}
\end{equation*} 
the action of both of these operators is proportional to the annihilator $\hat{0}_M$. The latter can be moved through $(\,\prod_{m(<M)} f_{mM}) \, \Delta_t \, \Delta_{1/t}$ and $(\,\prod_{m(<M)} f_{Mm}) \, \Delta_t
\, \Delta_{1/t}$, which both contain terms that do survive setting $z_M = 0$. We conclude that on shell our eigenvectors have pseudo highest weight for $\widehat{\mathfrak{U}}$ if{f} $P_\nu^\star$ has degree at least one in $z_M$, i.e.\ if{f} $\ell(\nu) = M$.
\end{proof}

To conclude we prove that the Drinfeld polynomial of the $\widehat{\mathfrak{U}}$-irrep determined by $\widetilde{\Psi}_\nu$ from \eqref{eq:wavefn_polynomial} is given by \eqref{eq:Drinfeld_polyn}.

\begin{proof}[Proof of Proposition~\ref{prop:Drinfeld}]
We will compute the ratio of eigenvalues of $\widetilde{A}^\circ(u)$ and $\widetilde{D}^\circ(u)$ as in \eqref{eq:intro_ABCD_on_shell}. Use the polynomial action \eqref{eq:ABCD_Y_pol} on our simple component~\eqref{eq:wavefn_polynomial} to calculate
\begin{equation*}
	\begin{aligned}
	\widetilde{A}^{\text{pol}\,\circ}_M(u) \, \widetilde{\Psi}_\nu & = t^{M/2} \prod_{m=1}^M \frac{1 - u\,Y_m^\circ}{1 - t\, u\,Y_m^\circ} \ \, \Delta_t \, \Delta_{1/t} \, P^\star_\nu \\
	& = t^{M/2} \, \Delta_t \, \Delta_{1/t} \prod_{m=1}^M \frac{1 - t^{(N-2M+1)/2} \, u\,Y''_m}{1 - t^{(N-2M+3)/2}\, u\,Y''_m} \ \, P''_\nu \\
	& = t^{M/2} \prod_{m=1}^M \frac{1 - t^{(N-4(M-m)-2\nu_m-1)/2} \, u}{1 - t^{(N-4(M-m)-2\nu_m +1)/2}\, u} \ \, \widetilde{\Psi}_\nu \, .
	\end{aligned}
\end{equation*}
In the second equality we used \eqref{eq:parameter_shift_upshot} to move the symmetric combination of $Y_m^\circ$ through the symmetric square of the \textsf{q}-$\!$Vandermonde factor, and in the second equality we used \eqref{eq:nonsymm_Macdonalds} or \eqref{eq:Macd_eigenvalues} with $N'' = M$ and $q'' = t^{-1}$, $t'' = t^{-2}$. With the help of the identification~\eqref{eq:motifs_vs_partitions} between the partition~$\nu$ and motif~$\mu$ we thus obtain the eigenvalue
\begin{equation*}
	\alpha^\mu(u) = t^{M/2} \prod_{m=1}^M \frac{1 - t^{(N-2\mu_m-1)/2} \, u}{1 - t^{(N-2\mu_m+1)/2}\, u} \, .
\end{equation*}

Concerning $\widetilde{D}^\circ(u)$ we can avoid the complicated formula \eqref{eq:ABCD_Y_pol} by exploiting the quantum determinant of $\widetilde{L}^\circ_a(u)$. By $\widehat{\mathfrak{U}}$-invariance we may consider the pseudo highest-weight vector $\ket{\mu}$. On this vector \eqref{eq:qdet_ABCD} and \eqref{eq:qdet_Y} together imply
\begin{equation*}
	\mathrm{qdet}_a \widetilde{L}^\circ_a(u) \, \ket{\mu} = \Bigl( \widetilde{A}^\circ(t\,u) \, \widetilde{D}^\circ(u) - 0 \Bigr) \ket{\mu} = t^{N/2} \, \frac{\Delta^{\!\circ}(-u)}{\Delta^{\!\circ}(-t\,u)} \, \ket{\mu} \, ,
\end{equation*}
so the eigenvalue of $\widetilde{D}^\circ(u)$ is given by
\begin{equation*}
	\delta^\mu(u) = t^{N/2} \, \frac{\Delta^{\!\circ}(-u)}{\Delta^{\!\circ}(-t\,u)} \, \alpha^\mu(t\,u)^{-1} \, .
\end{equation*}

Now compute the ratio in \eqref{eq:intro_ABCD_on_shell}. First consider the empty motif, $\mu = 0$. In this case $\alpha^0(u) = 1$ and the only contribution comes from the quantum determinant,
\begin{equation*}
	\frac{\alpha^0(u)}{\delta^0(u)} = t^{-N/2} \, \frac{\Delta^{\!\circ}(-t\,u)}{\Delta^{\!\circ}(-u)} = t^{-N/2} \prod_{i=1}^N \frac{1 - t^{(N-2i+3)/2} \, u}{1 - t^{(N-2i+1)/2} \, u} \, .
\end{equation*}
Hence $P^0(u) = \Delta^{\!\circ}(-u) = \prod_{i=1}^N (1-t^{(N-2i+1)/2} \, u)$. For any other motif we have to correct the preceding by a factor of
\begin{equation*}
	\alpha^\mu(u)\, \alpha^\mu(t\,u) = t^M \prod_{m=1}^M \frac{1 - t^{(N-2\mu_m-1)/2} \, u}{1 - t^{(N-2\mu_m+1)/2}\, u} \, \frac{1 - t^{(N-2\mu_m+1)/2} \, u}{1 - t^{(N-2\mu_m+3)/2}\, u} \, .
\end{equation*}
The numerator tells us which factors to delete from $\Delta^{\!\circ}(-u)$, yielding \eqref{eq:Drinfeld_polyn}.
\end{proof}


\appendix

\section{Nonrelativistic/isotropic limit} \label{s:app_nonrlt_limit}

\subsection{Dunkl and Calogero--Sutherland limit} \label{s:app_CalSut_limit} To facilitate comparison with the literature on the Haldane--Shastry model let us review the nonrelativistic limit in some detail. Setting $q = t^\alpha$ and letting $t\to1$ the \textsc{aha} generators \eqref{eq:Hecke_pol_2a} and \eqref{eq:Yi_via_x} behave as
\begin{equation*}
	\begin{aligned}
	T_i^\text{pol} & = s_i + \frac{1}{2} \, (t-1) \biggl(1 - \frac{z_i + z_{i+1}}{z_i - z_{i+1}} \, (1-s_i) \biggr) + \mathcal{O}(t-1)^2 \, , \\
	Y_i & = 1 + (t-1) \, d_i + \mathcal{O}(t-1)^2 \, , 
	\end{aligned}
\end{equation*}
where the trigonometric Dunkl(--Cherednik) operators \cite{Dun_89,Che_91} are
\begin{equation} \label{eq:Dunkl}
	\begin{aligned} 
	d_i \coloneqq {} & \alpha \, z_i \,\partial_{z_i} + \frac{1}{2} \sum_{j(\neq i)}^N \frac{z_i + z_j}{z_i - z_j} \, (1-s_{ij}) - \frac{1}{2} \sum_{j=1}^{i-1} s_{ij} + \frac{1}{2} \! \sum_{j=i+1}^N \!\! s_{ij} \\
	= {} & \alpha \, z_i \,\partial_{z_i} + \frac12 \, (N-2\,i+1) + \sum_{j=1}^{i-1} \frac{z_i}{z_i - z_j} \, (1-s_{ij}) + \!\! \sum_{j=i+1}^N \frac{z_j}{z_i - z_j} \, (1-s_{ij}) \, .
	\end{aligned}
\end{equation}
This is the \emph{basic} representation of the \emph{degenerate \textsc{aha}}~\cite{Dri_86,Lus_89} whose relations can be obtained by expanding \eqref{eq:Hecke} and \eqref{eq:AHA} in $t-1$. The $s_i$ obey the relations of $\mathfrak{S}_N$, the $d_i$ form an abelian subalgebra, while the cross relations read
\begin{equation*}
	s_i \, d_i - d_{i+1} \, s_i = 1 \, , \qquad\quad s_i \, d_j = d_j \, s_i \quad \text{if } j\notin \{i,i+1\} \, .
\end{equation*}
Note that shifts of the $d_i$ by a common constant, which do not change the relations, occur in the literature. In \cite{BG+_93} the \eqref{eq:Dunkl}, which act on the space of polynomials, were called `gauge transformed' Dunkl operators.

The nonrelativistic limit of the Macdonald operators $D_{\pm 1}$ is obtained using
\begin{equation*}
	\begin{aligned}
	A_{\pm i}(\vect{z}) = 1 & \pm \frac{1}{2} \, (t-1) \, \sum_{j(\neq i)}^N \frac{z_i+z_j}{z_i-z_j} \\
	& + \frac{1}{4} \, (t-1)^2 \, \Biggl(\, \sum_{j(\neq i)}^N \! \biggl(\mp \frac{z_i+z_j}{z_i-z_j} + \frac{1}{2} \biggr) + \sum_{\substack{j\neq k \\ (\neq i)}}^N \frac{z_i+z_j}{z_i-z_j} \, \frac{z_i+z_k}{z_i-z_k} \Biggr) +\mathcal{O}(t-1)^3
	\end{aligned}
\end{equation*}
along with \eqref{eq:q_hat} for $q=t^\alpha$, noting that $z_i^2 \, \partial_{z_i}^2 = z_i \, \partial_{z_i} \, (z_i \, \partial_{z_i} - 1)$. The result is
\begin{equation} \label{eq:D_pm1_limit}
	D_{\pm 1} = N \pm (t-1) \, \alpha \, P^{\text{nr}} + \frac{1}{2} \, (t-1)^2 \, (\alpha^2 \, H^{\text{eff},\mspace{1mu}\text{nr}} \mp \alpha \, P^{\text{nr}}) + \mathcal{O}(t-1)^3 \, .
\end{equation}
Here $P^{\text{nr}} = \sum_j z_j \, \partial_{z_j}$ is the nonrelativistic limit of $(D_1 - D_{-1})/2$, giving the usual total momentum operator in multiplicative notation $z_j = \E^{2\pi \I\, x_j/L}$. (This operator also arises in the semiclassical limit of $D_N$, cf.~\textsection\ref{s:abelian_freezing} and especially \eqref{eq:total_degree_semiclassical}.) The combination $(D_1 + D_{-1})/2$ contains, besides the rest mass $N$ (times $m\,c^2$),
\begin{equation} \label{eq:CS_eff}
	\begin{aligned}
	H^{\text{eff},\mspace{1mu}\text{nr}} = {} & \sum_{i=1}^N \bigl(z_i\,\partial_{z_i}\bigr)^{\!2} + k \sum_{i<j}^N \frac{z_i+z_j}{z_i-z_j} \, (z_i \, \partial_{z_i} - z_j \, \partial_{z_j}) + \varepsilon^{\text{nr}}_0 \\
	= {} & \sum_{i=1}^N \bigl(z_i\,\partial_{z_i}\bigr)^{\!2} - k \sum_{i=1}^N (N-2\,i+1)\,z_i\,\partial_{z_i} \\
	& + 2\,k \sum_{i<j}^{\smash{N}} \frac{z_i}{z_i-z_j} \, (z_i \, \partial_{z_i} - z_j \, \partial_{z_j}) + \varepsilon^{\text{nr}}_0 \, ,
	\end{aligned} 
	\qquad\quad k = \frac{1}{\alpha}\, ,
\end{equation}
where $2\,\varepsilon^{\text{nr}}_0/k^2 = \binom{N}{2} + \binom{N}{3} = N\,(N^2-1)/6$ are tetrahedral numbers. We recognise \eqref{eq:CS_eff} as the effective (gauge-transformed) Hamiltonian of the trigonometric Calogero--Sutherland model~\cite{Sut_71,Sut_72} in multiplicative notation. In terms of~\eqref{eq:Dunkl} we have
\begin{equation*}
	\sum_{i=1}^N d_i = \alpha \, P^{\text{nr}} \, , \qquad\qquad \sum_{i=1}^N d_i^2 = \alpha^2 \, H^{\text{eff},\mspace{1mu}\text{nr}} \qquad \text{on} \quad \mathbb{C}[\vect{z}]^{\mathfrak{S}_N} \, .
\end{equation*}
Joint eigenfunctions~\cite{Sut_72} arise from Macdonald polynomials, cf.~\eqref{eq:Macd_orthog}:
\begin{equation} \label{eq:Jack}
	P^{(\alpha)}_\lambda(\vect{z}) = \lim_{t\to 1} \, P_\lambda(\vect{z})|_{q=t^\alpha} \, , \qquad P^{(\alpha)}_\lambda(\vect{z}) = m_\lambda(\vect{z}) + \text{lower} \, .
\end{equation}
Here $P^{(\alpha)}_\lambda$ are Jack polynomials in the (monic) `P-normalisation'~\cite{Jac_70,Mac_95}. The ground state is $P^{(\alpha)}_0(\vect{z}) = 1$ with energy $\varepsilon^{\text{nr}}_0$. Schur polynomials arise in the special case $\alpha=k=1$, where the particles are free, as is clear from the following.

The physical nonrelativistic Hamiltonian~$H^{\text{nr}}$ is \eqref{eq:spin-CS} with $P_{ij} \rightsquigarrow 1$. It is obtained by either of expanding the Ruijsenaars operators \eqref{eq:Macd_to_Ruij}~at $t=1$ or conjugating \eqref{eq:CS_eff} by the square root of the nonrelativistic limit\,---\,use $(z;q)_\infty/(q^k\mspace{1mu} z;q)_\infty \to (1-z)^k$ as $q\to 1$\,---\,of \eqref{eq:Macd_measure},
\begin{equation} \label{eq:psi_0_nonrlt}
\mu_k^\text{nr}(\vect{z}) = \prod_{i\neq j}^N (1 - z_i/z_j)^k = \prod_{i< j}^N  2 \, \Bigl|\sin\frac{x_i - x_j}{2}\Bigr|^{2k}  \, , \qquad z_j = \E^{\I\, x_j} \, .
\end{equation}
The square root of this measure is the ground-state wave function of $H^{\text{nr}}$, with energy $\varepsilon^{\text{nr}}_0 = k^2 \, N\,(N^2-1)/12$. (Cf.~\textsection\ref{s:Ruijsenaars}: $\varepsilon_0 = [N] = N + \frac{1}{2}\,(t-1)^2 \, \alpha^2 \, \varepsilon^{\text{nr}}_0 + \mathcal{O}(t-1)^3$.)

\subsection{Spin-Calogero--Sutherland limit} \label{s:app_spin_CS_limit} For the `nonrelativistic limit' $q=t^\alpha$, $t\to1$, we can use the results of \textsection\ref{s:app_CalSut_limit}. We only need to determine the limit for the long-range spin interactions from \eqref{eq:spin-D_1}. This gives
\begin{equation} \label{eq:dyn_spin_interactions_nonrlt}
	\check{R}_{(12\cdots j)}^{-1} \, \hat{t}^{\,\alpha}_j \, \check{R}_{(12\cdots j)}^{\vphantom{-1}} = 1 + \frac{1}{2} \, (t-1)^2 \; 2\,\alpha \sum_{i=1}^{j-1} \frac{-z_i\,z_j}{(z_i-z_j)^2} \, (1-P_{ij}) + \mathcal{O}(t-1)^3 \, . 
\end{equation}
The first nontrivial terms conveniently appear at order $(t-1)^2$, so we just have to add these to \eqref{eq:CS_eff} in order to get the effective Hamiltonian of the trigonometric spin-Calogero--Sutherland model. Conjugation by the ground-state wave function~\eqref{eq:psi_0_nonrlt} yields the physical Hamiltonian \eqref{eq:spin-CS} derived in \cite{BG+_93}. 

In this limit the physical space \eqref{eq:phys_space} describes bosons with spin-1/2 and coordinates~$z_j$:
\begin{equation} \label{eq:phys_space_isotropic}
	\widetilde{\mathcal{H}}^{\text{nr}} \coloneqq \, \bigcap_{i=1}^{N-1} \ker\bigl(P_{i,i+1} - s_i\bigr) \, \cong \, \mathcal{H} \underset{\mathfrak{S}_N}{\otimes} \mathbb{C}[\vect{z}] = \mathcal{H}/\mathcal{N}_{t=1} \, \qquad \mathcal{N}_{t=1} = \sum_{i=1}^{N-1} \mathrm{im}(P_{i,i+1} - s_i) \, .
\end{equation}
Physical vectors in the $M$-particle sector acquire the simple form
\begin{equation*}
	\sum_{i_1 < \cdots < i_M}^N \!\!\!\!\! s_{\{i_1,\To,i_M\}} \widetilde{\Psi}(\vect{z}) \, \cket{i_1,\To,i_M} \, , \qquad \widetilde{\Psi}(\vect{z}) \in \mathbb{C}[\vect{z}]^{\mathfrak{S}_M \times \mathfrak{S}_{N-M}} \, ,
\end{equation*}
where $s_{\{i_1,\To,i_M\}} \widetilde{\Psi}(\vect{z}) = \widetilde{\Psi}(z_{i_1},\To,z_{i_M};z_{j_1},\To,z_{j_{N-M}})$ with $j_n$ labelling the components of $\vect{z}$ not among the $z_{i_1},\To,z_{i_M}$.
The eigenvectors are thus obtained by embedding the eigenfunctions of the spinless (scalar) Calogero--Sutherland model, cf.~\textsection\ref{s:app_CalSut_limit}.

The spin-Calogero--Sutherland model inherits the quantum-affine symmetry of spin-Ruijsenaars in the form of a (double) Yangian symmetry, which can be described more explicitly than in the \textsf{q}-case, cf.~\cite{HH+_92}. The isotropic limit of the \textit{R}-matrix, cf.~\eqref{eq:baxterisation}, is
\begin{equation}
	R\bigl(u = t^{\lambda}\bigr) \, \to \, \frac{\lambda + P}{\lambda + 1} \, , \qquad t\to1 \, .
\end{equation}

\section{More about the spin side}

\subsection{Stochastic twist} \label{s:app_stochastic_twist} In \textsection\ref{s:spin} we chose a particular representation of the Hecke algebra on $\mathcal{H}=V^{\otimes N}$. Another, slightly different choice is often used too. Generalise \eqref{eq:Hecke_spin} to
\begin{equation} \label{eq:app_Hecke_spin}
	T^\text{sp} \coloneqq \begin{pmatrix} 
	\, t^{1/2} & \!\! \color{gray!80}{0} & \color{gray!80}{0} \! & \color{gray!80}{0} \, \\
	\, \color{gray!80}{0} & \!\! t^{1/2} - t^{-1/2} & t^{\epsilon/2} \! & \color{gray!80}{0} \, \\
	\, \color{gray!80}{0} & \!\! t^{-\epsilon/2} & 0 \! & \color{gray!80}{0} \, \\
	\, \color{gray!80}{0} & \!\! \color{gray!80}{0} & \color{gray!80}{0} \! & t^{1/2} \, \\
	\end{pmatrix} \, .
\end{equation}
Though this yields an action of $\mathfrak{H}_N$ on $\mathcal{H}$ for any value of $\epsilon$ we will only consider $\epsilon \in \{0,1\}$. If $\epsilon = 0$, as in the main text, $T^\text{sp}$ is symmetric (and hermitian if further $t^{1/2}\in\mathbb{R}^\times$) while for $\epsilon = 1$ its column sums are fixed. The two conventions are related by a `gauge transformation' or `stochastic twist'. Indeed, on $V \otimes V$ we have
\begin{subequations} \label{eq:gauge_transf}
	\begin{gather}
	T^\text{sp}|_{\epsilon=1} = \theta_2^{\vphantom{-1}} \, T^\text{sp}|_{\epsilon=0} \ \theta_2^{-1} \, , \qquad \theta_2 \coloneqq k_1^{1/4} \, k_2^{-1/4} = \diag(1,t^{1/4},t^{-1/4},1) \, ,
\intertext{where we recall that $k = t^{\sigma^z\mspace{-1mu}/2} = \diag(t^{1/2},t^{-1/2})$. This extends to $\mathcal{H}$ as}
	T^\text{sp}_i|_{\epsilon=1} = \theta_N^{\vphantom{-1}} \, T^\text{sp}_i|_{\epsilon=0} \, \theta_N^{-1} \, , \qquad \theta_N \coloneqq \prod_{i=1}^N k^{(N-2i+1)/4}_i \, .
	\end{gather}
\end{subequations}

For $\epsilon=0$ expressions are a bit simpler, yet $\epsilon=1$ is nice from the following viewpoint: it can be obtained from the polynomial Hecke action~\eqref{eq:Hecke_pol_2a}, as follows.\,\footnote{\ We should point out $x_{ij}$ from \eqref{eq:x_ij} likewise yields $P\,T^\text{sp}$, the \textit{R}-matrix of $\mathfrak{U}$ up to a factor of $t^{-1/2}$, yet the `physical condition' from \textsection\ref{s:physical_space} does \emph{not} imply that the two coincide on the physical space $\widetilde{\mathcal{H}}$.} Let us identify the subspace $\mathbb{C} \oplus \mathbb{C}\, z = \mathbb{C}[z]^{<2} \subset \mathbb{C}[z]$ with $V^*$ via $1 \leftrightarrow \bra{\uparrow}$ (`empty') and $z \leftrightarrow \bra{\downarrow}$ (`occupied'); the reason for the dual will become clear momentarily. Likewise, for $N=2$ the subspace $\mathbb{C}[z_1,z_2]^{<2} \subset \mathbb{C}[z_1,z_2]$ can be thought of as $\mathcal{H}^* = V^* \otimes V^*$ under the identification $1 \leftrightarrow \bra{\uparrow\uparrow}$, $z_2 \leftrightarrow \bra{\uparrow\downarrow}$, $z_1 \leftrightarrow \bra{\downarrow\uparrow}$, $z_1 \, z_2 \leftrightarrow \bra{\downarrow\downarrow}$. The operator $T_1^\text{pol}$ from  \eqref{eq:Hecke_pol_2a} preserves the total degree and thus this subspace, on which it acts by \eqref{eq:Hecke_spin} for $\epsilon=1$.

If we had used $V$ and $V \otimes V$ rather than their duals we would have obtained the transpose of \eqref{eq:Hecke_spin}. This is related to the observation that the decomposition of $V \otimes V$ into eigenspaces $V \otimes V \cong \, \mathrm{Sym}_{t}^2(V) \oplus \Lambda_t^{\!2}(V)$, 
\begin{equation*}
	\begin{aligned}
	\mathrm{Sym}_{t}^2(V) & = \mathbb{C} \, \ket{\uparrow\uparrow} \, \oplus \, \mathbb{C} \, \bigl( t^{(\epsilon+1)/4} \, \ket{\uparrow\downarrow} + t^{-(\epsilon+1)/4} \, \, \ket{\downarrow\uparrow}\bigr) \, \oplus \, \mathbb{C} \, \ket{\downarrow\downarrow} \, , \\
	\Lambda_t^{\!2}(V) & = \mathbb{C} \, \bigl(t^{(\epsilon-1)/4} \, \ket{\uparrow\downarrow}- t^{-(\epsilon-1)/4} \,\ket{\downarrow\uparrow} \bigr) \, ,
	\end{aligned}
\end{equation*}	
is somewhat unsatisfactory at $\epsilon=1$ in that $t$s feature in the \textsf{q}-symmetric (`triplet') eigenspace, rather than the \textsf{q}-antisymmetric (`singlet') eigenspace as in the polynomial case \eqref{eq:T_pol_eigenspaces}. However, the dual eigenspace decomposition is entirely analogous to \eqref{eq:T_pol_eigenspaces} when $\epsilon=1$:
\begin{equation} \label{eq:T_sp_eigenspaces_dual}
	\begin{aligned} 
	\mathrm{Sym}_{t}^2(V^*) & = \mathbb{C} \, \bra{\uparrow\uparrow} \, \oplus \, \mathbb{C} \, \bigl(t^{-(\epsilon-1)/4} \, \bra{\uparrow\downarrow}+ t^{(\epsilon-1)/4} \, \bra{\downarrow\uparrow}\bigr) \, \oplus \, \mathbb{C} \, \bra{\downarrow\downarrow} \, , \\ 
	\Lambda_t^{\!2}(V^*) & = \mathbb{C} \, \bigl( t^{-(\epsilon+1)/4} \, \bra{\uparrow\downarrow} - t^{(\epsilon+1)/4} \,\bra{\downarrow\uparrow}\bigr) \, .
	\end{aligned}
\end{equation}

In the remainder of this appendix we give the $\epsilon$-generalisations of the spin expressions from the main text. The Temperley--Lieb generators~\eqref{eq:TL_spin} now feature
\begin{equation} \label{eq:app_TL_spin}
	e^\text{sp} = \begin{pmatrix} 
	\, 0 & \color{gray!80}{0} & \! \color{gray!80}{0} & \color{gray!80}{0} \, \\
	\, \color{gray!80}{0} & \!\!\hphantom{-}t^{-1/2}\! & \!-t^{\epsilon/2}\! & \color{gray!80}{0} \, \\
	\, \color{gray!80}{0} & \!\!-t^{-\epsilon/2}\! & \!\hphantom{-}t^{1/2}\! & \color{gray!80}{0} \, \\
	\, \color{gray!80}{0} & \color{gray!80}{0} & \! \color{gray!80}{0} & 0 \, \\
	\end{pmatrix} \, .
\end{equation}
For $\epsilon = 1$ the column sums vanish: the matrix is \emph{stochastic}. This plays an important role in the connection with models in quantum-integrable stochastic models such as the asymmetric exclusion process~(\textsc{asep})~\cite{GS_92}.

Inserting $\epsilon$ as in~\eqref{eq:gauge_transf} the action~\eqref{eq:Uqsl2_spin} of $\mathfrak{U}$ on $\mathcal{H}$ becomes
\begin{equation} \label{eq:app_Uqsl2_spin}
\begin{aligned}
	E_1^\text{sp} & = \sum_{i=1}^N t^{+\epsilon\,(N-2\mspace{1mu}i+1)/4} \mspace{4mu} k_1 \cdots k_{i-1} \, \sigma^+_i \, , \\
	F_1^\text{sp} & = \sum_{i=1}^N t^{-\epsilon\,(N-2\mspace{1mu}i+1)/4} \, \sigma^-_i \, k^{-1}_{i+1} \cdots k^{-1}_N \, , \\
	\end{aligned} 	
	\qquad\quad
	K_1^\text{sp} = k_1 \cdots k_N \, .
\end{equation}
Any operator that is $\mathfrak{U}$-invariant and annihilates $\cbra{\varnothing} = \bra{\uparrow\cdots\uparrow}$ is stochastic. Indeed,
\begin{equation*}
	(1,1,\To,1) = \bra{\uparrow\cdots\uparrow} \exp(S^+) = \bra{\uparrow\cdots\uparrow} \, \sum_{n=0}^N \frac{t^{-n\,(N-n)/4}}{[n]!} \, \bigl(E_1^\text{sp}|_{\epsilon=1}\bigr)^n
\checkedMma
\end{equation*}
then is a $t$-independent (left) eigenvector with eigenvalue zero. But acting on this vector from the right is nothing but computing the column sums. 

Now we move to the affine setting (\textsection\ref{s:quantum-affine}). The (minimal) affinisation~\eqref{eq:affinisation_z} becomes
\begin{equation} \label{eq:app_affinisation_z}
	\begin{aligned}
	E_0^\text{inh} & = \sum_{i=1}^N t^{-\epsilon\,(N-2i+1)/4} \ \, z_i \ k^{-1}_1 \cdots k^{-1}_{i-1} \,  \sigma^-_i \, , \\
	F_0^\text{inh} & = \sum_{i=1}^N t^{+\epsilon\,(N-2i+1)/4} \ z_i^{-1} \, \sigma^+_i \,  k_{i+1} \cdots k_N \, , \\
	\end{aligned}
	\qquad\quad 
	K_0^\text{inh} = k_1^{-1} \cdots k_N^{-1} \, .
\checkedMma
\end{equation}
Baxterisation~\eqref{eq:baxterisation} gives the \textit{R}-matrix
\begin{equation} \label{eq:app_baxterisation}
	\check{R}(u) = t^{1/2} \, \frac{u \, T^\text{sp} - T^{\text{sp}\,-1}}{t \, u - 1} = f(u)\, T^\text{sp} + g(u) = 
	\begin{pmatrix} 
	\, 1 & \color{gray!80}{0} & \color{gray!80}{0} & \color{gray!80}{0} \, \\
	\, \color{gray!80}{0} & u\,g(u) & \!t^{\epsilon/2} f(u)\! & \color{gray!80}{0} \, \\
	\, \color{gray!80}{0} & \!t^{-\epsilon/2} f(u)\! & g(u) & \color{gray!80}{0} \, \\
	\, \color{gray!80}{0} & \color{gray!80}{0} & \color{gray!80}{0} & 1 \, \\
	\end{pmatrix}
	\, .
\end{equation}
If $\epsilon =0$ the \textit{R}-matrix is hermitian when $t^{1/2} \in \mathbb{R}^\times$ and $u\in S^1 \subset\mathbb{C}$; for $\epsilon=1$ its column sums equal unity. The monodromy matrix~\eqref{eq:L_inh} is defined as before. Its quantum determinant
\begin{equation*}
	\begin{aligned}
	\qdet_a L_a^\text{inh}(u) 
	& = A(t \, u) \, D(u) - t^{(1-\epsilon)/2} \, B(t\,u) \, C(u) = D(t \, u) \, A(u) - t^{-(1-\epsilon)/2} \, C(t\,u) \, B(u)  \\
	& = A(u) \, D(t \, u) - t^{(1+\epsilon)/2} \, C(u) \, B(t \, u) = D(u) \, A(t \, u) - t^{-(1+\epsilon)/2} \, B(u) \, C(t\,u)
	\end{aligned}
\end{equation*}
is, perhaps somewhat surprisingly, proportional to $K_1^\text{sp}$ when $\epsilon=1$:
\begin{equation*}
	\qdet_a L_a^\text{inh}(u)  = \prod_{i=1}^N \qdet_a R_{ai}(u/z_i) = t^{N/2} \prod_{i=1}^N \frac{u-z_i}{t \, u - z_i} \ (K_1^\text{sp})^\epsilon \, .
\end{equation*}
Indeed, \eqref{eq:gauge_transf} holds for $\check{R}$ too. But $P \, \theta_2 = \theta_2^{-1} \, P$ so $R(u) = P \, \check{R}(u)$ obeys $R(u)|_{\epsilon=0} = \theta_2 \, R(u)|_{\epsilon=1} \, \theta_2$. Thus $\qdet{\!}_a R_{ai}(u)|_{\epsilon=1} = k_i \qdet{\!}_a R_{ai}(u)|_{\epsilon=0}$ since $\qdet \theta_2 = k^{1/2}$.

Now we move to the spin chain. As any operator built from the Hecke operators or \textit{R}-matrix the spin-chain Hamiltonians inherit the property \eqref{eq:gauge_transf}. For $\epsilon =0$ the Hamiltonian $H^\textsc{l}|_{\epsilon=0}^\textsc{t} = H^\textsc{l}|_{\epsilon=0}$ is hermitian~\cite{Lam_18} if $t^{1/2} \in \mathbb{R}^\times$. For $\epsilon=1$, instead, it is stochastic. The entries of our Hamiltonian depend on the coordinates~$z_j$ and are complex in general, so it is less clear how they can be interpreted as transition amplitudes; though probabilistic models with complex weights have been considered in the literature~\cite{PRV_20}.

The choice $\epsilon = 1$ allows for a simple way to understand the very mild dependence of the Hamiltonian on the sign of $t^{1/2} = \mathsf{q}$, reflected in the dependence of \eqref{eq:intro_polynomial} on $t = \mathsf{q}^2$ rather $t^{1/2}$. Indeed, the potential~\eqref{eq:pot} clearly depends on $t$. The same is true for the \textit{R}-matrix \eqref{eq:app_baxterisation} when $\epsilon=1$. The Temperley--Lieb generator \eqref{eq:app_TL_spin} for $\epsilon=1$ instead acquires a sign if $\mathsf{q} \mapsto -\mathsf{q}$, while $[N] \mapsto (-1)^{N+1} \, [N]$. This proves that
\begin{equation} \label{eq:ham_left_q_to_-q}
	H^\textsc{l}\big|_{\mathsf{q} \mapsto -\mathsf{q}} = (-1)^N \, H^\textsc{l} \qquad \text{if} \quad \epsilon = 1 \,.
\end{equation}

The physical vectors (\textsection\ref{s:physical_space}) change a little as well. By \eqref{eq:app_Hecke_spin} we now have $T^\text{sp}\,\ket{\downarrow\uparrow} = t^{\epsilon/2} \, \ket{\uparrow\downarrow}$ and $\bra{\downarrow\uparrow} \, T^\text{sp} = t^{-\epsilon/2} \bra{\uparrow\downarrow}$. Hence \eqref{eq:phys_vector} becomes
\begin{equation*}
	\sum_{w \in \mathfrak{S}_N/(\mathfrak{S}_M \times \mathfrak{S}_{N-M})} \!\!\!\!\!\!\!\!\!\!\!\!\!\!\!\! t^{\epsilon\, \ell(w)/2} \, T_w^\text{pol} \, \widetilde{\Psi}(\vect{z}) \, \cket{w\,1,\To,w\,M} 
\checkedMma
\end{equation*}
and
\begin{equation*}
	\sum_{w \in \mathfrak{S}_N/(\mathfrak{S}_M \times \mathfrak{S}_{N-M})} \!\!\!\!\!\!\!\!\!\!\!\!\!\!\!\! t^{-\epsilon\, \ell(w)/2} \,  \cbra{w\,1,\To,w\,M} \, T_w^\text{pol} \, \widetilde{\Psi}(\vect{z}) 
\checkedMma
\end{equation*}
where $w = \{i_1 ,\To, i_M\}$ from \eqref{eq:perm_grassmannian} has length $\ell(w) = \sum_{m=1}^M (i_m - m)$.

Finally consider the crystal limit (\textsection\ref{s:intro_crystal_limit}). For $\epsilon = 1$ the limits
\begin{equation*}
	\frac{e^\text{sp}}{[2]} \to 
	\begin{cases}
	\bigl( \ket{\downarrow\uparrow} - \epsilon \, \ket{\uparrow\downarrow}\bigr) \, \bra{\downarrow\uparrow} \, , \qquad & t \to \infty \, , \\
	\bigl( \ket{\uparrow\downarrow} - \epsilon \, \ket{\downarrow\uparrow}\bigr) \, \bra{\uparrow\downarrow} \, , \qquad & t \to 0 \, ,
	\end{cases} 
\end{equation*}
are no longer diagonal to ensure stochasticity. This is inherited by the long-range spin interactions, with a strictly triangular part when $\epsilon=1$ that depends on~$j$ for $S_{[i,j]}^{\mspace{1mu}\textsc{l}}$ and on $i$ for $S_{[i,j]}^{\mspace{1mu}\textsc{r}}$. The Hamiltonians thus acquire off-diagonal corrections to \eqref{eq:crystal_ham}.

\subsection{Relation between presentations} \label{s:app_presentations}
Consider the six-vertex \textit{R}-matrix
\begin{equation*}
	R(u) = P \check{R}(u) = \frac{1}{t\,u-1}
	\begin{pmatrix} 
	\, t \,u-1 & \color{gray!80}{0} & \color{gray!80}{0} & \color{gray!80}{0} \, \\
	\, \color{gray!80}{0} & t^{(1-\epsilon)/2} \, (u-1) & t-1 & \color{gray!80}{0} \, \\
	\color{gray!80}{0} & (t-1)\,u & t^{(1+\epsilon)/2} \, (u-1) & \color{gray!80}{0} \, \\
	\, \color{gray!80}{0} & \color{gray!80}{0} & \color{gray!80}{0} & t \, u-1 \, \\
	\end{pmatrix}
	\, ,
\end{equation*}
where we included $\epsilon \in \{ 0,1\}$ in accordance with \textsection\ref{s:app_stochastic_twist}.
In \textsection\ref{s:quantum-affine} we consider the monodromy matrix~\eqref{eq:L_inh} for $N$ sites with inhomogeneities $z_1,\To,z_N$,
\begin{equation} \label{eq:app_monodromy_spin}
	L_a^\text{inh}(u) = L_a^\text{inh}(u;\vect{z}) = R_{aN}(u/z_N) \cdots R_{a1}(u/z_1) \, .
\end{equation}
This operator obeys the \textit{RLL}-relations, yielding a finite-dimensional representation of $\widehat{\mathfrak{U}}$ in the \textsc{frt} presentation. 

The Chevalley generators of the Drinfeld--Jimbo presentation arise by expanding in~$u$ around $0$ and $\infty$. Viewed as a (formal) power series in $u^{\pm1}$ the \textit{R}-matrix has the form
\begin{equation} \label{eq:app_R-expansion} 
	\begin{aligned}
	R(u) & =
	\begin{pmatrix} 
	\, t^{(1-\epsilon)/4} \, k^{-(1-\epsilon)/2} + \mathcal{O}(u) & (1-t) \, \sigma^- + \mathcal{O}(u) \, \\
	\, (1-t) \, u\,\sigma^+ + \mathcal{O}(u^2) & t^{(1+\epsilon)/4}\,k^{(1+\epsilon)/2} + \mathcal{O}(u) \, \\
	\end{pmatrix} 
	\\
	& =
	\begin{pmatrix} 
	\, t^{-(1+\epsilon)/4} \, k^{(1+\epsilon)/2} + \mathcal{O}(u^{-1}) & (1-t^{-1}) \, u^{-1} \, \sigma^- + \mathcal{O}(u^{-2}) \, \\
	\, (1-t^{-1}) \, \sigma^+ + \mathcal{O}(u^{-1}) & t^{-(1-\epsilon)/4} \, k^{-(1-\epsilon)/2} + \mathcal{O}(u^{-1}) \, \\
	\end{pmatrix}
	\, ,
	\end{aligned}
\checkedMma
\end{equation}
where $k = \diag(t^{1/2},t^{-1/2})$. (If we had removed the denominator of the \textit{R}-matrix these would correspond to the highest and lowest orders in $u$.) Consider the Gauss decomposition of the monodromy matrix likewise expanded in $u^{\pm 1}$ to get two operators
\begin{equation*}
	L_a^{\pm}(u) = 
	\begin{pmatrix} 
	\, A^{\pm}(u) & B^{\pm}(u) \, \\
	\, C^{\pm}(u) & D^{\pm}(u) \, \\
	\end{pmatrix}_{\!a} 
	=
	\begin{pmatrix} 
	\, 1 & \mathsf{f}^\pm(u) \, \\
	\, 0 & 1 \,
	\end{pmatrix}_{\!a}  
	\begin{pmatrix} 
	\, \mathsf{k}_0^\pm(u) & 0 \, \\
	\, 0 & \mathsf{k}_1^\pm(u) \,
	\end{pmatrix}_{\!a} 
	\begin{pmatrix} 
	\, 1 & 0 \, \\
	\, \mathsf{e}^\pm(u) & 1 \,
	\end{pmatrix}_{\!a}	\, .
\end{equation*}
For \eqref{eq:app_monodromy_spin} use \eqref{eq:app_R-expansion} to recover the $\widehat{\mathfrak{U}}$-representation \eqref{eq:app_Uqsl2_spin}, \eqref{eq:app_affinisation_z} at the lowest order: 
\begin{equation*}
	\begin{aligned}
	\mathsf{e}^{+,\mspace{1mu}\text{sp}}(u) & = {-}t^{-\epsilon\,(N+1)/4} \, (t-1) \, u \, K_0^\text{inh} \, F_0^\text{inh} \!\!\!\!\! && + \mathcal{O}(u^2) \, , \\
	\mathsf{f}^{+,\mspace{1mu}\text{sp}}(u) & = -t^{-\epsilon\,(N+1)/4}\,(t^{1/2}-t^{-1/2}) \, F_1^\text{sp} && + \mathcal{O}(u) \, , \\
	\mathsf{k}_0^{+,\mspace{1mu}\text{sp}}(u) & = \hphantom{-}t^{(1-\epsilon)\,N/4} \, \bigl(\mspace{-1mu} K_0^\text{inh} \bigr)^{(1-\epsilon)/2} && + \mathcal{O}(u) \, , \\
	\mathsf{k}_1^{+,\mspace{1mu}\text{sp}}(u) & = \hphantom{-}t^{(1+\epsilon)\,N/4} \, \bigl(\mspace{-1mu}K_1^\text{sp}\bigr)^{(1+\epsilon)/2} && + \mathcal{O}(u) \, ,
	\end{aligned}
\checkedMma
\end{equation*}
and 
\begin{equation*}
	\begin{aligned}
	\mathsf{e}^{-,\mspace{1mu}\text{sp}}(u) & = t^{-\epsilon\,(N+1)/4} \,(t^{1/2}-t^{-1/2}) \, E_1^\text{sp} && + \mathcal{O}(u^{-1}) \, , \\
	\mathsf{f}^{-,\mspace{1mu}\text{sp}}(u) & = t^{-\epsilon\,(N+1)/4} \, (1-t^{-1}) \, u^{-1} \, E_0^\text{inh} \, \bigl(\mspace{-1mu}K_0^\text{inh}\bigr)^{\!-1} \!\!\!\!\!\! && + \mathcal{O}(u^{-2}) \, , \\
	\mathsf{k}_0^{-,\mspace{1mu}\text{sp}}(u) & = t^{-(1+\epsilon)\,N/4} \, \bigl(\mspace{-1mu} K_0^\text{inh}\bigr)^{-(1+\epsilon)/2} && + \mathcal{O}(u^{-1}) \, , \\
	\mathsf{k}_1^{-,\mspace{1mu}\text{sp}}(u) & = t^{-(1-\epsilon)\,N/4} \, \bigl(\mspace{-1mu} K_1^\text{sp}\bigr)^{-(1-\epsilon)/2} && + \mathcal{O}(u^{-1}) \, .
	\end{aligned}
\checkedMma
\end{equation*}
See also \textsection{A} of \cite{JK+_95b}, which contains the Drinfeld presentation via currents as well.
\bigskip

\section{Glossary} \label{s:app_glossary}

\setlist[itemize]{leftmargin=*,label={\quad},itemsep=2pt}

\noindent The symbols in this glossary are ordered (approximately) alphabetically according to the English spelling of its pronunciation; for example, $\omega$ can be found under `o'.
\bigskip

\begin{itemize}
	\item $A(u)$ \ quantum operator (entry of $L_a(u)$): \ \eqref{eq:intro_ABCD_on_shell}, \eqref{eq:ABCD}, \eqref{eq:ABCD_Y_pol}
	\item $a_{ij} \coloneqq a(z_i/z_j)$, $a(u) = 1/f(u)$ \ rational function: \ \eqref{eq:a,b}
	\item affine Hecke algebra (\textsc{aha}) \ $\widehat{\mathfrak{H}}_N = \widehat{\mathfrak{H}}_N(t^{1/2})$: \ \eqref{eq:AHA}, \eqref{eq:AHA'}
	\item $A_J(\vect{z}), A_j(\vect{z}), A_{-I}(\vect{z}), A_{-i}(\vect{z})$ \ coefficients of Macdonald operators: \ \textsection\ref{s:intro_abelian_tilde}, \textsection\ref{s:Macdonald}, \textsection\ref{s:abelian_tilde}
	\item $\alpha = 1/k$ \ Jack parameter ($\mathsf{p} = \mathsf{q}^{2\alpha}$, $q = t^\alpha$): \ \textsection\ref{s:intro_spin-CS}, Figure~\ref{fg:Macdonalds}, \textsection\ref{s:app_nonrlt_limit}
	\item $\alpha,\beta$ \ compositions: \ p.\,\pageref{eq:dominance}
	\begin{itemize}
		\item $\alpha^+$ \ corresponding partition
	\end{itemize}
	\item $\alpha^\mu(u)$ \ eigenvalue of $A(u)$: \ \eqref{eq:Drinfeld_polyn}, \textsection\ref{s:hw} 
\bigskip
	\item $\mathcal{B} \cong \widetilde{\mathcal{H}}$ \ physical (\textsf{q}-bosonic) space: \ \textsection\ref{s:intro_phys_space}, \textsection\ref{s:physical_space}: \eqref{eq:phys_space}, \eqref{eq:phys_space_via_Ttot}, \eqref{eq:phys_space_alt}
	\item $B(u)$ \ quantum operator (entry of $L_a(u)$): \ \eqref{eq:intro_ABCD_on_shell}, \eqref{eq:ABCD}, \eqref{eq:ABCD_Y_pol}
	\item $b_{ij} \coloneqq b(z_i/z_j)$, $b(u) = {-}g(u)/f(u)$ \ rational function: \ \eqref{eq:a,b}
\bigskip
	\item $C(u)$ \ quantum operator (entry of $L_a(u)$): \ \eqref{eq:intro_ABCD_on_shell}, \eqref{eq:ABCD}, \eqref{eq:ABCD_Y_pol}
	\item $\mathbb{C}[\vect{z}] \coloneqq \mathbb{C}[z_1,\To,z_N]$ \ ring of polynomials: \ start of \textsection\ref{s:polynomial}
	\begin{itemize}
		\item $\mathbb{C}[\vect{z}]_M \coloneqq \mathbb{C}[z_1,\To,z_N]^{\mathfrak{S}_M \times \mathfrak{S}_{N-M}}$ \ polynomial analogue of $\mathcal{H}_M$: \ \eqref{eq:phys_space_M_particle}
		\item $\mathbb{C}[\vect{z}]^{\mathfrak{S}_N} \coloneqq \mathbb{C}[z_1,\To,z_N]^{\mathfrak{S}_N}$ \ ring of symmetric polynomials: \ start of \textsection\ref{s:polynomial}, \eqref{eq:Hecke_projectors_pol}, \eqref{eq:phys_space_scalar}
	\end{itemize}
	\item $\chi \coloneqq \ket{\downarrow\uparrow} \bra{\downarrow\uparrow} = \diag(0,0,1,0)$ : \ \textsection\ref{s:intro_crystal_limit}
	\item Chevalley generators: \ \eqref{eq:Uqsl2}--\eqref{eq:Uqsl2_coprod}, start of \textsection\ref{s:quantum-affine}, \textsection\ref{s:app_presentations}
	\begin{itemize}
		\item $E_0^\text{inh}$, $F_0^\text{inh}$, $K_0^\text{inh}$ \ minimal affinisation: \ \eqref{eq:affinisation_z}, \eqref{eq:app_affinisation_z}
		\item $\widetilde{E}_0^\text{sp}$, $\widetilde{F}_0^\text{sp}$, $\widetilde{\!K}_0^\text{sp}$ \ with $Y_i$ instead of $1/z_i$: \ \eqref{eq:affinisation_Y}
		\begin{itemize}
			\item $\widehat{E}^\text{pol}_M$, $\widehat{F}^\text{pol}_M$, $\,\widehat{\!K}_{\!M}^\text{pol}$ \ its induced action on polynomials: \ \eqref{eq:Uqaff_induced_polynomial}, \eqref{eq:affinisation_pol}, \eqref{eq:affinisation_pol_alt}
		\end{itemize}
		\item $E_1^\text{sp}$, $F_1^\text{sp}$, $K_1^\text{sp}$ \ spin representation: \ \eqref{eq:Uqsl2_spin}, \eqref{eq:app_Uqsl2_spin}
		\begin{itemize}
			\item $E^\text{pol}_M$, $F^\text{pol}_M$, $K^\text{pol}_M$ \ its induced action on polynomials: \ \eqref{eq:Uqaff_induced_polynomial}, \eqref{eq:Uqsl2_pol}, \eqref{eq:Uqsl2_pol_alt}
		\end{itemize}
	\end{itemize}
	\item $\,\cdot\,^\circ \coloneqq \,\cdot\,|_{q=1}$ \ classical limit: \ \eqref{eq:semiclass_expansion}: \ \textsection\ref{s:intro_nonabelian_2}, start of \textsection\ref{s:intro_freezing}, Table~\ref{tb:abelian_symmetries}, \eqref{eq:semiclass_expansion}
	\item coordinate basis \ $\cket{i_1,\To,i_M} = \sigma^-_{i_1} \cdots \sigma^-_{i_M} \, \ket{\uparrow\cdots\uparrow}$: \ \eqref{eq:coord_basis}
\bigskip
	\item $D(u)$ \ quantum operator (entry of $L_a(u)$): \ \eqref{eq:intro_ABCD_on_shell}, \eqref{eq:ABCD}, \eqref{eq:ABCD_Y_pol}
	\item $\delta_M \coloneqq (M-1,M-2,\cdots)$ \ staircase partition of length~$M-1$: \ \eqref{eq:motifs_vs_partitions}, p.\,\pageref{eq:largest_monomial}, p.\,\pageref{eq:Macdonald_shift_property}
	\item $\Delta_t(\vect{z})$ \ \textsf{q}-$\mspace{-2mu}$Vandermonde polynomial: \ \eqref{eq:q-Vand}
	\item $\Delta(u)$ \ generating function of Macdonald operators: \ \eqref{eq:Delta}
	\begin{itemize}
		\item $\widetilde{\Delta}(u)$ \ its spin-generalisation: \ \eqref{eq:Delta_tilde}
	\end{itemize}
	\item $\delta \coloneqq \partial/\partial q|_{q=1}$ \ semiclassical limit: \ start of \textsection\ref{s:intro_freezing}, \eqref{eq:semiclass_expansion}
	\item $\delta^\mu(u)$ \ eigenvalue of $A(u)$: \ \eqref{eq:Drinfeld_polyn}, \textsection\ref{s:hw}
	\item $d_i$ \ Dunkl operator (nonrelativistic limit of $Y_i$): \ \eqref{eq:Dunkl}
	\item dominance order: \ \eqref{eq:dominance}
	\item $D_{\pm r}$ \ Macdonald operators: \ \textsection\ref{s:Macdonald}
	\begin{itemize}
		\item $\widetilde{D}_{\pm r}$ \ their spin-generalisations: \ \textsection\ref{s:intro_abelian_tilde}, \textsection\ref{s:abelian_tilde}
		\item $D_{\pm r}^\text{Rui}$ \ Ruijsenaars's hermitian form: \ \eqref{eq:Macd_to_Ruij}, \eqref{eq:CS_eff}
		\item their eigenvalues \ $\Lambda_r(\lambda)$: \ \eqref{eq:Macd_eigenvalues}, \textsection\ref{s:abelian_freezing}
		\item their nonrelativistic limit: \ \eqref{eq:CS_eff}
	\end{itemize}
	\item Drinfeld polynomial \ $P^\mu(u)$: \ \eqref{eq:Drinfeld_polyn}, \eqref{eq:intro_ABCD_on_shell}, \textsection\ref{s:hw}
\bigskip
	\item $E_\alpha(\vect{z}) \coloneqq E_\alpha(\vect{z};q,t)$ \ nonsymmetric Macdonald polynomial: \ \eqref{eq:nonsymm_Macdonalds}
	\item $E^\bullet(\mu)$ \ spin-chain energy 
	\begin{itemize}
		\item $E^\text{full}(\mu)$ \ for $H^\text{full} = (H^\textsc{l} + H^\textsc{r})/2$: \ Figure~\ref{fg:dispersions}, \eqref{eq:dispersion_full}
		\item $E^{\mspace{1mu}\textsc{l}}(\mu)$ \ for $H^\textsc{l}=H_1$: \ \eqref{eq:energy_left}, Figure~\ref{fg:dispersions}, \textsection\ref{s:abelian_freezing}
		\item $E^\textsc{hs}(\mu)$ \ for Haldane--Shastry: \ \eqref{eq:HS_dispersion}
		\item $E_r(\mu)$ \ for $H_r$: \ \eqref{eq:energy_r}, cf.~\eqref{eq:energy_r_full}
		\item $E^{\mspace{1mu}\textsc{r}}(\mu)$ \ for $H^\textsc{r} = H_{N-1}$: \ \eqref{eq:energy_right}, Figure~\ref{fg:dispersions}, \textsection\ref{s:abelian_freezing}
	\end{itemize}
	\item $E_\bullet^\bullet$: \ \textit{see} Chevalley generators
	\item $e_i$ ($e_i^\text{sp}$) \ (spin representation of) Temperley--Lieb generator: \ \eqref{eq:R'(1)}, \eqref{eq:TL}--\eqref{eq:TL_spin}, \eqref{eq:TL_vs_XXZ}
	\item $e_\lambda(\vect{z}), e_r(\vect{z})$ \ elementary symmetric polynomials: \ \eqref{eq:elementary}
	\begin{itemize}
		\item $e_r(\vect{Y})$ \ elementary symmetric polynomials in the $Y_i$: \ \eqref{eq:Delta}
		\item its evaluation: \ \eqref{eq:ev_elementary}
	\end{itemize}
	\item $\epsilon \in \{0,1\}$ \ parameter in \textsection\ref{s:app_stochastic_twist}
	\item $\varepsilon^\bullet(\mu_m)$ \ spin-chain dispersion: \ \textit{see} $E^\bullet$
	\item $\cket{\varnothing} = \ket{\uparrow\cdots\uparrow}$ \ pseudovacuum: \ \textit{see} $\cket{i_1,\To,i_M}$
	\item $\ev: z_j \mapsto \omega^j$, $\omega = \E^{2\pi\I/N}$, \ evaluation: \ \eqref{eq:ev}, \eqref{eq:ev_operator}, \eqref{eq:ev_elementary}--\eqref{eq:ev_products}
	\begin{itemize}
		\item $\widetilde{O}_1 \overset{\mathrm{ev}}{=} \widetilde{O}_2$ \ on-shell equality: \ p.\,\pageref{eq:ev_operator}
	\end{itemize}
\bigskip
	\item $F_\bullet^\bullet$: \ \textit{see} Chevalley generators
	\item $f_{ij} \coloneqq f(z_i/z_j)$, $f(u) = 1/a(u)$ \ rational function: \ \eqref{eq:R_mat}, \eqref{eq:f,g}
\bigskip
	\item $G$ \ \textsf{q}-translation operator: \ \eqref{eq:q-translation}, Proposition~\ref{prop:q-translation}
	\begin{itemize}
		\item $\bar{G}$ \ its crystal limit: \ \eqref{eq:crystal_q-translation}
	\end{itemize}
	\item $g_{ij} \coloneqq g(z_i/z_j)$, $g(u) = {-}b(u)/a(u)$ \ rational function: \ \eqref{eq:R_mat}, \eqref{eq:f,g}
\bigskip
	\item $\mathcal{H} \coloneqq V^{\otimes N}$, $V \coloneqq \mathbb{C}\,\ket{\uparrow} \oplus \mathbb{C}\,\ket{\downarrow}$, \ spin-chain Hilbert space: \ start of \textsection\ref{s:intro_HS}, start of \textsection\ref{s:spin}
	\begin{itemize}
		\item $\mathcal{H}_M \coloneqq \ker\bigl[S^z - \bigl(\tfrac{1}{2} N-M\bigr)\bigr]$ \ its \textit{M}-particle sector (weight space): \ \eqref{eq:weight_decomp}
		\item $\mathcal{H}^\mu$ \ joint eigenspace of abelian symmetries, irrep for nonabelian symmetries: \ \textsection\ref{s:intro_nonabelian_1}
	\end{itemize}
	\item $\widetilde{\mathcal{H}} \cong \mathcal{B}$ \ physical (\textsf{q}-bosonic) space: \ \textsection\ref{s:intro_phys_space}, \textsection\ref{s:physical_space}: \eqref{eq:phys_space}, \eqref{eq:phys_space_via_Ttot}, \eqref{eq:phys_space_alt}
	\begin{itemize}
		\item $\widetilde{\mathcal{H}}_M$ \ its \textit{M}-particle sector (weight space): \ \eqref{eq:phys_space_weight}, \eqref{eq:phys_space_M_particle}
		\item $\widetilde{\mathcal{H}}^\text{nr}$ \ its nonrelativistic/isotropic limit: \ \eqref{eq:phys_space_isotropic}
		\item $\widetilde{\mathcal{H}}[\vect{z}] \coloneqq \mathcal{H} \otimes \mathbb{C}[z_1,\To,z_N]$ \ its ambient vector space: \ \eqref{eq:big_space}		
	\end{itemize}
	\item $H^\bullet$ Hamiltonian
	\begin{itemize}
		\item $H^{\text{eff},\mspace{1mu}\text{nr}}$ \ effective Calogero--Sutherland: \ \eqref{eq:CS_eff}
		\item $H^\text{full} = (H^\textsc{l} + H^\textsc{r})/2$ \ \textsf{q}-deformed Haldane--Shastry (full Hamiltonian): \ \eqref{eq:ham_full}
		\item $H^\textsc{l} = H_1$ \ \textsf{q}-deformed Haldane--Shastry: \ \eqref{eq:ham_left}, \eqref{eq:ham_left_from_freezing}
		\begin{itemize}
			\item $\bar{H}^\textsc{l}$ \ its crystal limit: \ \eqref{eq:crystal_ham}, \eqref{eq:crystal_dispersion}
			\item its braid limit: \ \eqref{eq:ham_formal_limit}
			\item its stochastic version: \ \textsection\ref{s:app_stochastic_twist}
			\item its symmetry in $\mathsf{q}\mapsto -\mathsf{q}$: \ \eqref{eq:ham_left_q_to_-q}
		\end{itemize}
		\item $\widetilde{H}^{\text{nr}}$ \ spin-Calogero--Sutherland: \ \eqref{eq:spin-CS}
		\item $H^\textsc{hs}$ \ ($\mathsf{q}=1$) Haldane--Shastry: \ \eqref{eq:HS}
		\item $H^\textsc{r} = H_{N-1}$ \ \textsf{q}-deformed Haldane--Shastry (opposite chirality): \ \eqref{eq:ham_right}, \eqref{eq:ham_right_from_freezing}
		\begin{itemize}
			\item $\bar{H}^\textsc{r}$ \ its crystal limit: \ \eqref{eq:crystal_ham}, \eqref{eq:crystal_dispersion}
		\end{itemize}
		\item $H_r$ \ higher Hamiltonians: \ \eqref{eq:H_r}, \eqref{eq:H_2}
		\item $H^\textsc{xxz}$ \ Heisenberg Hamiltonians: \ \textsection\ref{s:spin_chains}, cf.~\eqref{eq:ham_formal_limit}
	\end{itemize}
	\item $\mathfrak{H}_N \coloneqq \mathfrak{H}_N(t^{1/2})$ \ Hecke algebra: \ \eqref{eq:Hecke}
	\item $\widehat{\mathfrak{H}}_N \coloneqq \widehat{\mathfrak{H}}_N(t^{1/2})$ \ affine Hecke algebra (\textsc{aha}): \ \eqref{eq:AHA}, \eqref{eq:AHA'}
\bigskip
	\item $\cket{i_1,\To,i_M} \coloneqq \sigma^-_{i_1} \cdots \sigma^-_{i_M} \, \ket{\uparrow\cdots\uparrow}$ \ coordinate basis: \ \eqref{eq:coord_basis}
	\item $\{i_1,\To,i_M \} \in \mathfrak{S}_N/(\mathfrak{S}_M \times \mathfrak{S}_{N-M})$ \ Grassmannian permutation: \ \eqref{eq:perm_grassmannian}, \eqref{eq:phys_vector}
	\item induced action on polynomials: \ : \ Corollary to Proposition~\ref{prop:intro_physical_vectors}, \eqref{eq:induced_pol}
\bigskip
	\item Jack polynomial: \ \textit{see} $P_\lambda^{(\alpha)}(\vect{z})$
\bigskip
	\item $K_\bullet^\bullet$: \ \textit{see} Chevalley generators
	\item $k = t^{\sigma^z\mspace{-1mu}/2} = \diag(t^{1/2},t^{-1/2})$: \ \textsection\ref{s:Hecke_TL_Uqsl}
	\item $k = 1/\alpha$ \ reduced coupling ($\mathsf{p} = \mathsf{q}^{2/k}$, $q = t^{1/k}$): \ \eqref{eq:spin-CS}, \textsection\ref{s:app_nonrlt_limit}
\bigskip
	\item \textit{L}-operator: \ \textsection\ref{s:quantum-affine}, \textsection\ref{s:app_presentations}
	\begin{itemize}
		\item $L_a(u) = \ev \widetilde{L}_a^\circ(u)$ \ for \textsf{q}-deformed Haldane--Shastry: \ \eqref{eq:intro_L_spin_chain}, \eqref{eq:intro_L_spin_chain_def}, \eqref{eq:L_spin_chain}
		\item $L_a^\text{inh}(u) = R_{aN}(u/z_N) \cdots R_{a1}(u/z_1)$ \ for inhomogeneous \textsc{xxz}: \ \eqref{eq:L_inh}, \eqref{eq:app_monodromy_spin}
		\item $\widetilde{L}_a(u) = R_{aN}(u \, Y_N) \cdots R_{a1}(u\,Y_1)$ \ for spin-Ruijsenaars: \ \eqref{eq:L_tilde}, \eqref{eq:ABCD_Y_pol}
	\end{itemize}
	\item $\lambda$ \ partition: \ p.\,\pageref{eq:HS_polynomial}
	\begin{itemize}
		\item $\lambda' = \mu^+$ \ (conjugate) partition associated to  $\mu \in \mathcal{M}_N$: \ \eqref{eq:motifs_vs_partitions_lambda}
		\item $|\lambda| \coloneqq \sum \lambda_i$ \ its weight
		\item $\lambda \geq \nu$ dominance ordering: \ \eqref{eq:dominance}
	\end{itemize}
	\item $\Lambda_r(\lambda)$ \ eigenvalues of Macdonald operators: \ \eqref{eq:Macd_eigenvalues}, \textsection\ref{s:abelian_freezing}
	\item $\ell(\mu),\ell(\lambda)$ \ length (number of nonzero parts)
\bigskip
	\item \textit{M}-particle sector (weight space) 
	\begin{itemize}
		\item of spin-chain Hilbert space \ $\mathcal{H}_M \coloneqq \ker\bigl[S^z - \bigl(\tfrac{1}{2} N-M\bigr)\bigr]$: \ \eqref{eq:weight_decomp} 
		\item of physical space \ $\widetilde{\mathcal{H}}_M$: \ \eqref{eq:phys_space_weight}, \eqref{eq:phys_space_M_particle}
	\end{itemize}
	\item $m_\lambda(\vect{z})$ \ monomial symmetric polynomial: \ \eqref{eq:monomial}, cf.~\eqref{eq:Macd_orthog}
	\item $\mathcal{M}_N$ \ set of all motifs (for $N$ sites): \ \eqref{eq:motif}
	\item Macdonald operator: \ \textit{see} $D_r$, \textsection\ref{s:Macdonald}
	\item Macdonald polynomial: \ \textit{see} $E_\alpha(\vect{z})$ (nonsymmetric), $P_\lambda(\vect{z})$ (symmetric)	
	\item monodromy matrix: \ \textit{see} \textit{L}-operator
	\item $\mu \in \mathcal{M}_N$ \ motif: \ \eqref{eq:motif}
	\begin{itemize}
		\item $\mu^+ = \lambda' = \nu + 2\,\delta_{\ell(\mu)}$ \ corresponding partition: \ \eqref{eq:motifs_vs_partitions}, \eqref{eq:motifs_vs_partitions_lambda}
		\item $|\mu| \coloneqq \sum \mu_i$ \ its weight
		\item $\ket{\mu}$ \ our pseudo highest-weight eigenvectors: \ \eqref{eq:intro_wavefn}, \eqref{eq:intro_polynomial}
		\item $\cket{\mu}$ \ crystal limit of $\ket{\mu}$: \ \eqref{eq:intro_eigenvector_crystal}
	\end{itemize}
	\item $\mu_{q,t}(\vect{z})$ \ Macdonald measure: \ \eqref{eq:Macd_measure}, \eqref{eq:Macd_to_Ruij}
	\begin{itemize}
		\item $\mu_k^\text{nr}(\vect{z})$ its nonrelativistic limit: \ \eqref{eq:psi_0_nonrlt}
	\end{itemize}
\bigskip
	\item $N$ \ number of sites/particles
	\item $[n]$ \ Gaussian integer: \ \eqref{eq:N_q}, \eqref{eq:q-numbers_etc}
	\item $\nu$ \ partition associated to $\mu \in \mathcal{M}_N$: \ \eqref{eq:motifs_vs_partitions}
\bigskip
	\item $\omega = \E^{2\pi\I/N}$, \ primitive $N$th root of unity
	\item $\widetilde{O}$ \ physical operator: \ \textsection\ref{s:intro_phys_space}, end of \textsection\ref{s:physical_space}
	\begin{itemize}
		\item $\widetilde{O}^\circ \coloneqq \widetilde{O}|_{q=1}$ \ its classical limit: \ \eqref{eq:semiclass_expansion}
		\item $\delta \widetilde{O}\coloneqq \partial \widetilde{O}/\partial q|_{q=1}$ \ its semiclassical limit: \ \eqref{eq:semiclass_expansion}
		\item $\widetilde{O}^\text{pol}_{M,M'}$ \ its induced action on polynomials: \ Corollary to Proposition~\ref{prop:intro_physical_vectors}, \eqref{eq:induced_pol}
	\end{itemize}
	\item on shell \ (upon evaluation): \ \textit{see} $\ev$
\bigskip
	\item $p(\mu)$ \ \textsf{q}-momentum (setting the eigenvalue of $G$): \ p.\,\pageref{eq:q-translation}, \eqref{eq:crystal_mtm}
	\item $\mathsf{p} = q$ \ parameter: \ \textit{see} $q$
	\item $P_{i,j} = (1+\vec{\sigma}_i \cdot \vec{\sigma}_j)/2$ \ spin representation of permutation: \ start of \textsection\ref{s:intro_HS}
	\item $P_\lambda(\vect{z}) \coloneqq P_\lambda(\vect{z};q,t)$ \ (symmetric) Macdonald polynomial: \ \eqref{eq:Macd_polyn}, \eqref{eq:Macd_eigenvalues}, \eqref{eq:Macd_orthog}, Figure~\ref{fg:Macdonalds}
	\begin{itemize}
		\item $P_\lambda^{(\alpha)}(\vect{z})$ \ its Jack limit: \ \eqref{eq:HS_polynomial},  \eqref{eq:Jack}
		\item $P_\lambda^\star(\vect{z}) \coloneqq P_\lambda(\vect{z};t,t^2)$ \ its quantum zonal spherical special case: \ \eqref{eq:intro_polynomial}, \textsection\ref{s:intro_examples}, \eqref{eq:wavefn_polynomial}
	\end{itemize}
	\item $p_\lambda(\vect{z}), p_r(\vect{z})$ \ power sum: \ p.\,\pageref{eq:ev_powersum}, \eqref{eq:ev_powersum}
	\item $P^\mu(u)$ \ Drinfeld polynomial: \ \eqref{eq:Drinfeld_polyn}, \eqref{eq:intro_ABCD_on_shell}, \textsection\ref{s:hw}
	\item physical (\textsf{q}-bosonic) space \ $\widetilde{\mathcal{H}} \cong \mathcal{B}$: \ \textsection\ref{s:intro_phys_space}, \textsection\ref{s:physical_space}: \eqref{eq:phys_space}, \eqref{eq:phys_space_via_Ttot}, \eqref{eq:phys_space_alt}
	\begin{itemize}
		\item physical operator: \ \textsection\ref{s:intro_phys_space}, end of \textsection\ref{s:physical_space}
		\item physical vector: \ \eqref{eq:phys_vector_intro}, \eqref{eq:phys_vector} 
	\end{itemize}
	\item pseudo highest weight (\textit{l}-highest weight): \ p.\,\pageref{eq:HS_polynomial}, p.\,\pageref{p:intro_hw_def}, p.\,\pageref{p:hw_def}, \textsection\ref{s:hw}
	\item $\Psi_\nu(i_1,\To,i_M) \coloneqq \cbraket{i_1,\To,i_M}{\mu}$ \ wave function for $\ket{\mu}$: \ \eqref{eq:intro_wavefn}, cf.~\eqref{eq:wavefn_hecke_left}
	\item $\widetilde{\Psi}_\nu(z_1,\To,z_M)$ \ polynomial determining pseudo highest-weight vectors: \ \eqref{eq:intro_polynomial}, \eqref{eq:wavefn_polynomial}
	\item $\ket{\widetilde{\Psi}(\vect{z})}$ \ physical vector: \ \eqref{eq:phys_vector_intro}, \eqref{eq:phys_vector}
\bigskip
	\item $q = \mathsf{p}$ \ parameter: \ start of \textsection\ref{s:setup}
	\begin{itemize}
		\item its physical interpretation: \ start of \textsection\ref{s:intro_abelian_tilde}, start of \textsection\ref{s:intro_freezing}, start of \textsection\ref{s:freezing}, \textsection\ref{s:app_nonrlt_limit}
		\item $q^\circ =1$ \ its classical value: \ \textsection\ref{s:intro_nonabelian_2}, \textsection\ref{s:intro_explicit_evrs_freezing}, \eqref{eq:parameter_shift_upshot}
		\item $\hat{q}_i : z_i \mapsto q\,z_i$ \ difference operator (\textit{q}-shift): \ \eqref{eq:p_hat_graphical}, \eqref{eq:q_hat}, \eqref{eq:q_hat_kernel_off_shell}, \eqref{eq:q_hat_kernel_on_shell}
		\item $q^\star = (t^\star)^{1/2} = t$ \ its quantum zonal spherical value: \ Theorem~\ref{thm:nice_polynomial}, \textsection\ref{s:intro_explicit_evrs_freezing}, Figure~\ref{fg:Macdonalds}, \eqref{eq:wavefn_polynomial}
	\end{itemize}
	\item $\mathsf{q} = t^{1/2}$ \ deformation parameter: \ \textit{see} $t$
	\item \textsf{q}-deformed Dunkl operator: \ \textit{see} $Y_i$
	\item \textsf{q}-momentum (setting the eigenvalue of $G$) \ $p(\mu)$: \  p.\,\pageref{eq:q-translation}, \eqref{eq:crystal_mtm}
	\item \textsf{q}-translation operator \ $G$: \ \eqref{eq:q-translation}, Proposition~\ref{prop:q-translation} 
	\item \textsf{q}-$\mspace{-2mu}$Vandermonde polynomial \ $\Delta_t(\vect{z})$: \ \eqref{eq:intro_polynomial}, \eqref{eq:q-Vand}
	\item $\qdet_a L_a(u)$ \ quantum determinant: \ \eqref{eq:qdet_ABCD},  \eqref{eq:qdet}, \eqref{eq:qdet_Y}, p.\,\pageref{eq:app_baxterisation}
\bigskip
	\item $r_{mj}: z_m \mapsto z_j$ \ replacement map: \ \eqref{eq:rho_mj}
	\item $\check{R}(u)$ \ \textit{R}-matrix (braid-like form, homogeneous gradation): \ \eqref{eq:R_mat}, \eqref{eq:baxterisation}, \eqref{eq:app_baxterisation}
	\begin{itemize}
		\item $R(u) = P\,\check{R}(u)$ \ \textit{R}-matrix: \ \eqref{eq:intro_L_spin_chain}, \eqref{eq:L_inh}, \eqref{eq:app_monodromy_spin},
		\item $\check{R}_w = s_{w^{-1}} \, s^\text{tot}_w $: \ \eqref{eq:Rcheck_w}, \eqref{eq:spin_D_r}, Lemma~\ref{lem:decoupling}, Proof of Proposition~\ref{prop:q-translation}
	\end{itemize}
\bigskip
	\item semiclassical limit \ $\delta \coloneqq \partial/\partial q|_{q=1}$: \ start of \textsection\ref{s:intro_freezing}, \eqref{eq:semiclass_expansion}
	\item $s_i = s_{i,i+1} : z_i \leftrightarrow z_{i+1}$ \ generator of $\mathfrak{S}_N$ in polynomial representation: \ start of \textsection\ref{s:polynomial}
	\item $\sigma^\pm,\sigma^z$ \ Pauli matrices: \ start of \textsection\ref{s:intro_HS}, \eqref{eq:sl_2}
	\begin{itemize}
		\item $s_i^\text{tot} \coloneqq s_i \, \check{R}_{i,i+1}(z_i/z_{i+1})$ \ generator of $\mathfrak{S}_N$ in diagonal representation on $\mathcal{H}[\vect{z}]$: \ \eqref{eq:s_tot} 
	\end{itemize}
	\item $S^\textsc{l}_{[i,j]}, S^\textsc{r}_{[i,j]}$ \ \textsf{q}-deformed exchange interaction: \ \eqref{eq:Sij_left}, \eqref{eq:Sij_left_ex}, \eqref{eq:ham_right}
	\item $s_\lambda$ \ Schur polynomial: \ \eqref{eq:intro_pol_crystal} , \eqref{eq:Schur} 
	\item simple component \ $\cbraket{1,\To,M}{\Psi}$: \ \eqref{eq:phys_space_M_particle}, \textsection\ref{s:intro_phys_space}
	\item $\mathfrak{S}_N$ \ symmetric group: \ start of \textsection\ref{s:Hecke_AHA}
	\item $S^\pm, S^z$ \ (spin representation of) generators of $\mathfrak{sl}_2$: \ \eqref{eq:sl_2}
\bigskip
	\item $t = \mathsf{q}^2$ \ deformation parameter: \ start of \textsection\ref{s:setup}
	\begin{itemize}
		\item its physical interpretation: \ start of \textsection\ref{s:intro_abelian_tilde}, start of \textsection\ref{s:intro_freezing}, start of \textsection\ref{s:freezing}, \textsection\ref{s:app_nonrlt_limit}
		\item $t^\star = (q^\star)^2 = t^2$ \ its quantum zonal spherical value: \ Theorem~\ref{thm:nice_polynomial}, \textsection\ref{s:intro_explicit_evrs_freezing}, Figure~\ref{fg:Macdonalds}, \eqref{eq:wavefn_polynomial}
	\end{itemize}
	\item $T_i$ \ Hecke generator: \ \textsection\ref{s:Hecke_TL_Uqsl}
	\begin{itemize}
		\item $T_i^\text{pol}$ \ its polynomial (Demazure--Lusztig) representation: \ \eqref{eq:intro_Hecke_pol}, \eqref{eq:Hecke_pol_2a}, \eqref{eq:Hecke_pol_2b}
		\item $T_i^\text{sp}$ \ its spin representation: \ \eqref{eq:Hecke_spin}, \eqref{eq:app_Hecke_spin}, \eqref{eq:ham_formal_limit}
		\item $T_i^\text{tot}$ \ its diagonal representation (on $\mathcal{H}[\vect{z}]$): \ \eqref{eq:Hecke_tot}		
	\end{itemize}
	\item $\mathfrak{T}_N(\beta)$ \ Temperley--Lieb algebra: \ \eqref{eq:TL}
\bigskip
	\item $\mathfrak{U} \coloneqq U_{\mathsf{q}}(\mathfrak{sl}_2)$ \ quantum $\mathfrak{sl}_2$: \ \eqref{eq:Uqsl2}
	\item $\widehat{\mathfrak{U}} \coloneqq U_{\mathsf{q}}(L\,\mathfrak{gl}_2) = U'_{\mathsf{q}}(\widehat{\vphantom{t}\smash{\mathfrak{gl}_2}})_{c=0}$ \ quantum-loop algebra of $\mathfrak{gl}_2$: \  \textsection\ref{s:intro_nonabelian_1}, \textsection\ref{s:intro_nonabelian_2}, \textsection\ref{s:quantum-affine}
\bigskip
	\item $V = \mathbb{C}\,\ket{\uparrow} \oplus \mathbb{C}\,\ket{\downarrow}$ \ one-site Hilbert space: \ start of \textsection\ref{s:intro_HS}, start of \textsection\ref{s:spin}
	\item $V\mspace{-1mu}(z_i,z_j)$ \ potential: \ \eqref{eq:pot}, Figure~\ref{fg:pot}, proof on p.\,\pageref{p:V_computation}, \eqref{eq:pot_origin}
\bigskip
	\item $w \in \mathfrak{S}_N$ \ permutation
\bigskip
	\item $x_{ij}$ \ triangular building blocks of the $Y_i$: \ \eqref{eq:intro_Yi_circ}, \eqref{eq:x_ij}
\bigskip
	\item $Y_i$ \ \textsf{q}-deformed Dunkl operator ($\!$\textit{Y}-operator): \ \eqref{eq:intro_Y-ops}, \eqref{eq:Yi}, \eqref{eq:Yi_via_x}
	\begin{itemize}
		\item its eigenvalues: \ \eqref{eq:nonsymm_Macdonalds}
		\item $Y_i^\circ$ \ its classical limit (no difference part): \ \eqref{eq:intro_Yi_circ}, \textsection\ref{s:explicit_evrs} 
	\end{itemize}
\bigskip
	\item $\vect{z}^\alpha = z_1^{\alpha_1} \cdots z_N^{\alpha_N}$ \ monomial: \ p.\,\pageref{eq:largest_monomial}, p.\,\pageref{eq:Macdonald_shift_property}, p.\,\pageref{eq:nonsymm_Macdonalds}
	\begin{itemize}
		\item its (dominance) ordering: \ \eqref{eq:dominance}
	\end{itemize}
	\item $\hat{0}_i$ \ annihilator: \ \textsection\ref{s:hw}
\end{itemize}
\bigskip

\begin{bibdiv}
	\begin{biblist}
		\bib{AS_19}{article}{
			author={Alexandersson, P.},
			author={Sawhney, M.},
			title={Properties of non-symmetric {M}acdonald polynomials at $q=1$ and $q=0$},
			date={2019},
			journal={Ann. Comb.},
			volume={23},
			number={2},
			pages={219\ndash 239},
			eprint={1801.04550}	
		}
	
		\bib{BG+_93}{article}{
			author={Bernard, D.},
			author={Gaudin, M.},
			author={Haldane, F.~D.~M.},
			author={Pasquier, V.},
			title={Yang--{B}axter equation in long-range interacting systems},
			date={1993},
			journal={J. Phys. A: Math. Gen.},
			volume={26},
			number={20},
			pages={5219},
			eprint={hep-th/9301084},
		}
		
		\bib{BPS_95a}{article}{
			author={Bernard, D.},
			author={Pasquier, V.},
			author={Serban, D.},
			title={A one dimensional ideal gas of spinons, or some exact results on the {XXX} spin chain with long range interaction},
			book={
				title = {Quantum Field Theory and String Theory}, 
				editor = {Baulieu, L.},
				editor = {Dotsenko, V.},
				editor = {Kazakov, V.},
				editor = {Windey, P.}, 
				series = {NATO ASI Series (Series B: Physics)}, 
				volume = {328}, 
				publisher = {Springer, Boston, MA}},
			date={1995},
			pages={11},
			eprint={hep-th/9311013},
		}
		
		\bib{BH_08}{article}{
			author={Bernevig, B.~A.},
			author={Haldane, F.~D.~M.},
			title={Model fractional quantum {H}all states and {J}ack polynomials},
			date={2008},
			journal={Phys. Rev. Lett.},
			volume={100},
			number={24},
			eprint={0707.3637},
		}
		
		\bib{BW+_13}{article}{
			author={Burella, G.},
			author={Watts, P.},
			author={Pasquier, V.},
			author={Vala, J.},
			title={Graphical calculus for the double affine q-dependent braid
				group},
			date={2013},
			journal={Ann. Henri Poincar{\'{e}}},
			volume={15},
			number={11},
			pages={2177\ndash 2201},
			eprint={1307.4227},
		}
		
		\bib{Cha_19}{article}{
			author={Chalykh, O.},
			title={Quantum {L}ax pairs via {D}unkl and {C}herednik operators},
			date={2019},
			journal={Commun. Math. Phys.},
			volume={369},
			number={1},
			pages={261\ndash 316},
			eprint={1804.01766},
		}
		
		\bib{Cha_95}{article}{
			author={Chari, V.},
			title={Minimal affinizations of representations of quantum groups: the rank 2 case},
			date={1995},
			journal={Publ. Res. Inst. Math. Sci.},
			volume={31},
			number={5},
			pages={873\ndash 911},
			eprint={hep-th/9410022},
		}
		
		\bib{Che_91}{article}{
			author={Cherednik, I.},
			title={A unification of {K}nizhnik--{Z}amolodchikov and {D}unkl operators via affine {H}ecke algebras},
			date={1991},
			journal={Inv. Math.},
			volume={106},
			number={1},
			pages={411\ndash 431},
		}
		
		\bib{Che_92a}{article}{
			author={Cherednik, I.},
			title={Double affine {H}ecke algebras, {K}nizhnik--{Z}amolodchikov
				equations, and {M}acdonald's operators},
			date={1992},
			journal={Int. Math. Res. Not.},
			volume={1992},
			pages={171},
		}
		
		\bib{Che_92b}{article}{
			author={Cherednik, I.},
			title={Quantum {K}nizhnik--{Z}amolodchikov equations and affine root
				systems},
			date={1992},
			journal={Commun. Math. Phys.},
			volume={150},
			number={1},
			pages={109\ndash 136},
		}
		
		\bib{Che_94a}{article}{
			author={Cherednik, I.},
			title={Induced representations of double affine {H}ecke algebras and
				applications},
			date={1994},
			journal={Math. Res. Lett.},
			volume={1},
			number={3},
			pages={319\ndash 337},
		}
		
		\bib{Che_94b}{article}{
			author={Cherednik, I.},
			title={Integration of quantum many-body problems by affine {K}nizhnik--{Z}amolodchikov equations},
			date={1994},
			journal={Adv. Math.},
			volume={106},
			pages={65\ndash 95},
		}
		
		\bib{Che_95}{article}{
			author={Cherednik, I.},
			title={Double affine {H}ecke algebras and {M}acdonald's conjectures},
			date={1995},
			journal={Ann. Math.},
			volume={141},
			pages={191},
		}
		
		\bib{Che_05}{book}{
			author={Cherednik, I.},
			title={Double affine {H}ecke algebras},
			publisher={Cambridge University Press},
			date={2005},
		}
				
		\bib{CP_91}{article}{
			author={Chari, V.},
			author={Pressley, A.},
			title={Quantum affine algebras},
			date={1991},
			journal={Commun. Math. Phys.},
			volume={142},
			pages={261\ndash 283},
		}
		
		\bib{CP_94}{book}{
			author={Chari, V.},
			author={Pressley, A.~N.},
			title={A guide to quantum groups},
			publisher={Cambridge University Press},
			date={1994},
			ISBN={978-0521558846},
		}
		
		\bib{CP_96}{article}{
			author={Chari, V.},
			author={Pressley, A.},
			title={Quantum affine algebras and affine {H}ecke algebras},
			date={1996},
			journal={Pac. J. Math.},
			volume={174},
			number={2},
			pages={295\ndash 326},
			eprint={q-alg/9501003},
		}
		
		\bib{dGP_10}{article}{
			author={de~Gier, J.},
			author={Pyatov, P.},
			title={Factorised solutions of {T}emperley-{L}ieb {$qKZ$} equations on a segment},
			date={2010},
			journal={Adv. Theor. Math. Phys.},
			volume={14},
			number={3},
			pages={795\ndash 878},
			eprint={0710.5362},
		}
		
		\bib{DKL+_95}{article}{
			author={Duchamp, G.},
			author={Krob, D.},
			author={Lascoux, A.},
			author={Leclerc, B.},
			author={Scharf, T.},
			author={Thibon, J.-Y.},
			title={Euler-{P}oincar{\'e} characteristic and polynomial representations of {I}wahori--{H}ecke algebras},
			date={1995},
			journal={Publ. Res. Inst. Math. Sci.},
			volume={31},
			number={2},
			pages={179\ndash 201},
		}
		
		\bib{Dri_86}{article}{
			author={Drinfeld, V.~G.},
			title={Degenerate affine {H}ecke algebras and {Y}angians},
			date={1986},
			journal={Funct. Anal. Appl.},
			volume={20},
			number={1},
			pages={58\ndash 60},
		}
		
		\bib{Dun_89}{article}{
			author={Dunkl, C.~F.},
			title={Differential-difference operators associated to reflection
				groups},
			date={1989},
			journal={Trans. Amer. Math. Soc.},
			volume={311},
			number={1},
			pages={167\ndash 167},
		}
		
		\bib{DZ_05a}{article}{
			author={{Di Francesco}, P.},
			author={{Zinn-Justin}, P.},
			title={Around the {R}azumov--{S}troganov conjecture: Proof of a
				multi-parameter sum rule},
			date={2005},
			journal={Electron. J. Comb.},
			volume={12},
			pages={1\ndash 27},
			eprint={math-ph/0410061},
		}
		
		\bib{DZ_05b}{article}{
			author={{Di Francesco}, P.},
			author={{Zinn-Justin}, P.},
			title={The quantum {K}nizhnik--{Z}amolodchikov equation, generalized
				{R}azumov--{S}troganov sum rules and extended {J}oseph polynomials},
			date={2005},
			journal={J. Phys. A: Math. Gen.},
			volume={38},
			number={48},
			pages={L815\ndash L822},
			eprint={math-ph/0508059},
		}
		
		\bib{EV_00}{article}{
			author={Etingof, P.},
			author={Varchenko, A.},
			title={Traces of intertwiners for quantum groups and difference
				equations, {I}},
			date={2000},
			journal={Duke Math. J.},
			volume={104},
			pages={391\ndash 432},
			eprint={math/9907181},
		}
		
		\bib{FG_15}{article}{
			author={Finkel, F.},
			author={{Gonz\'alez-L\'opez}, A.},
			title={Yangian-invariant spin models and {F}ibonacci numbers},
			date={2015},
			journal={Ann. Phys.},
			volume={361},
			pages={520\ndash 547},
			eprint={1501.05223},
		}
		
		\bib{FR_92}{article}{
			author={Frenkel, I.~B.},
			author={Reshetikhin, N.~Yu.},
			title={Quantum affine algebras and holonomic difference equations},
			date={1992},
			journal={Commun. Math. Phys.},
			volume={146},
			pages={1\ndash 60},
		}
		
		\bib{Gau_83}{book}{
			author={Gaudin, M.},
			title= {La fonction d'onde de Bethe},
			publisher = {Masson},
			date= {1983},
			translation={
				title={The {B}ethe wavefunction},
				publisher={Cambridge University Press},
				translator={Caux, J.-S.},
				date={2014},
			}
		}
	
		\bib{GRV_94}{article}{
			author={Ginzburg, V.},
			author={Reshetikhin, N.},
			author={Vasserot, {\'E}.},
			title={Quantum groups and flag varieties},
			conference = {
				title={Mathematical aspects of conformal and topological field theories and quantum groups},
				address={South Hadley, MA},
				date={1992}
			},
			book={
				publisher={Amer. Math. Soc., Providence, RI},
				series={Contemp. Math.},
				volume={175},
			},
			pages={101\ndash 130},
			date={1994},
		}
		
		\bib{GS_92}{article}{
			author={Gwa, L.-H.},
			author={Spohn, H.},
			title={Six-vertex model, roughened surfaces, and an asymmetric spin
				{H}amiltonian},
			date={1992},
			journal={Phys. Rev. Lett.},
			volume={68},
			number={6},
			pages={725\ndash 728},
		}
		
		\bib{Hal_88}{article}{
			author={Haldane, F.~D.~M.},
			title={Exact {J}astrow--{G}utzwiller resonating-valence-bond ground
				state of the spin-1/2 antiferromagnetic {H}eisenberg chain with $1/r^2$ exchange},
			date={1988},
			journal={Phys. Rev. Lett.},
			volume={60},
			pages={635\ndash 638},
		}
		
		\bib{Hal_91a}{article}{
			author={Haldane, F. D.~M.},
			title={``{S}pinon gas'' description of the {$S=1/2$} {H}eisenberg chain with inverse-square exchange: Exact spectrum and thermodynamics},
			date={1991},
			journal={Phys. Rev. Lett.},
			volume={66},
			pages={1529\ndash 1532},
		}
	
		\bib{Hal_91b}{article}{
			author={Haldane, F.~D.~M.},
			title={``{F}ractional statistics'' in arbitrary dimensions: A
				generalization of the {P}auli principle},
			date={1991},
			journal={Phys. Rev. Lett.},
			volume={67},
			pages={937\ndash 940},
		}
		
		\bib{Hal_94}{article}{
			author    = {Haldane, F.D.M.},
			title     = {Physics of the ideal semion gas: spinons and quantum symmetries of the integrable {Haldane--Shastry} spin chain},
			booktitle = {Correlation effects in low-dimensional electron systems},
			year      = {1994},
			editor    = {Okiji, A. and Kawakami, N.},
			volume    = {118},
			series    = {Springer series in solid-state sciences},
			publisher = {Springer},
			eprint    = {cond-mat/9401001},
		}
	
		\bib{HH_92}{article}{
			author={Ha, Z. N.~C.},
			author={Haldane, F. D.~M.},
			title={Models with inverse-square exchange},
			date={1992},
			journal={Phys. Rev. B},
			volume={46},
			pages={9359\ndash 9368},
			eprint={cond-mat/9204017},
		}
		
		\bib{HH+_92}{article}{
			author={Haldane, F. D.~M.},
			author={Ha, Z. N.~C.},
			author={Talstra, J.~C.},
			author={Bernard, D.},
			author={Pasquier, V.},
			title={Yangian symmetry of integrable quantum chains with long-range
				interactions and a new description of states in conformal field theory},
			date={1992},
			journal={Phys. Rev. Lett.},
			volume={69},
			pages={2021\ndash 2025},
		}
		
		
		\bib{HS_96}{article}{
			author={Hakobyan, T.},
			author={Sedrakyan, A.},
			title={Spin chain {H}amiltonians with affine {$U_q g$} symmetry},
			date={1996},
			journal={Phys. Lett. B},
			volume={377},
			pages={250\ndash 254},
			eprint={hep-th/9506195},
		}
		
		\bib{HW_93}{article}{
			author={Hikami, K.},
			author={Wadati, M.},
			title={Integrability of {C}alogero--{M}oser spin system},
			date={1993},
			journal={J. Phys. Soc. Jpn.},
			volume={62},
			pages={469\ndash 472},
		}
	
		\bib{Ino_90}{article}{
			author={Inozemtsev, V. I.},
			title={On the connection between the one-dimensional $s = 1/2$ {H}eisenberg chain and {H}aldane--{S}hastry model}, 
			date={1990},
			journal={J. Stat. Phys.},
			volume={59},
			pages={1143\ndash 1155}
		}
		
		\bib{Jac_70}{article}{
			author={Jack, H.},
			title={A class of symmetric polynomials with a parameter},
			date={1970},
			volume={69},
			pages={1\ndash 18},
		}
		
		\bib{Jim_85}{article}{
			author={Jimbo, M.},
			title={A $q$-difference analogue of {$U(\mathfrak{g})$} and the
				{Y}ang--{B}axter equation},
			date={1985},
			journal={Lett. Math. Phys.},
			volume={10},
			pages={63\ndash 69},
		}
		
		\bib{Jim_86a}{article}{
			author={Jimbo, M.},
			title={A q-analogue of {$U(\mathfrak{g}(N+1))$}, {H}ecke algebra, and
				the {Y}ang--{B}axter equation},
			date={1986},
			journal={Lett. Math. Phys.},
			volume={11},
			pages={247\ndash 252},
		}
		
		\bib{Jim_92}{article}{
			author={Jimbo, M.},
			title={Topics from representations of {$U_q(\mathfrak{g})$} -- an
				introductory guide to physicists},
			date={1992},
			book = {
				title={Quantum groups and quantum integrable systems},
				series={{N}ankai lectures on mathematical physics},
				editor={Ge, M.~L.},
				publisher={World Scientific}
			},
			pages={1\ndash 61},
		}
		
		\bib{JK+_95a}{article}{
			label={JKK\textsuperscript{$+$}95a},
			author={Jimbo, M.},
			author={Kedem, R.},
			author={Konno, H.},
			author={Miwa, T.},
			author={Petersen, J.-U.~H.},
			title={New level-0 action of {$U_q(\widehat{\mathfrak{sl}}_2)$} on
				level-1 modules},
			date={1995},
			conference={
				title={Recent progress in statistical mechanics and quantum field theory},
				address = {{L}os {A}ngeles, {CA}}, 
				date = {1994},
			},
			book={
				publisher={World Sci. Publ., River Edge, NJ},
			},
			pages={169\ndash 193},
			eprint = {q-alg/9505009},
		}

		\bib{JK+_95b}{article}{
			label={JKK\textsuperscript{$+$}95b},
			author={Jimbo, M.},
			author={Kedem, R.},
			author={Konno, H.},
			author={Miwa, T.},
			author={Petersen, J.-U.~H.},
			title={Level-0 structure of level-1 {$U_q(\widehat{\mathfrak{sl}}_2)$}-modules and {M}acdonald polynomials},
			date={1995},
			journal={J. Phys. A: Math. Gen.},
			volume={28},
			pages={5589\ndash 5606},
			eprint={q-alg/9506016},
		}
				
		\bib{JM_95}{book}{
			author={Jimbo, M.},
			author={Miwa, T.},
			title={Algebraic analysis of solvable lattice models},
			series={CBMS},
			publisher={AMS},
			date={1995},
			number={85},
			ISBN={978-0-8218-0320-2},
		}
		
		\bib{Jon_90}{article}{
			author={Jones, V. F.~R.},
			title={Baxterization},
			date={1990},
			journal={Inter. J. Mod. Phys. B},
			volume={04},
			pages={701\ndash 713},
		}
		
		\bib{Kas_90}{article}{
			author={Kashiwara, M.},
			title={Crystalizing the \textit{q}-analogue of universal enveloping
				algebras},
			date={1990},
			journal={Commun. Math. Phys.},
			volume={133},
			pages={249\ndash 260},
		}
		
		\bib{KL_79}{article}{
			author={Kazhdan, D.},
			author={Lusztig, G.},
			title={Representations of {C}oxeter groups and {H}ecke algebras},
			date={1979},
			journal={Inv. Math.},
			volume={53},
			number={2},
			pages={165\ndash 184},
		}
		
		\bib{Kon_96}{article}{
			author={Konno, H.},
			title={Relativistic {C}alogero--{S}utherland model: spin generalization, quantum affine symmetry and dynamical correlation functions},
			date={1996},
			journal={J. Phys. A: Math. Gen.},
			volume={29},
			pages={L191\ndash L198},
			eprint={hep-th/9508016},
		}
		
		\bib{KP_07}{article}{
			author={Kasatani, M.},
			author={Pasquier, V.},
			title={On polynomials interpolating between the stationary state of a
				{$O(n)$} model and a {Q.H.E.} ground state},
			date={2007},
			journal={Commun. Math. Phys.},
			volume={276},
			pages={397\ndash 435},
			eprint={cond-mat/0608160},
		}
		
		\bib{KS_91}{article}{
			author={Kulish, P.~P.},
			author={Sklyanin, E.~K.},
			title={The general {$U_q(sl(2))$} invariant {XXZ} integrable quantum
				spin chain},
			date={1991},
			journal={J. Phys. A: Math. Gen.},
			volume={24},
			number={8},
			pages={L435},
		}
		
		\bib{Lam_14}{article}{
			author={Lamers, J.},
			title={A pedagogical introduction to quantum integrability, with a view
				towards theoretical high-energy physics},
			date={2014},
			journal={PoS},
			volume={Modave2014},
			pages={001},
			eprint={1501.06805},
		}
		
		\bib{Lam_18}{article}{
			author={Lamers, J.},
			title={Resurrecting the partially isotropic {H}aldane--{S}hastry model},
			date={2018},
			journal={Phys. Rev. B.},
			volume={97},
			pages={214416},
			eprint={1801.05728},
		}
	
		
		\bib{Lus_83}{article}{
			author={Lusztig, G.},
			title={Singularities, character formulas, and a $q$-analog of weight
				multiplicities},
			conference={
				title={Analyse et topologie sur les espaces singuliers ii--iii},
				date={1981}
			},
			booktitle = {Ast{\'e}risque},
			date={1983},
			pages={208\ndash 229},
			url={http://www.numdam.org/item/AST_1983__101-102__208_0},
		}
		
		\bib{Lus_89}{article}{
			author={Lusztig, G.},
			title={Affine {H}ecke algebras and their graded version},
			date={1989},
			journal={J. Am. Math. Soc.},
			volume={2},
			number={3},
			pages={599\ndash 599},
		}
		
		\bib{Mac_95}{book}{
			author={Macdonald, I.~G.},
			title={Symmetric functions and {H}all polynomials},
			edition={2},
			publisher={Oxford University Press},
			date={1995},
		}
		
		\bib{Mac_98}{book}{
			author={Macdonald, I.~G.},
			title={Symmetric functions and orthogonal polynomials},
			series={University Lecture Series},
			publisher={AMS, Providence, RI},
			date={1998},
			volume={12},
			ISBN={978-0-8218-0770-5},
		}
	
		\bib{Mac_00}{article}{
			author={Macdonald, I.~G.},
			title={Orthogonal polynomials associated to root systems},
			note={Manuscript from 1988},
			journal={S{\'e}minaire de Lotharingien de Combinatoire},
			date={2000},
			volume={45},
			pages={B45a}
		}

		\bib{Mac_03}{book}{
			author={Macdonald, I.~G.},
			title={Affine {H}ecke algebras and orthogonal polynomials},
			publisher={Cambridge University Press},
			date={2003},
		}
				
		\bib{Mat_92}{article}{
			author={Matsuo, A.},
			title={Integrable connections related to zonal spherical functions},
			date={1992},
			journal={Inv. Math.},
			volume={110},
			number={1},
			pages={95\ndash 121},
		}
		
		\bib{MP_93}{article}{
			author={Minahan, J.~A.},
			author={Polychronakos, A.~P.},
			title={Integrable systems for particles with internal degrees of
				freedom},
			date={1993},
			journal={Phys. Lett. B},
			volume={302},
			pages={265\ndash 270},
			eprint={hep-th/9206046},
		}
		
		\bib{MS_93}{article}{
			author={Martin, P.},
			author={Saleur, H.},
			title={On an algebraic approach to higher dimensional statistical
				mechanics},
			date={1993},
			journal={Commun. Math. Phys.},
			volume={158},
			pages={155\ndash 190},
			eprint={hep-th/9208061},
		}
		
		\bib{MS_94}{article}{
			author={Martin, P.},
			author={Saleur, H.},
			title={The blob algebra and the periodic {T}emperley--{L}ieb algebra},
			date={1994},
			journal={Lett. Math. Phys.},
			volume={30},
			pages={189\ndash 206},
			eprint={hep-th/9302094},
		}
		
		\bib{Nak_01}{article}{
			author={Nakajima, H.},
			title={Quiver varieties and finite dimensional representations of quantum affine algebras},
			journal={J. Amer. Math. Soc.},
			date={2001},
			volume={14},
			pages={145\ndash 238},
			eprint={math/9912158},
		}
		
		\bib{Nou_96}{article}{
			author={Noumi, M.},
			title={Macdonald's symmetric polynomials as zonal spherical functions on
				some quantum homogeneous spaces},
			date={1996},
			journal={Adv. in Math.},
			volume={123},
			pages={16\ndash 77},
			eprint={math/9503224},
		}
		
		\bib{NS_17}{article}{
			author={Nazarov, M.},
			author={Sklyanin, E.},
			title={Cherednik operators and {R}uijsenaars{\textendash}{S}chneider
				model at infinity},
			date={2017},
			journal={Int. Math. Res. Not.},
			volume={2019},
			number={8},
			pages={2266\ndash 2294},
			eprint={1703.02794},
		}
		
		\bib{Opd_95}{article}{
			author={Opdam, E.~M.},
			title={Harmonic analysis for certain representations of graded {H}ecke
				algebras},
			date={1995},
			journal={Acta Math.},
			volume={175},
			number={1},
			pages={75\ndash 121},
		}
		
		\bib{Pas_96}{article}{
			author={Pasquier, V.},
			title={Scattering matrices and affine {H}ecke algebras},
			date={1996},
			book={
				title={Low-dimensional models in statistical physics and quantum field theory},
				editor={Grosse, H.},
				editor={Pittner, L.},
				series={Lecture Notes in Physics},
				volume={469},
				publisher={Springer Berlin Heidelberg}
			},
			pages={145\ndash 163},
			eprint= {q-alg/9508002}
		}
	
		\bib{Pas_06}{article}{
			author  = {Pasquier, V.},
			title   = {Quantum incompressibility and {R}azumov {S}troganov type conjectures},
			journal = {Ann. Henri Poincar{\'{e}}},
			year    = {2006},
			volume  = {7},
			pages   = {397--421},
			eprint  = {cond-mat/0506075},
		}
		
		\bib{Pol_93}{article}{
			author  = {Polychronakos, A.~P.},
			title   = {Lattice integrable systems of {H}aldane--{S}hastry type},
			journal = {Phys. Rev. Lett.},
			year    = {1993},
			volume  = {70},
			pages   = {2329--2331},
			eprint  = {hep-th/9210109}
		}
	
		\bib{Pol_99}{article}{
			author={Polychronakos, A.~P.},
			title={Generalized statistics in one dimension},
			date={1999},
			conference={
				  title = {Topological aspects of low-dimensional systems},
				  address={Les Houches summer school of theoretical physics},
				  date={session \textsc{lxix}},
			},
			eprint={hep-th/9902157}
		}
		
		\bib{PRV_20}{article}{
			author={Pearce, P.~A.},
			author={Rasmussen, J.},
			author={{Vittorini-Orgeas}, A.},
			title={{Y}ang{\textendash}{B}axter integrable dimers on a strip},
			date={2020},
			journal={J. Stat. Mech.: Theor. Exp.},
			volume={2020},
			pages={013107},
			eprint={1907.07610},
		}
		
		\bib{PS_90}{article}{
			author={Pasquier, V.},
			author={Saleur, H.},
			title={Common structures between finite systems and conformal field
				theories through quantum groups},
			date={1990},
			journal={Nucl. Phys. B},
			volume={330},
			pages={523\ndash 556},
		}
		
		\bib{Res_17}{article}{
			author={Reshetikhin, N.},
			title={Degenerate integrability of quantum spin {C}alogero--{M}oser
				systems},
			date={2017},
			journal={Lett. Math. Phys.},
			volume={107},
			pages={187\ndash 200},
			eprint={1510.00492},
		}
		
		\bib{RSZ_07}{article}{
			author={Razumov, A.~V.},
			author={Stroganov, Yu.~G.},
			author={{Zinn-Justin}, P.},
			title={Polynomial solutions of \textit{q}KZ equation and ground state of XXZ spin chain at $\Delta = -1/2$},
			journal={J. Phys. A.: Math. Theor.},
			volume={40},
			number={39},
			date={2007},
			pages={11827\ndash 11847},
			eprint={0704.3542},
		}
	
		\bib{RS_86}{article}{
			author={Ruijsenaars, S.~N.~M.},
			author={Schneider, H.},
			title={A new class of integrable systems and its relation to solitons},
			date={1986},
			journal={Ann. Phys.},
			volume={170},
			number={2},
			pages={370\ndash 405},
		}
		
		\bib{RTF_89}{article}{
			author={Reshetikhin, N.~Yu.},
			author={Takhtadzhyan, L.~A.},
			author={Faddeev, L.~D.},
			title={Quantization of {L}ie groups and {L}ie algebras},
			date={1989},
			journal={Algebra i Analiz},
			volume={1},
			pages={178\ndash 206},
			url={http://mi.mathnet.ru/eng/aa/v1/i1/p178},
			translation={
				journal = {Leningrad Math. J.}, 
				volume = {1},
				issue={1},
				date={1990},
				pages={193--225}
			}
		}
		
		\bib{Rui_87}{article}{
			author={Ruijsenaars, S. N.~M.},
			title={Complete integrability of relativistic {C}alogero--{M}oser systems and elliptic function identities},
			date={1987},
			journal={Commun. Math. Phys.},
			volume={110},
			pages={191\ndash 213},
		}
		
		\bib{Rui_95}{article}{
			author={Ruijsenaars, S.},
			title={Action-angle maps and scattering theory for some
				finite-dimensional integrable systems. {III}: {S}utherland type systems and
				their duals},
			date={1995},
			journal={Publ. Res. Inst. Math. Sci.},
			volume={31},
			number={2},
			pages={247\ndash 353},
		}
		
		\bib{Sha_88}{article}{
			author={Shastry, B.~S.},
			title={Exact solution of an $s=1/2$ {H}eisenberg antiferromagnetic chain
				with long-ranged interactions},
			date={1988},
			journal={Phys. Rev. Lett.},
			volume={60},
			pages={639\ndash 642},
		}
		
		\bib{Skl_88}{article}{
			author={Sklyanin, E.~K.},
			title={Boundary conditions for integrable quantum systems},
			date={1988},
			journal={J. Phys. A: Math. Gen.},
			volume={21},
			number={10},
			pages={2375\ndash 2389},
		}
		
		\bib{Smi_86}{article}{
			author={Smirnov, F.~A.},
			title={A general formula for soliton form factors in the quantum
				sine-{G}ordon model},
			date={1986},
			journal={J. Phys. A: Math. Gen.},
			volume={19},
			number={10},
			pages={L575\ndash L578},
		}
		
		\bib{SS_93}{article}{
			author={Sutherland, B.},
			author={Shastry, B.~S.},
			title={Solution of some integrable one-dimensional quantum systems},
			date={1993},
			journal={Phys. Rev. Lett.},
			volume={71},
			pages={5\ndash 8},
		}
		
		\bib{Sta_89}{article}{
			author={Stanley, R.~P.},
			title={Some combinatorial properties of {J}ack symmetric functions},
			date={1989},
			journal={Adv. Math.},
			volume={77},
			pages={76\ndash 115},
		}
		
		\bib{Sut_71}{article}{
			author={Sutherland, B.},
			title={Exact results for a quantum many-body problem in one dimension},
			date={1971},
			journal={Phys. Rev. A},
			volume={4},
			pages={2019\ndash 2021},
		}
		
		\bib{Sut_72}{article}{
			author={Sutherland, B.},
			title={Exact results for a quantum many-body problem in one dimension
				ii},
			date={1972},
			journal={Phys. Rev. A},
			volume={5},
			pages={1372\ndash 1376},
		}
		
		\bib{TH_95}{article}{
			author={Talstra, J.~C.},
			author={Haldane, F.~D.~M.},
			title={Integrals of motion of the {H}aldane--{S}hastry model},
			date={1995},
			journal={J. Phys. A: Math. Gen.},
			volume={28},
			pages={2369},
			eprint={cond-mat/9411065},
		}
	
		\bib{TL_71}{article}{
			author={Temperley, H.~N.~V.},
			author={Lieb, E.~H.},
			title={Relations between the `percolation' and `colouring' problem and other graph-theoretical problems associated with regular planar lattices: some exact results for the `percolation' problem},
			date={1971},
			journal={Proc. R. Soc. Lond. A},
			volume={322},
			pages={251\ndash 280}
		}
		
		\bib{TU_97}{article}{
			author={Takemura, K.},
			author={Uglov, D.},
			title={The orthogonal eigenbasis and norms of eigenvectors in the spin
				{C}alogero--{S}utherland model},
			date={1997},
			journal={J. Phys. A: Math. Gen.},
			volume={30},
			pages={3685\ndash 3717},
			eprint={solv-int/9611006},
		}
	
		\bib{TU_98}{article}{
			author={Takemura, K.},
			author={Uglov, D.},
			title={Level-0 action of $U_q(\hat{sl_n})$ on the \textit{q}-deformed Fock spaces},
			date={1998},
			journal={Commun. Math. Phys},
			volume={190},
			pages={549\ndash 583},
			eprint={q-alg/9607031},
		}
		
		\bib{Ugl_95u}{article}{
			author={Uglov, D.},
			title={The trigonometric counterpart of the {H}aldane--{S}hastry model},
			date={1995},
			eprint={hep-th/9508145},
		}
		
		\bib{Ugl_96}{article}{
			author={Uglov, D.},
			title={Semi-infinite wedges and the conformal limit of the fermionic {C}alogero-{S}utherland model with spin $1/2$},
			date={1996},
			journal={Nucl. Phys. B.},
			volume={478},
			pages={401\ndash 430},
			eprint={hep-th/9601170},
		}
		
		\bib{Ugl_98}{article}{
			author={Uglov, D.},
			title={Yangian {G}elfand-{Z}etlin bases, $gl(n)$-{J}ack polynomials and computation of dynamical correlation functions in the spin {C}alogero-{S}utherland model},
			date={1998},
			volume={193},
			journal = {Commun. Math. Phys.},
			pages={663\ndash 696},
			eprint={hep-th/9702020},
		}
		
		\bib{vD_95}{article}{
			author={{v}an Diejen, J.~F.},
			title={Commuting difference operators with polynomial eigenfunctions},
			date={1995},
			journal={Comp. Math,},
			volume={95},
			number={2},
			pages={183\ndash 233},
		}
		
	\end{biblist}
\end{bibdiv}

\end{document}